\definecolor[named]{urlblue}{cmyk}{1,0.58,0,0.21}
\newcommand{\gurke}{
   \draw[draw=black,thick,fill=green,fill opacity=0.3,closed]
  (-1,0) .. (0,1) .. (1,0) .. (0.5,-2) .. (1,-4) .. (0,-5) .. (-1,-4) .. (-0.5,-2);
}
\newcommand{\randB}{
   \draw[draw=red,thick,dashed,closed]
  (-1.5,0) .. (0,1.5) .. (1.5,0) .. (1,-2) .. (1.5,-4.5) .. (0,-5.5) .. (-1.5,-4.5) .. (-1,-2);
}
\newcommand{\randA}{
   \draw[draw=red,thick,dashed,closed]
  (0,1.5) .. (2.5,-0.5) .. (4,-3) .. (0,-5.5) .. (-4,-3) .. (-2.5,-0.5);
}
\tikzstyle{vertex}=[circle,fill=white,draw=black,minimum size=7pt,inner sep=0pt]
\tikzstyle{smallvertex}=[circle,fill=white,draw=black,minimum size=4pt,inner sep=0pt]
\definecolor{darkpastelgreen}{rgb}{0.01, 0.75, 0.24}
\newtheorem{theorem}{Theorem}[section]
\newtheorem{lemma}[theorem]{Lemma}
\newtheorem{corollary}[theorem]{Corollary}
\newtheorem{fact}[theorem]{Fact}
\newtheorem{observation}[theorem]{Observation}
\theoremstyle{definition}
\newtheorem{definition}[theorem]{Definition}
\theoremstyle{remark}
\newtheorem{remark}[theorem]{Remark}
\newtheorem{claim}[theorem]{Claim}
\newenvironment{claimproof}{\begin{proof}}{\end{proof}}
\newcommand{\ColRef}[1]{\chi_{{\sf CR}}[#1]}
\newcommand{\ColRefIt}[2]{\chi_{(#1)}[#2]}
\newcommand{\WL}[2]{\chi_{\sf WL}^{#1}\![#2]}
\newcommand{\WLit}[3]{\chi_{(#2)}^{#1}[#3]}
\newcommand{\tColRef}[2]{\chi_{#1\text{-}{\sf CR}}[#2]}
\newcommand{\CD}{{\mathcal D}}
\newcommand{\CE}{{\mathcal E}}
\newcommand{\CH}{{\mathcal H}}
\newcommand{\CM}{{\mathcal M}}
\newcommand{\CO}{{\mathcal O}}
\newcommand{\CP}{{\mathcal P}}
\newcommand{\CQ}{{\mathcal Q}}
\newcommand{\CX}{{\mathcal X}}
\newcommand{\CZ}{{\mathcal Z}}
\newcommand{\NN}{{\mathbb N}}
\newcommand{\Fp}{{\mathfrak p}}
\DeclareMathOperator{\id}{id}
\DeclareMathOperator{\Iso}{Iso}
\DeclareMathOperator{\Aut}{Aut}
\DeclareMathOperator{\Sym}{Sym}
\DeclareMathOperator{\mgamma}{\widehat{\Gamma}}
\DeclareMathOperator{\dist}{dist}
\DeclareMathOperator{\cl}{cl}
\DeclareMathOperator{\fw}{fw}
\DeclareMathOperator{\bw}{bw}
\DeclareMathOperator{\Exp}{Exp}
\newcommand{\adeg}{a_{{\sf deg}}}
\newcommand{\polylog}{\operatorname{polylog}}
\newcommand{\orcid}[1]{\href{https://orcid.org/#1}{\includegraphics[height=1.8ex]{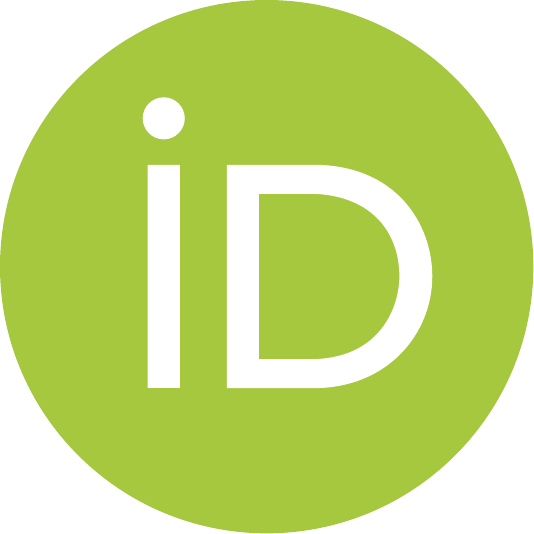}}}
\newcommand{\email}[1]{\href{mailto:#1}{\texttt{#1}}}
\title{Isomorphism Testing for Graphs Excluding Small Topological Subgraphs}
\author{
Daniel Neuen \orcid{0000-0002-4940-0318}\\
University of Bremen\\
\email{dneuen@uni-bremen.de}
}
\date{}
\begin{document}

\maketitle

\begin{abstract}
 We give an isomorphism test that runs in time $n^{\polylog(h)}$ on all $n$-vertex graphs excluding some $h$-vertex graph as a topological subgraph.
 Previous results state that isomorphism for such graphs can be tested in time $n^{\polylog(n)}$ (Babai, STOC 2016) and $n^{f(h)}$ for some function $f$ (Grohe and Marx, SIAM J.\ Comp., 2015).
 
 Our result also unifies and extends previous isomorphism tests for graphs of maximum degree $d$ running in time $n^{\polylog(d)}$ (SIAM J.\ Comp., 2023) and for graphs of Hadwiger number $h$ running in time $n^{\polylog(h)}$ (SIAM J.\ Comp., 2023).
\end{abstract}

\section{Introduction}

Determining the computational complexity of the Graph Isomorphism Problem (GI) is a long-standing open question in theoretical computer science (see, e.g., \cite{Karp72}).
The problem is easily seen to be contained in NP, but it is neither known to be in PTIME nor known to be NP-complete.
In a breakthrough result, Babai \cite{Babai16} obtained a quasipolynomial-time algorithm for testing isomorphism of graphs (i.e., an algorithm running in time $n^{\polylog(n)}$ where $n$ denotes the number of vertices of the input graphs), achieving the first improvement over the previous best algorithm running in time $n^{\CO(\sqrt{n / \log n})}$ \cite{BabaiKL83} in over three decades.
The algorithm advances both group-theoretic techniques for GI, dating back to Luks \cite{Luks82}, as well as our understanding of combinatorial approaches such as the Weisfeiler-Leman algorithm (see, e.g., \cite{CaiFI92,ImmermanL90}).
However, it remains wide open whether GI can be solved in polynomial time.

Polynomial-time algorithms for testing isomorphism are known for various restricted classes of graphs (see, e.g., \cite{BabaiGM82,Bodlaender90,GroheM15,GroheS15,HopcroftT71,Luks82,Neuen16,Miller83a,Ponomarenko91}).
One of the most general results in this direction due to Grohe and Marx \cite{GroheM15} states that isomorphism can be tested in polynomial time for all graph classes that exclude a fixed graph as a topological subgraph.
In particular, this includes previous results solving GI in polynomial time on all graphs excluding a fixed graph as a minor \cite{Ponomarenko91} and all graphs of bounded maximum degree \cite{Luks82}.

A common feature of all these algorithms is that the exponent of the running time depends at least linearly on the parameter in question.
In light of Babai's quasipolynomial-time algorithm it seems natural to ask for which graph parameters $k$ the Graph Isomorphism Problem can be solved in time $n^{\polylog(k)}$.
The first step towards answering this question was achieved by Grohe, Schweitzer and the present author \cite{GroheNS23} presenting an isomorphism algorithm running in time $n^{\polylog(d)}$ where $n$ denotes the number of vertices and $d$ the maximum degree of the input graphs.
Further extending the group-theoretic advances of Babai's algorithm, subsequent work resulted in algorithms testing isomorphism in time $n^{\polylog(k)}$ for graphs of tree-width $k$ \cite{Wiebking20} and time $n^{\polylog(g)}$ for graphs of Euler genus $g$ \cite{Neuen22}.
Both of these results were generalized in \cite{GroheNW23} obtaining an isomorphism test running in time $n^{\polylog(h)}$ for $n$-vertex graphs excluding an arbitrary $h$-vertex graph as a minor.
For a recent survey, we also refer to \cite{GroheN21}.
In this work we further generalize these results to all graphs that exclude an arbitrary $h$-vertex graph as a topological subgraph.

Recall that a graph $H$ is a topological subgraph of a graph $G$ if $H$ can be obtained from $G$ by deleting vertices and edges as well as dissolving degree two vertices (i.e., deleting a vertex of degree two and connecting its two neighbors).
A graph $G$ excludes $H$ as a topological subgraph if $G$ has no topological subgraph that is isomorphic to $H$.
For example, by Kuratowski's Theorem, planar graphs can be characterized by excluding $K_5$ and $K_{3,3}$ as topological subgraphs.
Note that, whenever a graph $H$ is a topological subgraph of $G$, then $H$ is also a minor of $G$.
Hence, any graph class that excludes some graph $H$ as a minor in particular excludes $H$ as a topological subgraph.
As another observation, the maximum degree of a topological subgraph $H$ of $G$ is at most the maximum of degree of $G$.
Thus, any graph of maximum degree $d$ excludes the complete graph $K_{d+2}$ on $d+2$ vertices as a topological subgraph.

The main result of this work is a new isomorphism algorithm for graphs that exclude some $h$-vertex graph as a topological subgraph which significantly improves the previous best algorithm due to Grohe and Marx \cite{GroheM15}.

\begin{theorem}
 The Graph Isomorphism Problem for graphs excluding some $h$-vertex graph as a topological subgraph can be solved in time $n^{\polylog(h)}$.
\end{theorem}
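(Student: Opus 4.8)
It suffices to handle graphs excluding the complete graph $K_h$ as a topological subgraph: any $h$-vertex graph $H$ is a subgraph of $K_h$, hence also a topological subgraph of $K_h$, and since the topological-subgraph relation is transitive, a graph excluding $H$ topologically in particular excludes $K_h$ topologically. The plan is to compute a canonical form (rather than merely decide isomorphism, so that the procedure recurses over decompositions), following the group-theoretic framework developed for bounded-degree graphs in \cite{GroheNS18} and for excluded-minor graphs in \cite{GroheNW20, Wiebking20}. As a first step I would pass to the polynomial-time computable, isomorphism-invariant decomposition of the input graph into its $3$-connected components: this yields a tree decomposition of adhesion at most $3$ whose torsos are $3$-connected, and the class of $K_h$-topological-subgraph-free graphs is closed under taking such torsos up to adding the (at most three) virtual edges on an adhesion set. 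Within the hereditarily-finite-sets machinery of \cite{GroheNW20, Wiebking20} one can recurse on this tree at polynomial cost per bag, so it remains to canonize $3$-connected graphs excluding $K_h$ as a topological subgraph.

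\textbf{Structure theorem (the heart).} The core of the argument is a \emph{quantitatively polynomial} structure theorem for $3$-connected graphs $G$ excluding $K_h$ as a topological subgraph: with $q := \poly(h)$, the graph $G$ admits an isomorphism-invariant tree decomposition in which every torso $G'$ has a set $A \subseteq V(G')$ with $|A| \le q$ such that $G' - A$ either (i) has maximum degree at most $q$, or (ii) excludes $K_q$ as a minor. This is the topological-subgraph analogue of the Grohe--Marx decomposition underlying \cite{GroheM15} --- ``almost bounded degree'' versus ``almost embeddable/minor-free'' --- but with \emph{all} parameters polynomially bounded in $h$; establishing this is the step I expect to be the main obstacle, since the Robertson--Seymour-style structure theorem used in \cite{GroheM15} only provides a bound $q(h)$ of tower-type, which would merely reprove a running time of the shape $n^{\polylog(q(h))}$. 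I would attack this by a local analysis: excluding $K_h$ topologically forces $G$ to be sparse (average degree $O(h^2)$ by the classical bound on graphs with no topological clique), and around a vertex of degree larger than $\poly(h)$ this sparsity together with $3$-connectivity restricts how its neighbourhood can attach to the rest of $G$, which should let one peel off a $\poly(h)$-size apex set and iterate; an alternative would be to bypass a global structure theorem altogether and argue directly with isomorphism-invariant families of small separators as in \cite{GroheNW20}. Either way one must also verify that the decomposition can be made canonical, which I would ensure by working with canonical families of separations rather than a single decomposition.

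\textbf{Base cases and assembly.} Given such a decomposition it remains, recursing again as in the first step, to canonize a single torso $G'$ together with the colours induced by its adhesion sets and its apex set $A$. In case (i) one individualizes $A$ and applies the colour-preserving bounded-degree isomorphism test of \cite{GroheNS18}: since $G' - A$ has maximum degree $\poly(h)$ and the running time of that test depends only on the degree bound, this takes time $n^{\polylog(h)}$. In case (ii), $G' - A$ excludes $K_{\poly(h)}$ as a minor and one applies the excluded-minor isomorphism test of \cite{GroheNW20}, again in time $n^{\polylog(h)}$. Finally, the canonical tree decompositions produced above have polynomially many bags, and gluing the per-bag computations along the decomposition trees --- that is, combining the automorphism groups of the torsos along bounded adhesion sets --- is precisely what the string-isomorphism and hereditarily-finite-sets machinery of \cite{GroheNW20, Wiebking20} does at polynomial overhead per bag. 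As the recursion is of polynomial size and each bag is processed in time $n^{\polylog(h)}$, the total running time is $n^{\polylog(h)}$. In summary, the one genuinely new ingredient is the polynomially-bounded canonical structure theorem of the middle step; the two base cases come from \cite{GroheNS18} and \cite{GroheNW20}, and the recursion over tree decompositions from \cite{GroheNW20, Wiebking20}.
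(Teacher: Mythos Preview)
Your plan has a genuine gap at its declared ``heart''. The polynomially bounded structure theorem you need --- a canonical tree decomposition of any $3$-connected $K_h$-topological-subgraph-free graph into torsos that are, after removing $\poly(h)$ apices, either of bounded degree or minor-free --- is not known, and your sketch (``local analysis'', ``peel off a $\poly(h)$-size apex set and iterate'') does not constitute an argument. You yourself flag this as the main obstacle; without it the rest of the plan does not get off the ground, because the Grohe--Marx bounds are, as you note, far from polynomial. Your suggested alternative (``bypass a global structure theorem \dots\ argue directly with isomorphism-invariant families of small separators as in \cite{GroheNW20}'') is in fact much closer to what works, but you do not develop it, and the difficulty in \cite{GroheNW20} is precisely that their construction of the initial set contracts parts of the graph and hence relies on minor-closure, which fails here.

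The paper takes a route that avoids any Grohe--Marx-style structure theorem and does not reduce to bounded-degree or minor-free base cases at all. It stays with the $t$-closure machinery throughout: for $t=\Theta(h^5)$, the new technical result (Theorem~\ref{thm:initial-color-via-wl}) is that the $3$-dimensional Weisfeiler--Leman colouring already singles out an isomorphism-invariant colour class $X$ with $X\subseteq\cl_t^{(G,\chi)}(v)$ for every $v\in X$. This, together with the fact from \cite{GroheNW20} (Theorem~\ref{thm:small-separator-for-t-cr-bounded-closure}), already stated there for topological-subgraph-free graphs, that every component outside the closure has neighbourhood of size $<h$, gives an isomorphism-invariant decomposition into a part that is $t$-CR-bounded after fixing a single vertex and pieces attached along separators of size $<h$. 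The group-theoretic tools of \cite{Neuen20,Wiebking20} then process this decomposition directly in time $n^{\polylog(h)}$. So the ``one genuinely new ingredient'' is not a polynomial structure theorem but a combinatorial statement about $3$-WL and the $t$-closure graph; the proof is a delicate analysis showing that if the closure sets $\cl_t^{(G,\chi)}(v)$ for $v\in X$ were not all equal, one could thread enough internally vertex-disjoint paths between them to force a dense topological subgraph, contradicting the $O(h^2)$ average-degree bound.
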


By the comments above, this result also unifies and extends previous isomorphism tests for graphs of maximum degree $d$ running in time $n^{\polylog(d)}$ \cite{GroheNS23} and for graphs of Hadwiger number $h$ (i.e., the maximum $h$ such that $K_h$ is a minor of the input graph) running in time $n^{\polylog(h)}$ \cite{GroheNW23}.

Observe that a graph $G$ excludes some $h$-vertex graph as a topological subgraph if and only if it excludes $K_h$, the complete graph on $h$ vertices, as a topological subgraph.
Hence, in the following, we restrict our attention to graphs that exclude $K_h$ as a topological subgraph.

On a high level, the algorithm follows the same decomposition strategy that is already used by Grohe et al.\ in \cite{GroheNW23} for testing isomorphism of graphs excluding $K_h$ as a minor.
The main idea is to decompose an input graph $G$ into parts $D \subseteq V(G)$ such that the interplay between the parts is simple, and $G$ restricted to $D$ is \emph{$t$-CR-bounded} for some number $t$ that is polynomially bounded in $h$. 
Intuitively speaking, a graph $G$ is \emph{$t$-CR-bounded} (where $t \in \NN$) if an initially uniform coloring can be transformed into a discrete coloring (i.e., a coloring where every vertex has its own color) by repeatedly applying the standard Color Refinement algorithm and splitting color classes of size at most $t$.
Building on the group-theoretic isomorphism machinery dating back to Luks \cite{Luks82} as well as recent extensions \cite{Neuen22}, isomorphism of $t$-CR-bounded graphs can be decided in time $n^{\polylog(t)}$ (where $n$ denotes the number of vertices of the input graphs).

In order to decompose the input graph $G$ into suitable parts $D \subseteq V(G)$, Grohe et al.\ \cite{GroheNW23} introduce the \emph{$t$-closure} $\cl_t^G(X)$ of a set $X \subseteq V(G)$ as the set of all uniquely colored vertices after artificially individualizing all vertices from $X$ and applying the $t$-CR procedure (i.e., repeatedly applying the standard Color Refinement algorithm and splitting color classes of size at most $t$).
Now, the central idea is to define $D \coloneqq \cl_t^{G}(X)$ for a suitable set $X$ (we will refer to $X$ as the \emph{initial set}).
In order for this approach to work out, the following two statements are crucial where $t$ is some number polynomially bounded in $h$.
\begin{enumerate}[label = (\Alph*)]
 \item\label{item:main-tool-1} For every $X \subseteq V(G)$ it holds that $|N_G(Z)| < h$ for every connected component $Z$ of $G - D$ where $D \coloneqq \cl_t^{G}(X)$.
 \item\label{item:main-tool-2} There is a polynomial-time algorithm that computes an isomorphism-invariant initial set $\emptyset \neq X \subseteq V(G)$ such that $X \subseteq \cl_t^G(v)$ for every $v \in X$.
\end{enumerate}

Assuming both statements are true, one can build an isomorphism test for graphs $G_1$ and $G_2$ as follows.
First, we compute sets $X_1 \subseteq V(G_1)$ and $X_2 \subseteq V(G_2)$ using Property \ref{item:main-tool-2} and define $D_i \coloneqq \cl_t^{G_i}(X_i)$ for both $i \in \{1,2\}$.
Afterwards, we recursively compute isomorphisms between all pairs of connected components of $G_1 - D_1$ and $G_2 - D_2$.
By Property \ref{item:main-tool-2}, $G_i$ is $t$-CR-bounded on $D_i$ after individualizing a single vertex $v \in X_i$.
Hence, isomorphism between $G_1[D_1]$ and $G_2[D_2]$ can be decided in time $n^{\polylog(t)} = n^{\polylog(h)}$.
Also, using the techniques from \cite{GroheNSW20,Wiebking20} and Property \ref{item:main-tool-1}, it is possible to incorporate the partial isomorphisms between the connected components of $G_1 - D_1$ and $G_2 - D_2$, overall resulting in an isomorphism test between $G_1$ and $G_2$ running in time $n^{\polylog(h)}$.
We remark that, for this strategy to work out, it is crucial to define $D_i$ in an isomorphism-invariant manner as to not compare two graphs that are decomposed in structurally different ways.
Observe that $D_i$ is indeed defined in an isomorphism-invariant way since the initial set $X_i$ is by Property \ref{item:main-tool-2}.

While Grohe et al.\ \cite{GroheNW23} already prove Property \ref{item:main-tool-1} for graphs excluding $K_h$ as a topological subgraph, their proof of Property \ref{item:main-tool-2} crucially requires closure under taking minors.
More precisely, for Property \ref{item:main-tool-2}, Grohe et al.\ provide a complicated algorithm that essentially contracts certain parts of the input graph $G$, and then builds the set $X$ from a solution $X'$ computed by recursion for the contracted graph $G'$.
Unfortunately, such a strategy is infeasible for graphs excluding $K_h$ as a topological subgraph since the corresponding class of graphs is not closed under taking minors.

The main technical contribution of this work is to provide an alternative algorithm for Property \ref{item:main-tool-2} that works for all graphs $G$ excluding $K_h$ as topological subgraph.
Indeed, our algorithm is much simpler than the one from \cite{GroheNW23} and solely relies on the well-known Weisfeiler-Leman algorithm (see, e.g., \cite{CaiFI92,ImmermanL90}).
The Weisfeiler-Leman algorithm is a standard tool in the context of isomorphism testing and computes an isomorphism-invariant coloring of $k$-tuples of vertices of a graph $G$.
Our main technical result is that the $3$-dimensional Weisfeiler-Leman algorithm is able to provide a suitable initial set $X$.

\begin{theorem}
 \label{thm:main-tool}
 Let $G$ be a connected graph that excludes $K_h$ as a topological subgraph and let $t = \Omega(h^5)$.
 Then there is a color $c_0 \in \{\WL{3}{G}(v,v,v) \mid v \in V(G)\}$ such that, for $\chi(v,w) \coloneqq \WL{3}{G}(v,w,w)$ for all $v,w \in V(G)$ and $X \coloneqq \{v \in V(G) \mid \WL{3}{G}(v,v,v) = c_0\}$, it holds that
 \[X \subseteq \cl_t^{(G,\chi)}(v)\]
 for all $v \in X$.
\end{theorem}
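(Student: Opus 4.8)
The plan is to split into two cases according to how symmetric $G$ is. The elementary but key observation is that \emph{any} $\WL{3}{G}$-colour class $X$ with $|X|\le t$ already satisfies $X\subseteq\cl_t^{(G,\chi)}(v)$ for every $v\in X$: after individualising $v$, at every stage of the $t$-CR procedure the colour classes contained in $X$ have total size at most $|X|\le t$, so they are split off one after another until all of $X$ is discretely coloured. Hence, if $G$ has any $\WL{3}{G}$-colour class of size at most $t$, we take $c_0$ to be its colour and are done. From now on assume that \emph{every} $\WL{3}{G}$-colour class has size larger than $t$; informally, $G$ is very symmetric.

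To pick $X$ in this case I use sparsity. Since $G$ excludes $K_h$ as a topological subgraph, the Bollob\'as--Thomason/Koml\'os--Szemer\'edi theorem implies that every subgraph of $G$ has average degree $\CO(h^2)$; in particular $G$ has a vertex of minimum degree $\delta=\CO(h^2)\le t$. Let $X$ be the $\WL{3}{G}$-colour class of such a vertex and put $c_0:=\WL{3}{G}(v,v,v)$ for $v\in X$; then $|X|>t$ by assumption and, since colour classes are degree-homogeneous, every vertex of $X$ has degree $\delta$. Fix an arbitrary $v\in X$; I claim $X\subseteq\cl_t^{(G,\chi)}(v)$, which (as $v$ was arbitrary) is exactly the theorem. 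As a first step, $N[v]\subseteq\cl_t^{(G,\chi)}(v)$: after individualising $v$, Colour Refinement on $(G,\chi)$ refines every vertex $u$ by the colour $\chi(u,v)$, and $\chi(u,v)=\WL{3}{G}(u,v,v)$ already determines whether $uv\in E$; hence the set $\{u':\chi(u',v)=\chi(u,v)\}$ of vertices sharing that refined neighbour of $v$ is contained in $N(v)$, has size at most $\delta\le t$, and therefore each neighbour of $v$ is split off.

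The heart of the argument is to propagate the individualisation from $N[v]$ through all of $X$. Let $\gamma$ be the $t$-CR-stable colouring obtained after individualising $v$ and let $R:=\cl_t^{(G,\chi)}(v)$ be its set of singleton classes; suppose towards a contradiction that $Y:=X\setminus R$ is nonempty. Every non-singleton $\gamma$-class has size $>t$, so $Y$ is a union of such classes; fix one of them, $Y_0\subseteq X$ with $|Y_0|>t$. All vertices of $Y_0$ have degree $\delta$, and because $\gamma$ is stable and each $r\in R$ forms a singleton class, $\chi(r,y)$ is the same for all $y\in Y_0$; in particular the vertices of $Y_0$ all have the same set $P\subseteq R$ of neighbours inside $R$, with $|P|\le\delta$. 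If $|P|\ge h$, then choosing $h$ vertices of $P$ as branch vertices and routing, for each of the $\binom{h}{2}$ pairs, a path of length two through a private vertex of $Y_0$ (there is ample room since $|Y_0|>t$) produces a subdivision of $K_h$ in $G$ --- a contradiction. Hence $|P|\le h-1$, so almost all of the $\delta$ edges at each $y\in Y_0$ lead into the still-unresolved set $U:=V(G)\setminus R$. Exploiting that $G[U]$ again has average degree $\CO(h^2)$, that the components of $G-R$ attach to $R$ through few vertices (in the spirit of Property~\ref{item:main-tool-1}), and that $\WL{3}{G}$ makes the interaction of $Y_0$ with its $U$-neighbourhood and with $R$ highly regular, the aim is to once more exhibit many pairwise vertex-disjoint connectors between few vertices and hence a $K_h$-subdivision, contradicting that $G$ is $K_h$-topological-subgraph-free; this would force $Y=\emptyset$ and $X\subseteq\cl_t^{(G,\chi)}(v)$.

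The main obstacle is this last extraction: turning a large unresolved colour class into an explicit $K_h$-subdivision under the weak hypothesis $t=\Omega(h^5)$. This needs a careful routing argument inside the coherent configuration induced by $\WL{3}{G}$: one must control where the connecting paths run (inside the low-degree region $U$, versus through the already-discretised set $R$, whose interface with $U$ is ``simple''), and combine the $\Theta(h^2)$ density threshold for $K_h$-subdivisions with an additional $\Theta(h^3)$ factor needed to select pairwise vertex-disjoint routes among $\Theta(h^2)$ candidate branch vertices. The strength of the $3$-dimensional (rather than $2$-dimensional) WL algorithm should be essential here, yielding enough transitivity of the relevant path counts to guarantee that the required disjoint connectors exist.
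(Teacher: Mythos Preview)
Your Case~1 and the two preliminary steps of Case~2 (that $N[v]\subseteq\cl_t^{(G,\chi)}(v)$ whenever $\deg(v)\le t$, and the $|P|\ge h$ subcase) are correct. The gap you yourself flag --- turning a surviving class $Y_0$ with $|P|\le h-1$ into a $K_h$-subdivision --- is genuine and is where essentially all the work lies. In particular, nothing prevents $P=\emptyset$: the vertices of $Y_0$ may have no $R$-neighbours at all and sit far from $R$ inside a large unresolved region $U$; then neither your branch-vertex argument nor Property~\ref{item:main-tool-1} (which bounds $|N_G(Z)|$ for components $Z$ of $G-R$, not the interaction of a colour class inside $Z$ with anything) gives traction.

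More importantly, your choice of $c_0$ --- the minimum-degree colour --- is not the one the paper uses and lacks the structural property that drives the paper's proof. The paper introduces the \emph{$t$-closure graph} $H$ on $V(G)$ with an arc $(v,w)$ whenever $w\in\cl_t^{(G,\chi)}(v)$, and shows (this is the only place $3$-WL is used) that $\chi$ is $2$-stable on $H$. It then takes $c_0$ to be the colour of a vertex in a \emph{maximal} strongly connected component of $H$. For such $c_0$, each closure $D(v)$ with $v\in X$ is exactly a maximal SCC, so any two closures are either equal or disjoint. This clean partition $D_1,\dots,D_k$ is what makes the subsequent disjoint-paths argument possible: one shows each $D_i$ communicates with the others only through a set $S_i$ of size $<h$, and then constructs $\Omega(h^3 k)$ internally vertex-disjoint paths among the $S_i$, yielding a topological subgraph whose average degree exceeds the Bollob\'as--Thomason/Koml\'os--Szemer\'edi bound. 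With your minimum-degree choice there is no reason the closures $\cl_t^{(G,\chi)}(v)$, $v\in X$, should be pairwise equal-or-disjoint, so this partition structure is unavailable, and your proposed endgame would have to invent a substitute for it from scratch. You also never actually use the third dimension of WL; in the paper its sole role is to make the edge relation of $H$ $\chi$-definable, which is precisely what lets one pick the maximal colour in an isomorphism-invariant way.
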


Here, $\WL{3}{G} \colon (V(G))^3 \rightarrow C$ denotes the coloring of $3$-tuples computed by the $3$-dimensional Weisfeiler-Leman algorithm.
We remark that the $t$-closure needs to be taken over a colored version of $G$ (where every pair $(v,w)$ of vertices is colored by $\chi(v,w)$) which, however, does not pose any problems for the final algorithm.

Observe that Theorem \ref{thm:main-tool} implies Property \ref{item:main-tool-2} since the coloring $\WL{3}{G}$ can be computed in polynomial time and we can simply try all possible colors $c_0 \in \{\WL{3}{G}(v,v,v) \mid v \in V(G)\}$ to find a good set $X$ (if more than one color yields a good set $X$, we simply take the smallest color according to some fixed linear order on the colors).

The proof of Theorem \ref{thm:main-tool} builds on a lengthy and technical analysis of the coloring $\chi$.
As a main tool for the proof, we introduce the \emph{$t$-closure graph of $(G,\chi)$} which is a directed graph $H$ on the same vertex set as $G$ and with an edge $(v,w)$ for all $v,w \in V(G)$ such that $w \in \cl_t^{(G,\chi)}(v)$.
Building on properties of the $3$-dimensional Weisfeiler-Leman algorithm, we show that it is possible to choose $X \coloneqq \{v \in V(G) \mid \WL{3}{G}(v,v,v) = c_0\}$ in such a way that $X$ only contains vertices appearing in maximal strongly connected components of $H$ (a strongly connected component of $H$ is maximal if it has no outgoing edges).
This implies that $\cl_t^{(G,\chi)}(v)$ and $\cl_t^{(G,\chi)}(v')$ are either disjoint or equal for all $v,v' \in X$, since $\cl_t^{(G,\chi)}(v)$ is exactly the strongly connected component of $H$ that contains $v$.
Assuming there are $v,v' \in X$ such that $\cl_t^{(G,\chi)}(v)$ and $\cl_t^{(G,\chi)}(v')$ are disjoint, we proceed by constructing a large number of pairwise internally vertex-disjoint paths between such sets leading to a topological subgraph of $G$ with high edge-density.
Eventually, this contradicts the fact that the average degree of every topological subgraph of $G$ is bounded by a polynomial function in $h$ \cite{BollobasT98,KomlosS96}.

Besides the isomorphism test for graphs excluding $K_h$ as a topological subgraph, the algorithmic approach taken in this paper also provides some structural insights into the automorphism groups of graphs without a topological subgraph isomorphic to $K_h$.
We show that every such graph $G$ admits a tree decomposition of adhesion width at most $h-1$ (i.e., the intersection between any two bags has size at most $h-1$) such that the automorphism group of $G$ restricted to a single bag is similar to those of graphs of bounded maximum degree.
More precisely, for $d \geq 1$, let $\mgamma_d$ denote the class of groups $\Gamma$ such that every composition factor of $\Gamma$ is isomorphic to a subgroup of $S_d$ (the symmetric group on $d$ points).
It is well-known that the automorphism group of every connected graph $G$ of maximum degree $d$ is in the class $\mgamma_d$ after individualizing a single vertex of $G$ \cite{Luks82}.
We obtain the following structural insights on the automorphism group $\Aut(G)$ of a graph $G$ excluding $K_h$ as a topological subgraph.

\begin{theorem}
 Let $G$ be a graph that excludes $K_h$ as a topological subgraph.
 Then there is an iso\-mor\-phism-invariant tree decomposition $(T,\beta)$ of $G$ such that
 \begin{enumerate}
  \item the adhesion-width of $(T,\beta)$ is at most $h-1$, and
  \item for every $t \in V(T)$ there is some $v \in \beta(t)$ such that $(\Aut(G))_v[\beta(t)] \in \mgamma_d$ for some $d = \CO(h^5)$.
 \end{enumerate}
\end{theorem}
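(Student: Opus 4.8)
\emph{Proof plan.}
We record the bags that arise in the recursive decomposition strategy underlying the isomorphism test. We may assume $G$ is connected: otherwise we take the disjoint union of the tree decompositions of the connected components of $G$ and attach their roots to a single fresh node whose bag is empty (for which both conditions hold vacuously).

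Fix $t=\Theta(h^5)$, large enough for both Theorem~\ref{thm:main-tool} and Property~\ref{item:main-tool-1}. The construction is recursive: a recursive call receives an induced subgraph $G'$ of $G$ and a \emph{boundary set} $W\subseteq V(G')$ with $|W|\le h-1$; the initial call is $(G,\emptyset)$. Given $(G',W)$, we individualise the vertices of $W$ and apply the vertex-coloured version of Theorem~\ref{thm:main-tool} to the resulting graph --- which is still connected and still excludes $K_h$ as a topological subgraph, being a subgraph of $G$ --- obtaining a colour $c_0$, the coloring $\chi'$, and the class $X'$; here we choose $c_0$ (when possible) so that $X'\cap W=\emptyset$, and if no such choice exists then individualising $W$ already discretises $G'$ and we make the node a leaf with bag $V(G')$. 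Otherwise we declare the bag of the current node to be $\beta:=W\cup\cl_t^{(G',\chi')}(X'\cup W)$ and, for every connected component $Z$ of $G'-\beta$, we create a child with boundary $N_{G'}(Z)$ and recurse on $G'[Z\cup N_{G'}(Z)]$. Since $X'\neq\emptyset$ and $X'\cap W=\emptyset$ we have $\beta\supsetneq W$, so every component of $G'-\beta$ is a proper subset of $V(G')\setminus W$ and the recursion terminates. By Property~\ref{item:main-tool-1}, which Grohe et al.\ \cite{GroheNW20} establish for graphs excluding $K_h$ as a topological subgraph, $|N_{G'}(Z)|\le h-1$ for every such $Z$, so $(T,\beta)$ is a tree decomposition of $G$ of adhesion-width at most $h-1$. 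It is isomorphism-invariant because $c_0$, $\chi'$, the closure operator $\cl_t$ and hence all bags are defined canonically from the isomorphism-invariant Weisfeiler-Leman coloring, and because automorphisms of $G$ permute the components $Z$ --- and hence the corresponding subtrees --- consistently with their isomorphism types.

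It remains to verify the second condition. Fix a node $p$ of $T$, with associated subgraph $G_p$ (the induced subgraph of $G$ on the union of all bags in the subtree of $p$), boundary $W_p$ of size at most $h-1$, class $X_p$, and bag $\beta(p)=W_p\cup\cl_t^{(G_p,\chi_p)}(X_p\cup W_p)$, where $\chi_p$ is the coloring produced by Theorem~\ref{thm:main-tool} at $p$. Pick $v\in X_p$ (any $v\in\beta(p)$ if $p$ is a leaf); either $v$ can be chosen to lie in a unique bag, or one reads $(\Aut(G))_v$ in the statement as the stabiliser of $v$ inside the (canonical) stabiliser of the node $p$ --- in both cases this group stabilises the set $\beta(p)$ and restricts to a subgroup $\Gamma\le\Aut(G_p)_v$ that additionally stabilises $W_p$ setwise. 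Passing from $\Gamma$ to the pointwise stabiliser of $W_p$ inside it costs only a quotient that embeds into $\Sym(W_p)\le S_{h-1}$. Now, since $v\in X_p$, monotonicity and idempotence of $\cl_t$ give $\cl_t^{(G_p,\chi_p)}(v)=\cl_t^{(G_p,\chi_p)}(X_p\cup W_p)$ once $W_p$ is individualised; that is, individualising $v$ on top of the already individualised $W_p$ and running the $t$-CR procedure on $(G_p,\chi_p)$ discretises $\beta(p)$. Hence the pointwise stabiliser of $W_p$ inside $\Gamma$ is contained in the automorphism group of this individualised coloring, so by the standard Luks-type analysis --- each $t$-CR step either leaves the automorphism group unchanged (ordinary colour refinement is canonical) or yields a quotient embedding into $S_t$ (splitting a colour class of size at most $t$) --- its restriction to $\beta(p)$ lies in $\mgamma_t$. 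Combining the two reductions, $\Gamma$ restricted to $\beta(p)$ lies in $\mgamma_{\max(h-1,t)}=\mgamma_d$ with $d=\CO(h^5)$, as required.

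I expect the main obstacles to be the two bookkeeping points above. First, one must guarantee that Theorem~\ref{thm:main-tool}, applied to the $W$-individualised subgraph of each recursive call, returns a colour class $X'$ disjoint from $W$ (unless that subgraph is already discretised by $W$); this is what keeps the recursion strictly decreasing and, via Property~\ref{item:main-tool-1}, all adhesion sets of size at most $h-1$. Second, one must pin down, at each node $p$, both a vertex $v\in X_p$ and the canonical region $G_p$ so that $(\Aut(G))_v$ genuinely restricts to an automorphism group of $G_p$ and stabilises $\beta(p)$. Once these are settled, the adhesion-width bound is immediate and the $\mgamma_d$-membership follows from the defining property of the $t$-closure together with the Luks-type analysis already used to handle $t$-CR-bounded graphs.
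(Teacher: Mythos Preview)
Your plan is essentially the paper's own argument: the tree is the recursion tree of the decomposition procedure, bags are $t$-closures, the adhesion bound is Theorem~\ref{thm:small-separator-for-t-cr-bounded-closure}, and the $\mgamma_d$-bound comes from the $t$-CR-bounded structure (Theorem~\ref{thm:compute-isomorphisms-t-closure}). The two obstacles you flag are exactly the two points the paper's short proof sketch addresses.

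On the first, the paper does not try to select $c_0$ with $X'\cap W=\emptyset$; instead it iterates: whenever the current closure $D$ has $|D|<h$ and $G'-D$ is connected, it recomputes an initial set inside $G'-D$ and takes its closure in $G'$, which strictly enlarges the bag and eventually yields some $X$ with $X\setminus W\neq\emptyset$. Your dichotomy happens to be correct as well (after individualising $W$, every $w\in W$ is reached from every vertex in the closure graph, so $w$ can be maximal only if $\cl_t(w)=V(G')$, i.e.\ $W$ already discretises $G'$), but note this uses the internal structure of the proof of Theorem~\ref{thm:initial-color-via-wl}, not its statement as a black box.

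On the second, there is no need to hedge or reinterpret $(\Aut(G))_v$. Once $v\in X_p$ with $X_p\cap W_p=\emptyset$, the node $p$ is the unique highest node of $T$ whose bag contains $v$. Any $\gamma\in(\Aut(G))_v$ induces a rooted automorphism of the isomorphism-invariant tree $T$ that permutes the (connected) set of bags containing $v$ and therefore fixes its root $p$; hence $\beta(p)^\gamma=\beta(p)$ and $W_p^\gamma=W_p$. This is precisely the paper's choice $v\in X_1\setminus S_1$, and with it your Luks-type analysis goes through.
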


Here, $(\Aut(G))_v[\beta(t)]$ denotes the restriction of $\Aut(G)$ to the bag $\beta(t)$ after individualizing the vertex $v$.

The remainder of this work is structured as follows.
In the next section we give the necessary preliminaries.
Afterwards, we define $t$-CR-bounded graphs and the corresponding closure operator in Section \ref{sec:t-cr} and provide a more detailed overview on the main algorithm.
In Section \ref{sec:initial-color} we prove the main technical theorem of this work which provides a suitable initial set $X$.
Based on this result, we assemble the main algorithm in Sections \ref{sec:group-machinery} and \ref{sec:algorithm-isomorphism}.

\section{Preliminaries}
\label{sec:preliminaries}

\subsection{Graphs}

A \emph{graph} is a pair $G = (V(G),E(G))$ consisting of a \emph{vertex set} $V(G)$ and an \emph{edge set} $E(G)$.
All graphs considered in this paper are finite and simple (i.e., they contain no loops or multiple edges).
Moreover, unless explicitly other stated, all graphs are undirected.
For an undirected graph $G$ and $v,w \in V(G)$, we write $vw$ as a shorthand for $\{v,w\} \in E(G)$.
The \emph{neighborhood} of a vertex $v \in V(G)$ is denoted by $N_G(v)$.
The \emph{degree} of $v$, denoted by $\deg_G(v)$, is the number of edges incident with $v$, i.e., $\deg_G(v)=|N_G(v)|$.
For $X \subseteq V(G)$, we define $N_G[X] \coloneqq X \cup \bigcup_{v \in X}N_G(v)$ and $N_G(X) \coloneqq N_G[X] \setminus X$.
If the graph $G$ is clear from context, we usually omit the index and simply write $N(v)$, $\deg(v)$, $N[X]$ and $N(X)$.

We write $K_n$ to denote the complete graph on $n$ vertices.
A graph is \emph{regular} if every vertex has the same degree.
A bipartite graph $G=(V_1,V_2,E)$ is called \emph{$(d_1,d_2)$-biregular} if all vertices $v_i \in V_i$ have degree $d_i$ for both $i \in \{1,2\}$.
In this case $d_1 \cdot |V_1| = d_2 \cdot |V_2| = |E|$.
By a double edge counting argument, for each subset $S \subseteq V_i$, $i\in\{1,2\}$, it holds that $|S| \cdot d_i \leq |N_G(S)| \cdot d_{3-i}$.
A bipartite graph is \emph{biregular}, if there are $d_1,d_2 \in \NN$ such that $G$ is $(d_1,d_2)$-biregular.
Each biregular graph satisfies the Hall condition, i.e., for all $S \subseteq V_1$ it holds $|S| \leq |N_G(S)|$ (assuming $|V_1| \leq |V_2|$).
Thus, by Hall's Marriage Theorem, each biregular graph contains a matching of size $\min(|V_1|,|V_2|)$.

A \emph{path of length $k$} from $v$ to $w$ is a sequence of distinct vertices $v = u_0,u_1,\dots,u_k = w$ such that $u_{i-1}u_i \in E(G)$ for all $i \in [k] \coloneqq \{1,\dots,k\}$.
The \emph{distance} between two vertices $v,w \in V(G)$, denoted by $\dist_G(v,w)$, is the length of a shortest path between $v$ and $w$.
As before, we omit the index $G$ it it is clear from context.
For two sets $A,B\subseteq V(G)$, we denote by $E_G(A,B) \coloneqq \{vw\in E(G)\mid v\in A,w\in B\}$.
Also, $G[A,B]$ denotes the graph with vertex set $A\cup B$ and edge set $E_G(A,B)$.
We write $G[A] \coloneqq G[A,A]$ to denote the \emph{induced subgraph} on vertex set $A$.
Also, we denote by $G - A$ the subgraph induced by the complement of $A$, that is, the graph $G - A \coloneqq G[V(G) \setminus A]$.
A graph $H$ is a \emph{subgraph} of $G$, denoted by $H \subseteq G$, if $V(H) \subseteq V(G)$ and $E(H) \subseteq E(G)$. 
A set $S \subseteq V(G)$ is a \emph{separator} of $G$ if $G - S$ has more connected components than $G$.
A \emph{$k$-separator} of $G$ is a separator of $G$ of size $k$.

An \emph{isomorphism} from $G$ to a graph $H$ is a bijection $\varphi\colon V(G) \rightarrow V(H)$ that respects the edge relation, that is, for all~$v,w \in V(G)$, it holds that~$vw \in E(G)$ if and only if $\varphi(v)\varphi(w) \in E(H)$.
Two graphs $G$ and $H$ are \emph{isomorphic}, written $G \cong H$, if there is an isomorphism from~$G$ to~$H$.
We write $\varphi\colon G\cong H$ to denote that $\varphi$ is an isomorphism from $G$ to $H$.
Also, $\Iso(G,H)$ denotes the set of all isomorphisms from $G$ to $H$.
The automorphism group of $G$ is $\Aut(G) \coloneqq \Iso(G,G)$.
Observe that, if $\Iso(G,H) \neq \emptyset$, it holds that $\Iso(G,H) = \Aut(G)\varphi \coloneqq \{\gamma\varphi \mid \gamma \in \Aut(G)\}$ for every isomorphism $\varphi \in \Iso(G,H)$.

A \emph{vertex-colored graph} is a tuple $(G,\chi_V)$ where $G$ is a graph and $\chi_V\colon V(G) \rightarrow C$ is a mapping into some set $C$ of colors, called \emph{vertex-coloring}.
Similarly, an \emph{arc-colored graph} is a tuple $(G,\chi_E)$, where $G$ is a graph and $\chi_E\colon\{(u,v) \mid \{u,v\} \in E(G)\} \rightarrow C$ is a mapping into some color set $C$, called \emph{arc-coloring}.
Observe that colors are assigned to directed edges, i.e., the directed edge $(v,w)$ may obtain a different color than $(w,v)$.
We also consider vertex- and arc-colored graphs $(G,\chi_V,\chi_E)$ where $\chi_V$ is a vertex-coloring and $\chi_E$ is an arc-coloring.
Also, a \emph{pair-colored graph} is a tuple $(G,\chi_P)$, where $G$ is a graph and $\chi_P\colon (V(G))^{2} \rightarrow C$ is a mapping into some color set $C$.
Typically, $C$ is chosen to be an initial segment $[n]$ of the natural numbers.
To be more precise, we generally assume that there is a total order on the set of all potential colors which, for example, allows us to identify a minimal color appearing in a graph in a unique way.
Isomorphisms between vertex-, arc- and pair-colored graphs have to respect the colors of the vertices, arcs and pairs.

\subsection{Topological Subgraphs}

A graph $H$ is a \emph{topological subgraph} of $G$ if $H$ can be obtained from $G$ by deleting edges, deleting vertices and dissolving degree $2$ vertices (which means deleting the vertex and making its two neighbors adjacent).
More formally, we say that $H$ is a topological subgraph of $G$ if a subdivision of $H$ is a subgraph of $G$
(a subdivision of a graph $H$ is obtained by replacing each edge of $H$ by a path of length at least 1).
The following theorem states the well-known fact that graphs excluding a topological subgraph have bounded average degree.

\begin{theorem}[\cite{BollobasT98,KomlosS96}]
 \label{thm:average-degree-excluded-topological}
 There is an absolute constant $\adeg \geq 1$ such that for every $h \geq 1$ and every graph $G$ that excludes $K_h$ as a topological subgraph, it holds that
 \[\frac{1}{|V(G)|}\sum_{v \in V(G)} \deg_G(v) \leq \adeg h^2.\]
\end{theorem}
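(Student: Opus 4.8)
The plan is to prove the equivalent contrapositive: there is an absolute constant $c \geq 1$ such that every graph $G$ with average degree $d(G) \coloneqq \frac{1}{|V(G)|}\sum_{v \in V(G)}\deg_G(v) > c h^2$ contains $K_h$ as a topological subgraph, which yields the theorem with $\adeg \coloneqq c$. Recall that a subdivision of $K_h$ consists of $h$ \emph{branch vertices} together with $\binom{h}{2}$ pairwise internally vertex-disjoint paths, one joining each pair of branch vertices. As a warm-up reduction, deleting from $G$ any vertex of current degree below $d(G)/2$ removes in total fewer than $|V(G)| \cdot d(G)/2 = |E(G)|$ edges, so the process halts at a nonempty subgraph $G_0 \subseteq G$ with $\delta(G_0) \geq d(G)/2 = \Omega(h^2)$; it thus suffices to find a $K_h$-subdivision in a graph of minimum degree $\Omega(h^2)$.

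\textbf{From degree to linkedness.} Call a graph $k$-\emph{linked} if for every $2k$ distinct vertices $s_1,t_1,\dots,s_k,t_k$ there are pairwise vertex-disjoint paths $P_1,\dots,P_k$ with $P_i$ joining $s_i$ and $t_i$. I would pass to a subgraph $G_1 \subseteq G_0$ that is $k$-linked for $k \coloneqq \binom{h}{2}$ and still has minimum degree $\gg h^2$, by combining two classical ingredients: (i) Mader's theorem that a graph of average degree $\Omega(k)$ contains a $\Theta(k)$-connected subgraph, applied to $G_0$; and (ii) the Thomas--Wollan theorem that every $10k$-connected graph is $k$-linked. Choosing the constant $c$ large enough that the connectivity obtained in (i) exceeds $10k + h$ guarantees that even after deleting $h$ vertices the graph is still $k$-linked, and a highly connected graph automatically has minimum degree far exceeding $h^2$.

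\textbf{Embedding $K_h$.} Since $\delta(G_1) \gg h^2$, greedily choose $h$ distinct branch vertices $v_1,\dots,v_h$ and then, for each $i$ and each $j \neq i$, a \emph{stub} $s_i^j \in N_{G_1}(v_i)$, all the $h(h-1) = 2k$ stubs distinct and distinct from the branch vertices (at each step at most $h^2 - 1 < \delta(G_1)$ vertices are forbidden, so a choice exists). Now work in $G_1' \coloneqq G_1 - \{v_1,\dots,v_h\}$, which is still $k$-linked by the choice of $c$, and apply $k$-linkedness to the $2k$ distinct terminals with pairs $(s_i^j, s_j^i)$, one pair for each edge $ij$ of $K_h$: this yields pairwise vertex-disjoint paths $P_{ij} \subseteq G_1'$. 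Each $P_{ij}$ avoids every branch vertex, so $Q_{ij} \coloneqq v_i\, s_i^j\, P_{ij}\, s_j^i\, v_j$ is a genuine path joining $v_i$ and $v_j$; since the $P_{ij}$ are pairwise vertex-disjoint and avoid $\{v_1,\dots,v_h\}$, the paths $Q_{ij}$ are pairwise internally vertex-disjoint. Hence $\{v_1,\dots,v_h\}$ together with the $Q_{ij}$ is a subdivision of $K_h$ in $G_1 \subseteq G$, as desired.

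\textbf{Main obstacle.} The difficulty is concentrated entirely in the passage from a degree bound to $k$-linkedness (equivalently, to a $K_h$-subdivision) with a \emph{quadratic} dependence $\Theta(h^2)$ in $h$. The elementary Mader argument only yields that minimum degree $2^{\binom{h}{2}}$ forces a $K_h$-subdivision, which is exponentially too weak; obtaining the tight order is exactly the content of \cite{BollobasT98,KomlosS96}. It can be realized either via the Thomas--Wollan linkedness theorem used above --- itself a nontrivial result --- or, following Komlós--Szemerédi, by first extracting a sublinear \emph{expander} (a subgraph in which every not-too-large set $T$ has $|N(T)| = \Omega(|T|/\log^2|T|)$) and then building the $\binom{h}{2}$ connecting paths one by one as short paths through the expander, deleting only polylogarithmically many vertices per path so that expansion, and hence routability, is maintained. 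The remaining bookkeeping with absolute constants is routine. Finally, the bound is optimal up to the value of $\adeg$: for $a = \Theta(h^2)$ the complete bipartite graph $K_{a,a}$ has average degree $\Theta(h^2)$ yet contains no topological $K_h$, since realizing $K_h$ inside a bipartite graph would force one side to host $\Omega(h^2)$ pairwise-disjoint path interiors.
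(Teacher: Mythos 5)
This statement is imported by the paper as a black-box result from \cite{BollobasT98,KomlosS96}; the paper gives no proof of its own, so there is nothing internal to compare against. Your outline is an accurate reconstruction of how the cited works establish the bound: reduce to large minimum degree, pass to a highly connected subgraph (Mader), convert connectivity into $\binom{h}{2}$-linkedness, and then route the $\binom{h}{2}$ branch paths between $h$ branch vertices via distinct stubs. The embedding step and the bookkeeping (reserving $h$ extra units of connectivity so the graph stays linked after deleting the branch vertices, and choosing the $2k$ stubs greedily against fewer than $h^2$ forbidden vertices) are correct, and your tightness example $K_{t,t}$ with $t = \Theta(h^2)$ is the standard one. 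You are also right that the entire difficulty sits in the degree-to-linkedness step with quadratic dependence: the elementary argument only gives an exponential bound, and the quadratic bound \emph{is} the content of the cited theorems. Note that you invoke Thomas--Wollan, which postdates both cited papers; Bollobás--Thomason used their own linkage theorem ($22k$-connected implies $k$-linked), and Komlós--Szemerédi used the sublinear-expander route you sketch as an alternative. So as a self-contained proof your proposal is incomplete (it black-boxes a deep linkage or expander theorem), but as an account of the cited result it is faithful and the reductions you do carry out are sound.
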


In the remainder of this work, we always use $\adeg$ to refer to the constant from the theorem above.

\subsection{Weisfeiler-Leman Algorithm}

The Weisfeiler-Leman algorithm, originally introduced by Weisfeiler and Leman in its $2$-di\-men\-sio\-nal version \cite{WeisfeilerL68} (see also \cite{Weisfeiler76}), forms one of the most fundamental subroutines in the context of isomorphism testing.

Let~$\chi_1,\chi_2\colon V^k \rightarrow C$ be colorings of the~$k$-tuples of vertices of~$G$, where~$C$ is some finite set of colors. 
We say $\chi_1$ \emph{refines} $\chi_2$, denoted $\chi_1 \preceq \chi_2$, if $\chi_1(\bar v) = \chi_1(\bar w)$ implies $\chi_2(\bar v) = \chi_2(\bar w)$ for all $\bar v,\bar w \in V^{k}$.
The colorings $\chi_1$ and $\chi_2$ are \emph{equivalent}, denoted $\chi_1 \equiv \chi_2$,  if $\chi_1 \preceq \chi_2$ and $\chi_2 \preceq \chi_1$.

The \emph{Color Refinement algorithm} (i.e., the $1$-dimensional Weisfeiler-Leman algorithm) is a procedure that, given a graph $G$, iteratively computes an isomorphism-invariant coloring of the vertices of $G$.
In this work, we actually require an extension of the Color Refinement algorithm that apart from vertex-colors also takes arc-colors into account.
For a vertex- and arc-colored graph $(G,\chi_V,\chi_E)$ define $\ColRefIt{0}{G} \coloneqq \chi_V$ to be the initial coloring for the algorithm.
This coloring is iteratively refined by defining $\ColRefIt{i+1}{G}(v) \coloneqq (\ColRefIt{i}{G}(v), \CM_i(v))$ where
\[\CM_i(v) \coloneqq\big\{\!\!\big\{\big(\ColRefIt{i}{G}(w),\chi_E(v,w),\chi_E(w,v)\big) \;\big\vert\; w \in N_G(v) \big\}\!\!\big\}\]
(and $\{\!\!\{ \dots \}\!\!\}$ denotes a multiset).
By definition, $\ColRefIt{i+1}{G} \preceq \ColRefIt{i}{G}$ for all $i \geq 0$.
Hence, there is a minimal value $i_\infty$ such that $\ColRefIt{i_\infty}{G} \equiv \ColRefIt{i_\infty+1}{G}$.
We define $\ColRef{G} \coloneqq \ColRefIt{i_\infty}{G}$.
The Color Refinement algorithm takes as input a vertex- and arc-colored graph $(G,\chi_V,\chi_E)$ and returns (a coloring that is equivalent to) $\ColRef{G}$.
The procedure can be implemented in time $\CO((m+n)\log n)$ (see, e.g., \cite{BerkholzBG17}) where $n$ and $m$ denote the number of vertices and edges of $G$, respectively.

Next, we describe the \emph{$k$-dimensional Weisfeiler-Leman algorithm} ($k$-WL) for all $k \geq 2$.
For an input graph $G$ let $\WLit{k}{0}{G}\colon (V(G))^{k} \rightarrow C$ be the coloring where each tuple is colored with the isomorphism type of its underlying ordered subgraph.
More precisely, $\WLit{k}{0}{G}(v_1,\dots,v_k) = \WLit{k}{0}{G}(v_1',\dots,v_k')$ if and only if, for all $i,j \in [k]$, it holds that
$v_i = v_j \Leftrightarrow v_i'= v_j'$ and $v_iv_j \in E(G) \Leftrightarrow v_i'v_j' \in E(G)$.
If the graph is equipped with a coloring the initial coloring $\WLit{k}{0}{G}$ also takes the input coloring into account.
More precisely, for a vertex-coloring $\chi_V$, it additionally holds that $\chi_V(v_i) = \chi_V(v_i')$ for all $i \in [k]$.
For an arc-coloring $\chi_E$, it is the case that $\chi_E(v_i,v_j) = \chi_E(v_i',v_j')$ for all $i,j \in [k]$ such that $v_iv_j \in E(G)$.
Finally, for a pair coloring $\chi_P$, it holds that $\chi_P(v_i,v_j) = \chi_P(v_i',v_j')$ is additionally satisfied for all $i,j \in [k]$.

We then recursively define the coloring $\WLit{k}{i}{G}$ obtained after $i$ rounds of the algorithm.
For $\bar v = (v_1,\dots,v_k) \in (V(G))^k$ let
\[\WLit{k}{i+1}{G}(\bar v) \coloneqq \Big(\WLit{k}{i}{G}(\bar v), \CM_i(\bar v)\Big)\]
where
\[\CM_i(\bar v) \coloneqq \Big\{\!\!\Big\{\big(\WLit{k}{i}{G}(\bar v[w/1]),\dots,\WLit{k}{i}{G}(\bar v[w/k])\big) \;\Big\vert\; w \in V(G) \Big\}\!\!\Big\}\]
and $\bar v[w/i] \coloneqq (v_1,\dots,v_{i-1},w,v_{i+1},\dots,v_k)$ is the tuple obtained from $\bar v$ by replacing the $i$-th entry by $w$.
Again, there is a minimal~$i_\infty$ such that $\WLit{k}{i_{\infty}}{G} \equiv \WLit{k}{i_{\infty}+1}{G}$ and for this $i_\infty$ we define $\WL{k}{G} \coloneqq \WLit{k}{i_\infty+1}{G}$.

The $k$-dimensional Weisfeiler-Leman algorithm takes as input a (vertex-, arc- or pair-)colored graph $G$ and returns (a coloring that is equivalent to) $\WL{k}{G}$.
This can be implemented in time $\CO(n^{k+1}\log n)$ (see \cite{ImmermanL90}).

Let $G$ be a graph.
Let $k \geq 1$ and let $\chi\colon(V(G))^{k} \rightarrow C$ be a coloring of $k$-tuples.
We say that $\chi$ is \emph{$k$-stable on $G$} if $\chi \preceq \WLit{k}{0}{G}$ and $\chi$ is not refined by applying one round of the $k$-dimensional Weisfeiler-Leman algorithm (the $1$-dimensional Weisfeiler-Leman algorithm is defined as the Color Refinement algorithm).
In particular, $\WL{k}{G}$ is $k$-stable on $G$.
The following facts are well-known (for example, they follow easily from known characterizations of $k$-WL \cite{CaiFI92,ImmermanL90}).

\begin{fact}
 \label{fact:k-wl-is-stable}
 Let $G$ be a graph and let $1 \leq \ell \leq k$.
 Also, define 
 \[\chi(v_1,\dots,v_\ell) \coloneqq \WL{k}{G}(v_1,\dots,v_\ell,v_\ell,\dots,v_\ell)\]
 for all $v_1,\dots,v_\ell \in V(G)$.
 Then $\chi$ is $\ell$-stable on $G$.
\end{fact}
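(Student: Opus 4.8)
The plan is to verify the two requirements in the definition of $\ell$-stability: that $\chi \preceq \WLit{\ell}{0}{G}$, and that one round of the $\ell$-dimensional Weisfeiler-Leman algorithm does not refine $\chi$. The first requirement is immediate: since $\WL{k}{G} \preceq \WLit{k}{0}{G}$, the colour $\WL{k}{G}(v_1,\dots,v_\ell,v_\ell,\dots,v_\ell)$ records, for all $i,j \in [\ell]$, whether $v_i = v_j$ and whether $v_iv_j \in E(G)$ (together with all vertex-, arc- and pair-colours among the first $\ell$ coordinates), which is precisely the information encoded by $\WLit{\ell}{0}{G}(v_1,\dots,v_\ell)$; hence $\chi \preceq \WLit{\ell}{0}{G}$.

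For the second requirement I would argue by downward induction on $\ell$, the base case $\ell = k$ being the already-noted fact that $\WL{k}{G}$ is $k$-stable. It then suffices to prove the single step: \emph{if $\psi\colon (V(G))^m \rightarrow C$ is $m$-stable on $G$ for some $m \ge 2$, then $\psi'(v_1,\dots,v_{m-1}) \coloneqq \psi(v_1,\dots,v_{m-1},v_{m-1})$ is $(m-1)$-stable on $G$}; applying it $k-\ell$ times starting from $\WL{k}{G}$ produces precisely the coloring $\chi$. Assume first $m \ge 3$, let $\bar v = (v_1,\dots,v_{m-1})$ and $\bar w = (w_1,\dots,w_{m-1})$ with $\psi'(\bar v) = \psi'(\bar w)$, and write $\hat v \coloneqq (v_1,\dots,v_{m-1},v_{m-1})$ and $\hat w$ analogously, so $\psi(\hat v) = \psi(\hat w)$. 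By $m$-stability the one-round refinement multisets at $\hat v$ and $\hat w$ agree, so there is a bijection $\sigma\colon V(G) \rightarrow V(G)$ with $\psi(\hat v[x/j]) = \psi(\hat w[\sigma(x)/j])$ for all $x \in V(G)$ and all $j \in [m]$. For $j \le m-2$ this already reads $\psi'(\bar v[x/j]) = \psi'(\bar w[\sigma(x)/j])$. For $j = m-1$ it gives $\psi(v_1,\dots,v_{m-2},x,v_{m-1}) = \psi(w_1,\dots,w_{m-2},\sigma(x),w_{m-1})$, and applying $m$-stability once more to this pair and substituting in the last coordinate yields a bijection $\tau$ with $\psi(v_1,\dots,v_{m-2},x,y) = \psi(w_1,\dots,w_{m-2},\sigma(x),\tau(y))$ for all $y$; taking $y = x$, the tuple on the left has its last two coordinates equal, so, since $\psi \preceq \WLit{m}{0}{G}$, the same must hold on the right, forcing $\tau(x) = \sigma(x)$ and hence $\psi'(\bar v[x/(m-1)]) = \psi'(\bar w[\sigma(x)/(m-1)])$. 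Thus $\sigma$ witnesses that the $(m-1)$-WL refinement multisets of $\psi'$ coincide, so $\psi'$ is $(m-1)$-stable.

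For the remaining base step $m = 2$ the same scheme applies, except that a round of $1$-WL is the Color Refinement step: $\psi(v,v) = \psi(w,w)$ yields a bijection $\sigma$ with $\psi(x,v) = \psi(\sigma(x),w)$ and $\psi(v,x) = \psi(w,\sigma(x))$ for all $x$; since $\psi \preceq \WLit{2}{0}{G}$ this recovers adjacency and the two arc-colours on each edge, so $\sigma$ restricts to a bijection $N_G(v) \rightarrow N_G(w)$ respecting the arc-colours, and the same diagonal trick as above gives $\psi(x,x) = \psi(\sigma(x),\sigma(x))$, whence the Color Refinement neighbourhood multisets at $v$ and $w$ agree. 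If $G$ carries vertex-, arc- or pair-colourings, nothing changes, since $\WLit{k}{0}{G}$ already takes them into account.

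The only step that is not purely mechanical is the treatment of the ``diagonal'' coordinate $j = m-1$: there $m$-stability hands us the colour of $\psi(v_1,\dots,v_{m-2},x,v_{m-1})$, whereas what $\psi'$ needs is the colour of $\psi(v_1,\dots,v_{m-2},x,x)$, with the last coordinate also moved onto $x$. What makes this work is that a second application of $m$-stability lets us move the last coordinate, and that a stable coloring refines the initial WL-coloring and hence detects which coordinates of a tuple coincide, which is exactly what pins down the second bijection $\tau$ to agree with $\sigma$ at the relevant argument. Everything else — the passage from equal multisets of tuples to a componentwise-matching bijection, and the bookkeeping for colourings — is routine.
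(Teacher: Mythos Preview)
The paper states this as a well-known fact without proof, so there is nothing to compare against. Your argument is correct: the downward induction from $m$-stability of $\psi$ to $(m-1)$-stability of its diagonal restriction $\psi'(v_1,\dots,v_{m-1}) = \psi(v_1,\dots,v_{m-1},v_{m-1})$ is a clean and standard route, and the only non-mechanical point---handling the coordinate $j = m-1$ by a second application of $m$-stability together with the fact that $\psi \preceq \WLit{m}{0}{G}$ detects coinciding coordinates, thereby forcing $\tau(x) = \sigma(x)$---is exactly what is needed. One tiny remark: in the inductive step you should also record that $\psi' \preceq \WLit{m-1}{0}{G}$ (this is part of the definition of $(m-1)$-stability), but that is immediate from $\psi \preceq \WLit{m}{0}{G}$.
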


\begin{fact}
 \label{fact:k-wl-is-stable-after-individualization}
 Let $G$ be a graph and let $1 \leq \ell < k$.
 Let $w \in V(G)$ and define 
 \[\chi(v_1,\dots,v_\ell) \coloneqq \WL{k}{G}(w,v_1,\dots,v_\ell,v_\ell,\dots,v_\ell)\]
 for all $v_1,\dots,v_\ell \in V(G)$.
 Then $\chi$ is $\ell$-stable on $(G,\chi_w)$ where $\chi_w$ is the vertex-coloring defined via $\chi_w(w) \coloneqq 1$ and $\chi_w(v) \coloneqq 0$ for all $v \in V(G) \setminus \{w\}$.
\end{fact}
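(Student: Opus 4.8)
The plan is to obtain the statement by combining Fact~\ref{fact:k-wl-is-stable} with a routine \emph{individualization} argument. Since $\ell<k$ we have $\ell+1\le k$, so applying Fact~\ref{fact:k-wl-is-stable} with $\ell$ replaced by $\ell+1$ shows that the coloring
\[\zeta(x_0,x_1,\dots,x_\ell)\coloneqq\WL{k}{G}(x_0,x_1,\dots,x_\ell,x_\ell,\dots,x_\ell)\]
of $(\ell+1)$-tuples is $(\ell+1)$-stable on $G$. Since $\chi(v_1,\dots,v_\ell)=\zeta(w,v_1,\dots,v_\ell)$, it then suffices to prove the following: for every coloring $\zeta$ of $(\ell+1)$-tuples that is $(\ell+1)$-stable on $G$ and every $w\in V(G)$, the coloring $\zeta_w(x_1,\dots,x_\ell)\coloneqq\zeta(w,x_1,\dots,x_\ell)$ is $\ell$-stable on $(G,\chi_w)$. (Fact~\ref{fact:k-wl-is-stable} is only needed to reach \emph{some} stable coloring of $(\ell+1)$-tuples; the remaining work lies in this individualization statement, which no longer mentions $\WL{k}{G}$.)

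To prove the individualization statement I would verify the two defining conditions of $\ell$-stability. First, since $\zeta\preceq\WLit{\ell+1}{0}{G}$, the color $\zeta(w,x_1,\dots,x_\ell)$ determines the full initial color $\WLit{\ell+1}{0}{G}(w,x_1,\dots,x_\ell)$, which records all equalities and adjacencies among $w,x_1,\dots,x_\ell$ together with their colors in $G$; in particular it determines, for each $i$, whether $x_i=w$ — hence $\chi_w(x_i)$ — as well as $\WLit{\ell}{0}{(G,\chi_w)}(x_1,\dots,x_\ell)$, so $\zeta_w\preceq\WLit{\ell}{0}{(G,\chi_w)}$. Second, assume $\zeta_w(\bar x)=\zeta_w(\bar x')$ for $\ell$-tuples $\bar x,\bar x'$, i.e.\ $\zeta(w,\bar x)=\zeta(w,\bar x')$. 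Because $\zeta$ is $(\ell+1)$-stable, one round of $(\ell+1)$-WL does not refine it, so $\CM(w,\bar x)=\CM(w,\bar x')$ where $\CM(\bar c)\coloneqq\{\!\!\{(\zeta(\bar c[u/1]),\dots,\zeta(\bar c[u/(\ell+1)]))\mid u\in V(G)\}\!\!\}$. Now $(w,\bar x)[u/(i+1)]=(w,\bar x[u/i])$ for $i\in[\ell]$, and the first component of each tuple occurring in $\CM(w,\bar x)$ — obtained by substituting into the pinned coordinate — carries no information about $\zeta_w$; deleting it from every tuple in both multisets therefore yields equality of $\{\!\!\{(\zeta_w(\bar x[u/1]),\dots,\zeta_w(\bar x[u/\ell]))\mid u\in V(G)\}\!\!\}$ with the corresponding multiset for $\bar x'$. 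For $\ell\ge2$ this is exactly the statement that one round of $\ell$-WL on $(G,\chi_w)$ does not refine $\zeta_w$; for $\ell=1$ one additionally restricts the projected multiset to the sub-multiset indexed by the neighbors of $x_1$, which is intrinsic since $\zeta(u,x_1)$ already records whether $ux_1\in E(G)$. This proves the claim, and hence the proposition.

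The step requiring the most care is this second part of the individualization claim: one must check that projecting the $(\ell+1)$-dimensional refinement multiset onto the last $\ell$ coordinates (and, when $\ell=1$, further cutting down to neighbors) genuinely produces the $\ell$-dimensional refinement multiset of $\zeta_w$ on $(G,\chi_w)$ — the point being that the coordinate pinned to $w$ contributes exactly one extra, redundant component to each tuple of the multiset. If one wished to avoid invoking Fact~\ref{fact:k-wl-is-stable} altogether, the alternative is to prove directly, by induction on the number of repeated coordinates, that the collapse $(v_1,\dots,v_\ell)\mapsto\eta(v_1,\dots,v_\ell,v_\ell,\dots,v_\ell)$ of any $m$-stable coloring $\eta$ is $\ell$-stable; there the genuinely non-routine ingredient — the step I would single out as the main obstacle of that route — is the auxiliary fact that duplicating one coordinate of $\eta$ into another preserves $\eta$-color-equivalence. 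This one gets by observing that, inside the refinement multiset at a tuple $\bar c$, the term corresponding to such a duplication is canonically singled out as the unique term whose relevant component (via $\eta$'s refinement of the initial coloring) witnesses that the two coordinates have become equal; iterating this handles an arbitrary block of repeated coordinates.
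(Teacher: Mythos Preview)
The paper does not prove this statement: it is stated as a ``Fact'' under the heading ``The following facts are well-known'' and no argument is given. So there is nothing to compare against on the paper's side.

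Your proof is correct. Reducing to Fact~\ref{fact:k-wl-is-stable} (applied with $\ell+1$ in place of $\ell$) and then showing that pinning one coordinate of an $(\ell+1)$-stable coloring yields an $\ell$-stable coloring on the individualized graph is exactly the standard route. The projection step is fine: equality of the $(\ell+1)$-dimensional refinement multisets $\CM(w,\bar x)=\CM(w,\bar x')$ is preserved under deleting the first component of every tuple, and the remaining components are precisely $\zeta_w(\bar x[u/i])$ for $i\in[\ell]$. Your handling of the $\ell=1$ case is also right, though phrased slightly loosely: you should restrict \emph{before} projecting (use the first component $\zeta(u,x_1)$, which encodes adjacency, to cut down to $u\in N_G(x_1)$, and only then drop it), not after; but that is clearly what you intend. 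The alternative route you sketch at the end (direct induction on the number of repeated coordinates, using that duplicating a coordinate preserves $\eta$-equivalence) also works and is the other standard proof of these facts.
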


\section{Allowing Color Refinement to Split Small Color Classes}
\label{sec:t-cr}

In the following, we provide a more detailed overview for the main algorithm testing isomorphism of graphs excluding $K_h$ as a topological subgraph.
On a high-level, the algorithm builds on a decomposition strategy.
Let $G_1$ and $G_2$ denote the two input graphs.
By testing isomorphisms of connected components separately, we may assume without loss of generality that $G_1$ and $G_2$ are connected.
The algorithm aims at finding isomorphism-invariant sets $D_1$ and $D_2$ and recursively computes isomorphisms between all pairs of components of $G_1 - D_1$ and $G_2 - D_2$.
In order to combine the partial isomorphisms between connected components of $G_1 - D_1$ and $G_2 - D_2$ into full isomorphisms between $G_1$ and $G_2$ the algorithm builds on the various group-theoretic tools (to be discussed in Section \ref{sec:group-machinery}).
Here, one of the crucial properties is that $|N_{G_i}(Z)|$ is polynomially bounded in $h$ for every connected component $Z$ of $G_i - D_i$.
To test isomorphisms between $G_1[D_1]$ and $G_2[D_2]$, the algorithms builds on the notion of $t$-CR-bounded graphs originally introduced by Ponomarenko \cite{Ponomarenko89}.

Intuitively speaking, a (vertex- and arc-colored) graph $G$ is $t$-CR-bounded, $t \in \NN$, if the vertex coloring of $G$ can be turned into the discrete coloring (i.e., each vertex has its own color) by repeatedly
\begin{itemize}
 \item performing the Color Refinement algorithm (expressed by the letters `CR'), and
 \item taking a color class $[v]_\chi \coloneqq \{w \in V(G) \mid \chi(w) = \chi(v)\}$ of some vertex $v \in V(G)$ of size $|[v]_\chi| \leq t$ and assigning each vertex from the class its own color.
\end{itemize}
An example is given Figure \ref{fig:t-cr-example}. The next definition formalizes this concept.

\begin{figure}
 \centering
 \scalebox{0.9}{
 \begin{tikzpicture}
  \node at (0.2,2.0) {$\chi_1$:};
  \node[vertex,fill=blue] (v1-1) at (1.5,2.0) {};
  \node[vertex,fill=blue] (v1-2) at (1.5,-1.0) {};
  \node[vertex,fill=darkpastelgreen] (v1-3) at (0.2,0.5) {};
  \node[vertex,fill=darkpastelgreen] (v1-4) at (2.8,0.5) {};
  \node[vertex,fill=red] (v1-5) at (1.1,1.0) {};
  \node[vertex,fill=red] (v1-6) at (1.1,0.0) {};
  \node[vertex,fill=red] (v1-7) at (1.9,0.0) {};
  \node[vertex,fill=red] (v1-8) at (1.9,1.0) {};

  \node at (4.2,2.0) {$\chi_2$:};
  \node[vertex,fill=violet] (v2-1) at (5.5,2.0) {};
  \node[vertex,fill=blue] (v2-2) at (5.5,-1.0) {};
  \node[vertex,fill=darkpastelgreen] (v2-3) at (4.2,0.5) {};
  \node[vertex,fill=orange] (v2-4) at (6.8,0.5) {};
  \node[vertex,fill=red] (v2-5) at (5.1,1.0) {};
  \node[vertex,fill=red] (v2-6) at (5.1,0.0) {};
  \node[vertex,fill=red] (v2-7) at (5.9,0.0) {};
  \node[vertex,fill=red] (v2-8) at (5.9,1.0) {};

  \node at (8.2,2.0) {$\chi_3$:};
  \node[vertex,fill=violet] (v3-1) at (9.5,2.0) {};
  \node[vertex,fill=blue] (v3-2) at (9.5,-1.0) {};
  \node[vertex,fill=darkpastelgreen] (v3-3) at (8.2,0.5) {};
  \node[vertex,fill=orange] (v3-4) at (10.8,0.5) {};
  \node[vertex,fill=red] (v3-5) at (9.1,1.0) {};
  \node[vertex,fill=red] (v3-6) at (9.1,0.0) {};
  \node[vertex,fill=cyan] (v3-7) at (9.9,0.0) {};
  \node[vertex,fill=cyan] (v3-8) at (9.9,1.0) {};

  \node at (12.2,2.0) {$\chi_4$:};
  \node[vertex,fill=violet] (v4-1) at (13.5,2.0) {};
  \node[vertex,fill=blue] (v4-2) at (13.5,-1.0) {};
  \node[vertex,fill=darkpastelgreen] (v4-3) at (12.2,0.5) {};
  \node[vertex,fill=orange] (v4-4) at (14.8,0.5) {};
  \node[vertex,fill=red] (v4-5) at (13.1,1.0) {};
  \node[vertex,fill=yellow] (v4-6) at (13.1,0.0) {};
  \node[vertex,fill=magenta] (v4-7) at (13.9,0.0) {};
  \node[vertex,fill=cyan] (v4-8) at (13.9,1.0) {};

  \foreach \i in {1,...,4}{
   \foreach \v/\w in {1/3,1/4,2/3,2/4,3/5,3/6,4/7,4/8,5/6,5/8,6/7,7/8}{
    \draw[thick] (v\i-\v) edge (v\i-\w);
   }
  }
 \end{tikzpicture}
 }
 \caption{Visualization of a graph $G$ and the sequence of colorings described in Definition \ref{def:t-cr-bounded} for $t = 2$.
  The coloring $\chi_4$ is discrete, so $G$ is $2$-CR-bounded.}
 \label{fig:t-cr-example}
\end{figure}

\begin{definition}
 \label{def:t-cr-bounded}
 Let $t \geq 1$ and let $G = (V,E,\chi_V,\chi_E)$ be a vertex- and arc-colored graph.
 We define the sequence $(\chi_i)_{i \geq 0}$ of colorings where $\chi_0 \coloneqq \chi_V$,
 \[\chi_{2i+1} \coloneqq \WL{1}{V,E,\chi_{2i},\chi_E}\]
 and
 \[\chi_{2i+2}(v) \coloneqq \begin{cases}
                     (v,1)              & \text{if } |[v]_{\chi_{2i+1}}| \leq t\\
                     (\chi_{2i+1}(v),0) & \text{otherwise}
                    \end{cases}\]
 for all $i \geq 0$.
 
 For the minimal $i_\infty \geq 0$ such that $\chi_{i_\infty} \equiv \chi_{i_\infty+1}$, we refer to $\chi_{i_\infty}$ as the \emph{$t$-CR-stable} coloring of $G$ and denote it by $\tColRef{t}{G}$.

 The graph $G$ is \emph{$t$-CR-bounded} if $\tColRef{t}{G}$ is discrete.
\end{definition}

\begin{remark}
 Recall that we generally assume the set of colors be to linearly ordered.
 For the colorings defined in Definition \ref{def:t-cr-bounded}, we do not require this property as we are only interested in the partition into color classes.
\end{remark}

Assuming $G_1[D_1]$ and $G_2[D_2]$ are $t$-CR-bounded, isomorphisms between the subgraphs can be found in time $n^{\polylog(t)}$ building on the group-theoretic graph isomorphism machinery \cite{Neuen22}.
Also, using the tools from \cite{Wiebking20}, one can incorporate the partial isomorphisms between connected components of $G_1 - D_1$ and $G_2 - D_2$, assuming $|N_{G_i}(Z)|$ is polynomially bounded in $h$ for every connected component $Z$ of $G_i - D_i$.
Hence, the main task is to find suitable sets $D_1$ and $D_2$.
Here, we follow the same strategy as in \cite{GroheNW23} and rely on a closure operator associated with $t$-CR-bounded graphs.
Let $G$ be a graph and let $X \subseteq V(G)$ be a set of vertices.
Let $\chi_V^*\colon V(G) \rightarrow C$ be the vertex-coloring obtained by individualizing all vertices in the set $X$, i.e., $\chi_V^*(v) \coloneqq (v,1)$ for $v \in X$ and $\chi_V^*(v) \coloneqq (0,0)$ for $v \in V(G) \setminus X$.
Let $\chi \coloneqq \tColRef{t}{G,\chi_V^*}$ denote the $t$-CR-stable coloring with respect to the input graph $(G,\chi_V^*)$.
We define the \emph{$t$-closure} of the set $X$ (with respect to $G$) to be the set
\[\cl_t^G(X) \coloneqq \left\{v \in V(G) \mid |[v]_{\chi}| = 1\right\}.\]
Observe that $X \subseteq \cl_t^G(X)$.
For $v_1,\dots,v_\ell \in V(G)$ we also use $\cl_t^G(v_1,\dots,v_\ell)$ as a shorthand for $\cl_t^G(\{v_1,\dots,v_\ell\})$.
If the input graph is equipped with a vertex- or arc-coloring, all definitions are extended in the natural way. 

We also build the $t$-closure for vertex sets over pair-colored graphs $(G,\chi)$.
Let $n$ denote the number of vertices of $G$.
We define a vertex- and arc-coloring $\widetilde{\chi}_V$ and $\widetilde{\chi}_E$ of the complete graph $K_n$ by $\widetilde{\chi}_V(v) \coloneqq \chi(v,v)$ and $\widetilde{\chi}_E(v,w) \coloneqq (1,\chi(v,w))$ for all $vw \in E(G)$ and $\widetilde{\chi}_E(v,w) \coloneqq (0,\chi(v,w))$ for all $v,w \in V(G)$ such that $v \neq w$ and $vw \notin E(G)$.
Now, for $X \subseteq V(G)$ let
\[\cl_t^{(G,\chi)}(X) \coloneqq \cl_t^{(K_n,\widetilde{\chi}_V,\widetilde{\chi}_E)}(X).\]
As before, for vertices $v_1,\dots,v_\ell \in V(G)$, we again write $\cl_t^{(G,\chi)}(v_1,\dots,v_\ell)$ as a shorthand for $\cl_t^{(G,\chi)}(\{v_1,\dots,v_\ell\})$.

Following the general strategy outlined above, the goal is to compute suitable isomorphism-invariant sets $X_i \subseteq V(G_i)$ and define $D_i \coloneqq \cl_t^{G_i}(X_i)$ for both $i \in \{1,2\}$ (for some suitable choice of $t$).
We refer to the sets $X_1$ and $X_2$ as the \emph{initial sets}.
More precisely, the algorithms aims at finding isomorphism-invariant initial sets $X_i \subseteq V(G_i)$ and pair-colorings $\chi_i \colon (V(G_i))^2 \rightarrow C$ such that the following properties hold for the some $t \in \NN$ that is polynomially bounded in $h$.
\begin{enumerate}[label = (\Alph*)]
 \item\label{item:main-tool-1-formal} $|N_{G_i}(Z)| < h$ for every connected component $Z$ of $G_i - D_i$ where $D_i \coloneqq \cl_t^{(G_i,\chi_i)}(X_i)$, and
 \item\label{item:main-tool-2-formal} $X_i \subseteq \cl_t^{(G_i,\chi_i)}(v)$ for every $v \in X_i$.
\end{enumerate}
The second property implies that $D_i = \cl_t^{(G_i,\chi_i)}(v)$ for every $v \in X_i$.
This allows us to test isomorphisms between $G_1[D_1]$ and $G_2[D_2]$ using the group-theoretic graph isomorphism machinery after individualizing a single vertex.
Also, as already discussed above, the first property allows to incorporate the partial isomorphisms between connected components of $G_1 - D_1$ and $G_2 - D_2$ using tools from \cite{GroheNSW20, Wiebking20}.

Now, Property \ref{item:main-tool-1-formal} has already been proved in \cite{GroheNW23} for graphs excluding $K_h$ as a topological subgraph.

\begin{theorem}[{\cite[Theorem 4.1]{GroheNW23}}]
 \label{thm:small-separator-for-t-cr-bounded-closure}
 Let $G$ be a graph that excludes $K_h$ as a topological subgraph and let $X \subseteq V(G)$.
 Let $t \geq 3h^3$ and define $D \coloneqq \cl_t^{G}(X)$.
 Let $Z$ be the vertex set of a connected component of $G - D$.
 Then $|N_G(Z)| < h$.
\end{theorem}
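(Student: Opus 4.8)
The plan is to argue by contradiction: assuming $|N_G(Z)|\geq h$, I want to produce a subdivision of $K_h$ inside $G$, which immediately contradicts the hypothesis that $G$ excludes $K_h$ as a topological subgraph. Write $\chi\coloneqq\tColRef{t}{G,\chi_V^*}$ for the $t$-CR-stable coloring obtained after individualizing $X$, so that $D=\{v\in V(G)\mid |[v]_\chi|=1\}$ and $S\coloneqq N_G(Z)\subseteq D$ (the latter since $Z$ is a connected component of $G-D$). Because the $t$-CR process keeps splitting every non-singleton class of size at most $t$ until it stabilizes, every $\chi$-class is either a singleton or has size strictly greater than $t\geq 3h^3$; in particular every vertex of $V(G)\setminus D$, hence every vertex of $Z$, lies in a $\chi$-class of size more than $t$. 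The two facts I would rely on are consequences of Color Refinement stability. First, if $u\in V(G)\setminus D$ and $s\in N_G(u)\cap D$, then $s$ is adjacent to \emph{every} vertex of $[u]_\chi$, since each vertex of $[u]_\chi$ has exactly as many neighbors of the singleton color $\chi(s)$ as $u$ does, namely one, and that neighbor can only be $s$. Second, whenever two $\chi$-classes $C,C'$ are joined by at least one edge, the bipartite graph $G[C,C']$ is biregular and hence, as noted in the preliminaries, contains a matching of size $\min(|C|,|C'|)$.

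The construction then proceeds as follows. Fix $h$ vertices $s_1,\dots,s_h\in S$ to serve as branch vertices, and for each pair $\{i,j\}$ route a path $P_{ij}$ from $s_i$ to $s_j$ whose internal vertices lie in $Z$, so that the interiors automatically avoid $D\supseteq S$; all the $P_{ij}$ together then form the desired $K_h$-subdivision. The easiest situation is when many $s_i$ share a common neighbor $z\in Z$: then all of the large class $[z]_\chi$ is adjacent to all of those $s_i$, and since $|[z]_\chi|>t>\binom{h}{2}$ this already contains a complete bipartite graph large enough to house a $K_h$-subdivision by itself. In general I would group the $s_i$ according to the $\chi$-class $[u_i]_\chi$ of a fixed neighbor $u_i\in N_G(s_i)\cap Z$; pairs inside one group are joined by length-two paths through distinct unused vertices of the shared large class, and for pairs in distinct groups one routes genuine paths through $Z$. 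For the latter, connectivity of $Z$ yields a short walk of colors linking $[u_i]_\chi$ to $[u_j]_\chi$ inside $Z$, and one pushes internally disjoint paths along this color walk using the biregular-matching fact, the slack $t\geq 3h^3>3\binom{h}{2}$ guaranteeing that fresh representatives of each color class remain available after the at most $\binom{h}{2}$ paths built so far have been removed. It is here that the quantitative bound $t\geq 3h^3$ is actually consumed. Assembling all $P_{ij}$ yields the $K_h$-subdivision and the contradiction, so $|N_G(Z)|<h$.

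I expect the genuinely delicate step to be the disjoint-path routing between groups: mere connectivity of $Z$ yields only a single path per pair, and forcing $\binom{h}{2}$ of them to be internally disjoint — while keeping control of their lengths so that the large color classes are never exhausted — is exactly what requires the stability of $\chi$ and the polynomial bound on $t$. A cleaner way to organize this step is to phrase the existence of the system $\{P_{ij}\}$ as a single linkage/flow problem, so that the only obstruction is a small separator; a separator of size less than $h$ lying essentially within the union of the large color classes can then be ruled out using the biregularity between adjacent $\chi$-classes, since it would force two color-equivalent regions apart and contradict Color Refinement stability of $\chi$. Either way, the heart of the proof is translating ``$Z$ has at least $h$ neighbors'' into ``$Z$, together with the part of $G$ colored like it, is highly connected,'' which is what the $t$-closure and Color Refinement stability together provide.
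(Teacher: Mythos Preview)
First, note that the present paper does not prove this theorem; it is quoted from \cite{GroheNW20} and used as a black box, so there is no in-paper argument to compare against. You are effectively trying to reconstruct the proof of \cite[Theorem~IV.1]{GroheNW20}.

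Your setup and the two structural facts are correct and are exactly the starting point of that proof: every $\chi$-class outside $D$ has size $>t$, and any $s\in D$ adjacent to some $u\notin D$ is adjacent to all of $[u]_\chi$. Fanning each $s_i$ into the large class $[u_i]_\chi$ and then threading through a sequence of pairwise adjacent color classes is also the right picture.

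The genuine gap is the disjointness bookkeeping. You observe (correctly) that after fewer than $\binom{h}{2}$ paths have been built, each color class contains fewer than $\binom{h}{2}$ used vertices. But this alone does not yield a fresh path through the current color sequence $C_0,\dots,C_m$. Whether one argues via Menger (there are more than $t$ vertex-disjoint $(C_0,C_m)$-paths, and removing all used vertices along the sequence kills at most $(m{+}1)\binom{h}{2}$ of them) or via the expansion estimate
\[
\frac{|R_{\ell+1}|}{|C_{\ell+1}|}\ \ge\ \frac{|R_\ell|}{|C_\ell|}-\frac{\binom{h}{2}}{t},
\]
one needs $(m{+}1)\binom{h}{2}<t$, i.e.\ the length of the color walk must satisfy $m+1\lesssim t/\binom{h}{2}\approx 6h$. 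Nothing in your outline bounds $m$: the color walk extracted from a path in $Z$ can be as long as the number of distinct $\chi$-classes meeting $Z$, which is unbounded in terms of $h$. Your sentence ``the slack $t\ge 3h^3>3\binom{h}{2}$ guarantees fresh representatives'' compares $t$ to $\binom{h}{2}$ rather than to $(m{+}1)\binom{h}{2}$, and so does not close the argument.

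The fallback suggestion---``phrase it as a single linkage/flow problem, so that the only obstruction is a small separator''---does not rescue this either: building a $K_h$-subdivision on prescribed branch vertices is a vertex-disjoint-paths problem with \emph{fixed pairings}, which does not reduce to a single minimum cut, so ``the only obstruction is a small separator'' is false in general. The proof in \cite{GroheNW20} handles the routing more carefully; what is missing from your sketch is precisely a device that either bounds the lengths of the color walks or controls all $\binom{h}{2}$ paths simultaneously rather than greedily.
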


Hence, the remaining task is to find suitable isomorphism-invariant initial sets $X_1$ and $X_2$ that satisfy Property \ref{item:main-tool-2-formal}.
This is the main technical contribution of this paper.
We show that, after applying the $3$-dimensional Weisfeiler-Leman algorithm, there is some color $c_0$ such that, setting $\chi_i(v,w) \coloneqq \WL{3}{G_i}(v,w,w)$ for all $v,w \in V(G_i)$ and $X_i \coloneqq \{v \in V(G_i) \mid \WL{3}{G_i}(v,v,v) = c_0\}$, Property \ref{item:main-tool-2-formal} is satisfied.
Observe that $X_i$ and $\chi_i$ are clearly defined in an isomorphism-invariant manner since the coloring computed the $3$-dimensional Weisfeiler-Leman algorithm is preserved by isomorphisms.
This completes the high-level description of the algorithm.

In the following sections, these ideas are formalized.
In the next section, we first show that the $3$-dimensional Weisfeiler-Leman algorithm is able to provide suitable initial sets $X_1$ and $X_2$.
Afterwards, we review the group-theoretic graph isomorphism machinery in Section \ref{sec:group-machinery} and assemble the main algorithm in Section \ref{sec:algorithm-isomorphism}.

\section{Finding the Initial Set}
\label{sec:initial-color}

In this section, we argue how to compute the initial sets $X_1$ and $X_2$ with the desired properties as discussed in the last section.
Recall the definition of the constant $\adeg$ from Theorem \ref{thm:average-degree-excluded-topological}.
Without loss of generality assume $\adeg \geq 2$.
We define
\begin{equation}
 \label{eq:def-t}
 t(h) \coloneqq \max\{3h^3,\adeg h^2,\adeg^2h^4,12\adeg h^4,36\adeg^2h^5,144\adeg^2h^5\} = 144\adeg^2h^5.
\end{equation}
Here, the term $t(h)$ provides a lower bound on the parameter $t$ for the $t$-closure of a set that is required to achieve the desired properties.
While it is clear that the last term achieves the maximum, the other terms are also stated for later reference.
Indeed, each bound on $t$ will allow us to derive a specific property when building the $t$-closure of a set.
For example, $t \geq 3h^3$ allows the application of Theorem \ref{thm:small-separator-for-t-cr-bounded-closure}.
Combining all these properties eventually allows to prove that the computed initial sets $X_1$ and $X_2$ have the desired properties.
The next theorem forms the key technical contribution of this paper.

\begin{theorem}
 \label{thm:initial-color-via-wl}
 Let $G$ be a connected graph that excludes $K_h$ as a topological subgraph and let $t \geq t(h)$.
 Then there is a color $c_0 \in \{\WL{3}{G}(v,v,v) \mid v \in V(G)\}$ such that, for $\chi(v,w) \coloneqq \WL{3}{G}(v,w,w)$ for all $v,w \in V(G)$ and $X \coloneqq \{v \in V(G) \mid \WL{3}{G}(v,v,v) = c_0\}$, it holds that
 \[X \subseteq \cl_t^{(G,\chi)}(v)\]
 for all $v \in X$.
\end{theorem}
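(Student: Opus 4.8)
The plan is to analyze the "$t$-closure graph" $H$ of $(G,\chi)$ introduced in the overview: the directed graph on $V(G)$ with an arc $(v,w)$ whenever $w \in \cl_t^{(G,\chi)}(v)$. Since $\chi(v,w) = \WL{3}{G}(v,w,w)$ is obtained from $3$-WL, isomorphism-invariance of the closure operator implies that the arc relation of $H$ is refined by $\chi$; in particular whether $(v,w)$ is an arc depends only on $\chi(v,w)$, and the strongly connected component (SCC) structure of $H$ is compatible with the color classes of $\WL{3}{G}(v,v,v)$. The reachability order on SCCs is then also $\WL{3}{G}$-invariant, so I would pick $c_0$ to be the color of a vertex lying in a \emph{maximal} SCC of $H$ (one with no outgoing arcs), chosen isomorphism-invariantly as the smallest such color. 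Setting $X \coloneqq \{v \mid \WL{3}{G}(v,v,v) = c_0\}$, each $v \in X$ lies in a maximal SCC, which is precisely $\cl_t^{(G,\chi)}(v)$; hence for $v,v' \in X$ the sets $\cl_t^{(G,\chi)}(v)$ and $\cl_t^{(G,\chi)}(v')$ are either equal or disjoint. The goal reduces to showing they are always equal, i.e. $X \subseteq \cl_t^{(G,\chi)}(v)$ for $v \in X$.

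Suppose not: then $X$ splits into at least two disjoint closures $D = \cl_t^{(G,\chi)}(v)$ and $D' = \cl_t^{(G,\chi)}(v')$, each a maximal SCC of $H$, and all of them have the same color pattern under $\WL{3}{G}$. The core of the argument is to derive a contradiction with the bounded average degree of topological subgraphs (Theorem~\ref{thm:average-degree-excluded-topological}). I would use properties of $3$-WL to show that, because $D$ and $D'$ are closures of single vertices with identical WL-colors yet are separated, there must be \emph{many} connections between them: concretely, I would exhibit a large family of pairwise internally vertex-disjoint paths in $G$ running between (representatives in) the various closures $\cl_t^{(G,\chi)}(v)$ for $v \in X$. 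The $t$-CR-boundedness of $G[D]$ after individualizing one vertex, combined with Theorem~\ref{thm:small-separator-for-t-cr-bounded-closure} ($|N_G(Z)| < h$ for components $Z$ of $G - D$) and the biregularity/Hall-type counting facts from the preliminaries, should be used to argue that the number of such disjoint paths grows with $t$ (hence is $\Omega(h^5)$ given $t \geq t(h)$). Contracting these paths yields a topological minor on $|X|/|D|$ branch vertices with average degree exceeding $\adeg h^2$, contradicting Theorem~\ref{thm:average-degree-excluded-topological} as soon as the path count beats the degree bound — this is exactly where the various numerical thresholds in $t(h)$, especially the $36\adeg^2 h^5$ and $144\adeg^2 h^5$ terms, are calibrated to close the gap.

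More precisely, for the path-counting step I would proceed in two stages. First, fix $v \in X$ and its closure $D = \cl_t^{(G,\chi)}(v)$; since $D$ is a maximal SCC, for every $w \notin D$ with a large enough color class the vertex $w$ is \emph{not} added to $D$, which via the WL-stability Facts~\ref{fact:k-wl-is-stable} and \ref{fact:k-wl-is-stable-after-individualization} forces a large biregular bipartite "escape" structure from $D$ toward the rest of $G$; a Hall-matching argument then produces many disjoint edges (and, iterating through components of $G - D$ using the $|N_G(Z)| < h$ bound, many disjoint paths) leaving $D$. Second, by isomorphism-invariance the same escape structure exists at every $v' \in X$, and a routing argument shows these paths can be chosen to terminate in other closures $\cl_t^{(G,\chi)}(v')$ rather than getting stuck, because the closures partition a large WL-homogeneous part of $V(G)$. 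Summing over all of $X$, the number of branch-to-branch paths per closure is at least on the order of $t / \poly(h) \geq \Omega(h^3)$, while the contracted graph has at most $|V(G)|$ branch vertices, forcing average degree $\gg h^2$.

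The main obstacle I anticipate is the second stage of the path-counting: ensuring that the disjoint paths leaving one closure can be \emph{rerouted to land in another closure} while preserving disjointness and while keeping a WL-invariant lower bound on their number. Controlling this requires a careful interplay between (i) the separator bound of Theorem~\ref{thm:small-separator-for-t-cr-bounded-closure}, which limits how paths can interact with components of $G - D$, (ii) the precise way $3$-WL (as opposed to $2$-WL) tracks the relationship between a vertex, a closure, and a separating set — this is presumably why $3$ dimensions are needed — and (iii) a counting bookkeeping that loses only polynomial factors in $h$ so that the final contradiction with Theorem~\ref{thm:average-degree-excluded-topological} survives. I expect the bulk of the technical work, and the reason the constant $t(h)$ is a maximum of several $\poly(h)$ terms, to live in making each of these three ingredients quantitatively tight simultaneously.
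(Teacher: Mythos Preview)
Your high-level strategy is correct and matches the paper's: define the $t$-closure graph $H$, show that $\chi$ detects its arc relation (this is where $3$-WL enters), pick $c_0$ as the color of a vertex in a maximal SCC of $H$, observe that the closures $D(v)$ for $v\in X$ are pairwise equal or disjoint, and derive a contradiction from disjointness by exhibiting a dense topological subgraph built from many internally vertex-disjoint paths between closures.

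There is, however, a concrete gap in how you set up the final contradiction. You take the branch vertices to be either the closures contracted to points (``a topological minor on $|X|/|D|$ branch vertices'') or, later, ``at most $|V(G)|$ branch vertices.'' Neither works: contracting each $D_i$ to a point produces a \emph{minor}, not a topological subgraph, and the class excluding $K_h$ topologically is not minor-closed; a bound of $|V(G)|$ on the branch set is far too weak to contradict Theorem~\ref{thm:average-degree-excluded-topological}. The paper instead first proves that each $D_i$ has a small interface $S_i \subseteq D_i$ with $|S_i| < h$ through which every shortest inter-closure path must pass --- this requires showing that all other closures lie in a \emph{single} component of $G - D_i$ (Lemmas~\ref{la:d-sets-do-not-split} and~\ref{la:root-set}), which is not immediate. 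The branch vertices of the topological subgraph are then genuine vertices in $S = \bigcup_i S_i$ with $|S| < hk$, against which one needs $\Omega(\adeg h^3 k)$ disjoint paths; Lemma~\ref{la:degree-bound-f} supplies the matching degree lower bound on an auxiliary graph $F$ on $\{D_1,\dots,D_k\}$. Your ``Hall matching / escape / routing'' sketch is also too loose to deliver these paths: the paper builds level sets $L_i^{=\mu}$ around each $D_i$, controls which boundary vertices can extend a growing path family via an alternating-paths \emph{expansion set} argument (Lemma~\ref{la:expansion-set}), and then case-splits on the parity of the shortest inter-closure distance $p$, the even case requiring a separate lower bound on the number of middle vertices (Lemma~\ref{la:size-v-m}). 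Finally, one correction: $3$-WL is needed \emph{only} for the stability of $\chi$ on $H$ (Lemma~\ref{la:chi-is-stable-on-h}); the entire path construction runs on $2$-stable information (Remark~\ref{rem:pair-coloring-requirements}), so that is not the place where the extra dimension is doing work.
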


Before diving into the proof of Theorem \ref{thm:initial-color-via-wl}, let us state the main corollary that is used for our isomorphism test for graphs excluding $K_h$ as a topological subgraph.

\begin{corollary}
 \label{cor:initial-color}
 There is a polynomial-time algorithm that, given a connected vertex- and arc-colored graph $G$ and a number $t \geq t(h)$, either concludes that $G$ has a topological subgraph isomorphic to $K_h$
 or computes a pair-coloring $\chi\colon (V(G))^2 \rightarrow C$ and a non-empty set $X \subseteq V(G)$ such that
 \begin{enumerate}
  \item\label{item:initial-color-1} $X = \{v \in V(G) \mid \chi(v,v) = c\}$ for some color $c \in C$, and
  \item\label{item:initial-color-2} $X \subseteq \cl_t^{(G,\chi)}(v)$ for every $v \in X$.
 \end{enumerate}
 Moreover, the output of the algorithm is isomorphism-invariant with respect to $G$ and $t$.
\end{corollary}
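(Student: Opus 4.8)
The plan is to read off an algorithm from Theorem~\ref{thm:initial-color-via-wl} and argue that its failure to produce a good set certifies a $K_h$ topological subgraph. First I would run the $3$-dimensional Weisfeiler-Leman algorithm on the colored graph $G$, which by the bound recalled in Section~\ref{sec:preliminaries} takes time $\CO(n^{4}\log n)$, and set $\chi(v,w) \coloneqq \WL{3}{G}(v,w,w)$ for all $v,w \in V(G)$; this is a valid pair-coloring. Then I would enumerate the colors $c$ occurring as $\WL{3}{G}(v,v,v)$ for some $v \in V(G)$, in increasing order with respect to the fixed linear order on colors.

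For each such candidate $c$, set $X_c \coloneqq \{v \in V(G) \mid \WL{3}{G}(v,v,v) = c\}$. By construction $X_c = \{v \mid \chi(v,v) = c\}$, so Property~\ref{item:initial-color-1} holds, and $X_c \neq \emptyset$. Next, test Property~\ref{item:initial-color-2} by brute force: for each $v \in X_c$, compute $\cl_t^{(G,\chi)}(v)$ --- by definition the $t$-CR-stable closure of $\{v\}$ in $K_n$ with the colorings induced by $\chi$ and $E(G)$, which is obtained by alternately running Color Refinement and individualizing color classes of size at most $t$; since the coloring only gets refined, this stabilizes after $\CO(n)$ rounds, so one such closure is computable in polynomial time --- and check whether $X_c \subseteq \cl_t^{(G,\chi)}(v)$. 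If this holds for all $v \in X_c$, output $\chi$ and $X \coloneqq X_c$ for this (smallest such) $c$. Since the whole computation is a function of $\WL{3}{G}$, of $t$, and of the fixed order on colors, and $\WL{3}{G}$ is preserved by isomorphisms, the output is isomorphism-invariant with respect to $G$ and $t$; it remains only to handle the case where no candidate passes the test.

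In that case I would conclude that $G$ has a $K_h$ topological subgraph, arguing the contrapositive through Theorem~\ref{thm:initial-color-via-wl}. Suppose $G$ excludes $K_h$ as a topological subgraph. Excluding $K_h$ topologically depends only on the underlying uncolored graph $G_0$, so $G_0$ excludes $K_h$ as well; applying Theorem~\ref{thm:initial-color-via-wl} to $G_0$ yields a color $c_0$ with $X_0 \coloneqq \{v \mid \WL{3}{G_0}(v,v,v) = c_0\} \subseteq \cl_t^{(G_0,\chi_0)}(v)$ for all $v \in X_0$, where $\chi_0(v,w) \coloneqq \WL{3}{G_0}(v,w,w)$. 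Since taking the input coloring into account only refines the Weisfeiler-Leman coloring, $\chi$ refines $\chi_0$, and it is immediate from the definition that the $t$-closure is monotone under refinement of the pair-coloring (note $\cl_t^{(G,\chi)}$ ignores the vertex- and arc-coloring of $G$, so this is a statement about $K_n$ with the two pair-colorings), whence $\cl_t^{(G,\chi)}(v) \supseteq \cl_t^{(G_0,\chi_0)}(v)$ for every $v$. As $X_0$ is nonempty and is partitioned into the classes $X_c$ for $c \in \{\WL{3}{G}(v,v,v) \mid v \in X_0\}$, picking any such $c$ gives $\cl_t^{(G,\chi)}(v) \supseteq \cl_t^{(G_0,\chi_0)}(v) \supseteq X_0 \supseteq X_c$ for all $v \in X_c$, so the test succeeds for $c$ --- contradiction. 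Hence if no candidate passes, $G$ contains $K_h$ as a topological subgraph.

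The only real obstacle is bridging the gap between the uncolored statement of Theorem~\ref{thm:initial-color-via-wl} and the colored input of the corollary, which the refinement-monotonicity of the $t$-closure handles; the remaining subtlety is purely organizational, namely choosing the output canonically (the smallest working color) so that isomorphism-invariance is not destroyed. Alternatively, one could simply note that the proof of Theorem~\ref{thm:initial-color-via-wl} uses only generic properties of the $3$-dimensional Weisfeiler-Leman coloring and Theorem~\ref{thm:average-degree-excluded-topological}, hence applies verbatim to vertex- and arc-colored graphs, making the reduction above unnecessary.
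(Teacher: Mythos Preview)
Your proof is correct and follows essentially the same approach as the paper: compute $\chi$ via $3$-WL, test each vertex-color class for Property~\ref{item:initial-color-2}, output the minimal one that works, and invoke Theorem~\ref{thm:initial-color-via-wl} to justify the failure case. You are in fact more careful than the paper in bridging the gap between the uncolored statement of Theorem~\ref{thm:initial-color-via-wl} and the colored input of the corollary---the paper applies the theorem without comment, implicitly relying (as you note in your alternative) on the fact that its proof goes through verbatim for vertex- and arc-colored graphs.
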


Here, the output is isomorphism-invariant if it depends only on the isomorphism type of $G$ and the number $t$.
More formally, let $(G_1,t_1)$ and $(G_2,t_2)$ be two input pairs such that $G_1 \cong G_2$ and $t_1 = t_2$.
Then the algorithm either concludes in both cases that $G_i$ contains a topological subgraph isomorphic to $K_h$,
or it computes colorings $\chi_i\colon (V(G_i))^2 \rightarrow C$ and sets $X_i \subseteq V(G_i)$, $i \in \{1,2\}$, such that $\chi_1(v,w) = \chi_2(\varphi(v),\varphi(w))$ and $X_1 = X_2^\varphi$ for all $v,w \in V(G_1)$ and all $\varphi \in \Iso(G_1,G_2)$.

\begin{proof}
 The algorithms sets $\chi(v,w) \coloneqq \WL{3}{G}(v,w,w)$ for all $v,w \in V(G)$.
 Also, it computes the set $C_0$ of all colors $c_0 \in \{\WL{3}{G}(v,v,v) \mid v \in V(G)\}$ such that, for $X_{c_0} \coloneqq \{v \in V(G) \mid \WL{3}{G}(v,v,v) = c_0\}$, it holds that
 \[X_{c_0} \subseteq \cl_t^{(G,\chi)}(v)\]
 for all $v \in X_{c_0}$.
 If $C_0 = \emptyset$ the algorithm outputs that $G$ contains a topological subgraph isomorphic to $K_h$.
 This is correct by Theorem \ref{thm:initial-color-via-wl}.
 
 Otherwise, let $c_0$ be the minimal color contained in $C_0$ (recall that we always assume colors to be linearly ordered).
 The algorithm outputs $\chi$ and $X$ where $X \coloneqq X_{c_0}$.
 Clearly, both requirements are satisfied by definition.
 
 Also, the algorithm runs in polynomial time since $\WL{3}{G}$ and $\cl_t^{(G,\chi)}(v)$ for every $v \in V(G)$ can be computed in polynomial time.
\end{proof}

Now, let us return to Theorem \ref{thm:initial-color-via-wl}.
Its lengthy and technical proof covers the remainder of this section.
For better readability it is split into several steps.
Let us start by giving a brief outline.
Further intuition is provided throughout the proof when the single steps can be formulated more clearly.

The central tool for the analysis of the closure sets is the \emph{$t$-closure graph of $(G,\chi)$} which is defined as the directed graph $H$ with vertex set $V(H) \coloneqq V(G)$ and edge set
\[E(H) \coloneqq \{(v,w) \mid w \in \cl_t^{(G,\chi)}(v)\}.\]
A key property of the $3$-dimensional Weisfeiler-Leman algorithm is that it detects the edge relation of $H$, i.e., there is some set of colors $C_H \subseteq \{\chi(v,w) \mid v \neq w \in V(H)\}$ such that $(v,w) \in E(H)$ if and only if $\chi(v,w) \in C_H$ for all $v,w \in V(G)$.
Actually, this is the only part of the proof that requires us to use the $3$-dimensional Weisfeiler-Leman algorithm.
For all remaining parts, it turns out to be sufficient to use the $2$-dimensional Weisfeiler-Leman algorithm.

We define $c_0 \coloneqq \WL{3}{G}(v_0,v_0,v_0)$ for some $v_0 \in V(G)$ that appears in a maximal strongly connected component of $H$ (a strongly connected component of $H$ is \emph{maximal} if it has no outgoing edges).
Also, we set $X \coloneqq \{v \in V(G) \mid \WL{3}{G}(v,v,v) = c_0\}$ (as in the statement of Theorem \ref{thm:initial-color-via-wl}).
Let $D(v) \coloneqq \cl_t^{(G,\chi)}(v)$ for all $v \in X$.
In order to show that $X \subseteq \cl_t^{(G,\chi)}(v)$ it suffices to show that $D(v) = D(w)$ for all $v,w \in X$.
Building on the fact that the $3$-dimensional Weisfeiler-Leman algorithm detects the edge relation of $H$, we first show that $D(v) = D(w)$ or $D(v) \cap D(w) = \emptyset$ for all $v,w \in X$.
This allows us to partition the set $D \coloneqq \bigcup_{v \in X} D(v)$ and into classes $D_1,\dots,D_k$ where $\{D_1,\dots,D_k\} = \{D(v) \mid v \in X\}$, i.e., $D_1,\dots,D_k$ is an arbitrary enumeration of all distinct sets $D(v)$, $v \in X$.
Assuming $k \geq 2$, the basic strategy is to construct a large number of internally vertex-disjoint paths connecting vertices of different partition classes (see Lemma \ref{la:find-many-disjoint-paths}).
The construction of these paths (see Section \ref{subsec:finding-paths}) turns out to the most technical and complicated part of the proof.
Given a large number of such paths then results in a topological subgraph of $G$ that has high edge density which eventually contradicts Theorem \ref{thm:average-degree-excluded-topological}.

The remainder of this section is structured as follows.
In Section \ref{subsec:wl-tools} we provide additional notation and basic tools for the proof of Theorem \ref{thm:initial-color-via-wl}.
The closure graph and its basic properties are covered in Section \ref{subsec:compute-initial-color}.
In Section \ref{subsec:interaction-closure-sets}, we investigate the interaction between the closure sets $D(v) \coloneqq \cl_t^{(G,\chi)}(v)$, $v \in X$, in the graph $G$.
In particular, we reduce the task of proving Theorem \ref{thm:initial-color-via-wl} to proving the existence of certain vertex-disjoint paths, as formulated in Lemma \ref{la:find-many-disjoint-paths}.
To construct the paths, they are split into a ``middle part'' and two ``end parts''.
The main challenge turns out to be the construction of the middle part.
We further split this task depending on the parity of the length $p$ of the paths we are aiming to construct (all constructed paths shall have the same length).
We first cover the case in which $p$ is odd in Section \ref{subsec:paths-odd}.
While the basic strategy for both cases is similar, the details are much simpler for $p$ odd.
The more complicated case in which $p$ is even is then covered Section \ref{subsec:paths-even}.

\subsection{Basic Tools}
\label{subsec:wl-tools}

Among other things, the proof builds on various properties of the $2$-dimensional Weisfeiler-Leman algorithm.
Towards this end, we first introduce additional notation as well as some basic tools.

Let $G$ be a graph and let $\chi$ be a pair coloring that is $2$-stable on $G$.
We refer to $C_V \coloneqq C_V(G,\chi)\coloneqq\{\chi(v,v) \mid v \in V(G)\}$ as the set of \emph{vertex colors} and
$C_E\coloneqq C_E(G,\chi) \coloneqq \{\chi(v,w) \mid vw \in E(G)\}$ as the set of \emph{edge colors}.
For a vertex color $c\in C_V(G,\chi)$, we define $V_c\coloneqq V_c(G,\chi)\coloneqq\{v\in V(G)\mid \chi(v,v)=c\}$ as the set of all vertices with color $c$.
Similar, for an edge color $c\in C_E(G,\chi)$ we define $E_c\coloneqq E_c(G,\chi)\coloneqq\{v_1v_2\in E(G) \mid \chi(v_1,v_2) = c\}$.
We say a set $U \subseteq V(G)$ is \emph{$\chi$-invariant} if there is a set of vertex colors $C_U \subseteq C_V$ such that $U = \bigcup_{c \in C_U} V_c$.

Let $C \subseteq \{\chi(v,w) \mid v,w \in V(G), v \neq w\}$ be a set of colors.
We define the graph $G[C]$ with edge set
\[E(G[C]) \coloneqq \{vw \mid \chi(v,w) \in C\}\]
and vertex set
\[V(G[C]) \coloneqq \bigcup_{vw \in E(G[C])}\{v,w\}.\]

Let $A_1,\dots,A_\ell$ be the vertex sets of the connected components of $G[C]$.
We also define the graph $G/C$ as the graph obtained from contracting every set $A_i$ to a single vertex.
Formally,
\[V(G/C) \coloneqq \{\{v\} \mid v \in V(G) \setminus V(G[C])\} \cup \{A_1,\dots,A_\ell\}\]
and edge set
\[E(G/C) \coloneqq \{X_1X_2 \mid \exists v_1 \in X_1,v_2 \in X_2\colon v_1v_2 \in E(G)\}.\]

\begin{lemma}[see {\cite[Theorem 3.1.11]{ChenP19}}]
 \label{la:factor-graph-2-wl}
 Let $G$ be a graph and $C \subseteq \{\chi(v,w) \mid v,w \in V(G), v \neq w\}$ be a set of colors.
 Define
 \[(\chi/C)(X_1,X_2) \coloneqq \{\!\!\{\chi(v_1,v_2) \mid v_1 \in X_1, v_2 \in X_2\}\!\!\}\]
 for all $X_1,X_2 \in V(G/C)$.
 Then $\chi/C$ is $2$-stable on $G/C$.
 
 Moreover, for all $X_1,X_2,X_1',X_2' \in V(G/C)$, either $(\chi/C)(X_1,X_2) = (\chi/C)(X_1',X_2')$ or $(\chi/C)(X_1,X_2) \cap (\chi/C)(X_1',X_2') = \emptyset$.
\end{lemma}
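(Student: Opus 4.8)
My plan is to recognize this statement as the classical fact that the quotient of a coherent configuration by a definable equivalence relation is again a coherent configuration: the first half of the lemma is the ``stability'' of the quotient configuration and the ``moreover'' is its partition into color classes. I would first recall the structure that $2$-stability forces on $\chi$ (this is exactly what \cite[Section~3]{ChenP19} records): the diagonal is a union of $\chi$-color classes, which we call \emph{vertex colors}; every non-diagonal color $c$ determines the vertex colors $\partial_1 c,\partial_2 c$ of the two endpoints of any pair colored $c$, as well as the transpose color $c^{*}$ with $\chi(w,v)=c^{*}$ whenever $\chi(v,w)=c$; and for every color $c$ and every pair of colors $(c_1,c_2)$ the intersection number $p(c;c_1,c_2)\coloneqq|\{w\mid\chi(v,w)=c_1,\ \chi(w,u)=c_2\}|$ is independent of the choice of $(v,u)$ with $\chi(v,u)=c$. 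In particular the valency $n_c\coloneqq|\{w\mid\chi(v,w)=c\}|$ (for any $v$ with $\chi(v,v)=\partial_1 c$) depends only on $c$.

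Next I would show that the partition of $V(G)$ into connected components of $G[C]$ together with the singleton parts from $V(G)\setminus V(G[C])$ is $\chi$-definable. Let $\widehat C$ be the smallest set of colors containing $C\cup C^{*}$ such that $c\in\widehat C$ whenever $p(c;c_1,c_2)>0$ for some $c_1\in\widehat C$ and some $c_2\in C\cup C^{*}$; by the previous paragraph this depends only on $\chi$ and $C$. An induction on the length of a shortest path shows that, for $v\ne w$, one has $\chi(v,w)\in\widehat C$ if and only if $v$ and $w$ lie in the same connected component of $G[C]$ (note $\widehat C=\widehat C^{*}$, since connectivity in $G[C]$ is symmetric). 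Hence the partition $\mathcal A=\{A_1,\dots,A_\ell\}\cup\{\{v\}\mid v\notin V(G[C])\}$ is recovered from $\chi$, and the size of the part of $v$ equals $1+\sum_{c\in\widehat C,\ \partial_1 c=\chi(v,v)}n_c$, which depends only on $\chi(v,v)$. It then follows immediately that $\chi/C$ refines $\WLit{2}{0}{G/C}$: the multiset $(\chi/C)(X_1,X_2)$ has cardinality $|X_1|\cdot|X_2|$, it contains a diagonal color of $\chi$ if and only if $X_1=X_2$, and it contains an edge color of $G$ if and only if $X_1X_2\in E(G/C)$.

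The key move for both the dichotomy and stability is to translate counts in $G/C$ into counts in $G$. For the dichotomy, suppose $\chi(v_1,v_2)=\chi(v_1',v_2')$ with $v_i\in X_i$ and $v_i'\in X_i'$. For an arbitrary $u_1\in X_1$ we have $p\big(\chi(v_1,v_2);\chi(v_1,u_1),\chi(u_1,v_2)\big)\ge 1$ (witnessed by $u_1$), so applying coherence to $(v_1',v_2')$ yields $u_1'$ with $\chi(v_1',u_1')=\chi(v_1,u_1)$ — hence $u_1'$ lies in the $G[C]$-component of $v_1'$, i.e.\ $u_1'\in X_1'$ — and $\chi(u_1',v_2')=\chi(u_1,v_2)$; repeating the argument on the second coordinate shows that every $\chi(u_1,u_2)$ with $u_i\in X_i$ is realized as $\chi(u_1',u_2')$ with $u_i'\in X_i'$. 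By symmetry the supports of $(\chi/C)(X_1,X_2)$ and $(\chi/C)(X_1',X_2')$ coincide, which is exactly the ``moreover''. For stability, fix $X_1,X_2,X_1',X_2'$ with $(\chi/C)(X_1,X_2)=(\chi/C)(X_1',X_2')\eqqcolon\mathbf d$ and quotient colors $\mathbf d_1,\mathbf d_2$; I must show that $\big|\{Y\in V(G/C)\mid(\chi/C)(X_1,Y)=\mathbf d_1,\ (\chi/C)(Y,X_2)=\mathbf d_2\}\big|$ depends only on $(\mathbf d,\mathbf d_1,\mathbf d_2)$. Since $\mathbf d$ pins down a pair $(v_1,v_2)\in X_1\times X_2$ up to the $\chi$-colors of all pairs it touches, and since by the second paragraph a part $Y$ is recovered from any single $w\in Y$, the conditions on $Y$ turn into conditions on the multisets $\{\!\!\{\chi(v_1,w)\mid v_1\in X_1\}\!\!\}$ and $\{\!\!\{\chi(w,v_2)\mid v_2\in X_2\}\!\!\}$ that are counted by sums of intersection numbers of $\chi$; summing these counts over $w$, organized by the $\chi$-vertex-color of $w$, and dividing by $|Y|$ (which, like $|X_1|$ and $|X_2|$, is determined by the data) yields the desired count as a function of $(\mathbf d,\mathbf d_1,\mathbf d_2)$ alone.

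I expect the last paragraph — the reduction of quotient intersection numbers to weighted sums of intersection numbers of $\chi$ — to be the main obstacle, and this is where \cite[Theorem~3.1.11]{ChenP19} does the real work. One must be careful that a component $Y$ may meet several $\chi$-vertex-color classes, so ``$(\chi/C)(X_1,Y)=\mathbf d_1$'' constrains a whole color profile rather than a single color; that each witnessing part $Y$ is counted exactly once despite being witnessed by many of its vertices (handled by the division by $|Y|$, or equivalently by restricting $w$ to a fixed vertex-color inside $Y$); and that all auxiliary quantities used — $|X_1|$, $|X_2|$, $|Y|$, and the vertex-color profiles of $X_1$, $X_2$ and $Y$ — are themselves recoverable from $\mathbf d,\mathbf d_1,\mathbf d_2$. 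The earlier steps are routine bookkeeping with the coherent-configuration identities from the first paragraph.
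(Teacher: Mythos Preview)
The paper does not give its own proof of this lemma; it is imported verbatim from \cite[Theorem~3.1.11]{ChenP19}, and your sketch is precisely the standard coherent-configuration quotient argument that underlies that reference. In particular, your reduction of the dichotomy to a single application of coherence (transporting a witness $u_1\in X_1$ to some $u_1'\in X_1'$ via the intersection number $p(\chi(v_1,v_2);\chi(v_1,u_1),\chi(u_1,v_2))$) is exactly the right idea, and your description of the stability part --- rewriting the quotient intersection numbers as weighted sums of intersection numbers of $\chi$, then normalizing by the (color-determined) size of the parts --- is the content of the cited theorem. So there is no divergence to report: you are reconstructing the cited proof rather than offering an alternative, and you correctly identify where the real work lies.
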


For every edge color $c$, the endvertices of all $c$-colored edges have the same vertex colors, that is,
for all edges $vw,v'w'\in E(G)$ with $\chi(v,w) = \chi(v',w') = c$ we have $\chi(v,v) = \chi(v',v')$ and $\chi(w,w) = \chi(w',w')$.
This implies $1 \leq |C_V(G[\{c\}],\chi)| \leq 2$.
We say that $G[c] \coloneqq G[\{c\}]$ is \emph{unicolored} if $|C_V(G[c],\chi)| = 1$.
Otherwise $G[c]$ is called \emph{bicolored}. 
The next two lemmas investigate properties of connected components of bicolored graphs $G[c]$ for an edge color $c$.
Again, recall the definition of the constant $\adeg$ from Theorem \ref{thm:average-degree-excluded-topological}.

\begin{lemma}
 \label{la:hat-graph-excluded-topological}
 Let $G = (V_1,V_2,E)$ be a connected, bipartite graph that excludes $K_h$ as a topological subgraph and let $\chi$ be a pair-coloring that is $2$-stable on $G$.
 Suppose that $\chi(v_1,v_2) = \chi(v_1',v_2')$ for all $(v_1,v_2),(v_1',v_2') \in V_1 \times V_2$ with $v_1v_2,v_1'v_2'\in E$.
 Also assume that $|V_2| > \adeg h^2|V_1|$.
 Let 
 \[E^{*} \coloneqq \left\{v_1v_2 \in \binom{V_1}{2} \mid \exists w \in V_2 \colon v_1w,v_2w \in E(G)\right\}.\]
 Then there are colors $c_1,\dots,c_r \in \chi(V_1^{2})$ such that
 \begin{enumerate}
  \item $E^{*} = \bigcup_{i \in [r]} E_{c_i}$ where $E_{c_i} \coloneqq \{v_1v_2\in V(G)^2 \mid \chi(v_1,v_2) = c_i\}$,
  \item $H \coloneqq (V_1,E^{*})$ is connected, and
  \item $H_i$ is a topological subgraph of $G$ for all $i \in [r]$ where $H_i = (V_1,E_{c_i})$.
 \end{enumerate}
\end{lemma}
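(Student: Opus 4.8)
The plan is to first extract the structure forced by the hypotheses, then dispatch the three claims, with the third being the only real work. Since every edge of $G$ carries the same arc-color $c_E$ under $\chi$ and $\chi$ is $2$-stable, the two endpoints of every edge have fixed vertex-colors; as $G$ is connected every vertex lies on an edge, so $V_1$ and $V_2$ are each monochromatic under $\chi$, and consequently $G$ is $(d_1,d_2)$-biregular with $d_1|V_1| = d_2|V_2| = |E(G)|$. For claim~1 I would observe that, for $v_1,v_2 \in V_1$, the number of common neighbours of $v_1$ and $v_2$ equals the number of $w$ with $\chi(v_1,w) = c_E$ and $\chi(w,v_2) = \bar c_E$, where $\bar c_E$ is the reversal of $c_E$; by $2$-stability this count depends only on $\chi(v_1,v_2)$, so whether a pair lies in $E^{*}$ is determined by its color, and taking $c_1,\dots,c_r$ to be exactly the colors of pairs within $V_1$ for which this count is positive gives $E^{*} = \bigcup_{i\in[r]} E_{c_i}$. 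For claim~2, connectedness of $G$ together with bipartiteness gives, for any two $u,u' \in V_1$, a path alternating between $V_1$ and $V_2$; consecutive $V_1$-vertices on this path share a neighbour in $V_2$, hence are adjacent in $H$, so $H$ is connected.

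For claim~3 I would fix $i \in [r]$ and introduce the auxiliary bipartite graph $B_i$ with parts $E_{c_i}$ and $V_2$, placing an edge between $e = \{u_1,u_2\} \in E_{c_i}$ and $w \in V_2$ whenever $w$ is a common neighbour of $u_1$ and $u_2$. The key point is that $B_i$ is biregular. Its $E_{c_i}$-side is regular since, as noted above, the number of common neighbours of the endpoints of $e$ is a function of $\chi(u_1,u_2) = c_i$, hence a constant $s \geq 1$. For the $V_2$-side, the degree of $w$ counts (ordered, up to the symmetry of $c_i$) the pairs $(u_1,u_2)$ with $u_1,u_2 \in N_G(w)$ and $\chi(u_1,u_2) = c_i$; fixing $u_1 \in N_G(w)$, the number of admissible $u_2$ equals the number of $z$ with $\chi(u_1,z) = c_i$ and $\chi(z,w) = c_E$, which by $2$-stability depends only on $\chi(u_1,w) = c_E$ and is therefore an absolute constant $q$, so the $V_2$-degree equals $d_2 q$ for every $w \in V_2$. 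Since $B_i$ is biregular it contains a matching of size $\min(|E_{c_i}|,|V_2|)$.

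Now I would split on the size of $E_{c_i}$. If $|E_{c_i}| > |V_2|$, such a matching of size $|V_2|$ picks out a set $F \subseteq E_{c_i}$ with $|F| = |V_2|$ and an injection $\phi\colon F \to V_2$ sending each $e = \{u_1,u_2\}$ to a common neighbour of $u_1$ and $u_2$; replacing each $e$ by the path $u_1\,\phi(e)\,u_2$ then exhibits $(V_1,F)$ as a topological subgraph of $G$. But $(V_1,F)$ has average degree $2|F|/|V_1| = 2|V_2|/|V_1| > 2\adeg h^2 > \adeg h^2$, while, being a topological subgraph of $G$, it excludes $K_h$ as a topological subgraph and so has average degree at most $\adeg h^2$ by Theorem~\ref{thm:average-degree-excluded-topological} --- a contradiction. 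Hence $|E_{c_i}| \leq |V_2|$, the matching saturates $E_{c_i}$, and we obtain an injection $\phi\colon E_{c_i} \to V_2$ sending each edge $e = \{u_1,u_2\}$ to a common neighbour of its endpoints. Subdividing each edge $e$ of $H_i$ once by the vertex $\phi(e)$ produces a subdivision of $H_i$ with vertex set $V_1 \cup \{\phi(e) : e \in E_{c_i}\}$, all of whose edges $\{u_1,\phi(e)\}$ and $\{u_2,\phi(e)\}$ lie in $E(G)$; thus this subdivision is a subgraph of $G$, and $H_i$ is a topological subgraph of $G$.

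The main obstacle is the biregularity of $B_i$, especially the degree count on the $V_2$-side, which is the only place where $2$-stability of $\chi$ is genuinely needed; the other conceptual point is recognising that the hypothesis $|V_2| > \adeg h^2 |V_1|$ must be used precisely to preclude the case $|E_{c_i}| > |V_2|$ via the average-degree bound, since without it the required system of distinct common neighbours need not exist.
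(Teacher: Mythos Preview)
Your proof is correct and follows essentially the same approach as the paper: define the bipartite incidence graph between $V_2$ and $E_{c_i}$, argue it is biregular via $2$-stability, extract a matching via Hall's theorem, and use the case $|E_{c_i}| > |V_2|$ together with Theorem~\ref{thm:average-degree-excluded-topological} to reach a contradiction. You supply more detail than the paper (which simply asserts biregularity ``by the properties of the $2$-dimensional Weisfeiler-Leman algorithm'' and dismisses claims~1 and~2 as clear), but the structure and key ideas are identical.
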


\begin{proof}
 Clearly, $H$ is connected (since $G$ is connected) and there are colors $c_1,\dots,c_r \in \chi(V_1^{2})$ such that $E^{*} = \bigcup_{i \in [r]} E_{c_i}$.
 
 So let $i \in [r]$ and consider the bipartite graph $B = (V_2,E_{c_i},E(B))$ where $E(B) \coloneqq \{(u,v_1v_2) \mid u \in N_G(v_1) \cap N_G(v_2)\}$.
 By the properties of the $2$-dimensional Weisfeiler-Leman algorithm
 the graph $B$ is biregular. It follows from Hall's Marriage Theorem
 that $B$ contains a matching $M$
 of size $\min(|V_2|,|E_{c_i}|)$ as explained in the preliminaries.
 If $|V_2| \geq |E_{c_i}|$ then $H_i$ is a topological subgraph of $G$.
 
 So suppose that $|V_2| < |E_{c_i}|$.
 Let $F_i \subseteq E_{c_i}$ be those vertices that are matched by the matching $M$ in the graph $B$.
 Then $H_i' \coloneqq (V_1,F_i)$ is a topological subgraph of $G$, and thus it excludes $K_h$ as a topological subgraph.
 However, $|F_i| = |V_2| > \adeg h^2 |V_1|$ which contradicts Theorem \ref{thm:average-degree-excluded-topological}.
\end{proof}

\begin{lemma}
 \label{la:closure-bipartite}
 Let $t \geq \adeg^2h^4$.
 Let $G = (V_1,V_2,E)$ be a connected bipartite graph that excludes $K_h$ as a topological subgraph and let $\chi$ be a pair-coloring that is $2$-stable on $G$.
 Suppose that $\chi(v_1,v_2) = \chi(v_1',v_2')$ for all $(v_1,v_2),(v_1',v_2') \in V_1 \times V_2$ with $v_1v_2,v_1'v_2'\in E$.
 Also assume that $|V_1| \leq |V_2|$.
 Then $V_1 \subseteq \cl_t^{(G,\chi)}(v)$ for all $v \in V_1 \cup V_2$.
\end{lemma}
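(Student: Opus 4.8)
The plan is to reduce the statement to a biregularity condition and then distinguish two cases according to the ratio $|V_2|/|V_1|$. First I would collect the structural consequences of the hypotheses. Since $\chi$ is $2$-stable, the colour $\chi(x,y)$ of a pair determines both diagonal colours $\chi(x,x)$ and $\chi(y,y)$; applied to the common colour of all $V_1 \times V_2$ edges this shows that $V_1$ and $V_2$ are each monochromatic. Moreover, for $v_1 \in V_1$ the set $\{w : \chi(v_1,w) = c\}$ (where $c$ is the common edge colour) is exactly $N_G(v_1)$, and its size is determined by $\chi(v_1,v_1)$; hence $G$ is $(d_1,d_2)$-biregular with $d_1|V_1| = d_2|V_2| = |E|$, and $d_1 \geq d_2$ because $|V_1| \leq |V_2|$. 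The average degree of $G$ equals the weighted mean $\frac{d_1|V_1|+d_2|V_2|}{|V_1|+|V_2|}$ of $d_1$ and $d_2$ with at least as much weight on $d_2$, so it is at least $d_2$; by Theorem~\ref{thm:average-degree-excluded-topological}, $d_2 \leq \adeg h^2 \leq t$. (If $V_1 = \emptyset$, $V_2 = \emptyset$, or $G$ has no edge, the claim is immediate, so I assume $d_1,d_2 \geq 1$.)

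\textbf{Case $|V_2| \leq \adeg h^2 |V_1|$.} Then $|V_1|+|V_2| \leq (1+\adeg h^2)|V_1|$, so the average degree $\frac{2d_1|V_1|}{|V_1|+|V_2|}$ is at least $\frac{2d_1}{1+\adeg h^2}$; with Theorem~\ref{thm:average-degree-excluded-topological} and $\adeg h^2 \geq 1$ this gives $d_1 \leq \frac{\adeg h^2(1+\adeg h^2)}{2} \leq \adeg^2 h^4 \leq t$. I would then individualize an arbitrary $v \in V_1 \cup V_2$ and run the procedure of Definition~\ref{def:t-cr-bounded} underlying $\cl_t^{(G,\chi)}$. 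The argument is a breadth-first search through the connected graph $G$: once a vertex $w$ has a singleton colour, one round of Color Refinement separates the remaining vertices by their pair-colour to $w$, so $N_G(w)$ is a union of current colour classes of total size $\deg_G(w) \leq \max(d_1,d_2) \leq t$, and the following splitting step makes all of $N_G(w)$ discrete. Iterating, all of $V(G)$, and in particular all of $V_1$, becomes discrete.

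\textbf{Case $|V_2| > \adeg h^2 |V_1|$.} Here the degrees on the $V_1$-side may be large, so the breadth-first search through $G$ no longer works; instead I would route the search through the ``distance-two graph'' on $V_1$. By Lemma~\ref{la:hat-graph-excluded-topological} there are colours $c_1,\dots,c_r \in \chi(V_1^{2})$ with $E^{*} = \bigcup_{i \in [r]} E_{c_i}$, with $H := (V_1,E^{*})$ connected, and with each $H_i := (V_1,E_{c_i})$ a topological subgraph of $G$. Hence each $H_i$ excludes $K_h$ as a topological subgraph, and, being regular (by $2$-stability and monochromaticity of $V_1$), has degree equal to its average degree, which by Theorem~\ref{thm:average-degree-excluded-topological} is at most $\adeg h^2 \leq t$. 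Individualizing $v \in V_1 \cup V_2$ and, when $v \in V_2$, first splitting the class $N_G(v) \subseteq V_1$ (of size $d_2 \leq t$) as in the previous case, I may assume that some vertex of $V_1$ is discrete. A breadth-first search in $H$ then finishes: if $u' \in V_1$ is discrete, then after Color Refinement stabilizes the set $\{u \in V_1 : \chi(u,u') = c_i\} = N_{H_i}(u')$ is, for every $i$, a union of current colour classes of total size $\leq \adeg h^2 \leq t$, so the next splitting step makes all of $N_{H_i}(u')$, hence all of $N_H(u')$, discrete. Since $H$ is connected, this discretizes $V_1$.

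I expect the real work to lie in the bookkeeping of the second case rather than in a new idea: one must verify that Color Refinement genuinely isolates each individual class $N_{H_i}(u')$ once $u'$ is individualized — so that what has to stay below $t$ is the size of one such class, each bounded by $\adeg h^2$, and not the (possibly large) full $H$-degree of $u'$ — and that the within-$V_1$ colours not appearing among $c_1,\dots,c_r$, i.e.\ pairs of $V_1$-vertices with no common neighbour, never obstruct the search because the breadth-first search in $H$ simply does not use them. The hypothesis $t \geq \adeg^2 h^4$ is binding only in the case $|V_2| \leq \adeg h^2|V_1|$, where it is exactly what is needed to bound $d_1$; in the remaining case $t \geq \adeg h^2$ would already suffice.
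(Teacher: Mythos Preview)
Your proposal is correct and follows essentially the same route as the paper: establish biregularity and bound $d_2 \leq \adeg h^2$, split into the two cases $|V_2| \leq \adeg h^2 |V_1|$ and $|V_2| > \adeg h^2 |V_1|$, in the first case bound $d_1 \leq \adeg^2 h^4 \leq t$ and run a BFS through $G$, and in the second case invoke Lemma~\ref{la:hat-graph-excluded-topological} and run a BFS through the connected graph $H$ on $V_1$, using that each $H_i$ is regular of degree at most $\adeg h^2 \leq t$. The only cosmetic differences are that the paper bounds $d_1$ directly via $d_1 = d_2 \cdot |V_2|/|V_1| \leq (\adeg h^2)^2$ rather than via the average-degree inequality, and it phrases the second case by first showing $V_1 \subseteq \cl_t^{(G,\chi)}(v_1)$ for every $v_1 \in V_1$ and then using the already-established fact that $\cl_t^{(G,\chi)}(v) \cap V_1 \neq \emptyset$ for every $v$.
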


\begin{proof}
 The graph $G$ is biregular and it holds that $\deg(v_1)\cdot|V_1| = \deg(v_2)\cdot|V_2|$ for all $v_1 \in V_1$ and $v_2 \in V_2$.
 Hence, $\deg(v_2) \leq \adeg h^2$ for all $v_2 \in V_2$ by Theorem \ref{thm:average-degree-excluded-topological}.
 This means $\cl_t^{(G,\chi)}(v) \cap V_1 \neq \emptyset$, because either $v \in \cl_t^{(G,\chi)}(v)\cap V_1$ or $v\in V_2$ and $N_G(v)\subseteq \cl_t^{(G,\chi)}(v) \cap V_1$.

 First suppose that $|V_2| \leq \adeg h^2 |V_1|$.
 Then $\deg(v_1)=\deg(v_2)\frac{|V_2|}{|V_1|} \leq t$ and $\deg(v_2)\leq t$ for all $v_1 \in V_1,v_2\in V_2$.
 It follows that $\cl_t^{(G,\chi)}(v)=V(G)$.
 
 So assume that $|V_2| > \adeg h^2 |V_1|$.
 By Lemma \ref{la:hat-graph-excluded-topological}, there are colors $c_1,\dots,c_r \in \chi(V_1^{2})$ such that
 \begin{enumerate}
  \item $H_i$ excludes $K_h$ as a topological subgraph for all $i \in [r]$ where $H_i = (V_1,E_{c_i})$ and $E_{c_i}\coloneqq\{v_1v_2\in V(G)^2\mid\chi(v_1,v_2)=c_i\}$, and
  \item $H = (V_1,\bigcup_{i \in [r]} E_{c_i})$ is connected.
 \end{enumerate}
 For all $i\in[r]$, the graph $H_i$ is $d$-regular for some $d$,
 and by Theorem \ref{thm:average-degree-excluded-topological} it holds that $d \leq \adeg h^2 \leq t$.
 This implies that $V_1 \subseteq \cl_t^{(G,\chi)}(v_1)$ for all $v_1 \in V_1$, and since $V_1 \cap
 \cl_t^{(G,\chi)}(v)\neq\emptyset$ for all $v\in V_1\cup V_2$, it follows that $V_1 \subseteq  \cl_t^{(G,\chi)}(v)$.
\end{proof}

\subsection{The Closure Graph}
\label{subsec:compute-initial-color}

A key tool in the proof of Theorem \ref{thm:initial-color-via-wl} is the closure graph.
Let $G$ be a graph and let $\chi$ be the pair-coloring defined via $\chi(v,w) \coloneqq \WL{3}{G}(v,w,w)$ for all $v,w \in V(G)$.
For $t \geq 1$, we define the \emph{$t$-closure graph of $(G,\chi)$} to be the directed graph $H$ defined via $V(H) \coloneqq V(G)$ and
\[E(H) \coloneqq \{(v,w) \mid w \in \cl_t^{(G,\chi)}(v), v \neq w\}.\]

\begin{theorem}
 \label{thm:chi-is-stable-on-h}
 Let $t \geq 1$.
 Let $G$ be a graph and let $\chi$ be the pair-coloring defined via $\chi(v,w) \coloneqq \WL{3}{G}(v,w,w)$ for all $v,w \in V(G)$.
 Also let $H$ be the $t$-closure graph of $(G,\chi)$.
 Then the coloring $\chi$ is $2$-stable on $H$.
\end{theorem}

\begin{proof}
 Let $\lambda$ be a pair-coloring that is $2$-stable on $G$ such that $\lambda \preceq \chi$.
 We say a partition $\CP$ of the set of vertices of $G$ is \emph{$\lambda$-definable} if there is a set $C_{\CP} \subseteq \{\lambda(v,w) \mid v \neq w \in V(G)\}$ such that $\CP$ is the partition into connected components of $G[C_\CP]$.
 To prove the lemma we argue that all partitions into color classes of colorings computed by the $t$-CR algorithm are $\lambda$-definable.
 \begin{claim}
  Let $\CP$ be a $\lambda$-definable partition of the vertex set of $G$.
  Also define
  \[\CP' \coloneqq \{P \in \CP \mid |P| > t\} \cup \{\{v\} \mid v \in P \text{ for some } P \in \CP \text{ with } |P| \leq t\}.\]
  Then $\CP'$ is $\lambda$-definable.
 \end{claim}
 \begin{claimproof}
  Let $P_1,P_2 \in \CP$ such that $|P_1| \neq |P_2|$.
  Also let $v_1 \in P_1$ and $v_2 \in P_2$.
  Since the $2$-dimensional Weisfeiler-Leman algorithm detects which vertices are reachable from $v_1$ and $v_2$ in the graph $G[C_\CP]$ (see, e.g., \cite[Theorem 2.6.7]{ChenP19}), it follows that $\lambda(v_1,v_1) \neq \lambda(v_2,v_2)$.
  Hence, defining
  \[C_{\CP'} \coloneqq C_{\CP} \setminus \{\lambda(v,w) \mid \exists P \in \CP\colon |P| \leq t \wedge v,w \in P\},\]
  it follows that $\CP'$ is $\lambda$-definable.
 \end{claimproof}

 \begin{claim}
  Let $\CP$ be a $\lambda$-definable partition of the vertex set of $G$.
  Also define $\CP' \preceq \CP$ to be the coarsest partition which is stable with respect to the Color Refinement algorithm. 
  Then $\CP'$ is $\lambda$-definable.
 \end{claim}
 \begin{claimproof}
  For $v \in V(G)$, define $P_v$ to be the unique set $P \in \CP$ such that $v \in P$.
  Define $v \sim_\CP w$ if
  \[\Big\{\!\!\Big\{\big(P_u,\chi(v,u),\chi(u,v)\big) \;\Big\vert\; u \in V(G) \Big\}\!\!\Big\} = \Big\{\!\!\Big\{\big(P_u,\chi(w,u),\chi(u,w)\big) \;\Big\vert\; u \in V(G) \Big\}\!\!\Big\}.\]
  Let $\CP''$ be the partition into equivalence classes of $\sim_\CP$.
  Then $\CP''$ is $\lambda$-definable using Lemma \ref{la:factor-graph-2-wl}.
  So in other words, applying a single round of the Color Refinement algorithm does not effect $\lambda$-definability.
  Hence, the claim follows.
 \end{claimproof}
 
 Now, let $u \in V(G)$ and define $\lambda_u(v,w) \coloneqq \WL{3}{G}(u,v,w)$ for all $v,w \in V(G)$.
 Also, let $\chi_u(u,u) \coloneqq (1,1)$ and $\chi_u(v,w) \coloneqq (0,\chi(v,w))$ for all $(u,u) \neq (v,w) \in (V(G))^2$.
 Then $\lambda_u$ is $2$-stable on $(G,\chi_u)$ (i.e., the graph obtained from $(G,\chi)$ by individualizing $u$) by Fact \ref{fact:k-wl-is-stable-after-individualization}.
 Now, the previous two claims imply that the partition $\CP$ into color classes of $\tColRef{t}{G,\chi_u}$ is $\lambda_u$-definable.
 Moreover, the choice of $C_\CP$ does not depend on $u$.
 It follows that there is some set of colors $C^* \subseteq \{\WL{3}{G}(u,v,w) \mid u,v,w \in V(G)\}$ such that
 \[\WL{3}{G}(u,v,w) \in C^* \;\;\Leftrightarrow\;\; \tColRef{t}{G,\chi_u}(v) = \tColRef{t}{G,\chi_u}(w).\]
 This implies the statement of the lemma since $(u,v) \in E(H)$ if and only if there is no $w \neq v$ such that $\WL{3}{G}(u,v,w) \in C^*$.
\end{proof}

We now turn to the proof of Theorem \ref{thm:initial-color-via-wl}.
For the remainder of this section, let us fix a connected graph $G$ and a number $t \geq t(h)$ as the input for Theorem \ref{thm:initial-color-via-wl}.
Also, fix the pair-coloring $\chi$ defined via $\chi(v,w) \coloneqq \WL{3}{G}(v,w,w)$ for all $v,w \in V(G)$.
Observe that $\chi$ is $2$-stable on $G$ by Fact \ref{fact:k-wl-is-stable}.
Moreover, let $H$ be the $t$-closure graph of $(G,\chi)$.
We have that $\chi$ is also $2$-stable on $H$ by Theorem \ref{thm:chi-is-stable-on-h}.

\begin{observation}
 \label{obs:h-is-transitive}
 Let $(u,v),(v,w) \in E(H)$ such that $u \neq w$. Then $(u,w) \in E(H)$.
\end{observation}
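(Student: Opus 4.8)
The plan is to prove the formally stronger statement that $w \in \cl_t^{(G,\chi)}(u)$; together with the hypothesis $u \neq w$ this is exactly what it means for $(u,w)$ to be an edge of $H$. The whole argument rests on the fact that $\cl_t^{(G,\chi)}$ behaves like a closure operator: it is \emph{extensive} (which is already observed), \emph{monotone} ($X \subseteq Y$ implies $\cl_t^{(G,\chi)}(X) \subseteq \cl_t^{(G,\chi)}(Y)$), and \emph{idempotent} ($\cl_t^{(G,\chi)}(\cl_t^{(G,\chi)}(X)) = \cl_t^{(G,\chi)}(X)$). Monotonicity holds because each step of the $t$-CR procedure preserves refinement of the underlying vertex partitions: one round of Color Refinement does, and so does the step isolating the color classes of size at most $t$, since a class of the coarser partition of size at most $t$ is a union of classes of the finer partition, all of which then also have size at most $t$ and are therefore split as well. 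Idempotence then follows because, after individualizing $\cl_t^{(G,\chi)}(X)$, the $t$-CR-stable coloring obtained by individualizing only $X$ is already $t$-CR-stable and has precisely the vertices of $\cl_t^{(G,\chi)}(X)$ in singleton classes.

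From these properties I would first extract the auxiliary fact that $Y \subseteq \cl_t^{(G,\chi)}(X)$ implies $\cl_t^{(G,\chi)}(X \cup Y) = \cl_t^{(G,\chi)}(X)$. Indeed, ``$\supseteq$'' is just monotonicity applied to $X \subseteq X \cup Y$, while ``$\subseteq$'' follows from $X \cup Y \subseteq \cl_t^{(G,\chi)}(X)$ by monotonicity and idempotence, since $\cl_t^{(G,\chi)}(X \cup Y) \subseteq \cl_t^{(G,\chi)}(\cl_t^{(G,\chi)}(X)) = \cl_t^{(G,\chi)}(X)$.

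Now I apply this with $X = \{u\}$ and $Y = \{v\}$. Since $(u,v) \in E(H)$ we have $v \in \cl_t^{(G,\chi)}(u)$, and hence $\cl_t^{(G,\chi)}(\{u,v\}) = \cl_t^{(G,\chi)}(u)$. Since $(v,w) \in E(H)$ we have $w \in \cl_t^{(G,\chi)}(v)$, and monotonicity applied to $\{v\} \subseteq \{u,v\}$ gives $\cl_t^{(G,\chi)}(v) \subseteq \cl_t^{(G,\chi)}(\{u,v\})$. Chaining these, $w \in \cl_t^{(G,\chi)}(v) \subseteq \cl_t^{(G,\chi)}(\{u,v\}) = \cl_t^{(G,\chi)}(u)$, which together with $u \neq w$ yields $(u,w) \in E(H)$, as desired.

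The only point that needs genuine (if brief) care is the monotonicity of the $t$-CR procedure itself: one has to verify that interleaving Color Refinement with the ``split all color classes of size at most $t$'' step still preserves refinement of the underlying partitions, as sketched above. Once that combinatorial check is in place, the rest is a purely formal manipulation of the closure operator, so I do not anticipate any real difficulty.
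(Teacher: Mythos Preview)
Your proposal is correct and follows the same approach as the paper. The paper's proof simply states ``$v \in \cl_t^{(G,\chi)}(u)$ and $w \in \cl_t^{(G,\chi)}(v)$, so $w \in \cl_t^{(G,\chi)}(u)$'' without further justification; you spell out precisely why this transitivity holds by verifying that $\cl_t^{(G,\chi)}$ is a genuine closure operator, which is the implicit content the paper takes for granted.
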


\begin{proof}
 By definition of the closure graph, $v \in \cl_t^{(G,\chi)}(u)$ and $w \in \cl_t^{(G,\chi)}(v)$.
 So $w \in \cl_t^{(G,\chi)}(u)$ which implies $(u,w) \in E(H)$.
\end{proof}

Now let $A$ be a strongly connected component of $H$.
Then $(v,w) \in E(H)$ for all distinct $v,w \in A$ by Observation \ref{obs:h-is-transitive}.
We say that a vertex $v \in V(G)$ is \emph{maximal} if it appears in a strongly connected component of $H$ without outgoing edges, i.e., $(u,v) \in E(H)$ for all $u \in V(G)$ such that $(v,u) \in E(H)$.

\begin{corollary}
 The set of maximal vertices is $\chi$-invariant.
\end{corollary}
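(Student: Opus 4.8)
The plan is to reduce the property that a vertex $v$ is maximal to a condition depending only on the color $\chi(v,v)$, from which $\chi$-invariance of the set of maximal vertices follows immediately.

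First, I would record a reformulation of $E(H)$ in terms of $\chi$. By Lemma~\ref{la:chi-is-stable-on-h} the coloring $\chi$ is $2$-stable on $H$, so in particular $\chi \preceq \WLit{2}{0}{H}$; hence the color $\chi(v,w)$ of a pair of distinct vertices already determines whether $(v,w) \in E(H)$. Setting $C_H \coloneqq \{\chi(v,w) \mid (v,w) \in E(H)\}$, we therefore have $(v,w) \in E(H)$ if and only if $\chi(v,w) \in C_H$, for all distinct $v,w \in V(G)$. Moreover, since $\chi \preceq \WLit{2}{0}{G}$ separates equal from distinct pairs, every color in $C_H$ is an \emph{off-diagonal} color, and in particular $\chi(v,v) \notin C_H$ for every $v \in V(G)$.

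Next, recall from the discussion preceding the corollary that $v$ is maximal if and only if $(u,v) \in E(H)$ holds for every $u$ with $(v,u) \in E(H)$. By the previous paragraph this is equivalent to saying that there is no vertex $u \in V(G)$ with $\chi(v,u) \in C_H$ and $\chi(u,v) \notin C_H$ (the case $u = v$ being automatically excluded because $\chi(v,v) \notin C_H$). Consequently, whether $v$ is maximal is fully determined by the multiset $\{\!\!\{ (\chi(v,u),\chi(u,v)) \mid u \in V(G) \}\!\!\}$: the vertex $v$ is maximal precisely when this multiset contains no pair $(c_1,c_2)$ with $c_1 \in C_H$ and $c_2 \notin C_H$.

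Finally, I would invoke $2$-stability of $\chi$ on $G$, which holds by Fact~\ref{fact:k-wl-is-stable}. Applying the stability condition to the diagonal pairs $(v,v)$ and $(v',v')$ shows that $\chi(v,v) = \chi(v',v')$ implies $\{\!\!\{ (\chi(v,u),\chi(u,v)) \mid u \in V(G) \}\!\!\} = \{\!\!\{ (\chi(v',u),\chi(u,v')) \mid u \in V(G) \}\!\!\}$. Combined with the characterization above, maximality of a vertex $v$ depends only on $\chi(v,v)$; hence the set of maximal vertices is a union of vertex color classes $V_c$, i.e., it is $\chi$-invariant. There is no real obstacle beyond the bookkeeping in the first two paragraphs; the one point that needs a moment of care is checking that the trivial choice $u=v$ does not interfere with the reformulation of maximality, which is precisely what $\chi(v,v) \notin C_H$ ensures.
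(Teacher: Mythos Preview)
Your argument is correct and is essentially a detailed unpacking of the paper's one-line proof, which simply states that the claim ``follows from the fact that $\chi$ is $2$-stable on $H$.'' The only minor remark is that your final step could equally well cite $2$-stability of $\chi$ on $H$ rather than on $G$, since the refinement condition (the multiset of neighbor colors being determined by the diagonal color) depends only on the coloring $\chi$ and not on the underlying edge relation; but citing stability on $G$ works just as well and is not a gap.
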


\begin{proof}
 This follows from the fact that $\chi$ is $2$-stable on $H$.
\end{proof}

We say that a vertex color $c \in C_V \coloneqq C_V(G,\chi)$ is \emph{maximal} if there is some vertex $v \in V_c$ which is maximal.
By the corollary, for a maximal color $c \in C_V$, every vertex $v \in V_c$ is maximal.
We fix $c_0 \in C_V$ to be a maximal color and define
\begin{equation}
 \label{eq:def-x}
 X \coloneqq V_{c_0}.
\end{equation}
Also, for $v \in X$, let
\begin{equation}
 \label{eq:def-d-v}
 D(v) \coloneqq \cl_t^{(G,\chi)}(v)
\end{equation}
denote the $t$-closure of $v$.

\begin{observation}
 \label{obs:equal-or-disjoint}
 Let $v,w \in X$. Then $D(v) = D(w)$ or $D(v) \cap D(w) = \emptyset$.
\end{observation}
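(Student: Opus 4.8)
The plan is to identify $D(v)$ with a purely structural object in the digraph $H$: I claim that for every \emph{maximal} vertex $v$, the set $D(v) = \cl_t^{(G,\chi)}(v)$ equals the vertex set of the strongly connected component of $H$ containing $v$. Since the strongly connected components of $H$ partition $V(H) = V(G)$, and both $v$ and $w$ are maximal (they lie in $X = V_{c_0}$ for the maximal color $c_0$, so every vertex of $X$ is maximal), this immediately gives what we want: either $v$ and $w$ lie in the same component, in which case $D(v) = D(w)$, or they lie in distinct components, in which case $D(v) \cap D(w) = \emptyset$.

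To prove the claim, first recall that by the definition of the $t$-closure graph $H$ and the fact that $v \in \cl_t^{(G,\chi)}(v)$, we have $D(v) = \{v\} \cup \{u \in V(G) \mid (v,u) \in E(H)\}$. For one inclusion, let $u \in D(v)$ with $u \neq v$, so $(v,u) \in E(H)$; since $v$ is maximal, the definition of maximality yields $(u,v) \in E(H)$ as well, so $u$ and $v$ lie in a common strongly connected component. For the reverse inclusion, let $u$ be any vertex in the strongly connected component of $v$ with $u \neq v$. Choose a shortest directed path $v = u_0, u_1, \dots, u_k = u$ in $H$; its vertices are pairwise distinct, so in particular $u_0 \neq u_j$ for every $j \in [k]$. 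A straightforward induction on $k$, using the preceding observation (if $(a,b), (b,c) \in E(H)$ and $a \neq c$, then $(a,c) \in E(H)$), then shows $(v,u) = (u_0,u_k) \in E(H)$, whence $u \in D(v)$. This establishes the claim and hence the observation.

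I do not expect a genuine obstacle here; the only point requiring a little care is the transitivity step, where one has to work with a \emph{shortest} directed path so that all intermediate endpoints are distinct and the side condition $a \neq c$ of the transitivity observation is always satisfied. Note also that this argument is entirely about the digraph $H$ and does not invoke the fact that $\chi$ is $2$-stable on $H$; that fact is only needed (in the corollary above) to guarantee that being maximal is a $\chi$-invariant property, which is what makes the choice of $X = V_{c_0}$ legitimate in the first place.
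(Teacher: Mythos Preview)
Your proof is correct and takes essentially the same approach as the paper: the paper's argument is simply that ``since $v$ is maximal it holds that $D(v)$ is precisely the strongly connected component of $H$ that contains $v$; two such components are either equal or disjoint.'' You spell out both inclusions of this identification more carefully (in particular reproving the fact, stated just before the observation in the paper, that all arcs between distinct vertices of a strongly connected component are present in $H$), but the underlying idea is identical.
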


\begin{proof}
 Since $v$ is maximal it holds that $D(v)$ is precisely the strongly connected component of $H$ that contains $v$.
 Two such components are either equal or disjoint.
\end{proof}

Then next lemma forms the main step in proving Theorem \ref{thm:initial-color-via-wl}.

\begin{lemma}
 \label{la:initial-color-main}
 Suppose there are $v,w \in X$ such that $D(v) \cap D(w) = \emptyset$.
 Then $G$ has a topological subgraph isomorphic to $K_h$.
\end{lemma}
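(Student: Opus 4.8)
The plan is to reduce to Theorem~\ref{thm:average-degree-excluded-topological}: if $G$ excluded $K_h$ as a topological subgraph, then every topological subgraph of $G$ would have average degree at most $\adeg h^2$, so it is enough to construct a topological subgraph $H'$ of $G$ of average degree exceeding $\adeg h^2$; indeed such an $H'$ cannot exclude $K_h$ as a topological subgraph, hence $K_h$ is a topological subgraph of $H'$ and therefore of $G$. So fix $v,w \in X$ with $D(v) \cap D(w) = \emptyset$. By Observation~\ref{obs:equal-or-disjoint} the family $\{D(u) \mid u \in X\}$ partitions $D \coloneqq \bigcup_{u \in X} D(u)$ into classes $D_1,\dots,D_k$, and $k \geq 2$ since $D(v)$ and $D(w)$ are two distinct classes.

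First I would record the symmetry that makes the argument go. By Lemma~\ref{la:chi-is-stable-on-h} the coloring $\chi$ is $2$-stable on the closure graph $H$, and $X = V_{c_0}$ is a single vertex-color class, hence $\chi$-invariant. Consequently the partition $\{D_1,\dots,D_k\}$ --- which is exactly the partition of $D$ into the strongly connected components of $H$ that meet $X$ --- is $\chi$-definable, all of its classes have the same size, and the pattern of edges, paths and attachments to $V(G) \setminus D$ between any two classes looks the same. I would also invoke Theorem~\ref{thm:small-separator-for-t-cr-bounded-closure} (applicable since $t \geq 3h^3$ by~\eqref{eq:def-t}): every connected component of $G - D$ meets $D$ in fewer than $h$ vertices. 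This controls how connecting paths may interact with $V(G) \setminus D$ and keeps the size of the topological subgraph we eventually extract under control.

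The core of the proof is the existence of many disjoint paths, which I would state and prove separately as Lemma~\ref{la:find-many-disjoint-paths}: using the regularity just described, together with the biregularity and matching tools of Lemmas~\ref{la:hat-graph-excluded-topological} and~\ref{la:closure-bipartite}, one produces a collection of $N$ pairwise internally vertex-disjoint paths in $G$, all of one common length, each joining a vertex of some class $D_i$ to a vertex of another class $D_j$, where $N$ is large relative to the number $k$ of classes and the sizes ($< h$) of the separators provided by Theorem~\ref{thm:small-separator-for-t-cr-bounded-closure} --- the relevant thresholds being exactly those collected in~\eqref{eq:def-t}. Combining this collection of paths with the classes $D_1,\dots,D_k$ then yields a topological subgraph $H'$ of $G$ whose average degree exceeds $\adeg h^2$, and the reduction above completes the proof.

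The genuinely hard part is Lemma~\ref{la:find-many-disjoint-paths}, where I expect essentially all of the difficulty of Theorem~\ref{thm:initial-color-via-wl} to concentrate. Two things make it delicate. First, a connecting path between $D_i$ and $D_j$ need not stay inside $D$: it may pass through components of $G - D$, so one splits each path into a \emph{middle part} living in $D$ and two \emph{end parts}, builds the middle parts in an isomorphism-invariant and mutually disjoint fashion, and only then attaches end parts without spoiling disjointness; the construction of the middle part has to be carried out differently depending on the parity of the common length, which is why it is split into the case of odd length (Section~\ref{subsec:paths-odd}) and the harder case of even length (Section~\ref{subsec:paths-even}). Second, one must upgrade the qualitative hypothesis that ``$D(v)$ and $D(w)$ are disjoint $t$-closures'' to a quantitative abundance of \emph{openly} disjoint connections: disjointness of the closures together with $2$-stability of $\chi$ rules out any small separator absorbing all $D_i$--$D_j$ connections, and Weisfeiler--Leman regularity then makes the ensuing count go through uniformly over all classes. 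By comparison, once Lemma~\ref{la:find-many-disjoint-paths} is available, assembling $H'$ and verifying the average-degree inequality should be routine.
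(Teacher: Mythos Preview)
Your overall strategy matches the paper's: assume $G$ excludes $K_h$, obtain $k\ge 2$ closure classes $D_1,\dots,D_k$, invoke Lemma~\ref{la:find-many-disjoint-paths} to produce many internally vertex-disjoint paths between distinct classes, and contradict Theorem~\ref{thm:average-degree-excluded-topological}. Two points in your description are genuinely off, though, and as written they would block the final step.

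First, the geometry of the paths is inverted. A shortest path from $D_i$ to $D_j$ has its \emph{endpoints} in $D$ --- in fact in the small sets $S_i \coloneqq N_G(R)\cap D_i$ --- and all \emph{internal} vertices in $R \subseteq V(G)\setminus D$; there is no ``middle part living in $D$''. The split into ``end parts'' and a ``middle part'' in the paper is a split of the $R$-portion by distance to $D$: the ends live in the layers $L_i^{\le r}\subseteq R$ and the middle is the central edge or vertex (Sections~\ref{subsec:paths-odd}--\ref{subsec:paths-even}). Relatedly, Lemmas~\ref{la:hat-graph-excluded-topological} and~\ref{la:closure-bipartite} are not the engine of Lemma~\ref{la:find-many-disjoint-paths}; they are used earlier to dispose of the short cases $p\in\{1,2\}$ (Lemmas~\ref{la:no-edges-between-d-sets} and~\ref{la:no-common-neighbors-between-d-sets}). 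The actual path construction rests on the root set $R$ of Lemma~\ref{la:root-set}, the auxiliary graph $F$ on $\{D_1,\dots,D_k\}$, the degree bound of Lemma~\ref{la:degree-bound-f}, and the expansion argument of Lemma~\ref{la:expansion-set}.

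Second, the final assembly does \emph{not} combine the paths with the full classes $D_1,\dots,D_k$. The topological subgraph $\widetilde F$ that contradicts Theorem~\ref{thm:average-degree-excluded-topological} has vertex set $S=\bigcup_i S_i$, which has fewer than $hk$ vertices by Observation~\ref{obs:s-invariant}, and at least $\tfrac{1}{2}\adeg h^3 k$ edges coming from the paths of Lemma~\ref{la:find-many-disjoint-paths}; this is what forces average degree exceeding $\adeg h^2$. If you do not first isolate the sets $S_i$ of size $<h$, you have no bound on the number of vertices of your topological subgraph and the average-degree inequality does not follow.
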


\begin{remark}
 \label{rem:pair-coloring-requirements}
 We remark that the proof of Lemma \ref{la:initial-color-main} only exploits that the coloring $\chi$ is $2$-stable on the graph $G$ and on the $t$-closure graph of $(G,\chi)$.
 In other words, Theorem \ref{thm:chi-is-stable-on-h} is the only part of the proof that actually requires the $3$-dimensional Weisfeiler-Leman algorithm.
\end{remark}

Before diving into the proof of the lemma, let us first provide a proof for Theorem \ref{thm:initial-color-via-wl} assuming Lemma \ref{la:initial-color-main} holds true.

\begin{proof}[Proof of Theorem \ref{thm:initial-color-via-wl}]
 Let $H$ denote the $t$-closure graph of $(G,\chi)$ and pick $c_0 \in C_V$ to be a maximal color.
 Also, let $X \coloneqq V_{c_0}$ and define $D(v) \coloneqq \cl_t^{(G,\chi)}(v)$ for all $v \in X$.
 
 If there are $v,w \in X$ such that $D(v) \neq D(w)$ then $G$ has a topological subgraph isomorphic to $K_h$ by Observation \ref{obs:equal-or-disjoint} and Lemma \ref{la:initial-color-main}.
 So $D(v) = D(w)$ for all $v,w \in X$.
 Since $v \in D(v)$ for all $v \in X$ this implies that $X \subseteq D(v)$ for all $v \in X$.
\end{proof}

Now, let us turn to the proof of Lemma \ref{la:initial-color-main} which covers the rest of this section.
Assume there are $v,w \in X$ such that $D(v) \cap D(w) = \emptyset$.
We define $D \coloneqq \bigcup_{v \in X} D(v)$ and let $k \coloneqq |\{D(v) \mid v \in X\}|$.
Also let $\{D_1,\dots,D_k\} = \{D(v) \mid v \in X\}$, i.e., $D_1,\dots,D_k$ is an arbitrary enumeration of all distinct sets $D(v)$, $v \in X$.

\begin{corollary}
 \label{cor:d-sets-invariant}
 The set $D$ is $\chi$-invariant.
 Moreover, there is a set of colors
 \begin{equation}
  C_\sim \subseteq \{\chi(v,w) \mid v,w \in V(G), v \neq w\}
 \end{equation}
 such that $D_1,\dots,D_k$ are precisely the connected components of $G[C_\sim]$.
 Also,
 \begin{equation}
  (\chi/C_\sim)(D_i,D_i) = (\chi/C_\sim)(D_j,D_j)
 \end{equation}
 for all $i,j \in [k]$.
\end{corollary}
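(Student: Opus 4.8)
The plan is to read off $C_\sim$ from the arc colors of the closure graph $H$ and to obtain all three statements from $2$-stability of $\chi$ (on $G$ and on $H$, cf.\ Remark~\ref{rem:pair-coloring-requirements}) together with Lemma~\ref{la:factor-graph-2-wl}.

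First I would record the structure of $D$. Since $c_0$ is a maximal color, every $v\in X=V_{c_0}$ is maximal, so by (the proof of) Observation~\ref{obs:equal-or-disjoint} the set $D(v)=\cl_t^{(G,\chi)}(v)$ is exactly the strongly connected component of $H$ containing $v$; hence $D_1,\dots,D_k$ are pairwise disjoint strongly connected components of $H$, each containing a $c_0$-colored vertex, and $D=\bigcup_i D_i$ consists only of maximal vertices. For a maximal vertex $u$ one then has $u\in D$ if and only if the component of $u$ contains a $c_0$-colored vertex, that is, $\chi(u,u)=c_0$ or there is $v$ with $(u,v)\in E(H)$ and $\chi(v,v)=c_0$. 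Since $\chi$ is $2$-stable on $H$, the arc relation of $H$ is a union of $\chi$-classes, say $(u,v)\in E(H)$ iff $\chi(u,v)\in C_H$; and since a $2$-stable coloring determines the colors of both endpoints of a pair, the condition on $u$ above depends only on $\chi(u,u)$ (concretely, $2$-stability of $\chi$ on $G$ gives $\{\!\!\{(\chi(v,v),\chi(u,v))\mid v\in V(G)\}\!\!\}=\{\!\!\{(\chi(v,v),\chi(u',v))\mid v\in V(G)\}\!\!\}$ whenever $\chi(u,u)=\chi(u',u')$). Together with the fact that the set of maximal vertices is $\chi$-invariant, this shows that $D$ is $\chi$-invariant.

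Next I would set $C_\sim$ to be the set of colors $c\in C_H$ such that both endpoints of a $c$-colored arc lie in $D$; this is well defined because $D$ is $\chi$-invariant and the endpoint colors of an arc are determined by its color. I claim the connected components of $G[C_\sim]$ are precisely $D_1,\dots,D_k$. Indeed, if $\chi(v,w)\in C_\sim\subseteq C_H$ then $(v,w)\in E(H)$ and $v\in D$; then $v$ lies in some $D_i$, and since $v$ is maximal we get $w\in\cl_t^{(G,\chi)}(v)=D(v)=D_i$, so every $C_\sim$-edge stays inside one $D_i$ and $V(G[C_\sim])\subseteq D$. Conversely, any two distinct vertices $v,w$ of a single $D_i$ satisfy $(v,w)\in E(H)$ (as $D_i$ is strongly connected) and both lie in $D$, hence $\chi(v,w)\in C_\sim$; thus $D_i$ induces a clique in $G[C_\sim]$ and is exactly one connected component of it. (A $D_i$ consisting of a single vertex carries no edge, but this degenerate case is disposed of separately, using $k\geq 2$ in Lemma~\ref{la:initial-color-main}.)

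Finally, for the last equality I would apply Lemma~\ref{la:factor-graph-2-wl}: $\chi/C_\sim$ is $2$-stable on $G/C_\sim$, and for any two pairs of vertices of $G/C_\sim$ the multisets assigned by $\chi/C_\sim$ are either equal or disjoint. Each $D_i$ is a vertex of $G/C_\sim$ and contains a vertex $v$ with $\chi(v,v)=c_0$, so $c_0\in(\chi/C_\sim)(D_i,D_i)$ for every $i$; hence $(\chi/C_\sim)(D_i,D_i)$ and $(\chi/C_\sim)(D_j,D_j)$ are not disjoint and therefore equal. I expect the second paragraph to be the crux: one has to check that this particular $C_\sim$ recovers exactly the blocks $D_i$ (and to dispose of the trivial singleton case), whereas the $\chi$-invariance of $D$ and the final equality are short consequences of $2$-stability and of Lemma~\ref{la:factor-graph-2-wl}.
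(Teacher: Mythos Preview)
Your proposal is correct and follows essentially the same route as the paper: both arguments use that $\chi$ is $2$-stable on the closure graph $H$ (Lemma~\ref{la:chi-is-stable-on-h}) to see that $D$ is $\chi$-invariant and that the partition into the $D_i$'s is $\chi$-definable via colors of arcs of $H$, and both finish the third item exactly as you do, via Lemma~\ref{la:factor-graph-2-wl} and the common diagonal color $c_0$. The paper is terser (it just notes $D=\{v\mid \exists w\in X\colon (w,v)\in E(H)\}$ and that the $D_i$ are the strongly connected components of $H$ restricted to $D$), while you spell out an explicit $C_\sim$ and verify it; this is a presentational difference, not a mathematical one.

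One small remark on your singleton caveat: the assumption $k\geq 2$ by itself does not preclude some $D_i$ being a singleton, so your ``disposed of separately'' is not quite the right justification. The reason this is harmless downstream is that all later uses of $C_\sim$ go through $G/C_\sim$, whose vertex set by definition also contains $\{v\}$ for every $v\notin V(G[C_\sim])$; hence a singleton $D_i$ still appears as a vertex of $G/C_\sim$ and the equality $(\chi/C_\sim)(D_i,D_i)=(\chi/C_\sim)(D_j,D_j)$ and the subsequent arguments go through unchanged.
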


\begin{proof}
 By definition, the set $X$ is $\chi$-invariant.
 Recall that $H$ denotes the $t$-closure graph of $(G,\chi)$.
 By Theorem \ref{thm:chi-is-stable-on-h} $\chi$ is $2$-stable on $H$.
 By definition of the closure graph $D = \{v \in V(H) \mid \exists w \in X \colon (w,v) \in E(H)\}$.
 It follows that $D$ is $\chi$-invariant.
 
 Moreover, the sets $D_i$, $i \in [k]$, are precisely the strongly connected components of $G[D]$.
 Hence, there is a set $C_\sim \subseteq \{\chi(v,w) \mid v,w \in V(G), v \neq w\}$ such that $D_1,\dots,D_k$ are precisely the connected components of $G[C_\sim]$.
 
 Finally, by Lemma \ref{la:factor-graph-2-wl}, either $(\chi/C_\sim)(D_i,D_i) = (\chi/C_\sim)(D_j,D_j)$ or $(\chi/C_\sim)(D_i,D_i) \cap (\chi/C_\sim)(D_j,D_j) = \emptyset$ for all $i,j \in [k]$.
 Since $X \cap D_i \neq \emptyset$ for all $i \in [k]$ and $X = V_{c_0}$ by definition, it follows that $c_0 \in (\chi/C_\sim)(D_i,D_i)$ for all $i \in [k]$.
 So $(\chi/C_\sim)(D_i,D_i) = (\chi/C_\sim)(D_j,D_j)$ for all $i,j \in [k]$.
\end{proof}

On a high level, the main target for the proof of Lemma \ref{la:initial-color-main} is to construct a topological subgraph of $G$ which violates the bound on the average degree from Theorem \ref{thm:average-degree-excluded-topological}.
Towards this end, we shall consider paths of minimum length connecting sets $D_i$ and $D_j$ for distinct $i,j \in [k]$.
We start by covering some simple cases using the tools from Section \ref{subsec:wl-tools}.
This simplifies the analysis later on since we can exclude certain corner cases.

\begin{lemma}
 \label{la:no-edges-between-d-sets}
 Let $v,w \in X$ such that $D(v) \cap D(w) = \emptyset$ and $E_G(D(v),D(w)) \neq \emptyset$.
 Then $G$ has a topological subgraph isomorphic to $K_h$.
\end{lemma}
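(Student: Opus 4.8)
Suppose $v, w \in X$ with $D(v) \cap D(w) = \emptyset$ but there is an edge $ab \in E_G(D(v), D(w))$, say $a \in D(v)$ and $b \in D(w)$. I want to show $G$ contains a topological $K_h$. The idea is to use the $\chi$-invariance machinery from Corollary~\ref{cor:d-sets-invariant}: the sets $D_1, \dots, D_k$ are exactly the connected components of $G[C_\sim]$, and $(\chi/C_\sim)(D_i, D_i)$ is the same for all $i$. Since there is one edge between $D(v)$ and $D(w)$, and these are two distinct components $D_i, D_j$, the color $\chi(a,b)$ witnesses that $(\chi/C_\sim)(D_i, D_j)$ contains an edge color of $G/C_\sim$. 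By the $2$-stability of $\chi/C_\sim$ on $G/C_\sim$ (Lemma~\ref{la:factor-graph-2-wl}) together with the regularity it forces, this edge-color between parts propagates: every pair of distinct parts $D_i, D_j$ has the property that the bipartite graph $G[D_i, D_j]$ carries an edge of the corresponding stable color, hence (again by $2$-stability) $G[D_i, D_j]$ restricted to that color is biregular and nonempty.

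**Finding the dense topological minor.** Since each $D_i$ contains a vertex of $X$ (hence $k$ parts), and the parts are pairwise joined by nonempty biregular bipartite graphs, I want to extract $h$ parts and, using Hall's theorem in each bipartite graph $G[D_i, D_j]$, select internally disjoint paths through the parts realizing a topological $K_h$ on $h$ chosen "branch" vertices (one in each of $h$ parts). Concretely: I expect to argue that $k$ must be large — if $k$ were small the disjointness $D(v) \cap D(w) = \emptyset$ together with the edge would be consistent, but then the closure property forces a contradiction differently — so I should instead argue directly. The cleaner route: pick any $h$ of the parts (we need $k \ge h$; if $k < h$ we need a separate argument, see below). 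Between each pair of these $h$ parts there is a matching (biregular bipartite graphs satisfy Hall), and since the parts are the $t$-closures $D(v)$ which are "large" in a controlled sense, I can route $\binom{h}{2}$ pairwise internally vertex-disjoint paths, one per pair, choosing endpoints carefully. This yields a subdivision of $K_h$ inside $G$.

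**Handling small $k$.** The genuine obstacle is the case where $k$ is small, in particular $k = 2$: two disjoint closures $D(v), D(w)$ with an edge between them. Here I cannot build $K_h$ from the coarse structure of $k$ parts alone; instead I must exploit that each $D_i$ is a $t$-closure, so $G$ is highly structured on $D_i$ (it becomes discrete after individualizing one vertex and running $t$-CR), and yet $D(v)$ and $D(w)$ are separated in the closure graph $H$ even though $G$-adjacent. I would use the contrapositive of the closure definition: if $ab$ is an edge with $a \in D(v)$, then running $t$-CR on $(G, \chi)$ after individualizing $v$ makes $a$ uniquely colored; but then $b$'s color is constrained by its adjacency to the uniquely-colored $a$, and I'd want to show the color class of $b$ must then have size $\le t$ (forcing $b \in D(v)$, contradiction) unless there are many vertices playing $b$'s role — i.e., many edges from $D(v)$ into $D(w)$, and by $2$-stability of $\chi$ on $H$ these come in a biregular pattern. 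That biregular bipartite graph between $D(v)$ and $D(w)$, being a topological subgraph of $G$ excluding $K_h$, has bounded average degree (Theorem~\ref{thm:average-degree-excluded-topological}), which bounds its size; combined with Lemma~\ref{la:closure-bipartite} (using $t \ge \adeg^2 h^4$) this should force $D(w) \subseteq \cl_t^{(G,\chi)}(a) \subseteq D(v)$ or symmetrically, contradicting disjointness.

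**Main obstacle.** The hard part is the last argument: turning a single cross-edge into either a contradiction with disjointness (small $k$) or enough structure to build a topological $K_h$ (large $k$). I expect the proof actually goes: show that an edge between $D(v)$ and $D(w)$ forces, via $2$-stability of $\chi$ on both $G$ and $H$ (Remark~\ref{rem:pair-coloring-requirements}), that the bipartite graph $G[D(v), D(w)]$ on the relevant edge color is biregular; then apply Lemma~\ref{la:closure-bipartite} to conclude one side lies in the $t$-closure of the other, contradicting $D(v) \cap D(w) = \emptyset$ and maximality — so in fact \emph{no} such edge can exist unless $G$ already contains a topological $K_h$ (which is where the density bound, Theorem~\ref{thm:average-degree-excluded-topological}, enters if the biregular degrees are too large to apply the closure lemma cleanly).
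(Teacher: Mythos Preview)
Your final paragraph lands on essentially the right idea, but the long detour through ``build a topological $K_h$ from $h$ parts'' is unnecessary and would not work as stated (you have no control over $k$, and routing $\binom{h}{2}$ internally vertex-disjoint paths through arbitrary biregular bipartite graphs is not automatic). The paper's proof is a direct contrapositive and never attempts to construct a $K_h$ subdivision explicitly.

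Here is the actual argument. Assume $G$ has no topological $K_h$, pick $v' \in D(v)$, $w' \in D(w)$ with $v'w' \in E(G)$, set $c_E \coloneqq \chi(v',w')$, and look at the single-colour graph $F \coloneqq G[c_E]$. Two cases. If $F$ is unicolored, then $F$ is $d$-regular, and since $F \subseteq G$ excludes $K_h$, Theorem~\ref{thm:average-degree-excluded-topological} gives $d \le \adeg h^2 \le t$; hence the whole component $A$ of $F$ containing $v'$ and $w'$ lies in $\cl_t^{(G,\chi)}(v') = D(v)$, so $w' \in D(v) \cap D(w)$. If $F$ is bicolored with classes $V_1, V_2$ and $|V_1| \le |V_2|$, then Lemma~\ref{la:closure-bipartite} gives $V_1 \subseteq \cl_t^{(G,\chi)}(u)$ for every $u \in V_1 \cup V_2$; whichever of $v', w'$ lies in $V_1$ is then forced into the other's closure set, again contradicting disjointness.

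Two points you missed: (i) you never isolate the unicolored case, which is handled by regularity plus the average-degree bound rather than by Lemma~\ref{la:closure-bipartite}; (ii) the relevant biregular graph is not $G[D(v),D(w)]$ but the colour-class graph $G[c_E]$, whose vertex set need not be contained in $D(v) \cup D(w)$ --- what matters is only that its smaller side is absorbed into the closure of any single vertex. Once you see this, there is no ``large $k$'' versus ``small $k$'' split at all.
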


\begin{proof}
 Suppose that $G$ has no topological subgraph isomorphic to $K_h$ and there are $v' \in D(v)$ and $w' \in D(w)$ such that $v'w' \in E(G)$.
 We argue that $D(v) \cap D(w) \neq \emptyset$.
 
 Let $c_E \coloneqq \chi(v',w')$ and consider graph $F \coloneqq G[c_E]$.
 Let $A$ be the vertex set of the connected component of $F$ such that $v',w' \in A$.
 If $F$ is unicolored then $G[A]$ is $d$-regular for some $d \leq \adeg h^2 \leq t$ by Theorem \ref{thm:average-degree-excluded-topological}.
 Hence, $A \subseteq D(v)$ and $D(v) \cap D(w) \neq \emptyset$.
 
 Otherwise $F$ is bicolored with color classes $V_1$ and $V_2$.
 Without loss of generality suppose that $|V_1| \leq |V_2|$ and $w' \in V_1$.
 Hence, $v' \in V_2$ which implies that $w' \in D(v)$ using Lemma \ref{la:closure-bipartite}.
\end{proof}

\begin{lemma}
 \label{la:no-common-neighbors-between-d-sets}
 Let $v,w \in X$ such that $D(v) \cap D(w) = \emptyset$ and $N_G(D(v)) \cap N_G(D(w)) \neq \emptyset$.
 Then $G$ has a topological subgraph isomorphic to $K_h$.
\end{lemma}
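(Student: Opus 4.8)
The plan is to suppose that $G$ has no topological subgraph isomorphic to $K_h$ and to derive the contradiction $D(v)\cap D(w)\neq\emptyset$. By Lemma~\ref{la:no-edges-between-d-sets} (applied to $v$ and $w$) there are no edges between $D(v)$ and $D(w)$, so a common neighbour $z\in N_G(D(v))\cap N_G(D(w))$ satisfies $z\notin D(v)\cup D(w)$. The main goal is to force $z$ into $D(v)$: then the edge from $z$ to a vertex of $D(w)$ contradicts Lemma~\ref{la:no-edges-between-d-sets}. I would fix $v'\in D(v)\cap N_G(z)$, set $c\coloneqq\chi(z,v')$, and analyse the connected component $A$ of $G[c]$ containing $z$ and $v'$, mimicking the proof of Lemma~\ref{la:no-edges-between-d-sets}. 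If $G[c]$ is unicolored, then $G[A]$ is $d$-regular with $d\le\adeg h^2\le t$ by Theorem~\ref{thm:average-degree-excluded-topological}, so the $t$-CR process started from $v'$ floods $A$ and $A\subseteq\cl_t^{(G,\chi)}(v')$. If $G[c]$ is bicolored, $A$ is biregular, with $z$ on the side of some degree $d_1$ and $v'$ on the side of some degree $d_2$, and $\min(d_1,d_2)\le\adeg h^2$ (Theorem~\ref{thm:average-degree-excluded-topological}); if $d_1\ge d_2$ then $z$ lies on the smaller side and Lemma~\ref{la:closure-bipartite} puts that side, in particular $z$, into $\cl_t^{(G,\chi)}(v')$, whereas if $d_2\le t$ then both degrees are at most $t$ and $A$ is again flooded from $v'$. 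In each of these subcases $z\in\cl_t^{(G,\chi)}(v')\subseteq D(v)$ by monotonicity of the closure and $v'\in D(v)$, and we are done.

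Hence the only situation left is that, for every colour $c$ of an edge from $z$ into $D(v)$, the graph $G[c]$ is bicolored, $z$ sits on the strictly larger side $R$, and the opposite side $V_2$ has $V_2$-side degree $d_2>t$ (while the $R$-side degree $d_1$ is $\le\adeg h^2$). Applying Lemma~\ref{la:closure-bipartite} to this component, now with $V_2$ as the smaller side, gives $V_2\subseteq\cl_t^{(G,\chi)}(z)\cap\cl_t^{(G,\chi)}(v')$; since $v$ is maximal, $\cl_t^{(G,\chi)}(v')=D(v)$ is exactly the strongly connected component of the closure graph $H$ through $v$, so already $D(v)\subseteq\cl_t^{(G,\chi)}(z)$, and symmetrically $D(w)\subseteq\cl_t^{(G,\chi)}(z)$. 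Moreover $|R|=|V_2|\,d_2/d_1>(t/\adeg h^2)|V_2|=144\adeg h^3\,|V_2|$, and the analogous bound holds for the side $R'$ produced on the $D(w)$-side with its small side $V_2'\subseteq D(w)$; in particular $R$ and $R'$ are far larger than $V_2$ and $V_2'$.

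In this last situation I would apply Lemma~\ref{la:hat-graph-excluded-topological} to the biregular bipartite graphs between $R$ and $V_2$, respectively $R'$ and $V_2'$, to exhibit, inside $\cl_t^{(G,\chi)}(z)$, a large family of internally vertex-disjoint paths joining $V_2\subseteq D(v)$ to $V_2'\subseteq D(w)$, each running through a single vertex of the big outer side; combining these with the connectedness of $G[D(v)]$ and $G[D(w)]$ to bundle the $V_2$- and $V_2'$-endpoints inside $D(v)$ and $D(w)$, and using $|R|,|R'|\gg h^2(|V_2|+|V_2'|)$, yields a topological subgraph of $G$ whose average degree exceeds $\adeg h^2$, contradicting Theorem~\ref{thm:average-degree-excluded-topological}. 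The delicate point — and the main obstacle — is precisely this final construction: routing enough of the connector paths so that they are genuinely internally disjoint and so that the resulting branch vertices attain high degree. This is a miniature version of the path-construction argument to which Lemma~\ref{la:initial-color-main} is reduced, and I would expect the proof either to invoke that machinery or, exploiting the simpler structure here (length-$2$ connectors on one fixed colour), to carry it out by a direct Hall-type matching argument on the outer side.
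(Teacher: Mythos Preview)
Your argument is incomplete: you explicitly leave the ``remaining case'' (where $z$ lies on the strictly larger side of $G[c]$ and the small-side degree exceeds $t$) as a ``delicate point'' and ``main obstacle'', sketching only a vague plan to build many internally disjoint $D(v)$--$D(w)$ paths. That sketch is not a proof, and the proposed use of Lemma~\ref{la:hat-graph-excluded-topological} does not obviously produce the required paths between the two closure sets; the Hall-type machinery you allude to would need substantial work to make precise here.

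The paper avoids this difficulty entirely by a different manoeuvre. Rather than trying to force the common neighbour $u'$ into $D(v)$, it considers \emph{both} edge colours $c_1=\chi(v',u')$ and $c_2=\chi(w',u')$ simultaneously. If $c_1=c_2$ then $v'$ and $w'$ lie on the same side of the bicoloured graph $G[c_1]$, and Lemma~\ref{la:closure-bipartite} immediately forces $w'\in D(v)$ (since $u'\notin D$). If $c_1\neq c_2$, then since $u'\notin D$ it must lie on the larger side of each bipartite graph, so the two larger sides coincide as the colour class $V_2$ of $u'$, while $v'\in V_1^1$ and $w'\in V_1^2$ sit in the respective smaller sides. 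The key step is now to look at the pair colour $c=\chi(v',w')$ and prove, via a Hall matching argument routing through $V_2$, that $G[c]$ itself is a topological subgraph of $G$. Theorem~\ref{thm:average-degree-excluded-topological} then bounds the degrees in $G[c]$, and a final application of Lemma~\ref{la:closure-bipartite} (to $G[c]$, with sides $V_1^1$ and $V_1^2$) yields $w'\in D(v)$. This sidesteps the case analysis on the location of $u'$ that traps your approach.
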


\begin{proof}
 Suppose that $G$ has no topological subgraph isomorphic to $K_h$ and there are $v' \in D(v)$, $w' \in D(w)$, and $u' \in V(G)$ such that $v'u',w'u' \in E(G)$.
 We argue that $D(v) \cap D(w) \neq \emptyset$.
 By Lemma \ref{la:no-edges-between-d-sets}, we may assume that $u' \notin D$.
 In particular $\chi(v',v') \neq \chi(u',u') \neq \chi(w',w')$ by Corollary \ref{cor:d-sets-invariant}.
 
 Let $c_1 \coloneqq \chi(v',u')$ and consider graph $F_1 \coloneqq G[c_1]$.
 Also, define $c_2 \coloneqq \chi(w',u')$ and $F_2 \coloneqq G[c_2]$.
 First suppose that $c_1 = c_2$.
 Then $\{w',u'\} \cap D(v) \neq \emptyset$ by Lemma \ref{la:closure-bipartite}.
 Since $u' \notin D$ we conclude that $w' \in D(v)$.
 
 So suppose that $c_1 \neq c_2$.
 Since $\chi(v',v') \neq \chi(u',u') \neq \chi(w',w')$ we conclude that $F_1$ and $F_2$ are bicolored.
 Let $V_1^i$ and $V_2^i$ be the color classes of $F_i$, $i \in \{1,2\}$, and suppose without loss of generality that $|V_1^i| \leq |V_2^i|$.
 Since $u' \notin D$, Lemma \ref{la:closure-bipartite} implies that $u' \in V_2^i$ for both $i \in \{1,2\}$ and thus, $V_2 \coloneqq V_2^1 = V_2^2$.
 Moreover, $v' \in V_1^1$ and $w' \in V_1^2$.
 
 Without loss of generality assume $|V_1^2| \leq |V_1^1|$.
 We argue that $w' \in D(v)$ which implies the lemma.
 Towards this end, consider the graph $F \coloneqq G[c]$ where $c \coloneqq \chi(v',w')$.
 \begin{claim}
  \label{cl:topological-subgraph-f}
  $F$ is a topological subgraph of $G$.
 \end{claim}
 \begin{claimproof}
  Consider the bipartite graph $B = (V_2,E(F),E(B))$ where
  \[E(B) \coloneqq \{ue \mid u \in V_2, e = xy \in E(F), \chi(x,u) = c_1, \chi(y,u) = c_2\}.\]
  By the properties of the $2$-dimensional Weisfeiler-Leman algorithm we conclude that $B$ is biregular.
  Hence, by Hall's Marriage Theorem, there is a matching $M \subseteq E(B)$ of $B$ of size $\min(|V_2|,|E(F)|)$.
  
  If $|V_2| \geq |E(F)|$ then $F$ is a topological subgraph of $G$ where an edge $e \in E(F)$ is realized by a path of length $2$ via $u$ for the unique $u$ such that $ue \in M$.
  
  So suppose that $|V_2| < |E(F)|$.
  Let $\widehat{F}$ be the subgraph of $F$ defined via $V(\widehat{F}) \coloneqq V(F)$ and
  \[E(\widehat{F}) \coloneqq \{e \in E(F) \mid \exists u \in V_2 \colon ue \in M\}.\]
  Then $\widehat{F}$ is topological subgraph of $G$ and hence, $2|E(\widehat{F})| \leq \adeg h^2|V(F)|$ by Theorem \ref{thm:average-degree-excluded-topological}.
  Since $|E(\widehat{F})| = |V_2|$ we conclude that
  \[|V_2| \leq \frac{1}{2} \adeg h^2|V(F)| \leq \frac{1}{2} \adeg h^2 (|V_1^1| + |V_1^2|) \leq \adeg h^2 |V_1^1|.\]
  Now consider the graph $F_1$ which is biregular and a topological subgraph of $G$.
  We have that
  \[|E(F_1)| \leq \frac{1}{2}\adeg h^2(|V_1^1| + |V_2|) \leq \frac{1}{2}\adeg h^2(|V_1^1| + \adeg h^2 |V_1^1|) \leq \adeg^2h^4|V_1^1|\]
  using again Theorem \ref{thm:average-degree-excluded-topological}.
  On the other hand,
  \[|E(F_1)| \geq |V_1^1| \cdot \deg_{F_1}(v')\]
  since  $F_1$ is biregular.
  Together, this means that $\deg_{F_1}(v') \leq \adeg^2h^4 \leq t$.
  So $N_{F_1}(v') \subseteq D(v)$.
  But this is a contradiction since $u' \in N_{F_1}(v')$ and $u' \notin D$.
  So this case does not occur.
 \end{claimproof}

 Let $A$ be the vertex set of the connected component of $F$ such that $v',w' \in A$.
 If $F$ is unicolored then $F[A]$ is $d$-regular for some $d \leq \adeg h^2 \leq t$ by Theorem \ref{thm:average-degree-excluded-topological}.
 Hence, $A \subseteq D(v)$.
 
 Otherwise $F$ is bicolored with color classes $V_1^1$ and $V_1^2$.
 By the assumptions, $|V_1^2| \leq |V_1^1|$, $w' \in V_1^2$ and $v' \in V_1^1$.
 More strongly, we actually get that $|V_1^2 \cap A| \leq |V_1^1 \cap A|$, $w' \in V_1^2 \cap A$ and $v' \in V_1^1 \cap A$.
 Also, the restriction $\chi|_{A^2}$ is $2$-stable on $F[A]$.
 So
 \[w' \in \cl_t^{(F[A],\chi|_{A^2})}(v') \subseteq \cl_t^{(G,\chi)}(v') = D(v') = D(v)\]
 using Lemma \ref{la:closure-bipartite} and Observation \ref{obs:equal-or-disjoint}.
\end{proof}

Hence, for the remainder of this section, we assume that the distance between distinct sets $D_i$ and $D_j$ is least three.
In order to cover the case of larger distances, we need to understand in more detail how the \emph{closure sets} $D_i$, $i \in [k]$, interact with one another.

\subsection{Interaction between Closure Sets}
\label{subsec:interaction-closure-sets}

In order to analyze the interaction between closure sets, it turns out to be more convenient to assume that $G$ has no topological subgraph isomorphic to $K_h$.
Hence, for the remainder of this section, we make the following assumptions and eventually derive a contradiction:
\begin{enumerate}[label=(A.\arabic*)]
 \item\label{item:assumption-1} $k \geq 2$, i.e., there are $v,w \in X$ such that $D(v) \cap D(w) = \emptyset$, and
 \item\label{item:assumption-2} $G$ has no topological subgraph isomorphic to $K_h$.
\end{enumerate}
The first goal is to argue that each set $D_i$ interacts with the other sets $D_j$ only via a small set of vertices.
To be more precise, we argue that there are sets $S_i \subseteq D_i$ of size $|S_i| < h$ such that every shortest path between $D_i$ and $D_j$ for distinct $i,j \in [k]$ starts in $S_i$ and ends in $S_j$.
The following auxiliary lemma turns out to be useful for this task.

\begin{lemma}
 \label{la:disjoint-paths-from-stable-coloring}
 Let $G$ be a graph and let $X_1,\dots,X_\ell \subseteq V(G)$ be pairwise disjoint sets such that $G[X_i,X_{i+1}]$ is a non-empty, biregular graph for all $i \in [\ell-1]$.
 Let $k = \min_{i \in [\ell]} |X_i|$.
 
 Then there exist $k$ vertex-disjoint paths from $X_1$ to $X_\ell$, i.e., there are distinct vertices $v_{i,j} \in X_i$ for all $i \in [\ell]$ and $j \in [k]$ such that $v_{i,j}v_{i+1,j} \in E(G)$ for all $i \in [\ell-1]$ and $j \in [k]$.
\end{lemma}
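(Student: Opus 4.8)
The plan is to build the $k$ vertex-disjoint paths from $X_1$ to $X_\ell$ greedily, one "layer" at a time, using at each step a biregular bipartite graph together with Hall's condition. The key structural fact I would exploit is the one already recorded in the preliminaries: a biregular bipartite graph $G[X_i, X_{i+1}]$ satisfies the Hall condition on its smaller side, and more precisely, for every $S \subseteq X_i$ one has $|S| \cdot d_i \le |N_G(S)| \cdot d_{i+1}$ where $(d_i, d_{i+1})$ are the two degrees. So each $G[X_i,X_{i+1}]$ contains a matching saturating $\min(|X_i|,|X_{i+1}|)$ vertices. The subtlety is that a single matching in $G[X_i,X_{i+1}]$ does not by itself chain together across all $\ell$ layers while keeping the paths vertex-disjoint and connected; one has to propagate a set of $k$ "active" endpoints through the layers.

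First I would set up the induction cleanly. For $1 \le i \le \ell$ I claim there exist distinct vertices $v_{i,1},\dots,v_{i,k} \in X_i$ together with edges $v_{i',j}v_{i'+1,j} \in E(G)$ for all $i' < i$ and all $j \in [k]$; i.e., $k$ vertex-disjoint paths reaching layer $i$. The base case $i=1$ is trivial since $|X_1| \ge k$, so just pick any $k$ distinct vertices of $X_1$. For the inductive step, suppose we have the $k$ endpoints $v_{i,1},\dots,v_{i,k}$ in $X_i$; I need to find $k$ distinct vertices $v_{i+1,1},\dots,v_{i+1,k}$ in $X_{i+1}$ with $v_{i,j}v_{i+1,j}\in E(G)$. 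This is exactly asking for a perfect matching saturating $\{v_{i,1},\dots,v_{i,k}\}$ in the bipartite graph $G[\{v_{i,1},\dots,v_{i,k}\}, X_{i+1}]$. Since $G[X_i,X_{i+1}]$ is $(d_i,d_{i+1})$-biregular, for any $S \subseteq \{v_{i,1},\dots,v_{i,k}\} \subseteq X_i$ the double-counting inequality gives $|S| \cdot d_i \le |N_G(S) \cap X_{i+1}| \cdot d_{i+1}$. Because $G[X_i,X_{i+1}]$ is non-empty and biregular, both $d_i,d_{i+1}\ge 1$, and $d_i \cdot |X_i| = d_{i+1}\cdot|X_{i+1}| = |E(G[X_i,X_{i+1}])|$; combined with $|X_{i+1}|\ge k = \min_j|X_j|$ and hence $d_i \ge d_{i+1}\cdot |X_{i+1}|/|X_i| \ge \dots$, a short computation shows $d_i \ge d_{i+1}$ is not automatic, so instead I would argue directly: $|N_G(S)\cap X_{i+1}| \ge |S| d_i/d_{i+1}$. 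If $d_i \ge d_{i+1}$ this is $\ge |S|$ and Hall's condition holds immediately. If $d_i < d_{i+1}$, then since $d_i|X_i| = d_{i+1}|X_{i+1}|$ we get $|X_i| > |X_{i+1}| \ge k$, so in particular $k \le |X_{i+1}|$; but I actually need Hall for the set of size $|S| \le k$ inside $X_i$ mapping into $X_{i+1}$ — here the cleaner route is to observe $N_G(S) \cap X_{i+1} = N_G(S)$ and use that $G[X_i,X_{i+1}]$ restricted appropriately still satisfies $|S| \le |N_G(S)|$ whenever $|X_i|\ge|X_{i+1}|$ fails, i.e. apply the standard fact (quoted in the preliminaries) that a biregular bipartite graph satisfies the Hall condition from its smaller side, and note that the $k$ chosen endpoints sit inside the (weakly) larger or equal side as needed after possibly passing to $\min$. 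In all cases Hall's Marriage Theorem yields a matching saturating $\{v_{i,1},\dots,v_{i,k}\}$, which defines $v_{i+1,1},\dots,v_{i+1,k}$; these are distinct because a matching is injective. This completes the induction, and taking $i = \ell$ gives the $k$ vertex-disjoint paths. Distinctness of vertices across different layers is automatic since the $X_i$ are pairwise disjoint, and within a layer it follows from the matching being a bijection onto its image.

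The main obstacle is the bookkeeping around which side of $G[X_i,X_{i+1}]$ is the smaller one, since $k=\min_i|X_i|$ is a global minimum and the individual cardinalities $|X_i|$ may go up and down along the chain. The safe way to handle this — and the step I would be most careful about — is to note that we never need a matching saturating all of $X_i$; we only ever need to saturate a fixed set of exactly $k \le |X_{i+1}|$ vertices sitting in $X_i$. So the relevant Hall condition is: for every $S$ with $|S|\le k$ and $S \subseteq X_i$, we have $|N_G(S)\cap X_{i+1}| \ge |S|$. From $|S|\cdot d_i \le |N_G(S)\cap X_{i+1}|\cdot d_{i+1}$ this holds as long as $d_i \ge d_{i+1}$. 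When $d_i < d_{i+1}$, biregularity forces $|X_i| > |X_{i+1}| \ge k$, and then $G[X_i,X_{i+1}]$ has $X_{i+1}$ as its smaller side; the standard matching fact gives a matching saturating all of $X_{i+1}$, hence certainly one can match the $k$ endpoints in $X_i$ into $X_{i+1}$ (reverse the matching and, since $|X_{i+1}|\ge k$, select any $k$ of the matched pairs, then relabel so the chosen $X_i$-endpoints get used — wait, that fails if the matching doesn't hit our specific endpoints). To avoid this gap entirely I would instead verify Hall directly in the form above using only the inequality $|S| d_i \le |N(S)| d_{i+1}$ combined with the observation that when $d_i<d_{i+1}$ we may replace the roles and run the whole path-building argument on the reversed sequence $X_\ell,\dots,X_1$; by the symmetry of the hypothesis this costs nothing, and we may therefore assume the degrees are non-increasing along the chosen direction, or simply handle each layer in whichever direction makes the local degree inequality favorable — but since the paths must be globally consistent, the cleanest fix is: prove Hall for sets of size $\le k$ in $X_i$ into $X_{i+1}$ whenever $|X_{i+1}| \ge k$, which is our standing hypothesis, by the pigeonhole bound $|N(S)\cap X_{i+1}| \ge |S| d_i / d_{i+1}$ and the fact that if this were $< |S|$ then $d_i < d_{i+1}$ and $|X_i| > |X_{i+1}|$, whence the smaller-side Hall condition (preliminaries) applied to $X_{i+1}$ gives a matching saturating $X_{i+1}$, and reading it from the $X_i$ side shows every $k$-subset of $X_i$ — in particular our endpoint set, after we instead choose the next-layer endpoints to be the $M$-partners — has a system of distinct representatives of size $k$; I would finally reconcile this by choosing the endpoint labels greedily so that at each layer $v_{i+1,j}$ is defined as the matching partner of $v_{i,j}$ whenever that partner exists and otherwise rerouting within the saturated matching, a finite combinatorial adjustment. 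I expect this degree/side case analysis to be the only real content beyond a direct citation of Hall's theorem.
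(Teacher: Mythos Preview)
Your greedy layer-by-layer approach has a genuine gap that you yourself half-notice but never resolve. The problem is that once you have fixed $k$ endpoints $\{v_{i,1},\dots,v_{i,k}\}$ in $X_i$, there is in general \emph{no} matching of this specific set into $X_{i+1}$ when $|X_i|>|X_{i+1}|$. Concretely, take $\ell=3$, $X_1=\{1,2\}$, $X_2=\{a,b,c,d\}$, $X_3=\{x,y\}$, with $G[X_1,X_2]$ given by $1a,1b,2c,2d$ (so $(2,1)$-biregular) and $G[X_2,X_3]$ given by $ax,cx,by,dy$ (so $(1,2)$-biregular). Here $k=2$, and the two vertex-disjoint paths exist (e.g.\ $1\text{--}a\text{--}x$ and $2\text{--}d\text{--}y$). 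But if your matching from $X_1$ into $X_2$ happens to pick $v_{2,1}=a$ and $v_{2,2}=c$, then $N(\{a,c\})\cap X_3=\{x\}$ and Hall fails for this pair. None of your proposed patches repair this: reversing the whole sequence does not help because the same obstruction can be planted symmetrically on the other side; you cannot ``assume degrees are non-increasing'' since the $|X_i|$ may oscillate; and ``rerouting within the saturated matching'' is not a local fix --- making it work forces you into an alternating-path argument across several layers, which is essentially reproving Menger's theorem.

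The paper avoids this entirely by going through Menger: it shows that every $(X_1,X_\ell)$-separator $S$ has size at least $k$. Writing $k_i=|S\cap X_i|$ and using the biregularity expansion bound $|N(B)\cap X_{i+1}|\ge |B|\cdot|X_{i+1}|/|X_i|$, one tracks the set $B_i$ of vertices in $X_i$ reachable from $X_1\setminus S$ in $G-S$ and shows inductively that $|B_i|\ge |X_i|\bigl(1-\sum_{j\le i}k_j/|X_j|\bigr)$. Since $S$ separates, $B_\ell=\emptyset$, forcing $\sum_i k_i/|X_i|\ge 1$ and hence $|S|=\sum_i k_i\ge\min_i|X_i|=k$. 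This global accounting is exactly what your local Hall argument is missing.
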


\begin{proof}
 Without loss of generality assume that $E(G) = \bigcup_{i \in [\ell-1]} E(G[X_i,X_{i+1}])$. Let $S$ be a $(X_1,X_\ell)$-separator.
 By Menger's Theorem it suffices to prove that $|S| \geq k$.
 
 For $i \in [\ell]$ let $k_i = |S \cap X_i|$. Also define $f_i = |X_i| \sum_{j \leq i} \frac{k_j}{|X_j|}$.
 We prove by induction on $i$ that
 \[|\{v_i \in X_i \setminus S \mid \text{there is a path from $X_1 \setminus S$ to $v_i$ in $G - S$ }\}| \geq |X_i| - f_i.\]
 The base step $i = 1$ is immediately clear from the definition.
 So let
 \[B_i = \{v_i \in X_i \setminus S \mid \text{there is a path from $X_1 \setminus S$ to $v_i$ in $G - S$ }\}\] and suppose $|B_i| \geq |X_i| - f_i$.
 
 We first argue that $|N(B_i) \cap X_{i+1}| \geq \frac{B_i}{|X_i|} \cdot |X_{i+1}|$.
 Since $G[X_i,X_{i+1}]$ is biregular there exist $d \in \mathbb{N}$ such that $|N(v) \cap X_{i+1}| = d$ for all $v \in X_i$.
 Moreover, $|N(w) \cap X_{i}| = d \cdot \frac{|X_{i}|}{|X_{i+1}|}$ for all $w \in X_{i+1}$ by a simple counting argument.
 Let $E(B_i,X_{i+1}) = \{vw \in E(G) \mid v \in B_i,w \in X_{i+1}\}$.
 Then $|E(B_i,X_{i+1})| = d \cdot |B_i|$ and $|E(B_i,X_{i+1})| \leq |N(B_i) \cap X_{i+1}| \cdot d \cdot \frac{|X_{i}|}{|X_{i+1}|}$.
 In combination, this gives the desired bound $|N(B_i) \cap X_{i+1}| \geq \frac{B_i}{|X_i|} \cdot |X_{i+1}|$.
 
 Now $B_{i+1}$ contains exactly those vertices in the set $N(B_i) \cap X_{i+1}$ that are not contained in $S$. Hence,
 \begin{align*}
  |B_{i+1}| &\geq |N(B_i) \cap X_{i+1}| - k_{i+1} \\
            &\geq \frac{B_i}{|X_i|} \cdot |X_{i+1}| - k_{i+1} \\
            &\geq \left(1 - \sum_{j \leq i} \frac{k_j}{|X_j|}\right) \cdot |X_{i+1}| - k_{i+1} \\
            &= \left(1 - \sum_{j \leq i+1} \frac{k_j}{|X_j|}\right) \cdot |X_{i+1}| \\
            &= |X_{i+1}| - f_{i+1}.
 \end{align*}
 By the induction principle it follows that $|B_\ell| \geq |X_\ell| - f_\ell$.
 Since $S$ is an $(X_1,X_\ell)$-separator it follows that $B_\ell = \emptyset$ and thus, $f_\ell \geq |X_\ell|$.
 In other words, $\sum_{i \leq \ell} \frac{k_i}{|X_i|} \geq 1$.
 This implies that $|S| = \sum_{i \leq \ell} k_i \geq \min_{i \in [\ell]} |X_i| = k$.
\end{proof}

\begin{lemma}
 \label{la:d-sets-do-not-split}
 Let $v,w \in X$ such that $D(v) \cap D(w) = \emptyset$.
 Then there is a connected component $Z$ of $G - D(v)$ such that $D(w) \subseteq Z$.
\end{lemma}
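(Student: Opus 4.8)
The plan is to argue by contradiction, using two soft properties of the $t$-closure operator — monotonicity ($Y \subseteq Y'$ implies $\cl_t^{(G,\chi)}(Y) \subseteq \cl_t^{(G,\chi)}(Y')$) and idempotence ($\cl_t^{(G,\chi)}(\cl_t^{(G,\chi)}(Y)) = \cl_t^{(G,\chi)}(Y)$), both of which follow from the fact that one round of color refinement followed by splitting classes of size at most $t$ is a monotone operation on colorings — together with a ``locality'' property of the closure with respect to separators of $G$.

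So suppose toward a contradiction that $D(w)$ is not contained in any single connected component of $G - D(v)$. Then there is a connected component $Z$ of $G - D(v)$ and vertices $a \in D(w) \cap Z$ and $b \in D(w) \setminus Z$. Since $a \in D(w)$ and $D(w)$ is a maximal strongly connected component of the $t$-closure graph $H$ (so $D(w)$ has no outgoing arcs in $H$, and any two distinct vertices of $D(w)$ are joined by an arc of $H$ in both directions), we have $D(w) = \cl_t^{(G,\chi)}(a)$; in particular $b \in \cl_t^{(G,\chi)}(a)$.

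The heart of the argument, and the step I expect to be the main obstacle, is the claim
\[
 (\star)\qquad \cl_t^{(G,\chi)}(a) \cap \bigl(V(G) \setminus Z\bigr)\ \subseteq\ \cl_t^{(G,\chi)}\bigl(N_G(Z)\bigr).
\]
Intuitively, $N_G(Z)$ separates $Z$ from $V(G)\setminus(Z\cup N_G(Z))$ in $G$, so individualizing the single vertex $a \in Z$ should be no more powerful, as far as distinguishing vertices outside $Z$ is concerned, than individualizing the whole of $N_G(Z)$. I would prove $(\star)$ by induction on the rounds of the $t$-CR procedure from Definition \ref{def:t-cr-bounded}, showing that the intermediate coloring obtained when starting from ``$a$ individualized'', restricted to $V(G) \setminus Z$, is always refined by the one obtained when starting from ``$N_G(Z)$ individualized''. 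The subtlety is that $\cl_t^{(G,\chi)}$ is computed via color refinement on the \emph{complete} graph $(K_n, \widetilde{\chi}_V, \widetilde{\chi}_E)$, so syntactically every vertex sees $a$ directly through the arc-color $\widetilde{\chi}_E$; the point is that $2$-stability of $\chi$ on $G$ guarantees that the color $\chi(x,a)$ of a vertex $x \notin Z \cup N_G(Z)$ is already determined by the colors $\chi(x,s)$ and $\chi(s,a)$ for $s \in N_G(Z)$, so no genuinely new information crosses the separator.

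Granting $(\star)$, we finish quickly. Since $N_G(Z) \subseteq D(v) = \cl_t^{(G,\chi)}(v)$ (because $Z$ is a connected component of $G - D(v)$), monotonicity and idempotence give
\[
 \cl_t^{(G,\chi)}\bigl(N_G(Z)\bigr)\ \subseteq\ \cl_t^{(G,\chi)}\bigl(\cl_t^{(G,\chi)}(v)\bigr)\ =\ \cl_t^{(G,\chi)}(v)\ =\ D(v).
\]
Combining this with $(\star)$ and with $b \in \cl_t^{(G,\chi)}(a) \cap (V(G)\setminus Z)$ shows $b \in D(v)$, contradicting $b \in D(w)$ together with $D(v) \cap D(w) = \emptyset$. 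Hence $D(w)$ is contained in a single connected component $Z$ of $G - D(v)$, as claimed.
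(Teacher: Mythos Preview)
Your reduction to the locality claim $(\star)$ is clean, and the subsequent use of monotonicity and idempotence is fine. The gap is in $(\star)$ itself, and more precisely in your proposed justification of it. You write that ``$2$-stability of $\chi$ on $G$ guarantees that the color $\chi(x,a)$ of a vertex $x \notin Z \cup N_G(Z)$ is already determined by the colors $\chi(x,s)$ and $\chi(s,a)$ for $s \in N_G(Z)$''. That is not what $2$-stability gives you. Stability says that if $\chi(x,a)=\chi(x',a')$ then the multisets $\{\!\{(\chi(x,u),\chi(u,a)):u\}\!\}$ and $\{\!\{(\chi(x',u),\chi(u,a')):u\}\!\}$ agree; it does not say that the submultiset over $u\in N_G(Z)$ determines $\chi(x,a)$. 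Concretely, to push your induction through even at the first Color-Refinement round you would need: for $x,x'\in V(G)\setminus Z$, if $\chi(x,s)=\chi(x',s)$ for every $s\in N_G(Z)$ then $\chi(x,a)=\chi(x',a)$. Since the closure is computed on $K_n$ with the pair colouring as arc colours, individualising $a$ makes the value $\chi(x,a)$ directly visible to every $x$, and nothing in $2$-stability forces that value to factor through the separator. So the inductive step already fails at round~1 unless you supply a genuinely new argument; the separator intuition is correct for the edge relation of $G$, but the arc colouring of $K_n$ lets information bypass $N_G(Z)$.

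The paper's proof avoids this issue entirely and is quite different in spirit. It picks $w_1\in D(w)\cap Z_1$ and $w_2\in D(w)\cap Z_2$ in two distinct components of $G-D(v)$, takes a shortest path $w_1=u_0,\dots,u_{m+1}=w_2$ with internal vertices outside $D(w)$, and looks at the $t$-CR-stable colouring $\lambda$ after individualising $w$. Each internal vertex $u_\mu$ lies outside the closure $D(w)$, so its $\lambda$-class has size at least $t$; the classes are pairwise disjoint since the $u_\mu$ lie at distinct distances along a shortest path. Lemma~\ref{la:disjoint-paths-from-stable-coloring} (a Menger-type argument through the biregular layers) then yields $t$ internally vertex-disjoint $w_1$--$w_2$ paths. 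But $N_G(Z_1)\subseteq D(v)$ separates $w_1$ from $w_2$ and has size $<h\le t$ by Theorem~\ref{thm:small-separator-for-t-cr-bounded-closure}, a contradiction. In short, the paper exploits that vertices \emph{outside} the closure sit in large colour classes, rather than trying to localise the closure operator through a separator.
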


\begin{proof}
 Suppose towards a contraction that there are distinct components of $Z_1,Z_2$ of the graph $G - D(v)$ such that $D(w) \cap Z_i \neq \emptyset$ for both $i \in \{1,2\}$.
 Pick vertices $w_i \in D(w) \cap Z_i$ for both $i \in \{1,2\}$.
 Since $G$ is connected there is a shortest path $w_1 = u_0,u_1,\dots,u_m,u_{m+1} = w_2$ from $w_1$ to $w_2$.
 Without loss of generality assume that $u_1,\dots,u_m \notin D(w)$.
 
 Let $\lambda \coloneqq \tColRef{t}{G,\chi,w}$ be the $t$-CR-stable coloring after individualizing $w$ and let $X_i \coloneqq [u_i]_{\lambda}$ be the color class of $u_i$ with respect to the coloring $\lambda$.
 Note that $\lambda(u_i) \neq \lambda(u_j)$ for all distinct $i,j \in [m]$, because $\dist_G(u_i,w_1) \neq \dist_G(u_j,w_1)$ and $|[w_1]_{\lambda}| = 1$.
 So $X_i \cap X_j = \emptyset$ for all distinct $i,j \in [m]$.
 Also $|X_i| \geq t$ for all $i \in [m]$.
 So there are $t$ internally vertex-disjoint paths from $w_1$ to $w_2$ by Lemma \ref{la:disjoint-paths-from-stable-coloring}.
 
 On the other hand, $|N_G(Z_1)| < h \leq t$ by Theorem \ref{thm:small-separator-for-t-cr-bounded-closure}.
 Since $w_1 \in Z_1$ and $w_2 \notin Z_1$ this is a contradiction.
\end{proof}

The lemma builds a main step for understanding the interaction between closure sets.
Recall that we currently aim to prove that there are sets $S_i \subseteq D_i$ of size $|S_i| < h$ such that every shortest path between $D_i$ and $D_j$ for distinct $i,j \in [k]$ starts in $S_i$ and ends in $S_j$.
If $G - D_i$ is connected for all $i \in [k]$ this statement follows from directly from Theorem \ref{thm:small-separator-for-t-cr-bounded-closure} setting $S_i \coloneqq N_G(Z_i)$ where $Z_i$ is the unique connected component of $G - D_i$.
So suppose there is some $i \in [k]$ such that $G - D_i$ is not connected, i.e., the set $D_i$ forms a separator.
Now, Lemma \ref{la:d-sets-do-not-split} implies that these separators do not ``cross''.
In particular, we may contract all sets $D_i$ to a single vertex without effectively changing the connected components of $G - D_i$.
This way, we can strengthen the last lemma and prove that, indeed, all sets $D_j$, $i \neq j \in [k]$, appear in the same connected component of $G - D_i$.
Here, we exploit the fact that the $2$-dimensional Weisfeiler-Leman algorithm detects cut vertices (i.e., $1$-separators) as well as the structure of the block-cut tree (see \cite{KieferN22,KieferPS19}).

\begin{lemma}[{\cite[Corollary 7]{KieferPS19}}]
 \label{la:2-wl-cut-vertices}
 Let $G_1$ be a graph and suppose $v_1 \in V(G_1)$ is a cut vertex of $G_1$.
 Also let $G_2$ be a second graph and let $v_2 \in V(G_2)$ such that $\WL{2}{G_1}(v_1,v_1) = \WL{2}{G_2}(v_2,v_2)$.
 Then $v_2$ is a cut vertex of $G_2$.
\end{lemma}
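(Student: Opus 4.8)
The plan is to establish the stronger statement that the color $\WL{2}{G}(v,v)$ already determines whether $v$ is a cut vertex of $G$; the lemma then follows at once, since $\WL{2}{G_1}(v_1,v_1)=\WL{2}{G_2}(v_2,v_2)$ forces $v_1$ and $v_2$ to receive the same answer. So the task reduces to showing that ``being a cut vertex'' is a property of a vertex's $2$-WL color. (Here ``cut vertex'' should be read as ``a vertex whose deletion increases the number of connected components'', so that disconnected graphs are covered as well.)

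The first ingredient I would use is the standard fact that a $2$-stable coloring $\chi$ on $G$ determines $\dist_G(x,y)$ for all $x,y\in V(G)$: the initial coloring already separates the cases $x=y$, $xy\in E(G)$ and $xy\notin E(G)$, and the refinement step, which distinguishes $\chi(x,y)$ according to the multiset $\{\!\!\{(\chi(x,z),\chi(z,y))\mid z\in V(G)\}\!\!\}$, lets one recognize ``$\dist_G(x,y)=r+1$'' from the existence of a $z$ with $\dist_G(x,z)=r$ and $\dist_G(z,y)=1$ (formally, an induction on $r$). The concrete route I would then take is to individualize $v$: run $2$-WL on $(G,\chi_v)$, where $\chi_v$ gives $v$ a unique color. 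Since then every pair color $\WL{2}{(G,\chi_v)}(x,z)$ records whether $z=v$, the very same induction now computes the avoiding distance $\dist_{G-v}(x,y)$ for all $x,y\in V(G)\setminus\{v\}$ --- any witness $z$ in the refinement step is automatically different from $v$, because $\dist_{G-v}(x,z)<\infty$. Consequently $\WL{2}{(G,\chi_v)}$ reveals whether there is a pair $x,y\neq v$ with $\dist_G(x,y)<\infty$ but $\dist_{G-v}(x,y)=\infty$, which is precisely the statement that $v$ is a cut vertex of $G$.

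The remaining point --- and the place I expect the genuine difficulty --- is to transfer this back to the color $\WL{2}{G}(v,v)$, i.e.\ to argue that $\WL{2}{G_1}(v_1,v_1)=\WL{2}{G_2}(v_2,v_2)$ already forces $\WL{2}{(G_1,\chi_{v_1})}$ and $\WL{2}{(G_2,\chi_{v_2})}$ to agree on the property isolated above. A clean ``individualize-then-stabilize'' statement of the flavor of Fact~\ref{fact:k-wl-is-stable-after-individualization} costs one WL-dimension, so from $2$-WL alone it only yields agreement of the Color-Refinement colorings of the individualized graphs, which is too weak to detect cut vertices (for instance, gluing two copies of $K_4$ at a vertex $v$ and the wheel on six rim vertices have the same Color-Refinement coloring after individualizing the apex, yet only in the first graph is the apex a cut vertex). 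The correct argument instead exploits that $2$-WL detects the block--cut-tree structure of $G$ itself: for adjacent $x,y$ the color of the pair $(x,y)$ determines which block of $G$ the edge $xy$ lies in --- two edges lie in a common block exactly when they lie on a common cycle, and this relation can be propagated by $2$-WL --- so that a vertex is a cut vertex iff it meets at least two blocks, which is a property of its color. Carrying this out rigorously, and in particular verifying that two dimensions (rather than three) already suffice, is exactly the content of \cite[Corollary~7]{KieferPS19}, which I would cite for the complete proof while using the distance/individualization picture above as the guiding intuition.
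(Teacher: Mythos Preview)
The paper does not prove this lemma at all; it is stated as a citation of \cite[Corollary~7]{KieferPS19} and used as a black box. Your proposal ultimately does the same --- after exploring an individualization approach and correctly diagnosing that it would cost one extra dimension, you fall back on the block--cut-tree argument and defer to the same reference for the rigorous proof.

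The intuition you provide is accurate and worth keeping as commentary: $2$-WL determines distances, individualization followed by $2$-WL would detect avoiding-distances but the transfer from $\WL{2}{G}(v,v)$ to the individualized coloring only comes for free at dimension $3$ (Fact~\ref{fact:k-wl-is-stable-after-individualization}), and the genuine two-dimensional argument goes through the block structure. Your counterexample (two $K_4$'s glued at a vertex versus the $6$-wheel) is a nice illustration that $1$-WL after individualization is too weak. Since neither you nor the paper supplies a self-contained proof, there is nothing to compare at the level of argument; your write-up is simply a more informative way of citing the result.
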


\begin{lemma}
 \label{la:root-set}
 For each $v \in X$ there is a connected component $Z(v)$ of the graph $G - D(v)$ such that
 \begin{enumerate}
  \item $D(w) \subseteq Z(v)$, or
  \item $D(w) = D(v)$
 \end{enumerate}
 for all $w \in X$. Moreover, the set
 \begin{equation}
  R \coloneqq \bigcap_{v \in X} Z(v).
 \end{equation}
 is $\chi$-invariant.
\end{lemma}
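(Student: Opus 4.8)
The plan is to pass to the quotient graph $G' := G/C_\sim$, where $C_\sim$ is the colour set provided by Corollary~\ref{cor:d-sets-invariant}, so that the connected components of $G[C_\sim]$ are exactly $D_1,\dots,D_k$ and $V(G') = \{\{v\} \mid v \in V(G)\setminus D\} \cup \{\bar D_1,\dots,\bar D_k\}$, where $\bar D_i$ denotes the vertex of $G'$ obtained by contracting $D_i$. By Lemma~\ref{la:factor-graph-2-wl} the coloring $\chi' := \chi/C_\sim$ is $2$-stable on $G'$, and being $2$-stable it refines $\WL{2}{G'}$. Moreover Corollary~\ref{cor:d-sets-invariant} gives $(\chi/C_\sim)(D_i,D_i) = (\chi/C_\sim)(D_j,D_j)$ for all $i,j$, and a short argument (distinguishing whether the $D_i$ are singletons and using $X = V_{c_0} \subseteq D$) shows that $\{\bar D_1,\dots,\bar D_k\}$ is \emph{precisely} one colour class of $\chi'$, call its colour $c_D$; in particular the $\bar D_i$ all share a common $\WL{2}{G'}$-colour. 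The structural input I would rely on throughout is Lemma~\ref{la:d-sets-do-not-split}: no $D_j$ meets two connected components of $G - D_i$. Hence contracting the sets $D_j$ merges no components of $G - D_i$, and the connected components of $G' - \bar D_i$ are in one-to-one correspondence with those of $G - D_i$, with $\bar D_j$ ($j\neq i$) sitting in the component corresponding to the one of $G - D_i$ that contains $D_j$.

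I would then split into two cases according to whether the $\bar D_i$ are cut vertices of $G'$. By Lemma~\ref{la:2-wl-cut-vertices} and the common $\WL{2}{G'}$-colour of the $\bar D_i$, either none of them is a cut vertex of $G'$ or all of them are. If none is, then $G - D_i$ is connected for every $i \in [k]$, so I set $Z(v) := V(G)\setminus D(v)$; for $w \in X$ with $D(w) \neq D(v)$ we have $D(w) \cap D(v) = \emptyset$ by Observation~\ref{obs:equal-or-disjoint}, hence $D(w) \subseteq Z(v)$, and $R = \bigcap_{v\in X} Z(v) = V(G)\setminus D$ is $\chi$-invariant by Corollary~\ref{cor:d-sets-invariant}.

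In the remaining case every $\bar D_i$ is a cut vertex of $G'$, and here I would invoke the fact that $2$-dimensional Weisfeiler--Leman recovers not only the cut vertices but the whole block-cut tree structure of $G'$ together with its canonical colouring (see~\cite{KieferN19,KieferPS19}). In particular, for a cut vertex $u$ of $G'$ the number $n(u)$ of connected components of $G' - u$ containing a $c_D$-coloured vertex is determined by $\WL{2}{G'}(u,u)$, so $n(\bar D_1) = \dots = n(\bar D_k) =: n$. To evaluate $n$, let $T$ be the block-cut tree of $G'$ and let $T'$ be the minimal subtree of $T$ containing $\{\bar D_1,\dots,\bar D_k\}$; since $k \geq 2$ (Assumption~\ref{item:assumption-1}), $T'$ has a leaf, and a leaf of $T'$ is necessarily one of the $\bar D_i$ by minimality of $T'$, say $\bar D_{i_1}$. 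Removing $\bar D_{i_1}$ from $T$ leaves $T' \setminus \{\bar D_{i_1}\}$ connected and still containing every remaining $\bar D_j$, so all $\bar D_j$ with $j \neq i_1$ lie in a single connected component of $G' - \bar D_{i_1}$; since $\{\bar D_1,\dots,\bar D_k\} = V_{c_D}(G',\chi')$ this gives $n = n(\bar D_{i_1}) = 1$. Thus for every $i$ the vertices $\{\bar D_j \mid j \neq i\}$ all lie in one connected component $C_i$ of $G' - \bar D_i$; un-contracting $C_i$ yields a connected component $Z_i$ of $G - D_i$ with $D_j \subseteq Z_i$ for all $j \neq i$, and I take $Z(v) := Z_i$ whenever $D(v) = D_i$. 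This is well defined (it depends only on $D(v)$), and for $w \in X$ we get either $D(w) = D(v)$ or $D(w) = D_j \subseteq Z_i = Z(v)$ for some $j \neq i$, so both alternatives of the lemma hold.

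It remains to prove that $R := \bigcap_{v\in X} Z(v) = \bigcap_{i\in[k]} Z_i$ is $\chi$-invariant. Working in $G'$ and setting $R' := \bigcap_{i\in[k]} C_i$, note that $R'$ contains no contracted vertex (as $\bar D_j \notin C_j$), so $R' = \{\{v\} \mid v \in R\}$. A vertex $u$ of $G'$ lies in $R'$ if and only if, for every $c_D$-coloured cut vertex $\bar D_i$, the component of $G' - \bar D_i$ containing $u$ is the unique one containing a $c_D$-coloured vertex; since those components are distinguished from the other components of $G' - \bar D_i$ exactly by containing a $c_D$-coloured vertex, the block-cut tree machinery tells us that this property is determined by $\WL{2}{G'}(u,u)$, so $R'$ is $\chi'$-invariant, hence a union of colour classes of $\chi'$. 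Translating back --- for $v \notin D$ the colour $\chi'(\{v\},\{v\})$ determines and is determined by $\chi(v,v)$, and these colours are disjoint from those of the vertices in $D$ by Corollary~\ref{cor:d-sets-invariant} --- shows that $R$ is a union of colour classes of $\chi$, as required. The step I expect to be the main obstacle is the precise formalisation and use of the block-cut tree results of~\cite{KieferN19,KieferPS19} in the two places above, in particular the assertion that the $\WL{2}$-colour of a cut vertex determines how many of its branches contain a vertex of a prescribed colour.
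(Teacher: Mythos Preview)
Your argument is correct in outline and its overall shape is close to the paper's, but it leans on a stronger $2$-WL fact than the paper does, and this is exactly the place you flag as the obstacle. Concretely, you need that for a $2$-stable colouring $\chi'$ on $G'$ and a cut vertex $u$, the quantity $n(u)=\lvert\{\,\text{components of }G'-u\text{ containing a }c_D\text{-vertex}\,\}\rvert$ (and likewise the predicate ``$u$ lies in the $c_D$-rich side of every $\bar D_i$'') is determined by $\chi'(u,u)$. This is true and can be extracted from the block--cut tree results in \cite{KieferN19,KieferPS19} (essentially: $2$-WL on $G'$ simulates $1$-WL on the block--cut tree with the induced node colouring, and $1$-WL is complete on coloured trees), but it is not the statement of Lemma~\ref{la:2-wl-cut-vertices}, which only says that being a cut vertex is $2$-WL--invariant. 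So your proof is complete modulo citing or proving that stronger fact; without it the two transfer steps (``$n(\bar D_{i_1})=1\Rightarrow n(\bar D_i)=1$ for all $i$'' and ``$R'$ is $\chi'$-invariant'') are not justified.

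The paper sidesteps this by a small trick that lets it get away with only Lemma~\ref{la:2-wl-cut-vertices}. It first finds \emph{one} index $i$ with the desired property by a maximality argument (maximising $s(i,Z)=\lvert\{j:D_j\subseteq Z\}\rvert$ over components $Z$ of $G-D_i$) rather than via a block--cut tree leaf. It then defines $U$ to be the set of $u\notin D$ that can directly reach only one $D_j$; this is $\chi$-invariant by an easy argument about components of $G'-\CD$ (removing a whole colour class, not a single vertex, which $2$-WL handles without the block--cut tree machinery). After deleting $U$, the chosen $\bar D_i$ is \emph{not} a cut vertex of $(G-U)/C_\sim$, and now the bare cut-vertex invariance of Lemma~\ref{la:2-wl-cut-vertices} transfers this to every $\bar D_j$. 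Finally one reads off $R=V(G)\setminus(D\cup U)$, which is $\chi$-invariant since $D$ and $U$ are. Your characterisation $R=\bigcap_i Z_i$ and the paper's $R=V(G)\setminus(D\cup U)$ agree (a vertex $u\notin D$ has $d(u)=1$ iff it misses some $Z_i$), so the two proofs produce the same set.

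In short: your route is a genuine alternative and arguably more conceptual (work directly in $G'$, use the block--cut tree), but it needs a precise lemma from \cite{KieferN19,KieferPS19} that you have not stated; the paper's route is slightly more ad hoc (introduce $U$, use a maximality argument) but is self-contained given only Lemma~\ref{la:2-wl-cut-vertices}. If you want to keep your approach, isolate and prove (or cite precisely) the statement that for $2$-stable $\chi'$ the colour $\chi'(u,v)$ determines whether $v$ lies in a component of $G'-u$ that meets a prescribed $\chi'$-colour class; everything else in your write-up goes through.
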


\begin{proof}
 We first argue that there is some $i \in [k]$ and a connected component $Z_i$ of $G - D_i$ such that $D_j \subseteq Z_i$ for all $i \neq j \in [k]$.
 For $i \in [k]$ and $Z$ a connected component of $G - D_i$ define $s(i,Z) \coloneqq |\{j \in [k] \mid D_j \subseteq Z\}|$.
 Pick $i \in [k]$ and $Z_i$ a connected component of $G - D_i$ such that $s(i,Z_i)$ is maximal.
 Suppose towards a contradiction that $s(i,Z_i) < k-1$.
 Then, using Lemma \ref{la:d-sets-do-not-split}, there is a second component $Z_i'$ of the graph $G - D_i$ such that $s(i,Z_i') \geq 1$.
 Suppose that $D_j \subseteq Z_i'$ and let $Z_j$ be the connected component of $G - D_j$ such that $D_i \subseteq Z_j$.
 Then $Z_i \subseteq Z_j$ and hence, $s(j,Z_j) \geq s(i,Z) + 1$.
 This contradicts the maximality of $s(i,Z_i)$.
 Hence, $s(i,Z_i) = k-1$ which means that $D_j \subseteq Z_i$ for all $i \neq j \in [k]$.
 
 For two vertices $v \in D$ and $u \in V(G) \setminus D$ we say that \emph{$v$ is directly reachable from $u$} if there is a path $u = u_1,\dots,u_m = v$ from $u$ to $v$ such that $u_\mu \notin D$ for all $\mu \in [m-1]$.
 For $j \in [k]$ we say that \emph{$D_j$ is directly reachable from $u$} if there is some $v \in D_j$ such that $v$ is directly reachable from $u$.
 Finally, define
 \[d(u) \coloneqq |\{j \in [k] \mid D_j \text{ is directly reachable from } u\}|.\]
 Observe that $d(u) \geq 1$ for all $u \in V(G) \setminus D$ because $G$ is connected.
 Let $U \coloneqq \{u \in V(G) \setminus D \mid d(u) = 1\}$.
 \begin{claim}
  \label{cl:u-invariant}
  $U$ is $\chi$-invariant.
 \end{claim}
 \begin{claimproof}
  Let $C_\sim \subseteq \{\chi(v,w) \mid v,w \in V(G), v \neq w\}$ be the set of colors defined in Corollary \ref{cor:d-sets-invariant} such that $D_1,\dots,D_k$ are precisely the connected components of $G[C_\sim]$.
  Consider the graph $G/C_\sim$.
  By Lemma \ref{la:factor-graph-2-wl} the coloring $\chi/C_\sim$ is $2$-stable on $G/C_\sim$.
  Moreover, $\CD \coloneqq \{D_1,\dots,D_k\}$ is $(\chi/C_\sim)$-invariant by Corollary \ref{cor:d-sets-invariant}.
  Now, $U$ contains all vertices that can reach only one vertex of $\CD$ without visiting another vertex of $\CD$.
  This property is detected by the $2$-dimensional Weisfeiler-Leman algorithm which implies that $\{\{u\} \mid u \in U\}$ is $(\chi/C_\sim)$-invariant.
  So $U$ is $\chi$-invariant.
 \end{claimproof}

 Now let $\widetilde{G}$ be the graph obtained from $G$ by turning each set $D_i$, $i \in [k]$, into a clique.
 Formally, $V(\widetilde{G}) \coloneqq V(G)$ and
 \[E(\widetilde{G}) \coloneqq E(G) \cup \{vw \mid v \neq w \wedge \exists i \in [k]\colon v,w \in D_i\}.\]
 By Lemma \ref{la:d-sets-do-not-split} the connected components of $G - D_j$ are the same as the connected components of the graph $\widetilde{G} - D_j$ for all $j \in [k]$.
 Also, Corollary \ref{cor:d-sets-invariant} and Claim \ref{cl:u-invariant} imply that $\chi|_{(V(G) \setminus U)^2}$ is $2$-stable on the graph $\widetilde{G} - U$.
 
 Now $(\widetilde{G} - U) - D_i$ is connected.
 We claim that $(\widetilde{G} - U) - D_j$ is connected for all $j \in [k]$.
 Consider the graph $G^* \coloneqq (G - U)/C_\sim$ where $C_\sim$ denotes the set of colors from Corollary \ref{cor:d-sets-invariant}.
 Observe that $G^*$ is the graph obtained from $\widetilde{G} - U$ by contracting each clique $D_i$, $i \in [k]$, to a single vertex.
 Since $(\widetilde{G} - U) - D_i$ is connected, we conclude that the vertex $D_i \in V(G^*)$ is not a cut vertex of $G^*$.
 Hence, by Corollary \ref{cor:d-sets-invariant}, Lemma \ref{la:factor-graph-2-wl} and \ref{la:2-wl-cut-vertices} $D_j \in V(G^*)$ is not a cut vertex of $G^*$ for all $j \in [k]$.
 In other words, $(\widetilde{G} - U) - D_j$ is connected for all $j \in [k]$.
 This implies the first part of the lemma.
 
 Also, $R = V(G) \setminus (D \cup U)$.
 Hence, $R$ is $\chi$-invariant by Corollary \ref{cor:d-sets-invariant} and Claim \ref{cl:u-invariant}.
\end{proof}

By Lemma \ref{la:no-edges-between-d-sets} it holds that $E_G(D_i,D_j) = \emptyset$ for all distinct $i,j \in [k]$.
It follows that $N_G(D(v)) \cap Z(v) \subseteq R$ for all $v \in X$.
In particular, $N_G(Z(v)) = N_G(R) \cap D(v)$ for all $v \in X$.
Hence, $|N_G(R) \cap D(v)| < h$ by Theorem \ref{thm:small-separator-for-t-cr-bounded-closure}.
For $i \in [k]$ define
\begin{equation}
 \label{eq:def-s-sets}
 S_i \coloneqq N_G(R) \cap D_i
\end{equation}
and let $S \coloneqq \bigcup_{i \in [k]} S_i$.

\begin{observation}
 \label{obs:s-invariant}
 For all $i \in [k]$ it holds that $|S_i| < h$.
 Also, $S$ is $\chi$-invariant.
\end{observation}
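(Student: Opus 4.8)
The plan is to read off both assertions from facts already assembled in the run-up to the observation. For the size bound, I would fix $i \in [k]$ and choose some $v \in X$ with $D(v) = D_i$ (such a $v$ exists since $D_1,\dots,D_k$ enumerate the sets $D(v)$, $v \in X$). The paragraph preceding the observation shows $S_i = N_G(R) \cap D(v) = N_G(Z(v))$, where $Z(v)$ is a connected component of $G - D(v)$ and $D(v) = \cl_t^{(G,\chi)}(v)$. Since $t \geq t(h) \geq 3h^3$, Theorem \ref{thm:small-separator-for-t-cr-bounded-closure} yields $|S_i| = |N_G(Z(v))| < h$; so the first claim is essentially a restatement of the inequality $|N_G(R) \cap D(v)| < h$ derived just above.

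For the $\chi$-invariance of $S$, the plan is first to rewrite $S$ in a form to which the available invariance results apply directly. Since $D_1,\dots,D_k$ are pairwise disjoint with union $D$ (by Observation \ref{obs:equal-or-disjoint} and the definition of $D$), we have $S = \bigcup_{i \in [k]}(N_G(R) \cap D_i) = N_G(R) \cap D$. By Corollary \ref{cor:d-sets-invariant} the set $D$ is $\chi$-invariant and by Lemma \ref{la:root-set} the set $R$ is $\chi$-invariant, so it suffices to prove that $N_G(R)$ is $\chi$-invariant; then $S$ is $\chi$-invariant as an intersection of two $\chi$-invariant sets.

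To see that $N_G(R)$ is $\chi$-invariant, I would use that $\chi$ is $2$-stable on $G$: consequently the diagonal coloring $v \mapsto \chi(v,v)$ is stable under Color Refinement on $G$, because $\chi(v,v)$ already determines the multiset $\{\!\!\{\chi(v,u) \mid u \in N_G(v)\}\!\!\}$ and each edge color $\chi(v,u)$ determines $\chi(u,u)$. Hence for any two vertices $v,v'$ with $\chi(v,v) = \chi(v',v')$ the multisets $\{\!\!\{\chi(u,u) \mid u \in N_G(v)\}\!\!\}$ and $\{\!\!\{\chi(u,u) \mid u \in N_G(v')\}\!\!\}$ agree, and since membership in the $\chi$-invariant set $R$ is determined by the color $\chi(u,u)$, vertex $v$ has a neighbor in $R$ precisely when $v'$ does. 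Thus $N_G(R)$ is a union of vertex color classes, and $S = N_G(R) \cap D$ is $\chi$-invariant. The only point requiring a word of care — spelling out that $2$-stability of the pair coloring propagates to stability of its diagonal restriction under Color Refinement — is entirely routine and of the same flavor as the reachability arguments already used in the proofs of Lemmas \ref{la:chi-is-stable-on-h} and \ref{la:root-set}; I expect no genuine obstacle here.
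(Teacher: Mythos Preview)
Your proposal is correct and follows the same approach as the paper. The paper's proof is terse---it simply notes that the first part was already argued in the preceding paragraph via Theorem~\ref{thm:small-separator-for-t-cr-bounded-closure}, and that $S$ is $\chi$-invariant because $D$ and $R$ are (by Corollary~\ref{cor:d-sets-invariant} and Lemma~\ref{la:root-set})---whereas you spell out explicitly the identity $S = N_G(R) \cap D$ and the routine fact that $N_G(R)$ inherits $\chi$-invariance from $R$ via $2$-stability, which the paper leaves implicit.
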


\begin{proof}
 As already argued above, the first part follows from Theorem \ref{thm:small-separator-for-t-cr-bounded-closure}.
 Moreover, $S$ is $\chi$-invariant because $D$ and $R$ are $\chi$-invariant by Corollary \ref{cor:d-sets-invariant} and Lemma \ref{la:root-set}.
\end{proof}

Now let
\[p \coloneqq \min_{i \neq j \in [k]} \min_{v \in D_i,w \in D_j} \dist_G(v,w).\]
Observe that $p$ is indeed a natural number since $G$ is connected.
Also note that $p \geq 3$ by Lemma \ref{la:no-edges-between-d-sets} and \ref{la:no-common-neighbors-between-d-sets}.

Fix some $i \neq j \in [k]$ and $v \in D_i$, $w \in D_j$ such that $\dist_G(v,w) = p$.
Let $v=u_0,\dots,u_p=w$ be a path from $v$ to $w$ to length $p$.
Observe that $u_\mu \in R$ for all $\mu \in [p-1]$.
Moreover, let
\[\bar c \coloneqq (\chi(u_0,u_0),\chi(u_0,u_1),\chi(u_1,u_1),\chi(u_1,u_2),\dots,\chi(u_{p-1},u_{p-1}),\chi(u_{p-1},u_p),\chi(u_p,u_p))\]
be the sequence of vertex- and arc-colors appearing along the path.
A path $w_0,\dots,w_\ell$ is a \emph{$\bar c$-path} if
\[\bar c = (\chi(w_0,w_0),\chi(w_0,w_1),\chi(w_1,w_1),\chi(w_1,w_2),\dots,\chi(w_{\ell-1},w_{\ell-1}),\chi(w_{\ell-1},w_\ell),\chi(w_\ell,w_\ell)).\]
Note that every $\bar c$-path has length exactly $p$.
We define the graph $F$ with vertex set $V(F) \coloneqq \{D_{i'} \mid i' \in [k]\}$ and edge set
\[E(F) \coloneqq \{D_{i'}D_{j'} \mid (\chi/C_\sim)(D_{i'},D_{j'}) = (\chi/C_\sim)(D_{i},D_{j})\}\]
where $C_\sim$ is the set of colors obtained in Corollary \ref{cor:d-sets-invariant}.
Observe that the graph $F$ contains at least one edge.

\begin{remark}
 The graph $F$ depends on the initial choice of the pair $(i,j)$.
 However, the reader is encouraged to think of $F$ being defined via some color $c$ in the image of $\chi/C_\sim$ (which happens to be the color $(\chi/C_\sim)(D_{i},D_{j})$).
\end{remark}

We collect some basic properties of the graph $F$.

\begin{observation}
 \label{obs:f-regular}
 $F$ is regular, i.e., $\deg_F(D_i) = \deg_F(D_j)$ for all $i,j \in [k]$.
\end{observation}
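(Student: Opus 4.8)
The plan is to derive regularity of $F$ from the fact that $\chi/C_\sim$ is a $2$-stable coloring of $G/C_\sim$, so that $F$ is (essentially) one of the constituent graphs of the associated coherent configuration. First I would record that, by Lemma~\ref{la:factor-graph-2-wl}, $\chi/C_\sim$ is $2$-stable on $G/C_\sim$, and that by Corollary~\ref{cor:d-sets-invariant} every vertex $D_{i'}$ of $G/C_\sim$ with $i' \in [k]$ receives one and the same vertex-color $c_V^* \coloneqq (\chi/C_\sim)(D_{i'},D_{i'})$; since $\CD \coloneqq \{D_1,\dots,D_k\}$ is $(\chi/C_\sim)$-invariant and all its members have color $c_V^*$, the set $\CD$ is precisely the vertex-color class of $c_V^*$ in $G/C_\sim$.

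Next I would identify the edges of $F$ with a single directed edge-color. Put $c_E^* \coloneqq (\chi/C_\sim)(D_i,D_j)$ for the pair $i \neq j$ fixed in the construction. Because the two endpoints of every pair of color $c_E^*$ have the same pair of vertex-colors (both equal to $c_V^*$), any $D_{i'},D_{j'}$ with $(\chi/C_\sim)(D_{i'},D_{j'}) = c_E^*$ already lies inside $\CD$; hence $\{D_{i'},D_{j'}\} \in E(F)$ if and only if $(\chi/C_\sim)(D_{i'},D_{j'}) = c_E^*$ or $(\chi/C_\sim)(D_{j'},D_{i'}) = c_E^*$, i.e.\ $F$ is the underlying undirected graph of the color-$c_E^*$ digraph on the vertex class $\CD$.

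Finally I would invoke the standard property of $2$-stable colorings (see, e.g., \cite{ChenP19}): for any edge-color $e$, the out-degree and the in-degree of a vertex $x$ in the color-$e$ digraph depend only on the vertex-color $\chi(x,x)$. If $c_E^*$ is a symmetric color, then the color-$c_E^*$ digraph already equals $F$ and is regular on $\CD$; otherwise a vertex cannot be simultaneously an out- and an in-neighbor of another along color $c_E^*$, so $\deg_F(D_{i'})$ is the sum of the (constant) out-degree and the (constant) in-degree, again constant on $\CD$. In either case $\deg_F$ is constant on $\CD = \{D_1,\dots,D_k\}$, which is the assertion. The only point needing a little care is the bookkeeping between the directed edge-colors of $\chi/C_\sim$ and the undirected edges of $F$; beyond that the statement is immediate from $2$-stability, so I do not expect a genuine obstacle here.
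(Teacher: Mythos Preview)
Your proposal is correct and follows essentially the same approach as the paper: the paper's proof simply cites Corollary~\ref{cor:d-sets-invariant} to conclude that all $D_{i'}$ share the same vertex-color under $\chi/C_\sim$, and then invokes Lemma~\ref{la:factor-graph-2-wl} (i.e., $2$-stability of $\chi/C_\sim$ on $G/C_\sim$) to deduce regularity of $F$. You unpack the same two ingredients and add the explicit case distinction on whether $c_E^*$ is a symmetric color, which the paper leaves implicit; this is harmless extra detail rather than a different route.
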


\begin{proof}
 We have $(\chi/C_\sim)(D_i,D_i) = (\chi/C_\sim)(D_j,D_j)$ by Corollary \ref{cor:d-sets-invariant}.
 So $F$ is regular by Lemma \ref{la:factor-graph-2-wl}.
\end{proof}

\begin{lemma}
 \label{la:degree-bound-f}
 For all $i \in [k]$ it holds that $\deg_F(D_i) \geq 12\adeg h^3$.
\end{lemma}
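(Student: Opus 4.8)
The plan is to read off the required fan-out of $F$ from the $t$-closure of a single vertex of $D_i$, feeding it into the disjoint-path machinery of Lemma~\ref{la:disjoint-paths-from-stable-coloring} and exploiting that every set $D_{i'}$ meets $R$ only in the small set $S_{i'}$. By Observation~\ref{obs:f-regular} it suffices to prove the bound for the particular index $i$ fixed above, for which there are $v\in D_i$, $w\in D_j$ joined by a shortest path $v=u_0,u_1,\dots,u_p=w$ of length $p=\dist_G(v,w)$; recall $v\in S_i$, $w\in S_j$, and $u_\mu\in R$ for all $\mu\in[p-1]$. I would pick any $v'\in X\cap D_i$ (such a vertex exists, as noted in the proof of Corollary~\ref{cor:d-sets-invariant}) and set $\lambda\coloneqq\tColRef{t}{G,\chi,v'}$. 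Since $v'$ is maximal, $\cl_t^{(G,\chi)}(v')=D(v')=D_i$, so the $\lambda$-discrete vertices are exactly those of $D_i$, every other $\lambda$-color class has size larger than $t$, and $\lambda\preceq\chi$ is CR-stable.

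Two routine Color-Refinement consequences of $\lambda$ being CR-stable with $v=u_0\in D_i$ among its discrete vertices are then available: $\lambda$ refines $\dist_G(v,\cdot)$, and $\lambda(x)=\lambda(y)$ implies $\chi(v,x)=\chi(v,y)$ (the latter because $v$ is the unique vertex of its $\lambda$-color and the arc coloring used for the closure records $\chi$ in both directions). Now put $X_\mu\coloneqq[u_\mu]_\lambda$ for $\mu\in[p]$. As the path is shortest, $\dist_G(v,u_\mu)=\mu$, so $X_1,\dots,X_p$ are pairwise disjoint; each has size at least $t+1$ because $u_\mu\notin D_i$ for $\mu\geq 1$; consecutive classes induce non-empty biregular bipartite subgraphs of $G$ by CR-stability of $\lambda$; and $X_1\subseteq R$, $X_p\subseteq S$ since $R$ and $S$ are $\chi$-invariant (Lemma~\ref{la:root-set}, Observation~\ref{obs:s-invariant}) while $\lambda\preceq\chi$. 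Lemma~\ref{la:disjoint-paths-from-stable-coloring} then yields at least $t+1$ pairwise vertex-disjoint paths from $X_1$ to $X_p$.

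The $\geq t+1$ distinct endpoints of these paths all lie in $X_p\subseteq S=\bigcup_{i'\in[k]}S_{i'}$, a disjoint union of sets each of size less than $h$; hence they meet at least $(t+1)/(h-1)>t/h\geq 12\adeg h^3$ distinct sets $S_{i'}$, using $t\geq 12\adeg h^4$ from~\eqref{eq:def-t}. Moreover $i'\neq i$ for each such $i'$, since the non-singleton class $X_p$ is disjoint from the discrete set $D_i$. For such an $i'$, fix an endpoint $s\in X_p\cap S_{i'}$; then $\lambda(s)=\lambda(w)$ (both are in $X_p=[w]_\lambda$), so $\chi(v,s)=\chi(v,w)$ by the second consequence above. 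Since $v\in D_i$, $s\in D_{i'}$, $w\in D_j$, the color $\chi(v,w)$ lies in both $(\chi/C_\sim)(D_i,D_{i'})$ and $(\chi/C_\sim)(D_i,D_j)$, and therefore $(\chi/C_\sim)(D_i,D_{i'})=(\chi/C_\sim)(D_i,D_j)$ by Lemma~\ref{la:factor-graph-2-wl}, i.e.\ $D_iD_{i'}\in E(F)$. Hence $\deg_F(D_i)$ is at least the number of sets $S_{i'}$ hit, which exceeds $12\adeg h^3$; the general case follows by Observation~\ref{obs:f-regular}.

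I expect the delicate point to be the last step: ensuring that the sets $D_{i'}$ reached by the disjoint paths are genuine $F$-neighbors of $D_i$ and not merely sets at distance $p$ (a priori, distinct distance-$p$ pairs of closure sets could carry distinct $\chi/C_\sim$-colors). This is exactly why one individualizes a vertex whose $t$-closure is all of $D_i$: it resolves the endpoint $v\in S_i$ of the fixed path, so that $\lambda$ transports the arc color $\chi(v,\cdot)$ out to every vertex of $X_p$, and then the equal-or-disjoint dichotomy of Lemma~\ref{la:factor-graph-2-wl} upgrades ``shares a color'' to ``identical color''. A secondary care point is that $\lambda$ refines $\dist_G(v,\cdot)$ despite the $t$-CR procedure interleaving Color Refinement with the splitting of small classes; this holds because $\lambda$ is CR-stable and $v$ lies in $D_i=\cl_t^{(G,\chi)}(v')$, hence is $\lambda$-discrete.
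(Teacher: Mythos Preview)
Your argument is correct and, at its core, is the same idea as the paper's proof, just run as a direct argument rather than by contradiction. The paper assumes $\deg_F(D_i)\le 12\adeg h^3$, observes that the $\chi(v,\cdot)$-fiber $W\coloneqq\{w'\mid \chi(v,w')=\chi(v,w)\}$ is contained in $\bigcup_{j'\in N_F(D_i)}S_{j'}$ (same equal-or-disjoint step via Lemma~\ref{la:factor-graph-2-wl} that you use), deduces $|W|\le 12\adeg h^4\le t$, and concludes $W\subseteq D_i$, contradicting $w\in W\setminus D_i$. You instead note that $[w]_\lambda\subseteq W$ has size $>t$, so it meets more than $t/h\ge 12\adeg h^3$ of the sets $S_{i'}$, each of which is an $F$-neighbor of $D_i$ by the same color argument.

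One remark: your detour through Lemma~\ref{la:disjoint-paths-from-stable-coloring} is unnecessary. You never use the paths themselves, only that there are more than $t$ distinct endpoints in $X_p$ --- but this is just $|X_p|>t$, which you already established directly from $w\notin D_i$ and the fact that every non-singleton class of a $t$-CR-stable coloring has size $>t$. Dropping the disjoint-paths step (and the verification that $X_1,\dots,X_p$ are disjoint and biregularly connected) shortens the proof to essentially the paper's argument in contrapositive form.
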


\begin{proof}
 Suppose towards a contradiction that there is some $i \in [k]$ such that $\deg_F(D_i) \leq 12\adeg h^3$.
 Let $J \coloneqq \{j \in [k] \mid D_j \in N_F(D_i)\}$.
 Now pick arbitrary elements $j \in J$, $v \in S_i$, and a color $c \in \{\chi(v,w) \mid w \in S_j\}$.
 Let $W \coloneqq \{w \in V(G) \mid \chi(v,w) = c\}$.
 Then $W \cap S_j \neq \emptyset$.
 Also,
 \[W \subseteq \bigcup_{j \in J}S_j\]
 by Observation \ref{obs:s-invariant} and Lemma \ref{la:factor-graph-2-wl}.
 So $|W| \leq |J| \cdot h \leq 12\adeg h^4$ by Observation\ref{obs:s-invariant}.
 Now let $v' \in X \cap D_i$.
 Then $v \in \cl_t^{(G,\chi)}(v')$ and hence, $W \subseteq \cl_t^{(G,\chi)}(v')$ since $t \geq |W|$.
 But $\cl_t^{(G,\chi)}(v') = D_i$ by definition and $W \nsubseteq D_i$.
 A contradiction.
\end{proof}

\begin{lemma}
 \label{la:f-edge-to-c-path}
 For every $D_iD_j \in E(F)$ there are $v \in D_i$ and $w \in D_j$ such that there is a $\bar c$-path from $v$ to $w$ or there is a $\bar c$-path from $w$ to $v$.
\end{lemma}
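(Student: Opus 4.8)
The plan is to lift the fixed reference path $u_0,\dots,u_p$, which realizes $\bar c$ and runs from a vertex of $D_i$ to a vertex of $D_j$, to a $\bar c$-path whose two endpoints land in the two closure sets forming the given edge of $F$. So let $D_{i'}D_{j'}\in E(F)$. By the definition of $F$ we have $(\chi/C_\sim)(D_{i'},D_{j'})=(\chi/C_\sim)(D_i,D_j)$, and by Lemma~\ref{la:factor-graph-2-wl} this color is, on each side, the multiset consisting of the colors $\chi(v_1,v_2)$ for $v_1$ in the first set and $v_2$ in the second set, counted with multiplicity. Since the pair $(u_0,u_p)\in D_i\times D_j$ contributes the color $\chi(u_0,u_p)$ to the multiset for $(D_i,D_j)$, the multiset for $(D_{i'},D_{j'})$ contains it as well, which yields vertices $v\in D_{i'}$ and $w\in D_{j'}$ with $\chi(v,w)=\chi(u_0,u_p)$. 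Because $\chi/C_\sim$ refines the initial coloring of $G/C_\sim$ and $D_i\neq D_j$, this common color is the color of a pair of distinct vertices, so $D_{i'}\neq D_{j'}$; and then $v\neq w$, as distinct closure sets are disjoint.

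The next ingredient is the standard fact that a $2$-stable coloring counts colored walks: for $\chi$ being $2$-stable on $G$ and any color sequence $\bar d$, the number of walks in $G$ whose consecutive vertex- and arc-colors spell out $\bar d$ between two prescribed vertices depends only on the $\chi$-color of that ordered pair. This follows by an easy induction on the length of $\bar d$ from the definition of $2$-stability (cf.\ \cite{ChenP19}). I would apply it to $\bar c$ and the pairs $(u_0,u_p)$ and $(v,w)$, which carry the same color, and use that $u_0,\dots,u_p$ is one such walk, obtaining a walk $v=x_0,x_1,\dots,x_p=w$ in $G$ that realizes the color pattern $\bar c$. Here each $x_{\mu-1}x_\mu$ really is an edge of $G$, since $\chi$ refines the initial coloring and $\chi(x_{\mu-1},x_\mu)=\chi(u_{\mu-1},u_\mu)$ is an edge color.

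It then remains to check that this walk has no repeated vertex, and this is the step I expect to carry the real content, since it is where minimality of $p$ is used. For $1\le\mu\le p-1$ we have $\chi(x_\mu,x_\mu)=\chi(u_\mu,u_\mu)$ with $u_\mu\in R$, so $x_\mu\in R$ because $R$ is $\chi$-invariant (Lemma~\ref{la:root-set}); on the other hand $x_0=v$ and $x_p=w$ lie in $D$, and $D\cap R=\emptyset$, so $x_0$ and $x_p$ differ from every internal vertex, while $x_0\neq x_p$ was shown above. Hence the only conceivable repetition is $x_\mu=x_{\mu'}$ for some $1\le\mu<\mu'\le p-1$. But then $v=x_0,\dots,x_\mu=x_{\mu'},x_{\mu'+1},\dots,x_p=w$ is a walk in $G$ of length $p-(\mu'-\mu)<p$ from $v\in D_{i'}$ to $w\in D_{j'}$, so $\dist_G(D_{i'},D_{j'})<p$ with $D_{i'}\neq D_{j'}$, contradicting the definition of $p$. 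Therefore $x_0,\dots,x_p$ are pairwise distinct, i.e.\ a $\bar c$-path from $v\in D_{i'}$ to $w\in D_{j'}$, as required; in particular the ``or'' in the statement is satisfied by its first alternative.
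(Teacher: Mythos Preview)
Your argument follows essentially the same route as the paper: match the pair color $\chi(u_0,u_p)$ inside the target edge of $F$ via Lemma~\ref{la:factor-graph-2-wl}, use $2$-stability to transport the $\bar c$-walk, and invoke minimality of $p$ to upgrade the walk to a path. Your treatment of the last step is in fact more explicit than the paper's, which simply asserts that minimality of $p$ forces the walk to be a path.

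There is one genuine slip. From $D_{i'}D_{j'}\in E(F)$ you claim $(\chi/C_\sim)(D_{i'},D_{j'})=(\chi/C_\sim)(D_i,D_j)$, but $F$ is an \emph{undirected} graph defined by a directed condition, so for a given ordering of the endpoints you only know that this equality holds for $(D_{i'},D_{j'})$ \emph{or} for $(D_{j'},D_{i'})$. In the second case your argument produces $v\in D_{j'}$, $w\in D_{i'}$ and a $\bar c$-path from $v$ to $w$, i.e.\ a $\bar c$-path from $D_{j'}$ to $D_{i'}$; renaming, this is the second alternative in the lemma. The paper handles this by writing ``$\chi(v,w)=\chi(v',w')$ or $\chi(v,w)=\chi(w',v')$'' before passing to the walk. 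Your closing sentence that the first alternative always suffices is therefore not correct; the disjunction in the statement is there precisely to absorb this orientation ambiguity.
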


\begin{proof}
 By definition of the graph $F$ there exists $D_iD_j \in E(F)$ and $v \in D_i$ and $w \in D_j$ such that there is a $\bar c$-path from $v$ to $w$.
 Now let $D_{i'}D_{j'} \in E(F)$ be another edge.
 By the definition of $F$, Corollary \ref{cor:d-sets-invariant} and Lemma \ref{la:factor-graph-2-wl} there are $v' \in D_{i'}$ and $w \in D_{j'}$ such that $\chi(v,w) = \chi(v',w')$ or $\chi(v,w) = \chi(w',v')$.
 Without loss of generality assume the former holds.
 By the properties of the $2$-dimensional Weisfeiler-Leman algorithm there is a \emph{$\bar c$-walk} from $v'$ to $w'$, i.e., a sequence of vertices $v' = w_0',\dots,w_\ell' = w'$ such that
 \[\bar c = (\chi(w_0,w_0),\chi(w_0,w_1),\chi(w_1,w_1),\chi(w_1,w_2),\dots,\chi(w_{\ell-1},w_{\ell-1}),\chi(w_{\ell-1},w_\ell),\chi(w_\ell,w_\ell)).\]
 (In comparison to a $\bar c$-path, a vertex may occur multiple times on the walk.)
 Since $p$ is the minimal distance between distinct sets $D_i$, $D_j$, $i,j \in [k]$, it follows that $w_0',\dots,w_\ell'$ is a path.
\end{proof}

The following lemma is the crucial step towards the proof of Lemma \ref{la:initial-color-main}.

\begin{lemma}
 \label{la:find-many-disjoint-paths}
 Let $d \coloneqq 4\adeg h^3$. There is a subgraph $\widehat{F} \subseteq F$ with $V(\widehat{F}) = V(F)$ such that
 \begin{enumerate}[label=(\Alph*)]
  \item\label{item:path-set-1} $\deg_{\widehat{F}}(D_i) \leq d$ for all $i \in [k]$,
  \item\label{item:path-set-2} $\sum_{i \in [k]} \deg_{\widehat{F}}(D_i) = 2\cdot |E(\widehat{F})| \geq \adeg h^3k$, and
  \item\label{item:path-set-3} for every $e = D_iD_j \in E(\widehat{F})$ there is a path $P_e$ from $D_i$ to $D_j$ of length $p$ such that all paths $P_e$, $e \in E(\widehat{F})$, are internally vertex-disjoint.
 \end{enumerate}
\end{lemma}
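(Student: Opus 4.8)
The plan is to build the paths by cutting each one into a bounded-length \emph{end part} at each of its two ends and a long \emph{middle part} in between, and to produce the whole family through a layered routing argument along the colour sequence $\bar c$. The two purely combinatorial requirements \ref{item:path-set-1} and \ref{item:path-set-2} are not where the difficulty lies: writing $\Delta \coloneqq \deg_F(D_i)$ for the common degree of $F$ (which is well defined by Observation \ref{obs:f-regular} and at least $12\adeg h^3$ by Lemma \ref{la:degree-bound-f}), Vizing's theorem splits $E(F)$ into at most $\Delta+1$ matchings, and the union of $d = 4\adeg h^3$ of the largest of them is already a spanning subgraph of maximum degree at most $d$ with more than $\frac{\adeg h^3 k}{2}$ edges; so once we can route disjoint paths for \emph{some} dense, degree-bounded subgraph, \ref{item:path-set-1} and \ref{item:path-set-2} come along with it. Hence the real content is \ref{item:path-set-3}.

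For \ref{item:path-set-3} I would route the paths through the root region $R$ using the layered colour structure of a $\bar c$-path. By Lemma \ref{la:f-edge-to-c-path} every edge of $F$ is realized by some $\bar c$-path; write the vertex-colour entries of $\bar c$ as $c_0^V,\dots,c_p^V$ and let $L_\ell \coloneqq V_{c_\ell^V}$, so that $L_\ell \subseteq R$ for all $1 \le \ell \le p-1$ since $R$ is $\chi$-invariant (Lemma \ref{la:root-set}) and the inner vertices of a $\bar c$-path lie in $R$. The key quantitative fact is that each interior layer is large, $|L_\ell| > t$ for $1 \le \ell \le p-1$: if some such $L_\ell$ had size at most $t$, then individualizing a vertex $v' \in X \cap D_i$ would, after Color Refinement (which only refines $L_\ell$ further), place all of $L_\ell$ into cells of size at most $t$, so $L_\ell \subseteq \cl_t^{(G,\chi)}(v') = D_i$, contradicting $L_\ell \subseteq R$. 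Since $\chi$ is $2$-stable on $G$, the bipartite graphs between consecutive layers (restricted to the relevant arc colours) are biregular, so a Lemma \ref{la:disjoint-paths-from-stable-coloring}-style argument applies layer by layer: the capacities in the routing network are governed by the large layer sizes rather than by crude counting, which is what lets the number of disjoint paths scale with $k$. For the two end parts this is straightforward: near $D_i$ every path must enter through one of the fewer than $h$ ports $S_i = N_G(R) \cap D_i$ (Observation \ref{obs:s-invariant}), fan out into $L_1$, and continue a fixed bounded number of steps into $R$; since $d < t < |L_\ell|$ and the inter-layer graphs satisfy Hall, these fans can be realized disjointly, and disjointly from everything already used.

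The main obstacle, and the part I expect to need by far the longest argument, is the construction of the \emph{middle parts}: connecting the inner endpoints of the end parts --- which lie in two fixed interior layers --- for many edges of $F$ simultaneously, internally vertex-disjointly, while keeping the degree of the resulting subgraph $\widehat{F}$ below $d$ and its edge count above $\frac{\adeg h^3 k}{2}$. Here the regularity of $F$, the factor-graph structure of $(\chi/C_\sim)$ from Corollary \ref{cor:d-sets-invariant} and Lemma \ref{la:factor-graph-2-wl}, and Theorem \ref{thm:small-separator-for-t-cr-bounded-closure} (controlling how the sets $D_i$ separate $R$) all have to be combined, and the details genuinely differ according to the parity of $p$ --- for $p$ odd a single central layer suffices and the routing reduces to a matching between two sides, whereas for $p$ even there is an extra central layer to balance. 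I would therefore isolate the middle-part construction as a separate statement, prove it by the parity case analysis carried out in Sections \ref{subsec:paths-odd} and \ref{subsec:paths-even}, and then obtain Lemma \ref{la:find-many-disjoint-paths} by gluing that middle-part subgraph to the end-part fans from the previous paragraph; the bounds \ref{item:path-set-1} and \ref{item:path-set-2} are then inherited directly from the middle-part statement, since attaching end parts changes neither $\widehat{F}$ nor which of its edges are realized by paths.
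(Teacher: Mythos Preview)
Your high-level split into end parts and a middle part, and the parity case distinction, match the paper. But the way you propose to assemble the pieces is not how the argument actually goes, and that difference is not cosmetic.

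First, the Vizing detour is a red herring. You cannot first fix $\widehat{F}$ by edge-colouring $F$ and only afterwards try to route the paths: in the paper the graph $\widehat{F}$ is \emph{discovered greedily}, one edge at a time, and each new edge $D_iD_j$ is chosen precisely because a suitable middle edge (odd $p$) or middle vertex (even $p$) is still available for it. If you pre-commit to a particular $\widehat{F}$, the counting arguments of Sections~\ref{subsec:paths-odd} and~\ref{subsec:paths-even} no longer apply, since for a specific edge $D_iD_j$ every witness in $W(D_iD_j)$ may already be blocked. Conditions~\ref{item:path-set-1} and~\ref{item:path-set-2} are therefore not handled separately; they emerge as stopping and budget conditions of the greedy loop.

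Second, the step you call ``straightforward'' --- the end parts --- is where the real mechanism lives, and it is not a one-shot Hall argument on global colour layers $L_\ell = V_{c_\ell^V}$. The paper works with the per-block distance layers $L_i^{=\mu}$, $\mu \in [r]$, inside $L_i^{\leq r}$, and proves an \emph{expansion-set} lemma (Lemma~\ref{la:expansion-set}): given any current family of at most $12\adeg^2 h^5$ disjoint end-part paths in $L_i^{\leq r}$, an alternating-paths argument shows that all but a $\tfrac{24\adeg^2 h^5}{t}$ fraction of each colour class in $L_i^{=r}$ can serve as the layer-$r$ endpoint of an \emph{additional} disjoint path, at the price of possibly rerouting the existing end-part paths while keeping their layer-$r$ endpoints fixed. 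This rerouting is essential: the endpoint in $L_i^{=r}$ is dictated by whatever middle edge or vertex the greedy step selects, so you cannot lay down the fans once and then glue. Your proposal never mentions this alternating-paths step, and without it the ``gluing'' you describe does not go through.

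Finally, Sections~\ref{subsec:paths-odd} and~\ref{subsec:paths-even} are not a stand-alone ``middle-part statement'' to be plugged into your framework: they \emph{are} the greedy loop, and each iteration simultaneously (i) finds a middle witness $vw \in E_M$ (resp.\ $(u,v,w) \in X_M$) satisfying degree, novelty, and expansion constraints via a counting argument over $E_M$ (resp.\ $X_M$), and (ii) invokes the expansion sets on both sides to produce full length-$p$ paths. So when you write ``prove it by the parity case analysis carried out in Sections~\ref{subsec:paths-odd} and~\ref{subsec:paths-even}'', you are deferring to an argument whose structure contradicts the decomposition you set up around it.
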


Before proceeding to the proof of Lemma \ref{la:find-many-disjoint-paths}, let us first prove Lemma \ref{la:initial-color-main} assuming Lemma \ref{la:find-many-disjoint-paths} holds true.

\begin{proof}[Proof of Lemma \ref{la:initial-color-main}]
 Let $\widehat{F} \subseteq F$ be the subgraph from Lemma \ref{la:find-many-disjoint-paths} and fix a set of paths $P_e$, $e \in E(\widehat{F})$, satisfying Property \ref{item:path-set-3}.
 By the length constraint, all internal vertices of a path $P_e$, $e \in E(\widehat{F})$, are contained in $R$ and both endvertices are contained in $S$.
 Now consider the graph $\widetilde{F}$ with vertex set $V(\widetilde{F}) \coloneqq S$ and $vw \in E(\widetilde{F})$ whenever there is a path $P_e$, $e \in E(\widehat{F})$, from $v$ to $w$.
 Clearly, $\widetilde{F}$ is a topological subgraph of $G$.
 Then $\sum_{v \in S} \deg_{\widetilde{F}}(v) = 2|E(\widetilde{F})| \geq 2|E(\widehat{F})| \geq \adeg h^3k$.
 On the other hand, $|S| < hk$ by Observation \ref{obs:s-invariant}.
 But this contradicts Theorem \ref{thm:average-degree-excluded-topological}.
 Hence, one of the Assumptions \ref{item:assumption-1} and \ref{item:assumption-2} is false.
\end{proof}

Hence, it remains to prove Lemma \ref{la:find-many-disjoint-paths}.
This is achieved in the next subsection.

\subsection{Constructing Disjoint Paths}
\label{subsec:finding-paths}

Consider some $i \neq j \in [k]$ such that $D_iD_j \in E(F)$ as well as some $v \in D_i$ and $w \in D_j$ such that $\dist_G(v,w) = p$.
Let $v = u_0,\dots,u_p=w$ be a path from $v$ to $w$ of length $p$.
Then $v \in S_i$, $w \in S_j$, and $u_\mu \in R$ for all $\mu \in [p-1]$.
Recall that $p \geq 3$ by Lemma \ref{la:no-edges-between-d-sets} and \ref{la:no-common-neighbors-between-d-sets}.

We define $r \coloneqq \lceil\frac{p-1}{2}\rceil$ and
\[L_i^{\leq r} \coloneqq \{u \in R \mid \min_{v' \in D_i} \dist(v',u) \leq r\}.\]
By the definition of the parameter $r$ we have that $L_i^{\leq r} \cap L_j^{\leq r} = \emptyset$ for all distinct $i,j \in [k]$,
but there exist $i,j \in [k]$ such that $N_G[L_i^{\leq r}] \cap N_G[L_j^{\leq r}] \neq \emptyset$.
Furthermore, for $1 \leq \mu \leq r$ let
\[L_i^{= \mu} \coloneqq \{u \in R \mid \min_{v' \in D_i} \dist(v',u) = \mu\}.\]
A visualization can be found in Figure \ref{fig:d-s-r-f-sets} (the set $V_M$ displayed in the figure is introduced later on).

\begin{figure}
 \centering
 \scalebox{0.9}{
 \begin{tikzpicture}
  \node at (7,5.2) {{\Large $\dots$}};
  \node at (7,1.6) {{\Large $\dots$}};
  
  \foreach[count=\p] \x/\i in {0.4/1,3.2/2,8.8/k-1,11.6/k}{
   \draw[rounded corners,line width=1.6pt,darkpastelgreen] (\x,0) rectangle (\x+2,2.8);
   \node at (\x+1,-0.4) {{\color{darkpastelgreen}$D_{\i}$}};
   \draw[rounded corners,line width=1.6pt,red,fill=red!10] ([shift=(200:0.8)]\x+1,3) arc (200:340:0.8);
   \node at (\x+1,1.9) {{\color{red}$S_{\i}$}};
   
   \node[smallvertex,fill=red] (v1\p) at (\x+0.7,2.55) {};
   \node[smallvertex,fill=red] (v2\p) at (\x+1.3,2.55) {};
   
   \draw[rounded corners,line width=1.6pt,fill,violet!10] (\x,3.6) rectangle (\x + 2,4.3);
   \draw[rounded corners,line width=1.6pt,fill,violet!25] (\x,4.3) rectangle (\x + 2,5.0);
   \draw[rounded corners,line width=1.6pt,fill,violet!40] (\x,5.0) rectangle (\x + 2,5.7);
   \draw[rounded corners,line width=1.6pt,fill,violet!55] (\x,5.7) rectangle (\x + 2,6.4);
   \draw[rounded corners,line width=1.6pt,violet,dashed] (\x,4.3) -- (\x + 2,4.3);
   \draw[rounded corners,line width=1.6pt,violet,dashed] (\x,5.0) -- (\x + 2,5.0);
   \draw[rounded corners,line width=1.6pt,violet,dashed] (\x,5.7) -- (\x + 2,5.7);
   \draw[rounded corners,line width=1.6pt,violet] (\x,3.6) rectangle (\x + 2,6.4);
   \node at (\x+1,6.8) {{\color{violet}$L_{\i}^{\leq r}$}};
   
   \foreach \q/\a in {1/1,2/1,3/1,4/2,5/2,6/2}{
    \node[smallvertex,fill=violet!60] (w\q\p) at (\x+0.28*\q+0.02,3.95) {};
    \draw (v\a\p) edge (w\q\p);
   }
   \foreach \q in {1,2,3,4}{
    \node[smallvertex,fill=violet] (u\q\p) at (\x + \q*0.4,6.05) {};
   }
  }
  
  \node at (-0.6,3.95) {{\color{violet}$L_{i}^{=1}$}};
  \node at (-0.6,5.10) {{\Large $\vdots$}};
  \node at (-0.6,6.05) {{\color{violet}$L_{i}^{=r}$}};
  
  \draw[rounded corners,line width=0.8pt] (0,3.2) rectangle (14,9.2);
  \node at (7,9.6) {$R$};
  \draw[rounded corners,line width=1.6pt,orange,fill=orange!10] (1.4,7.6) rectangle (12.6,8.8);
  \node at (7.0,8.2) {{\color{orange}$V_M$}};
  \node at (4.2,8.0) {{\Large $\dots$}};
  \node at (9.8,8.0) {{\Large $\dots$}};
  
  \node[smallvertex,fill=orange] (x1) at (2.8,8,0) {};
  \node[smallvertex,fill=orange] (x2) at (5.6,8,0) {};
  \node[smallvertex,fill=orange] (x3) at (8.4,8,0) {};
  \node[smallvertex,fill=orange] (x4) at (11.2,8.0) {};
  \draw (x1) edge (u31);
  \draw (x1) edge (u41);
  \draw (x1) edge (u12);
  \draw (x1) edge (u22);
  \draw (x2) edge (u11);
  \draw (x2) edge (u21);
  \draw (x2) edge (u13);
  \draw (x2) edge (u23);
  \draw (x3) edge (u32);
  \draw (x3) edge (u42);
  \draw (x3) edge (u34);
  \draw (x3) edge (u44);
  \draw (x4) edge (u33);
  \draw (x4) edge (u43);
  \draw (x4) edge (u14);
  \draw (x4) edge (u24);
 \end{tikzpicture}
 }
 \caption{Visualization of the interaction between the closure sets.}
 \label{fig:d-s-r-f-sets}
\end{figure}

\begin{observation}
 \label{obs:l-sets-invariant}
 For every $\mu \in [r]$ the set $L^{=\mu} \coloneqq \bigcup_{i \in [k]} L_i^{=\mu}$ is $\chi$-invariant.
 Moreover, there is a set of colors
 \begin{equation}
  C^{\leq r} \subseteq \{\chi(v,w) \mid v,w \in V(G), v \neq w\}
 \end{equation}
 such that $L_1^{\leq r},\dots,L_k^{\leq r}$ are precisely the connected components of $G[C^{\leq r}]$.
 Also,
 \begin{equation}
  (\chi/C^{\leq r})(L_i^{\leq r},L_i^{\leq r}) = (\chi/C^{\leq r})(L_j^{\leq r},L_j^{\leq r})
 \end{equation}
 for all $i,j \in [k]$.
\end{observation}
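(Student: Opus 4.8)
The plan is to prove the three assertions by exactly the machinery already used for Corollary~\ref{cor:d-sets-invariant} and Claim~\ref{cl:u-invariant}, feeding in three facts about the $2$-stable coloring $\chi$: that $R$ and $D$ are $\chi$-invariant (Lemma~\ref{la:root-set} and Corollary~\ref{cor:d-sets-invariant}); that the $2$-dimensional Weisfeiler--Leman algorithm detects distances, i.e.\ $\chi(u,w)$ determines $\dist_G(u,w)$ (see \cite[Theorem~2.6.7]{ChenP19}, already used in the proof of Lemma~\ref{la:chi-is-stable-on-h}); and that it detects the partition of any $\chi$-invariant vertex set into connected components of the induced pair-colored subgraph. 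The only new combinatorial ingredient is the disjointness $L_i^{\leq r}\cap L_j^{\leq r}=\emptyset$ for distinct $i,j$, which is what forces $2$-WL to keep the different ``balls'' around the $D_i$ apart even though their closed neighborhoods overlap.

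For the first assertion I would first note that $\chi(u,u)$ determines the multiset $\{\!\!\{(\dist_G(u,w),[w\in D])\mid w\in V(G)\}\!\!\}$ (immediate from $2$-stability together with the two detection facts and Corollary~\ref{cor:d-sets-invariant}), hence determines $\dist_G(u,D)$. Combining this with the disjointness of the sets $L_i^{\leq r}$ I would show that, for every $\mu\in[r]$,
\[
 L^{=\mu}=\{u\in R\mid \dist_G(u,D)=\mu\}:
\]
if $u\in L_i^{=\mu}$ then $\dist_G(u,D)\le\mu$, and $\dist_G(u,D)<\mu\le r$ would put $u$ into two distinct sets $L_i^{\leq r},L_j^{\leq r}$ (or contradict $\dist_G(u,D_i)=\mu$); conversely a vertex of $R$ at distance exactly $\mu\le r$ from $D$ lies in $L_i^{=\mu}$ for the unique closest $D_i$. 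Since $R$ and $D$ are $\chi$-invariant, so is the set on the right, hence $L^{=\mu}$; and $L^{\leq r}=\bigcup_{\mu=1}^{r}L^{=\mu}$ is $\chi$-invariant as well.

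For the other two assertions I would build the color set $C^{\leq r}$ in analogy with $C_\sim$, as (essentially) the set of colors of pairs $(v_1,v_2)$ with $v_1,v_2$ lying in a common $L_i^{\leq r}$, \emph{except} for the colors of edges joining two vertices of level $r$. The key point is that, apart from those exceptional edges, ``$v_1,v_2$ lie in a common block'' is a property of $\chi(v_1,v_2)$: using Lemma~\ref{la:no-edges-between-d-sets}, the minimality of $p$ and the disjointness, an adjacency $u\in L_i^{=a}$, $u'\in L_j^{=b}$ with $i\neq j$ forces $a+b+1\ge\dist_G(D_i,D_j)\ge p$ with $a,b\le r=\lceil(p-1)/2\rceil$, and any such crossing edge other than an $(r,r)$-edge would put one of its endpoints into two different $L_\ast^{\leq r}$; so after deleting the color class of $(r,r)$-edges, $G[C^{\leq r}]$ has no inter-block edges. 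At the same time each $L_i^{\leq r}$ stays connected in $G[C^{\leq r}]$: every $u\in L_i^{\leq r}$ is linked inside $L_i^{\leq r}$ to $L_i^{=1}$ along a shortest path to $D_i$ (whose internal vertices all lie in $L_i^{\leq r}$, by the same ``$d(\,\cdot\,)\ge 2$'' argument as in the proof of Lemma~\ref{la:root-set}), and the vertices of $L_i^{=1}$ are tied together through color classes of non-adjacent pairs, exactly as $C_\sim$ does for the $D_i$. This yields that $L_1^{\leq r},\dots,L_k^{\leq r}$ are precisely the connected components of $G[C^{\leq r}]$. Finally the identity $(\chi/C^{\leq r})(L_i^{\leq r},L_i^{\leq r})=(\chi/C^{\leq r})(L_j^{\leq r},L_j^{\leq r})$ follows from Lemma~\ref{la:factor-graph-2-wl} applied to $G/(C_\sim\cup C^{\leq r})$: the blobs $D_1,\dots,D_k$ carry a common color by Corollary~\ref{cor:d-sets-invariant}, each $D_i$-blob is adjacent to exactly one of the $L$-blobs, namely $L_i^{\leq r}$ (here one uses $L_i^{=1}\neq\emptyset$ for all $i$, which follows from connectedness of $G$), and $2$-stability transports the common color of the $D$-blobs to their matched $L$-blobs.

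The hard part will be the color-definability of the partition $\{L_1^{\leq r},\dots,L_k^{\leq r}\}$ (part~2): since $N_G[L_i^{\leq r}]\cap N_G[L_j^{\leq r}]$ may be non-empty, one cannot just recognize the blocks as connected components of a metric ball, and the argument genuinely has to combine $L_i^{\leq r}\cap L_j^{\leq r}=\emptyset$ with the minimality of $p$ to isolate the single ambiguous family of edges (those between two level-$r$ vertices) and verify that everything else is pinned down. Everything beyond that is bookkeeping on the same two or three $2$-WL facts already used throughout Section~\ref{sec:initial-color}, so no essentially new idea should be required.
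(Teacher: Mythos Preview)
Your approach is essentially the paper's: both rest on the $\chi$-invariance of $D$ and $R$, on 2-WL detecting distances/components, and on the factor-graph lemma. For part~1 the paper uses the inductive formula $L^{=\mu}=(N_G(L^{=\mu-1})\cap R)\setminus\bigcup_{\mu'<\mu}L^{=\mu'}$, while you argue $L^{=\mu}=\{u\in R\mid \dist_G(u,D)=\mu\}$ directly; these are equivalent.

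The one structural difference is in part~2. The paper first argues that the \emph{enlarged} sets $D_i\cup L_i^{\leq r}$ are the components of $G[C]$ for a suitable $C\supseteq C_\sim$, and only then (implicitly) restricts to $L^{\leq r}$ to obtain $C^{\leq r}$. This makes connectivity automatic: $D_i$ is connected via $C_\sim$, every vertex of $L_i^{=1}$ is adjacent to $D_i$, and higher levels attach along shortest paths to $D_i$ (which, as you note, stay in $R$). Your attempt to connect $L_i^{=1}$ directly ``through color classes of non-adjacent pairs, exactly as $C_\sim$ does for the $D_i$'' is the soft spot: as written it is circular, since you would need the block partition on $L^{=1}$ to be colour-definable in order to pick those colours. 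The clean fix is precisely the paper's move of routing through the attached $D_i$ and only afterwards peeling $D$ off.

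Conversely, you are more careful than the paper on one point. You correctly observe that a cross-block adjacency $u\in L_i^{=a}$, $u'\in L_j^{=b}$ forces $a+b+1\ge p$, hence (with $a,b\le r$) only the level-$(r,r)$ edge colours can be ambiguous, and you exclude them from $C^{\leq r}$. The paper's written $C$ actually \emph{includes} the colour of the middle edge of a $\bar c$-path when $p$ is odd, so taken literally the sets $D_i\cup L_i^{\leq r}$ are not the components of the stated $G[C]$; your observation is exactly what is needed to repair this. So the two write-ups are really the same argument, each glossing over the detail the other spells out.
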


\begin{proof}
 The sets $D$ and $R$ are $\chi$-invariant by Corollary \ref{cor:d-sets-invariant} and Lemma \ref{la:root-set}.
 Define $L^{=0} \coloneqq D$.
 Then
 \[L^{=\mu} = \Big(N_G(L^{=\mu-1}) \cap R\Big) \setminus \bigcup_{\mu' < \mu} L^{=\mu'}.\]
 Hence, $L^{=\mu}$ is $\chi$-invariant for all $\mu \in [r]$ by induction.
 For the second part observe that $L_1^{\leq r} \cup D_1,\dots,L_k^{\leq r} \cup D_k$ are the connected components of the graph $G[C]$ where
 \[C \coloneqq \{\chi(v,w) \mid vw \in E(G), v,w \in \bigcup_{i \in [k]} L_i^{\leq r} \cup D_i\} \cup C_\sim\]
 where $C_\sim$ is the set of colors from Corollary \ref{cor:d-sets-invariant}.
 In combination with Corollary \ref{cor:d-sets-invariant} and Lemma \ref{la:factor-graph-2-wl} this also implies the third statement.
\end{proof}

We split the problem of constructing the desired paths into two parts: constructing paths within the sets $L_i^{\leq r}$ and building the ``middle part'' of constant size (the ``middle part'' either consists of a single edge or a single vertex and two incident edges).
We start by considering the paths within the sets $L_i^{\leq r}$.
For these parts, we are interested in which vertices from the set $L_i^{=r}$ can be extended to a path that is disjoint from a set of existing paths on $L_i^{\leq r}$.
Actually, for the overall argument to work out, we may even modify the set of existing paths on $L_i^{\leq r}$ as long as we do not change the ``interface'' in the set $L_i^{=r}$, i.e., the endpoints of the paths have to remain the same.

Fix $i \in [k]$.
Let $P_1,\dots,P_\ell$ be a set of $\ell$ vertex-disjoint paths of length $r-1$ from $L_i^{=1}$ to $L_i^{=r}$.
For every $\mu \in [r]$ we define $L_i^\mu(P_1,\dots,P_\ell)$ to be the set of vertices lying on the paths $P_1,\dots,P_\ell$ that are contained in $L_i^{= \mu}$.
Observe that $|L_i^\mu(P_1,\dots,P_\ell)| = \ell$ for all $\mu \in [r]$.
Now we define the \emph{expansion set of $P_1,\dots,P_\ell$} to be the set
\begin{align*}
 \Exp_i(P_1,\dots,P_\ell) \coloneqq \Big\{v \in L_i^{=r} \;\Big|\; &\text{there exist $\ell+1$ vertex-disjoint paths $P_1',\dots,P_{\ell+1}'$}\\
                                                                   &\text{of length $r-1$ from $L_i^{=1}$ to $L_i^{=r}$ such that}\\
                                                                   &L_i^{r}(P_1',\dots,P_{\ell+1}') = L_i^{r}(P_1,\dots,P_\ell) \cup \{v\}\Big\}.
\end{align*}

Recall that $t \geq t(h)$ is a given parameter for Theorem \ref{thm:initial-color-via-wl} and the sets $D_i = \cl_t^{(G,\chi)}(v_i)$ are defined as the $t$-closure of elements $v_i \in D_i$ (see \eqref{eq:def-d-v}).

\begin{lemma}
 \label{la:expansion-set}
 Let $i \in [k]$ and let $P_1,\dots,P_\ell$ be a set of $\ell < 12\adeg^2h^5$ vertex-disjoint paths of length $r-1$ from $L_i^{=1}$ to $L_i^{=r}$.
 Also, let $c \in C_V(G,\chi)$.
 Then
 \[|\Exp_i(P_1,\dots,P_\ell) \cap V_c| \geq \left(1 - \frac{24\adeg^2h^5}{t}\right) |L_i^{=r} \cap V_c|.\]
\end{lemma}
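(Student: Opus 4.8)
The plan is to fix once and for all a vertex $v' \in X \cap D_i$ (possible since $X = V_{c_0}$ meets every $D_j$), so that $\cl_t^{(G,\chi)}(v') = D_i$, and to exploit the $t$-closure process producing $D_i$. The first step isolates the fact responsible for the $t$ in the statement: \emph{for every $\mu \in [r]$ and every vertex colour $c$ with $V_c \cap L_i^{=\mu} \neq \emptyset$ one has $|V_c \cap L_i^{=\mu}| > t$.} Indeed, since the $t$-closure is taken over the complete graph whose vertex- and arc-colours already encode $\chi$, the stable colouring $\tColRef{t}{G,\chi,v'}$ refines $u \mapsto \chi(u,u)$; and since colour refinement propagates the discrete colouring of $D_i$ outward along $G$, it also refines $u \mapsto \dist_G(u,D_i)$. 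As $R$ is $\chi$-invariant (Lemma~\ref{la:root-set}), each $L_i^{=\mu}$ is therefore a union of $\tColRef{t}{G,\chi,v'}$-colour classes; none of these is a singleton, because $L_i^{=\mu} \subseteq R$ is disjoint from $D_i$, so by stability of the $t$-CR colouring each has size $>t$. Since these classes lie inside a single $\chi$-colour class, $|V_c \cap L_i^{=\mu}| > t$ whenever $V_c$ meets $L_i^{=\mu}$.

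Next I would reformulate membership in $\Exp_i(P_1,\dots,P_\ell)$. A path of length $r-1$ from $L_i^{=1}$ to $L_i^{=r}$ visits exactly one vertex of each layer $L_i^{=1},\dots,L_i^{=r}$, so a family of $\ell+1$ such vertex-disjoint paths with endpoint set $L_i^{r}(P_1,\dots,P_\ell)\cup\{v\}$ is exactly a family of $\ell+1$ vertex-disjoint directed paths from $L_i^{=1}$ to $L_i^{r}(P_1,\dots,P_\ell)\cup\{v\}$ in the acyclic layered digraph $\vec G_i$ on $L_i^{\leq r}$ with arcs directed from $L_i^{=\mu}$ to $L_i^{=\mu+1}$. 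Hence, by Menger's theorem, $v \notin \Exp_i(P_1,\dots,P_\ell)$ exactly when some vertex set of size $\leq\ell$ separates $L_i^{=1}$ from $L_i^{r}(P_1,\dots,P_\ell)\cup\{v\}$ in $\vec G_i$; as the old $\ell$ paths already show that separating $L_i^{=1}$ from $L_i^{r}(P_1,\dots,P_\ell)$ needs $\ell$ vertices, every blocked $v$ lies on the sink side of some minimum separator of $L_i^{=1}$ and $L_i^{r}(P_1,\dots,P_\ell)$. By the lattice structure of minimum vertex separators, all of these sink sides are contained in $\overline B(A^*)$, where $A^*$ is the unique such separator closest to $L_i^{=1}$ and $\overline B(A^*)$ is the set of vertices of $\vec G_i$ unreachable from $L_i^{=1}$ once $A^*$ is deleted. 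So it suffices to bound $|\overline B(A^*) \cap L_i^{=r} \cap V_c|$.

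For this bound I would use a layer-by-layer expansion estimate. Set $\overline B^+_\mu := (\overline B(A^*) \cup A^*) \cap L_i^{=\mu}$. A layer-$(\mu+1)$ vertex $w$ is unreachable only if \emph{all} of its layer-$\mu$ neighbours lie in $\overline B^+_\mu$; and since $\chi$ is $2$-stable on $G$, the bipartite graph between $L_i^{=\mu}\cap V_{c'}$ and $L_i^{=\mu+1}\cap V_{c''}$ is biregular, so (as in the proofs of Lemma~\ref{la:hat-graph-excluded-topological} and Lemma~\ref{la:disjoint-paths-from-stable-coloring}) the number of colour-$c''$ vertices of layer $\mu+1$ whose colour-$c'$ neighbours all lie in $\overline B^+_\mu$ is at most $|\overline B^+_\mu\cap V_{c'}|\cdot|L_i^{=\mu+1}\cap V_{c''}|/|L_i^{=\mu}\cap V_{c'}|$. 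Using $|L_i^{=\mu}\cap V_{c'}| > t$ from the first step together with $\sum_{\mu}|A^*\cap L_i^{=\mu}| = |A^*| = \ell$, one gets that the ``blocked fraction'' $|\overline B^+_\mu\cap V_c|/|L_i^{=\mu}\cap V_c|$ increases, over all of layers $1,\dots,r$, by at most $2\ell/t$ in total — the crucial point being that the separator budget is $\ell$ \emph{globally}, not per layer, so the estimate telescopes rather than compounds. Since $\ell < 12\adeg^2h^5$, this gives $|\overline B(A^*)\cap L_i^{=r}\cap V_c| \leq (2\ell/t)|L_i^{=r}\cap V_c| < (24\adeg^2h^5/t)|L_i^{=r}\cap V_c|$, hence the claimed lower bound on $|\Exp_i(P_1,\dots,P_\ell)\cap V_c|$.

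The step I expect to be the main obstacle is this last telescoping: when moving from layer $\mu$ to layer $\mu+1$ the reference colour $c'$ must be one that actually appears among the down-neighbours of $c''$, so a colour in layer $\mu+1$ inherits not the globally smallest blocked fraction of layer $\mu$ but only the smallest over the available colours, and one must carry enough structural information through the layers (in particular, handle layers whose entire colour palette is met by $A^*$) for the bound to close. A secondary technicality is the lattice statement about minimum vertex separators invoked to collapse to a single separator $A^*$; one could instead keep one separator per blocked vertex and take a union bound, which is presumably why the target constant $24\adeg^2h^5/t$ is comfortably larger than the $2\ell/t$ the streamlined argument yields.
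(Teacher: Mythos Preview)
Your approach is essentially the paper's, recast in slightly different language. The paper also fixes $x\in X\cap D_i$, works layer by layer, and uses biregularity to show that the ``reachable'' fraction drops by at most $\ell_\mu/t$ at layer $\mu$, telescoping to the stated bound. Your Menger/min-separator-lattice formulation and the paper's augmenting-path formulation (forward edges along $G$, backward edges along the $P_j$, ``admissible'' paths from $L_i^{=1}\setminus L_i^1(P_1,\dots,P_\ell)$) are the standard dual pictures of the same residual-graph reachability; your $\overline B(A^*)$ is precisely the complement of the paper's admissible set $A$.

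The one real difference is the colouring you telescope along. You track $\chi$-classes restricted to $L_i^{=\mu}$; the paper instead works with the finer refinement $\lambda\coloneqq\tColRef{t}{G,\chi,x}$ and, for each target $\lambda$-class $\lambda^{-1}(c_r)\subseteq L_i^{=r}\cap V_c$, fixes an actual path $u_1,\dots,u_r$ and follows the single chain $c_\mu\coloneqq\lambda(u_\mu)$. This dissolves exactly the obstacle you flag: consecutive $\lambda$-classes $\lambda^{-1}(c_\mu)$, $\lambda^{-1}(c_{\mu+1})$ are already contained in $L_i^{=\mu}$, $L_i^{=\mu+1}$ and are biregularly connected because $\lambda$ is $1$-stable, so there is no ambiguity about ``which colour $c'$ to inherit from''. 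The per-$\lambda$-class bounds are then summed over all $\lambda$-colours inside $V_c\cap L_i^{=r}$. Your version with $\chi$ can be made to work the same way (fix one $\chi$-colour path per target colour and use the biregularity of $G[L_i^{=\mu}\cap V_{c'_\mu},\,L_i^{=\mu+1}\cap V_{c'_{\mu+1}}]$, which does hold), but the paper's choice of $\lambda$ is the clean device that removes the bookkeeping you were worried about.
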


\begin{proof}
 Throughout the proof, let us fix an arbitrary element $x \in X$ such that $D_i = D(x)$ and define $\lambda \coloneqq \tColRef{t}{G,\chi,x}$ to be the $t$-CR-stable coloring after individualizing $x$.
 Observe that $D_i = \{w \in V(G) \mid |[w]_\lambda| = 1\}$ and the sets $F_i^{=\mu}$ are $\lambda$-invariant for all $\mu \in [r]$.
 
 Let $u_r \in L_i^{=r} \cap V_c$ and let $u_1,\dots,u_r$ be a path from $L_i^{=1}$ to $L_i^{=r} \cap V_c$.
 Also let $c_\mu \coloneqq \lambda(u_\mu)$ for all $\mu \in [r]$.
 We argue that
 \begin{equation}
  \label{eq:expansion-set}
  |\Exp_i(P_1,\dots,P_\ell) \cap \lambda^{-1}(c_r)| \geq \left(1 - \frac{24\adeg^2h^5}{t}\right) |\lambda^{-1}(c_r)|.
 \end{equation}
 Clearly, this implies the lemma since
 \begin{align*}
  |\Exp_i(P_1,\dots,P_\ell) \cap V_c| &=    \sum_{c' \in \lambda^{-1}(F_i^{=r} \cap V_c)} |\Exp_i(P_1,\dots,P_\ell) \cap \lambda^{-1}(c')|\\
                                      &\geq \sum_{c' \in \lambda^{-1}(F_i^{=r} \cap V_c)} \left(1 - \frac{24\adeg^2h^5}{t}\right) |\lambda^{-1}(c')|\\
                                      &=    \left(1 - \frac{24\adeg^2h^5}{t}\right) \sum_{c \in \lambda^{-1}(F_i^{=r} \cap V_c)} |\lambda^{-1}(c')|\\
                                      &=    \left(1 - \frac{24\adeg^2h^5}{t}\right) |L_i^{=r} \cap V_c|.
 \end{align*}
 To prove Equation \eqref{eq:expansion-set} we use an alternating-paths argument and define the following directed graph $H$ with vertex set $V(H) \coloneqq F_i^{\leq r}$ and edge set $E(H) \coloneqq E_{\fw} \cup E_{\bw}$.
 The \emph{forward edges} are defined as
 \[E_{\fw} \coloneqq \{(v,w) \mid vw \in E(G) \setminus \bigcup_{j \in [\ell]}E(P_j), v \in L_i^{=\mu}, w \in L_i^{=\mu+1} \text{ for some } \mu \in [r-1]\}.\]
 The \emph{backward edges} are defined as
 \[E_{\bw} \coloneqq \{(v,w) \mid vw \in \bigcup_{j \in [\ell]}E(P_j), v \in L_i^{=\mu+1}, w \in L_i^{=\mu} \text{ for some } \mu \in [r-1]\}.\]
 We consider directed paths that start in $L_i^{=1} \setminus L_i^1(P_1,\dots,P_\ell)$.
 A directed path $v_1,\ldots,v_q$ in $H$ is \emph{admissible} if
 \begin{enumerate}
  \item $v_1 \in L_i^{=1} \setminus L_i^1(P_1,\dots,P_\ell)$,
  \item $(v_\eta,v_{\eta+1}) \in E(H)$, and
  \item if $(v_\eta,v_{\eta+1})\in E_{\fw}$ and $v_{\eta+1} \in \bigcup_{j\in[\ell]} V(P_j)$ then $\eta\leq q-2$ and $(v_{\eta+1},v_{\eta+2}) \in E_{\bw}$
 \end{enumerate}
 for all $\eta \in [q-1]$. Let
 \[A\coloneqq\{v\in V(G) \mid \text{there is an admissible path $v_1,\ldots,v_q$ such that } v_q=v\}.\]
 Also let $A_\mu \coloneqq A \cap F_i^{=\mu}$ for all $\mu \in [r]$.
 \begin{claim}
  \label{cl:extension-set-contains-admissible}
  $A_r \subseteq \Exp_i(P_1,\dots,P_\ell)$.
 \end{claim}
 \begin{claimproof}
  Let $v \in A_r$ and let $v_1,\ldots,v_q$ be an admissible path of minimal length $q$ such that $v = v_q$.
  Let $P_{\ell+1}$ be the corresponding path graph with $V(P_{\ell+1}) \coloneqq \{v_1,\ldots,v_q\}$ and $E(P_{\ell+1}) \coloneqq \{v_iv_{i+1}\mid i\in[q-1]\}$.
  Consider the graph $P$ with vertex set $V(P) \coloneqq \bigcup_{j \in [\ell+1]} V(P_j)$ and edge set
  \[E(P) \coloneqq \left(\bigcup_{j \in [\ell]} E(P_j) \setminus E(P_{\ell+1})\right) \cup \left(E(P_{\ell+1})\setminus\bigcup_{j \in [\ell]} E(P_j)\right).\]
  Then $P$ is the disjoint union of $(\ell+1)$ many paths $P_1',\ldots,P_{\ell+1}'$ (and possibly isolated vertices) from $L_i^{=1}$ to $L_i^{=r}$ such that $L_i^r(P_1',\dots,P_{\ell+1}') = L_i^r(P_1,\dots,P_\ell) \cup \{v\}$.
  To see this, observe that if there is a vertex $w \in V(P_{\ell+1}) \cap V(P_j)$ for some $j \in [\ell]$, then there is also a common adjacent edge $ww' \in E(P_{\ell+1})\cap E(P_j)$ by the definition of an admissible path.
  This implies that $v \in \Exp_i(P_1,\dots,P_\ell)$.
 \end{claimproof}
 
 By the claim, it suffices to provide a lower bound on the size of the set $A_r \cap \lambda^{-1}(c_r)$.
 Towards this end, we analyze the structure of the set $A$.
 For $j \in [\ell]$ let $u_{\mu,j}$ be the unique vertex in the set $V(P_j) \cap F_i^{=\mu}$, $\mu \in [r]$.
 First observe that, if $u_{\mu,j} \in A$, then also $u_{\mu',j} \in A$ for all $\mu' < \mu$ since
 all vertices $u_{\mu',j}$ are reachable with backward edges in $E_{\bw}$.
 
 We call a vertex $b \in \bigcup_{j \in [\ell]} V(P_j) \setminus A$ a \emph{blocking vertex} if
 there is a vertex $v \in \bigcup_{j \in [\ell]} V(P_j) \cap A$ such that $(b,v) \in E_{\bw}$.
 In other words, the vertex $u_{\mu,j}$ is a blocking vertex if $u_{\mu,j} \notin A$ and $u_{\mu-1,j} \in A$
 (and therefore $u_{\mu',j} \in A$ for all $\mu'<\mu$).
 Let $B$ be the set of blocking vertices.
 By the above observation, $|B \cap V(P_j)| \leq 1$ for all $j \in [\ell]$.
 Hence, $|B| \leq \ell$.
 Let $\ell_\mu \coloneqq |B \cap F_i^{=\mu}|$ be the number of blocking vertices on level $\mu$, $\mu \in [r]$.
 
 \begin{claim}
  \label{cl:blocking-vertices}
  $\displaystyle|A_\mu \cap \lambda^{-1}(c_\mu)| \geq \left(\frac{|A_1 \cap \lambda^{-1}(c_1)|}{|\lambda^{-1}(c_1)|} - \frac{\ell_1+\ldots+\ell_\mu}{t}\right)|\lambda^{-1}(c_\mu)|$ for all $\mu \in [r]$.
 \end{claim}
 \begin{claimproof}
  The claim is proved by induction on $\mu \in [r]$.
  The base case $\mu=1$ is immediately clear.
  
  For the inductive step assume that $\mu \geq 1$.
  Since $G[\lambda^{-1}(c_\mu),\lambda^{-1}(c_{\mu+1})]$ is a non-empty biregular graph,
  for each subset $S \subseteq \lambda^{-1}(c_\mu)$ it holds that
  \[\frac{|N_G(S)\cap\lambda^{-1}(c_{\mu+1})|}{|\lambda^{-1}(c_{\mu+1})|}\geq\frac{|S|}{|\lambda^{-1}(c_\mu)|}\]
  as argued in the preliminaries.
  We first argue that
  \[N\big(A_{\mu} \cap \lambda^{-1}(c_\mu)\big) \;\cap\; \lambda^{-1}(c_{\mu+1}) \;\;\;\subseteq\;\;\; A_{\mu+1} \cup B.\]
  Let $v \in A_{\mu} \cap \lambda^{-1}(c_\mu)$ and $w \in N(v) \cap \lambda^{-1}(c_{\mu+1})$.
  If $w \in \bigcup_{j\in[\ell]}V(P_j)$ then $w \in B$ or $w \in A$.
  Otherwise $w \in V(G)\setminus\bigcup_{j\in\ell]}V(P_j)$ and $(v,w) \in E_{\fw}$ which means $w \in A$.
  This shows the inclusion and therefore
  \[    \frac{|A_{\mu + 1} \cap \lambda^{-1}(c_{\mu+1})|}{|\lambda^{-1}(c_{\mu+1})|}
   \geq \frac{|N(A_{\mu} \cap \lambda^{-1}(c_{\mu})) \cap \lambda^{-1}(c_{\mu+1})|-\ell_{\mu+1}}{|\lambda^{-1}(\mu_{i+1})|}
   \geq \frac{|A_{\mu} \cap \lambda^{-1}(c_{\mu})|}{|\lambda^{-1}(c_\mu)|} - \frac{\ell_{\mu+1}}{t}.\]
  By the induction hypothesis,
  \[\frac{|A_{\mu} \cap \lambda^{-1}(c_\mu)|}{|\lambda^{-1}(c_\mu)|} \geq \frac{|A_1 \cap \lambda^{-1}(c_1)|}{|\lambda^{-1}(c_1)|} - \frac{\ell_1+\dots+\ell_\mu}{t}.\]
  In combination this means
  \[\frac{|A_{\mu+1} \cap \lambda^{-1}(c_{\mu+1})|}{|\lambda^{-1}(c_{\mu+1})|} \geq \frac{|A_1 \cap \lambda^{-1}(c_1)|}{|\lambda^{-1}(c_1)|} - \frac{\ell_1+\dots+\ell_\mu+\ell_{\mu+1}}{t}.\]
 \end{claimproof}

 Now, we can prove Equation \eqref{eq:expansion-set}.
 We already observed in Claim \ref{cl:extension-set-contains-admissible} that
 \[A_{r} \subseteq \Exp_i(P_1,\dots,P_\ell).\]
 Moreover, it holds that $|A_{1} \cap \lambda^{-1}(c_1)| = |\lambda^{-1}(c_1)| - \ell$.
 Combining this with Claim \ref{cl:blocking-vertices}, we obtain
 \begin{align*}
  |\Exp_i(P_1,\dots,P_\ell) \cap \lambda^{-1}(c_r)| &\geq |A_{r} \cap \lambda^{-1}(c_r)|\\
                                                    &\geq \left(\frac{|A_1 \cap \lambda^{-1}(c_1)|}{|\lambda^{-1}(c_1)|} - \frac{\ell}{t}\right)|\lambda^{-1}(c_r)|\\
                                                    &\geq \left(\frac{|\lambda^{-1}(c_1)| - 12\adeg^2h^5}{|\lambda^{-1}(c_1)|} - \frac{12\adeg^2h^5}{t}\right)|\lambda^{-1}(c_r)|\\
                                                    &\geq \left(\frac{t - 12\adeg^2h^5}{t} - \frac{12\adeg^2h^5}{t}\right)|\lambda^{-1}(c_r)|\\
                                                    &=    \left(1 - \frac{24\adeg^2h^5}{t}\right)|\lambda^{-1}(c_r)|.
 \end{align*}
\end{proof}

Building on the previous lemma, the critical step becomes the construction of the middle part of the paths $P_e$ that need to be constructed in order to prove Lemma \ref{la:find-many-disjoint-paths}.
Here, we distinguish between two cases depending on the parity of the path length $p$.

\subsubsection{Paths of Odd Length}
\label{subsec:paths-odd}

We first provide a proof for Lemma \ref{la:find-many-disjoint-paths} in the simpler case in which $p$ is odd.
The basic idea is to construct the paths one-by-one, i.e., initially we define $E(\widehat{F})$ to be empty.
In each iteration, the set of edges (as well as the corresponding set of paths) is extended by one until Property \ref{item:path-set-2} is satisfied while always maintaining Properties \ref{item:path-set-1} and \ref{item:path-set-3}.
Observe that Properties \ref{item:path-set-1} and \ref{item:path-set-3} are satisfied initially.

Hence, let us fix a subgraph $\widehat{F} \subseteq F$ which satisfies Properties \ref{item:path-set-1} and \ref{item:path-set-3}, but violates Property \ref{item:path-set-2}.
Let $P_e$, $e \in E(\widehat{F})$, be the corresponding set of paths.
We argue how to extend $\widehat{F}$ by a single edge while maintaining Properties \ref{item:path-set-1} and \ref{item:path-set-3}.

Consider the color $c_M \coloneqq \bar c_{2r+2}$ of the \emph{middle edge} of a $\bar c$-path as well as the color of its two incident vertices $c_L \coloneqq \bar c_{2r+1}$ and $c_R \coloneqq \bar c_{2r+3}$.
Let $E_M \coloneqq \{vw \in E(G) \mid \chi(v,w) = c_M\}$.

For each $D_iD_j \in E(F)$ we define the \emph{witness set} $W(D_iD_j) \coloneqq \{vw \in E_M \mid v \in L_i^{\leq r}, w \in L_j^{\leq r}\}$.
We remark that each $e \in W(D_iD_j)$ appears on a $\bar c$-path from some $v \in D_i$ to some $w \in D_j$, or on a $\bar c$-path from some $w \in D_j$ to some $v \in D_i$.

A partition $\CP$ of a set $A$ is an \emph{equipartition} if $|P| = |P'|$ for all $P,P' \in \CP$.
Observe that, for an equipartition $\CP$, we have that $|A| = |\CP| \cdot |P|$ for all $P \in \CP$.

\begin{lemma}
 \label{la:odd-path-witness-edge-partition}
 The sets $W(D_iD_j)$, $D_iD_j \in E(F)$, form an equipartition of the set $E_M$.
\end{lemma}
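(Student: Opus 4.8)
The plan is to verify three things about the family $\{W(D_iD_j)\mid D_iD_j\in E(F)\}$: pairwise disjointness, that their union is all of $E_M$, and that they all have the same cardinality. Throughout, recall that $\chi$ is $2$-stable on $G$ (Fact~\ref{fact:k-wl-is-stable}) and that $p=2r+1$ is odd in this subsection. The first step is to locate the endpoints $u_r$ and $u_{r+1}$ of the middle edge of the reference $\bar c$-path. Since any subpath of a shortest path is itself shortest, $\dist_G(u_0,u_r)=r$ and $\dist_G(u_{r+1},u_p)=r$; as $u_0\in D_i$ and $u_p\in D_j$ this gives $\dist_G(D_i,u_r)\le r$ and $\dist_G(D_j,u_{r+1})\le r$, and both $u_r,u_{r+1}\in R$, so $u_r\in L_i^{\le r}$ and $u_{r+1}\in L_j^{\le r}$. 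In particular the vertex colors $c_L=\chi(u_r,u_r)$ and $c_R=\chi(u_{r+1},u_{r+1})$ occur inside $L^{\le r}=\bigcup_iL_i^{\le r}$, which is $\chi$-invariant (it is a union of the $\chi$-invariant sets $L^{=\mu}$, $\mu\in[r]$, by Observation~\ref{obs:l-sets-invariant}); moreover $L^{\le r}\subseteq R$, so $c_L$ and $c_R$ are not $D$-colors, because $D$ is $\chi$-invariant by Corollary~\ref{cor:d-sets-invariant} and $R\cap D=\emptyset$.

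Next I would handle covering together with disjointness. Let $vw\in E_M$, so $\chi(v,w)=c_M=\chi(u_r,u_{r+1})$. Since $\chi$ is $2$-stable, $\chi(v,w)$ determines $\chi(v,v)$ and $\chi(w,w)$, so $\{\chi(v,v),\chi(w,w)\}=\{c_L,c_R\}$, and by $\chi$-invariance of $L^{\le r}$ we get $v,w\in L^{\le r}$, say $v\in L_a^{\le r}$ and $w\in L_b^{\le r}$ for unique $a,b\in[k]$. To see $a\ne b$, recall from Observation~\ref{obs:l-sets-invariant} that $L_1^{\le r},\dots,L_k^{\le r}$ are exactly the connected components of $G[C^{\le r}]$; the property ``$x$ and $y$ lie in the same connected component of $G[C^{\le r}]$'' is detected by the $2$-dimensional Weisfeiler-Leman algorithm (as in the proof of Lemma~\ref{la:chi-is-stable-on-h}), hence is constant on $\chi$-color classes of pairs, and since $u_r\in L_i^{\le r}$ and $u_{r+1}\in L_j^{\le r}$ with $i\ne j$ lie in distinct components while $\chi(v,w)=\chi(u_r,u_{r+1})$, so do $v$ and $w$. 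Thus $vw\in W(D_aD_b)$. Conversely, if $vw\in W(D_iD_j)\cap W(D_{i'}D_{j'})$ then, the sets $L_\cdot^{\le r}$ being pairwise disjoint, $\{i,j\}=\{i',j'\}$, giving disjointness; so once we know $D_aD_b\in E(F)$ the family indeed covers $E_M$.

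The main point, and the step I expect to be the hardest, is to prove $D_aD_b\in E(F)$. Set $T_i:=L_i^{\le r}\cup D_i$ for $i\in[k]$; these are pairwise disjoint, and by the argument in the proof of Observation~\ref{obs:l-sets-invariant} there is a color set $C'$ such that $T_1,\dots,T_k$ are precisely the connected components of $G[C']$. By Lemma~\ref{la:factor-graph-2-wl}, $\chi/C'$ is $2$-stable on $G/C'$, and the multiset $(\chi/C')(T_a,T_b)=\{\!\!\{\chi(x,y)\mid x\in T_a,\,y\in T_b\}\!\!\}$ contains $c_M=\chi(v,w)$ (as $v\in T_a$, $w\in T_b$); likewise $(\chi/C')(T_i,T_j)$ contains $c_M$. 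By the dichotomy in Lemma~\ref{la:factor-graph-2-wl}, two such multisets sharing the color $c_M$ must be equal, so $(\chi/C')(T_a,T_b)=(\chi/C')(T_i,T_j)$. Now I would read off the ``$D$-part'' of each: since $\chi$ is $2$-stable, $\chi(x,y)$ determines $\chi(x,x)$ and $\chi(y,y)$, and $\chi(x,x)$ is a $D$-color exactly when $x\in D$; as every $x\in T_a$ lies in $D_a\cup R$, the pairs $(x,y)\in T_a\times T_b$ with both $\chi(x,x)$ and $\chi(y,y)$ being $D$-colors are precisely those with $(x,y)\in D_a\times D_b$. Hence the sub-multiset of $(\chi/C')(T_a,T_b)$ formed by such colors equals $(\chi/C_\sim)(D_a,D_b)$, and similarly for $(i,j)$; equality of the full multisets therefore yields $(\chi/C_\sim)(D_a,D_b)=(\chi/C_\sim)(D_i,D_j)$, i.e.\ $D_aD_b\in E(F)$.

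Finally, for equal cardinalities: by the covering step every edge of $E_M$ has both endpoints in $R$, hence in $\bigcup_iT_i\setminus D$, so $W(D_aD_b)$ is exactly the set of $c_M$-colored edges of $G$ running between $T_a$ and $T_b$; and the number of edges of $G$ of a fixed color between $T_a$ and $T_b$ is read off from the multiplicity of $c_M$ in $(\chi/C')(T_a,T_b)$ (up to the fixed factor $1$ or $2$ accounting for whether $c_M=\chi(u_{r+1},u_r)$, which is the same for all $F$-edges). Since $(\chi/C')(T_a,T_b)=(\chi/C')(T_i,T_j)$ by the previous paragraph, this multiplicity is independent of the chosen edge $D_aD_b\in E(F)$, so all the sets $W(D_iD_j)$, $D_iD_j\in E(F)$, have the same size, completing the proof. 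The delicate part, as flagged, is the third paragraph — in particular the claim that contracting the $T_i$ and isolating the $D$-colored entries of $(\chi/C')(T_a,T_b)$ recovers $(\chi/C_\sim)(D_a,D_b)$, which rests on the fact that $D$-colors and $R$-colors are disjoint.
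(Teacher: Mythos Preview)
Your proof is correct and follows essentially the same approach as the paper: apply Lemma~\ref{la:factor-graph-2-wl} to a suitable factor graph, use the equal-or-disjoint dichotomy of the resulting multisets, and read off the equipartition from the multiplicity of $c_M$. The only difference is that the paper contracts the sets $L_i^{\le r}$ and rather tersely asserts that the $c_M$-relation coincides with the $F$-relation, whereas you contract the larger sets $T_i=L_i^{\le r}\cup D_i$ and explicitly extract the $D$-colored sub-multiset of $(\chi/C')(T_a,T_b)$ to recover $(\chi/C_\sim)(D_a,D_b)$, making the step $D_aD_b\in E(F)$ transparent.
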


\begin{proof}
 Consider the set $C^{\leq r} \subseteq \{\chi(v,w) \mid v,w \in V(G), v \neq w\}$ defined in Observation \ref{obs:l-sets-invariant}
 such that $L_1^{\leq r},\dots,L_k^{\leq r}$ are precisely the connected components of $G[C^{\leq r}]$.
 By Observation \ref{obs:l-sets-invariant},
 \[\{\!\{\chi(v,v) \mid v \in L_i^{\leq r}\}\!\} = \{\!\{\chi(v,v) \mid v \in L_j^{\leq r}\}\!\}\]
 for all $i,j \in [k]$.
 Also,
 \[c_M \in \{\!\{\chi(v,w) \mid v \in L_i^{\leq r}, w \in L_j^{\leq r} \}\!\}\]
 for some $D_iD_j \in E(F)$ by definition of the graph $F$.
 Now, Lemma \ref{la:factor-graph-2-wl} implies that
 \[\{\!\{\chi(v,w) \mid v \in L_i^{\leq r}, w \in L_j^{\leq r} \}\!\} = \{\!\{\chi(v,w) \mid v \in L_{i'}^{\leq r}, w \in L_{j'}^{\leq r} \}\!\}\]
 for all $D_iD_j, D_{i'}D_{j'} \in E(G)$ and $c_M$ only appears in sets $\{\!\{\chi(v,w) \mid v \in L_i^{\leq r}, w \in L_j^{\leq r} \}\!\}$ for $D_iD_j \in E(G)$.
\end{proof}

Now let $i \in [k]$.
Let $P_1^i,\dots,P_{d_i}^i$ denote the paths which are obtained from intersecting the paths $P_e$, $e \in E(\widehat{F})$, with the set $L_i^{\leq r}$.
Observe that $d_i = \deg_{\widehat{F}}(D_i)$, i.e., the number $d_i$ of paths $P_j^i$ obtained this way equals the degree of $D_i$ in the graph $\widehat{F}$.

The basic idea to extend the graph $\widehat{F}$ by a single edge is to apply a counting argument.
Consider the set $E_M$.
We argue that there is some $e \in E_M$ which can be extended to a $\bar c$-path which is internally-vertex disjoint from all paths $P_e$, $e \in E(\widehat{F})$, and provides a witness for extending $\widehat{F}$ by a single edge.
More precisely, in light of Lemma \ref{la:expansion-set}, it suffices to prove the following lemma.
Recall the definition of the parameter $d$ from Lemma \ref{la:find-many-disjoint-paths}.

\begin{lemma}
 \label{la:odd-path-extension-witness}
 There is an edge $vw \in E_M$ such that $vw \in W(D_iD_j)$ and
 \begin{enumerate}[label=(\alph*)]
  \item\label{item:odd-path-extension-witness-1} $\deg_{\widehat{F}}(D_i) < d$ and $\deg_{\widehat{F}}(D_j) < d$,
  \item\label{item:odd-path-extension-witness-2} $D_iD_j \notin E(\widehat{F})$, and
  \item\label{item:odd-path-extension-witness-3} $v \in \Exp_i(P_1^i,\dots,P_{d_i}^i)$ and $w \in \Exp_j(P_1^j,\dots,P_{d_j}^j)$.
 \end{enumerate}
\end{lemma}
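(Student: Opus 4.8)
The plan is to prove Lemma~\ref{la:odd-path-extension-witness} by a counting argument on the set $E_M$, weighing the total number of middle edges against the number of middle edges that are ``blocked'' for one of the three reasons in the statement. First I would fix notation: by Lemma~\ref{la:odd-path-witness-edge-partition} the witness sets $W(D_iD_j)$, $D_iD_j \in E(F)$, form an equipartition of $E_M$, so $|E_M| = |E(F)| \cdot s$ where $s \coloneqq |W(e)|$ for any $e \in E(F)$, and since $F$ is regular (Observation~\ref{obs:f-regular}) with $\deg_F(D_i) \eqqcolon d_F \geq 12\adeg h^3$ (Lemma~\ref{la:degree-bound-f}), we have $|E(F)| = \tfrac{1}{2} d_F k$. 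Each $W(D_iD_j)$ consists of edges $vw$ with $v \in L_i^{\leq r}$, $w \in L_j^{\leq r}$; I would further note (using that $\chi$ is $2$-stable and Lemma~\ref{la:factor-graph-2-wl} applied to the contracted graph $G[C^{\leq r}]$) that for a fixed $D_iD_j \in E(F)$ the edges of $W(D_iD_j)$ are ``biregular'' between $L_i^{=r}$ and $L_j^{=r}$ in the sense that each $v \in L_i^{=r} \cap V_{c_L}$ lies on the same number of edges of $W(D_iD_j)$, so the witness edges incident to a single bad vertex of $L_i^{=r}$ form a controlled fraction of $W(D_iD_j)$.

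Next I would bound the three types of bad edges. For condition \ref{item:odd-path-extension-witness-1}: a vertex $D_i$ with $\deg_{\widehat F}(D_i) = d$ can kill at most $d_F \cdot s$ edges of $E_M$ (all edges in the $d_F$ witness sets at $D_i$), but since $\widehat F$ violates Property~\ref{item:path-set-2} we have $\sum_i \deg_{\widehat F}(D_i) < \adeg h^3 k$, so fewer than $\tfrac{\adeg h^3 k}{d} \cdot d_F \cdot s$ edges of $E_M$ are incident to a ``saturated'' $D_i$; with $d = 4\adeg h^3$ this is less than $\tfrac{1}{4} d_F k s = \tfrac{1}{2}|E_M|$. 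For condition \ref{item:odd-path-extension-witness-2}: the edges of $\widehat F$ account for only the witness sets $W(e)$ with $e \in E(\widehat F)$, i.e.\ at most $|E(\widehat F)| \cdot s < \tfrac{1}{2}\adeg h^3 k \cdot s \ll |E_M|$ edges, again a small fraction. For condition \ref{item:odd-path-extension-witness-3}: by Lemma~\ref{la:expansion-set}, since $d_i = \deg_{\widehat F}(D_i) \leq d = 4\adeg h^3 < 12\adeg^2 h^5$, the non-expansion set $L_i^{=r} \setminus \Exp_i(P_1^i,\dots,P_{d_i}^i)$ has at most a $\tfrac{24\adeg^2 h^5}{t}$ fraction of each color class $V_{c_L}$, hence at most that fraction of the endpoints of $W(D_iD_j)$ on the $L_i$-side (and symmetrically on the $L_j$-side); by the biregularity of the witness edges this translates to at most a $\tfrac{48\adeg^2 h^5}{t}$ fraction of each $W(D_iD_j)$, hence of $E_M$. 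With $t \geq t(h) = 144\adeg^2 h^5$ this fraction is at most $\tfrac{1}{3}$.

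Then I would combine: the fraction of ``good'' edges of $E_M$ is at least $1 - (\text{fraction killed by \ref{item:odd-path-extension-witness-1}}) - (\text{by \ref{item:odd-path-extension-witness-2}}) - (\text{by \ref{item:odd-path-extension-witness-3}})$, which by the above is strictly positive (the three bounds being roughly $<\tfrac12$, a tiny term, and $\leq\tfrac13$); hence some $vw \in E_M$ satisfies all three conditions simultaneously, and it lies in some $W(D_iD_j)$, giving the claimed edge. I expect the main obstacle to be the bookkeeping in the \ref{item:odd-path-extension-witness-3} estimate: Lemma~\ref{la:expansion-set} bounds the non-expansion fraction \emph{per color class of $L_i^{=r}$}, so I need to argue carefully that the witness edges of a fixed $W(D_iD_j)$ distribute uniformly enough over the relevant color class $V_{c_L} \cap L_i^{=r}$ (and over $V_{c_R} \cap L_j^{=r}$) that a small fraction of bad endpoints can only spoil a correspondingly small fraction of witness edges — this is where $2$-stability of $\chi$ and Lemma~\ref{la:factor-graph-2-wl} on the contraction $G[C^{\leq r}]$ must be invoked precisely, and the constants in $t(h)$ (specifically the factor $144$ versus the $24$ and the extra factor $2$ from having endpoints on two sides, plus the $\tfrac12$ and $\tfrac13$ slack) have to be tracked to see the final sum is $< 1$. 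One should also double-check that $\Exp_i$ is evaluated on exactly the paths $P_1^i,\dots,P_{d_i}^i$ obtained by restricting the $P_e$ to $L_i^{\leq r}$, which have length $r-1$ from $L_i^{=1}$ to $L_i^{=r}$ as required by Lemma~\ref{la:expansion-set} — this uses that $p$ is odd so $r = \tfrac{p-1}{2}$ and the middle edge sits exactly between the two $L^{=r}$ levels.
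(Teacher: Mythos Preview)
Your proposal is correct and is essentially the same counting argument the paper gives: bound the fraction of $E_M$ violating each of \ref{item:odd-path-extension-witness-1}, \ref{item:odd-path-extension-witness-2}, \ref{item:odd-path-extension-witness-3} using Lemma~\ref{la:odd-path-witness-edge-partition}, Observation~\ref{obs:f-regular}, Lemma~\ref{la:degree-bound-f}, and Lemma~\ref{la:expansion-set}, and check the sum stays below $1$. The only organizational difference is that the paper first removes the edges violating \ref{item:odd-path-extension-witness-1}, then within the remainder argues that a $\tfrac{2}{3}$-fraction satisfies \ref{item:odd-path-extension-witness-3}, and finally subtracts the \ref{item:odd-path-extension-witness-2}-violators, obtaining $\tfrac{2}{3}\cdot\tfrac{1}{2}-\tfrac{1}{12}=\tfrac{1}{4}$; your straight union bound gives the smaller but still positive margin $1-\tfrac{1}{2}-\tfrac{1}{12}-\tfrac{1}{3}=\tfrac{1}{12}$.

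On the obstacle you flag: you do not need per-$W(D_iD_j)$ biregularity. Since $\chi$ is $2$-stable, the whole graph $G[c_M]$ is biregular between $V_{c_L}$ and $V_{c_R}$, and by Observation~\ref{obs:l-sets-invariant} the sets $V_{c_L}\cap L_i^{=r}$ all have equal size; summing the bound from Lemma~\ref{la:expansion-set} over $i$ gives that at most a $\tfrac{1}{6}$-fraction of \emph{all} of $V_{c_L}$ is non-expanding, hence (by global biregularity) at most a $\tfrac{1}{6}$-fraction of $E_M$ has a bad left endpoint, and likewise for the right, yielding your $\tfrac{1}{3}$ directly without ever looking inside a single $W(D_iD_j)$.
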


\begin{proof}
 We first provide upper bounds on the number edges in $E_M$ violating one of the first two properties.
 Let
 \[\widehat{E}_M^1 \coloneqq \bigcup_{D_i\colon \deg_{\widehat{F}}(D_i) = d}\bigcup_{D_j \in N_F(D_i)} W(D_iD_j)\]
 denote the set of edges violating Property \ref{item:odd-path-extension-witness-1}.
 Let $U \coloneqq \{i \in [k] \mid \deg_{\widehat{F}}(D_i) = d\}$.
 Then $|E(\widehat{F})| \geq \frac{d}{2}|U| = 2\adeg h^3|U|$.
 Since $|E(\widehat{F})| < \frac{1}{2}\adeg h^3k$ it follows that $|U| < \frac{k}{4}$.
 By Observation \ref{obs:f-regular} there is a number $d_F$ such that $\deg_F(D_i) = d_F$ for all $i \in [k]$
 Hence,
 \[|\{D_iD_j \in E(F) \mid \deg_{\widehat{F}}(D_i) = d \vee \deg_{\widehat{F}}(D_j) = d\}| \leq d_F \cdot |U| < \frac{d_Fk}{4} = \frac{|E(F)|}{2}.\]
 So $|\widehat{E}_M^1| \leq \frac{1}{2}|E_M|$ by Lemma \ref{la:odd-path-witness-edge-partition}.
 Next, let
 \[\widehat{E}_M^2 \coloneqq \bigcup_{D_iD_j \in E(\widehat{F})} W(D_iD_j)\]
 denote the set of edges violating Property \ref{item:odd-path-extension-witness-2}.
 We have that $\frac{|E(\widehat{F})|}{|E(F)|} < \frac{1}{12}$ by Lemma \ref{la:degree-bound-f} and the fact that $\widehat{F}$ violates Property \ref{item:path-set-2}.
 So $|\widehat{E}_M^2| \leq \frac{1}{12}|E_M|$ by Lemma \ref{la:odd-path-witness-edge-partition}.
 
 Now let us analyze Property \ref{item:odd-path-extension-witness-3}.
 By Lemma \ref{la:expansion-set} it holds that
 \[|\Exp_i(P_1^i,\dots,P_{d_i}^i) \cap V_{c_L}| \geq \left(1 - \frac{1}{6}\right) |L_i^{=r} \cap V_{c_L}|\]
 and 
 \[|\Exp_i(P_1^i,\dots,P_{d_i}^i) \cap V_{c_R}| \geq \left(1 - \frac{1}{6}\right) |L_i^{=r} \cap V_{c_R}|\]
 for all $i \in [k] \setminus U$.
 Hence,
 \begin{align*}
         &|\{vw \in E_M \setminus \widehat{E}_M^1 \mid vw \in W(D_iD_j), v \in \Exp_i(P_1^i,\dots,P_{d_i}^i), w \in \Exp_j(P_1^j,\dots,P_{d_j}^j)\}|\\
  \geq\; &\frac{2}{3}(|E_M| - |\widehat{E}_M^1|).
 \end{align*}
 So, the number of edges satisfying Property \ref{item:odd-path-extension-witness-1}, \ref{item:odd-path-extension-witness-2}, and \ref{item:odd-path-extension-witness-3} it at least
 \[\frac{2}{3}\left(|E_M| - |\widehat{E}_M^1|\right) - |\widehat{E}_M^2| \geq \frac{2}{3}\left(|E_M| - \frac{1}{2}|E_M|\right) - \frac{1}{12}|E_M| = \frac{1}{4}|E_M| > 0.\]
\end{proof}

\begin{proof}[Proof of Lemma \ref{la:find-many-disjoint-paths} for $p$ odd]
 Let $\widehat{F}$ be a maximal subgraph of $F$ that satisfies Property \ref{item:path-set-1} and \ref{item:path-set-3}.
 Suppose towards a contraction that Property \ref{item:path-set-2} is violated, i.e., $2\cdot|E(\widehat{F})| < \adeg h^3k$.
 Let $P_e$, $e \in E(\widehat{F})$, be the corresponding set of paths guaranteed by Property \ref{item:path-set-3}.
 For $i \in [k]$ let $P_1^i,\dots,P_{d_i}^i$ denote the paths which are obtained from intersecting the paths $P_e$, $e \in E(\widehat{F})$, with the set $L_i^{\leq r}$.
 Observe that $d_i = \deg_{\widehat{F}}(D_i)$ and the path $P_j^i$ has length $r-1$ for all $i \in [k]$ and $j \in [d_i]$.

 By Lemma \ref{la:odd-path-extension-witness} there is an edge $vw \in E_M$ such that $vw \in W(D_iD_j)$ satisfying Property \ref{item:odd-path-extension-witness-1}, \ref{item:odd-path-extension-witness-2}, and \ref{item:odd-path-extension-witness-3}.
 Let $\widehat{F} + D_iD_j$ denote the graph obtained from $\widehat{F}$ by adding the edge $D_iD_j$.
 By Property \ref{item:odd-path-extension-witness-1} the graph $\widehat{F} + D_iD_j$ satisfies Property \ref{item:path-set-1}.
 By Property \ref{item:odd-path-extension-witness-3} and the definition of an extension set there are vertex-disjoint paths $Q_1^i,\dots,Q_{d_i+1}^i$ from $L_i^{=1}$ to $L_i^{=r}$ of length $r-1$ such that
 \[L_i^{=r}(Q_1^i,\dots,Q_{d_i+1}^i) = L_i^{=r}(P_1^i,\dots,P_{d_i}^i) \cup \{v\}\]
 and vertex-disjoint paths $Q_1^j,\dots,Q_{d_j+1}^j$ from $L_j^{=1}$ to $L_j^{=r}$ of length $r-1$ such that
 \[L_j^{=r}(Q_1^j,\dots,Q_{d_j+1}^j) = L_j^{=r}(P_1^j,\dots,P_{d_j}^j) \cup \{w\}.\]
 This gives a set of paths $Q_e$, $e \in E(\widehat{F} + D_iD_j)$, witnessing Property \ref{item:path-set-3} for the graph $\widehat{F} + D_iD_j$.
 Hence, by the maximality of $\widehat{F}$, it holds that $D_iD_j \in E(\widehat{F})$ contracting Property \ref{item:odd-path-extension-witness-2}.
\end{proof}

\subsubsection{Paths of Even Length}
\label{subsec:paths-even}

Next, let us analyse the case in which $p$ is even.
This case is more complicated than the previous case since, for $p$ even, the center of a $\bar c$-path is a vertex (instead of an edge for $p$ odd).
Consider the color $c_M \coloneqq \bar c_{2r+3}$ of the \emph{middle vertex} of a $\bar c$-path as well as the color of its two adjacent vertices $c_L \coloneqq \bar c_{2r+1}$ and $c_R \coloneqq \bar c_{2r+5}$.
Also let $V_M \coloneqq \{v \in V(G) \mid \chi(v,v) = c_M\}$.

To be more precise, a main advantage of the previous case is that, given a middle edge $e \in E_M$, there is a unique edge $D_iD_j \in E(F)$ such that $e$ appears on a witnessing $\bar c$-path from $D_i$ to $D_j$ (see Lemma \ref{la:odd-path-witness-edge-partition}).
For $p$ even, this is not true anymore, i.e., for a middle vertex $v \in V_M$, there may be a large number of edges $D_iD_j \in E(F)$ such that $v$ appears on a witnessing $\bar c$-path from $D_i$ to $D_j$.
In principle, $|V_M|$ may even be significantly smaller than $|E(F)|$.
Hence, to be able to find a sufficient number of vertex-disjoint paths to ensure Property \ref{item:path-set-3}, we first have to ensure that $|V_M| \geq \frac{1}{2}\adeg h^3k$.
Actually, to formulate a counting argument similar to the previous case, we require a slightly larger bound obtained in the next lemma. 

\begin{lemma}
 \label{la:size-v-m}
 $|V_M| \geq 3\adeg h^{3}k$.
\end{lemma}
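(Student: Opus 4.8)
The plan is to argue by contradiction: assume $|V_M| < 3\adeg h^{3}k$ and extract from this a topological subgraph of $G$ whose average degree exceeds $\adeg h^{2}$, contradicting Theorem~\ref{thm:average-degree-excluded-topological}.

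First I would assemble the ingredients. By Lemma~\ref{la:degree-bound-f} and Observation~\ref{obs:f-regular} the graph $F$ is $d_F$-regular with $d_F \ge 12\adeg h^{3}$, hence $|E(F)| = \tfrac12 d_F k \ge 6\adeg h^{3}k$. By Lemma~\ref{la:f-edge-to-c-path}, every edge $D_iD_j \in E(F)$ is realized by a $\bar c$-path; since $p$ is even such a path has a unique middle vertex, and by the choice of $c_M$ this vertex has colour $c_M$, i.e.\ lies in $V_M$. So, writing $W(D_iD_j) \subseteq V_M$ for the set of middle vertices of $\bar c$-paths between $D_i$ and $D_j$ (in either direction), each $W(e)$ is non-empty.

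Next I would record the combinatorial skeleton. Because $\chi$ is $2$-stable on $G$ and the sets $D_i$ (resp.\ the sets $L_i^{\le r}$) are pairwise WL-equivalent — Corollary~\ref{cor:d-sets-invariant} and Observation~\ref{obs:l-sets-invariant} together with Lemma~\ref{la:factor-graph-2-wl} — all witness sets $W(e)$, $e \in E(F)$, share a common size $w \ge 1$, and every $v \in V_M$ belongs to the same number $\mu_0$ of these sets. Double counting incidences between $E(F)$ and $V_M$ then gives $w\,|E(F)| = \mu_0\,|V_M|$, so
\[
  |V_M| \;=\; \frac{w}{\mu_0}\,|E(F)| \;\ge\; \frac{w}{\mu_0}\cdot 6\adeg h^{3}k ,
\]
and it therefore suffices to prove $\mu_0 \le 2w$.

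The estimate $\mu_0 \le 2w$ is the heart of the argument and, I expect, the main obstacle. A middle vertex $v$ witnesses $D_iD_j$ only if $v$ is joined to both $D_i$ and $D_j$ by appropriately coloured half-$\bar c$-paths of length $r+1$; by Observation~\ref{obs:l-sets-invariant} such half-paths run through the pairwise disjoint sets $L_i^{\le r}$ and enter $D_i$ through $S_i$, where $|S_i| < h$ by Observation~\ref{obs:s-invariant}. Let $\mathrm{Nb}(v)$ denote the family of closure sets reachable from $v$ in this way; then $v$ can only witness edges of $F$ having both endpoints in $\mathrm{Nb}(v)$, so $\mu_0 = O(|\mathrm{Nb}(v)|^{2})$. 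I would then use Lemma~\ref{la:disjoint-paths-from-stable-coloring} to route many internally disjoint half-paths through each $L_i^{\le r}$, the bound $|S_i| < h$, and the regularity and degree lower bound of $F$ restricted to $\mathrm{Nb}(v)$, to show that if $\mu_0$ exceeded $2w$ then $v$ together with representative vertices of the sets in $\mathrm{Nb}(v)$ and the routed paths would form a topological subgraph of $G$ on $O(h\cdot|\mathrm{Nb}(v)|)$ vertices carrying $\Omega(h^{3}\cdot|\mathrm{Nb}(v)|)$ edges, contradicting Theorem~\ref{thm:average-degree-excluded-topological}. The delicate part is making this last step precise: showing that a middle vertex serving too many $F$-edges is forced to be the hub of a dense topological subgraph, while keeping the bookkeeping of disjointness inside the sets $L_i^{\le r}$ under control.
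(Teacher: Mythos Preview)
Your reduction to the inequality $\mu_0 \le 2w$ via double counting is sound, but the sketch you give for that inequality has a genuine gap, and I do not see how to close it along the lines you suggest. The construction you describe centers on a single middle vertex $v$: but any two half-paths ending in $v$ share $v$, so you cannot get many internally vertex-disjoint paths that way. If instead you try to realize the $F$-edges among $\mathrm{Nb}(v)$ by full $\bar c$-paths in $G$, their middle vertices need not be $v$; routing them disjointly requires a supply of distinct middle vertices, which is precisely what the lemma is supposed to provide, so the argument becomes circular. Also, $F$ restricted to $\mathrm{Nb}(v)$ has no reason to be regular or to inherit the degree lower bound of Lemma~\ref{la:degree-bound-f}.

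The paper avoids this circularity by not comparing $\mu_0$ and $w$ at all. It introduces the bipartite graph $\widetilde{B}$ on $[k] \uplus V_M$ with $i$ adjacent to $v$ whenever some $w \in L_i^{=r}$ satisfies $\chi(w,v)=\bar c_{2r+2}$, and observes directly from the definition of the $t$-closure (since $V_M \cap D_i = \emptyset$) that $\deg_{\widetilde{B}}(i) \ge t \ge 36\adeg^2 h^5$. It then reruns the same iterative extension argument used for Lemma~\ref{la:find-many-disjoint-paths} (via the expansion sets of Lemma~\ref{la:expansion-set}) to build a subgraph $\widehat{B} \subseteq \widetilde{B}$ with $|E(\widehat{B})| \ge 3\adeg^2 h^5 k$, bounded degree on the $[k]$-side, and an internally vertex-disjoint path of length $r+1$ from $S_i$ to $v$ for each edge $vi$. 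These paths realize a topological subgraph of $G$ on $V_M \cup S$ with at least $3\adeg^2 h^5 k$ edges; applying Theorem~\ref{thm:average-degree-excluded-topological} to it and using $|S| < hk$ yields $|V_M| > 3\adeg h^3 k$. So the key idea you are missing is to work with the half-paths (left side only) rather than full $\bar c$-paths, and to bound $|V_M|$ directly via a dense topological subgraph on $V_M \cup S$ instead of via the ratio $\mu_0/w$.
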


\begin{proof}
 Let $\widetilde{B} = ([k],V_M,\widetilde{E})$ be the bipartite graph with edge set
 \[\widetilde{E} \coloneqq \{vi \mid v \in V_M,i \in [k], \text{ and there is $w \in F_i^{=r}$ such that $\chi(w,v) = \bar c_{2r+2}$}\}.\]
 Let $E_L \coloneqq \{vw \in E(G) \mid \chi(w,v) = \bar c_{2r+2}\} \subseteq E(G)$.
 For each $vi \in \widetilde{E}$ define $W(vi) \coloneqq \{vw \in E_L \mid w \in F_i^{=r}\}$ to be the set of edges witnessing that $vi \in \widetilde{E}$.
 \begin{claim}
  \label{cl:witness-equipartition}
  The sets $W(vi)$, $vi \in \widetilde{E}$, form an equipartition of the set $E_L$.
 \end{claim}
 \begin{claimproof}
  First observe that $|W(vi)| \geq 1$ for all $vi \in \widetilde{E}$ and each $e \in E_L$ appears in some set $W(vi)$, $vi \in \widetilde{E}$, by Observation \ref{obs:l-sets-invariant}.
  
  Let $C^{\leq r}$ be the set of colors from Observation \ref{obs:l-sets-invariant} such that $L_1^{\leq r},\dots,L_k^{\leq r}$ are precisely the connected components of $G[C^{\leq r}]$.
  By Lemma \ref{la:factor-graph-2-wl} the coloring $\chi/C^{\leq r}$ is $2$-stable on the graph $G/C^{\leq r}$.
  Now, we have that the number of elements in $W(vi)$ equals the number of occurrences of the color $\bar c_{2r+2}$ in the set $(\chi/C^{\leq r})(v,L_i^{\leq r})$.
  Since any two of such multisets are either equal or disjoint by Lemma \ref{la:factor-graph-2-wl}, the claim follows.
 \end{claimproof}
 
 \begin{claim}
  \label{cl:degree-bound-tildeb}
  $\deg_{\widetilde{B}}(i) \geq 36\adeg^2h^5$ for all $i \in [k]$.
 \end{claim}
 \begin{claimproof}
  Let $i \in [k]$ and pick $x \in X \cap D_i$ (recall the definition of the set $X$ from Equation \eqref{eq:def-x}).
  Also, let $\lambda \coloneqq \tColRef{t}{G,\chi,x}$ be the $t$-CR-stable coloring after individualizing $x$.
  Then $D_i = \{w \in V(G) \mid |[w]_\lambda| = 1\}$ and $L_i^{= r}$ is $\lambda$-invariant.
  Since $V_M \cap \cl_t^{(G,\chi)}(x) = \emptyset$ it follows that $|\{v \in V_M \mid \exists w \in L_i^{=r} \colon \chi(w,v) = \bar c_{2r+2}\}| > t$.
  Since $t \geq 36\adeg^2h^5$ (see Equation \eqref{eq:def-t}), it follows that $\deg_{\widetilde{B}}(i) \geq 36\adeg^2h^5$.
 \end{claimproof}
 
 \begin{claim}
  \label{cl:biregular-tildeb}
  $\widetilde{B}$ is biregular.
 \end{claim}
 \begin{claimproof}
  This follows from Observation \ref{obs:l-sets-invariant} and Lemma \ref{la:factor-graph-2-wl}.
 \end{claimproof}
 
 We shall prove that there is a subgraph $\widehat{B} \subseteq \widetilde{B}$ such that
 \begin{enumerate}[label=(\Roman*)]
  \item\label{item:bipartite-path-set-1} $\deg_{\widehat{B}}(i) \leq 12\adeg^2h^5$ for all $i \in [k]$,
  \item\label{item:bipartite-path-set-2} $|E(\widehat{B})| \geq 3\adeg^2h^5k$, and
  \item\label{item:bipartite-path-set-3} for every $e = vi \in E(\widehat{B})$ there is a path $P_e$ from $D_i$ to $v$ of length $r+1$ such that all paths $P_e$, $e \in E(\widehat{B})$, are internally vertex-disjoint.
 \end{enumerate}
 As before, we initialize $\widehat{B}$ with the same vertex set $V(\widehat{B}) \coloneqq V(\widetilde{B})$ and empty edge set.
 Observe that this satisfies Properties \ref{item:bipartite-path-set-1} and \ref{item:bipartite-path-set-3}.
 We iteratively add edges until Property \ref{item:bipartite-path-set-2} is satisfied while maintaining Properties \ref{item:bipartite-path-set-1} and \ref{item:bipartite-path-set-3}.
 
 So let $\widehat{B}$ be the current subgraph and suppose $|E(\widehat{B})| < 3\adeg^2h^5k$.
 We argue that one can add a single edge to $\widehat{B}$ while maintaining Properties \ref{item:bipartite-path-set-1} and \ref{item:bipartite-path-set-3}.
 Let
 \[E_L^1 \coloneqq \bigcup_{vi \in E(\widehat{B})} W(vi)\]
 be the set of edges from $E_L$ that witness existing edges from $\widehat{B}$.
 It follows from Claim \ref{cl:degree-bound-tildeb} that $|E(\widetilde{B})| \geq 36\adeg^2h^5k$.
 Hence, we have that
 \[|E_L^1| = \frac{|E(\widehat{B})|}{|E(\widetilde{B})|} \cdot |E_L| \leq \frac{1}{12}|E_L|\] by Claim \ref{cl:witness-equipartition}.
 Next, let
 \[E_L^2 \coloneqq \bigcup_{vi \in E(\widehat{B})\colon \deg_{\widehat{B}}(i) = 12\adeg^2h^5} W(vi).\]
 Also, let $U \coloneqq \{i \in [k] \mid \deg_{\widehat{B}}(i) = 12\adeg^2h^5\}$.
 Since $E(\widehat{B}) < 3\adeg^2h^5k$ and $|U| \cdot 12\adeg^2h^5 \geq |E(\widehat{B})|$ it follows that $|U| < \frac{k}{4}$.
 Hence, $|\{vi \in E(\widetilde{B}) \mid i \in U\}| < \frac{1}{4}|E(\widetilde{B})|$ using Claim \ref{cl:biregular-tildeb}.
 So
 \[|E_L^2| = \frac{|\{vi \in E(\widetilde{B}) \mid i \in U\}|}{|E(\widetilde{B})|}\cdot |E_L| < \frac{1}{4} |E_L|\]
 by Claim \ref{cl:witness-equipartition}.
 
 For $i \in [k]$, let $P_1^i,\dots,P_{d_i}^i$ denote the result of intersecting the paths $P_e$, $e \in E(\widehat{B})$, with the set $L_i^{\leq r}$.
 Observe that $d_i = \deg_{\widehat{B}}(i)$.
 Also let $V_L \coloneqq \{v \in V(G) \mid \chi(v,v) = c_L\}$.
 For $i \in [k]$ such that $\deg(i) < 12a^2h^5$ let $U_i \coloneqq (V_L \cap L_i^{=r}) \setminus \Exp_i(P_1^i,\dots,P_{d_i}^i)$.
 By Lemma \ref{la:expansion-set} we have that
 \[|U_i| \leq \frac{1}{6}|V_L \cap L_i^{=r}|.\]
 Let
 \[E_L^3 \coloneqq \{wv \in E_L \mid w \in L_{i}^{=r}, \deg_{\widehat{B}}(i) < 12\adeg^2h^5, w \in U_i\}.\]
 We have that
 \[\left|\bigcup_{i \in [k] \setminus U} U_i\right| = \sum_{i \in [k]\setminus U} |U_i| \leq \sum_{i \in [k]\setminus U} \frac{1}{6}|V_L \cap L_i^{=r}| = \sum_{i \in [k]\setminus U} \frac{1}{6}\cdot \frac{|V_L|}{k} \leq \frac{1}{6} |V_L|\]
 using Observation \ref{obs:l-sets-invariant}.
 Moreover,
 \[E_L^3 = \{vw \in E_L \mid w \in \bigcup_{i \in [k] \setminus U} U_i\}.\]
 Since the graph $G[\bar c_{2r+2}]$ (i.e., the graph induced by $E_L$) is biregular, we conclude that $|E_L^3| \leq \frac{1}{6}|E_L|$.
 
 In total, this means $|E_L^1| + |E_L^2| + |E_L^3| < |E_L|$ and there is some edge $vw \in E_L \setminus (E_L^1 \cup E_L^2 \cup E_L^3)$.
 Let $i \in [k]$ such that $vw \in W(vi)$.
 Then $vi \notin E(\widehat{B})$ and $\deg_{\widehat{B}}(i) < 12\adeg^2h^5$, and $w \in \Exp_i(P_1^i,\dots,P_{d_i}^i)$.
 Hence, by the definition of the expansion set, we can add the edge $vi$ to the graph $\widehat{B}$ while maintaining Properties \ref{item:bipartite-path-set-1} and \ref{item:bipartite-path-set-3}.
 
 Repeating this argument until Property \ref{item:bipartite-path-set-2} is satisfied, we obtain a graph $\widehat{B} \subseteq \widetilde{B}$ satisfying Properties \ref{item:bipartite-path-set-1}, \ref{item:bipartite-path-set-2} and \ref{item:bipartite-path-set-3}.
 
 \medskip
 
 Now consider the set of paths $P_e$, $e \in E(\widehat{B})$, from Property \ref{item:bipartite-path-set-3}.
 By the length constraint on the paths, for $vi = e \in E(\widehat{B})$, it holds that $P_e$ is a path from $S_i$ to $v$ with all internal vertices contained in $R$ (cf.\ Equation \eqref{eq:def-s-sets}).
 We construct a graph $B^*$ with vertex set $V(B^*) \coloneqq V_M \cup \bigcup_{i \in [k]} S_i$ and edges $vw$ whenever there is a path $P_e$, $e \in E(\widehat{B})$, from $v$ to $w$.
 Then $B^*$ is a topological subgraph of $G$,
 \[|V(B^*)| = |V_M| + |S| < |V_M| + kh\]
 and
 \[|E(B^*)| = |E(\widehat{B})| \geq 3\adeg^2h^5k\]
 using Property \ref{item:bipartite-path-set-2}.
 Since $B^*$ is a topological subgraph of $G$ and $G$ has no topological subgraph isomorphic to $K_h$ by Assumption \ref{item:assumption-2},
 \[2|E(B^*)| \leq \adeg h^2|V(B^*)|\]
 by Theorem \ref{thm:average-degree-excluded-topological}.
 Together, this implies that
 \[6\adeg^2h^5k \leq \adeg h^2 \cdot |V(B^*)| < \adeg h^2(|V_M| + kh).\]
 Thus,
 \[|V_M| > 3\adeg h^3k.\]
\end{proof}

To build a counting argument similar to the previous case, we also require the following simple auxiliary lemma.

\begin{lemma}
 \label{la:path-in-color-path}
 Let $G$ be a graph and let $X_1,\dots,X_m$ be distinct color classes of a $1$-stable coloring $\lambda$ such that $E_G(X_i,X_{i+1}) \neq \emptyset$.
 Also let $B_i \subseteq X_i$ for all $i \in [m]$ and define $b \coloneqq \sum_{i \in [m]}\frac{|B_i|}{|X_i|}$.
 Suppose that $b < 1$.
 Then there exists a path $u_1,\dots,u_m$ such that $u_i \in X_i \setminus B_i$ for all $i \in [m]$.
\end{lemma}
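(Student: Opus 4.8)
The plan is to mirror the expansion argument from the proof of Lemma~\ref{la:disjoint-paths-from-stable-coloring}, but in the simpler setting of a single path that must avoid the forbidden sets $B_i$. First I would record the structural fact that, since $\lambda$ is $1$-stable and $X_i,X_{i+1}$ are distinct color classes with $E_G(X_i,X_{i+1})\neq\emptyset$, the bipartite graph $G[X_i,X_{i+1}]$ is non-empty and biregular. As explained in the preliminaries, this yields the expansion inequality $|N_G(S)\cap X_{i+1}|\geq \tfrac{|S|}{|X_i|}\cdot|X_{i+1}|$ for every $S\subseteq X_i$.

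Next I would define, for $i\in[m]$, the set $A_i$ of all vertices $u_i\in X_i\setminus B_i$ for which there exists a path $u_1,\dots,u_i$ with $u_j\in X_j\setminus B_j$ for all $j\in[i]$; note that the $u_j$ are automatically pairwise distinct since the $X_j$ are distinct color classes, so such a walk really is a path. Concretely $A_1=X_1\setminus B_1$ and $A_{i+1}=\big(N_G(A_i)\cap X_{i+1}\big)\setminus B_{i+1}$. The core of the argument is the claim that
\[
\frac{|A_i|}{|X_i|}\ \geq\ 1-\sum_{j\leq i}\frac{|B_j|}{|X_j|}
\]
for all $i\in[m]$, proved by induction on $i$. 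The base case $i=1$ is immediate from $|A_1|=|X_1|-|B_1|$. For the inductive step, the expansion inequality gives $|N_G(A_i)\cap X_{i+1}|\geq \tfrac{|A_i|}{|X_i|}|X_{i+1}|$, hence $|A_{i+1}|\geq \tfrac{|A_i|}{|X_i|}|X_{i+1}|-|B_{i+1}|$; dividing by $|X_{i+1}|$ and applying the induction hypothesis yields the claim.

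Finally, plugging $i=m$ into the claim gives $|A_m|/|X_m|\geq 1-b>0$ by the hypothesis $b<1$, so $A_m\neq\emptyset$; choosing any $u_m\in A_m$ and a corresponding path $u_1,\dots,u_m$ completes the proof. I do not expect a genuine obstacle here: the only point requiring care is the justification that $G[X_i,X_{i+1}]$ is biregular (which is exactly the input already exploited in Lemma~\ref{la:disjoint-paths-from-stable-coloring}) together with the bookkeeping in the induction; everything else is a direct double-counting estimate.
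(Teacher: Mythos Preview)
Your proposal is correct and follows essentially the same approach as the paper: both define the set of endpoints reachable by an admissible partial path and prove by induction the equivalent inequality $|A_i|/|X_i|\geq 1-\sum_{j\le i}|B_j|/|X_j|$ (the paper phrases it as an upper bound on $|X_i\setminus A_i|$), using the biregularity of $G[X_i,X_{i+1}]$ for the expansion step.
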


\begin{proof}
 For $\ell \leq m$ define
 \[R_\ell \coloneqq \{u_\ell \in X_\ell \setminus B_\ell\mid \text{there is a path $u_1,\dots,u_\ell$ such that $u_i \in X_i \setminus B_i$ for all $i \in [\ell]$}\}.\]
 We prove by induction on $\ell \leq m$ that
 \[|X_\ell \setminus R_\ell| \leq |X_\ell| \sum_{i \in [\ell]}\frac{|B_i|}{|X_i|}.\]
 Observe that this implies that $|R_m| > 0$ which gives the desired path.
 
 The base step is trivial since $X_1 \setminus R_1 = B_1$.
 For the inductive step suppose $\ell \geq 1$.
 Then
 \[N_G(R_\ell) \cap X_{\ell+1} \subseteq R_{\ell+1} \cup B_{\ell+1}.\]
 Hence,
 \[|R_{\ell+1}| \geq |X_{\ell+1} \cap N_G(R_{\ell})| - |B_{\ell+1}| \geq |R_\ell| \frac{|X_{\ell+1}|}{|X_\ell|} - |B_{\ell+1}|.\]
 This means that
 \begin{align*}
  |X_{\ell+1} \setminus R_{\ell+1}| &\leq |X_{\ell+1}| - |R_\ell| \frac{|X_{\ell+1}|}{|X_\ell|} + |B_{\ell+1}|\\
                                    &\leq |X_{\ell+1}|\left( 1 - \frac{|R_\ell|}{|X_\ell|} + \frac{|B_{\ell+1}|}{|X_{\ell+1}|}\right)\\
                                    &=    |X_{\ell+1}|\left( \frac{|X_\ell \setminus R_\ell|}{|X_\ell|} + \frac{|B_{\ell+1}|}{|X_{\ell+1}|}\right)\\
                                    &\leq |X_{\ell+1}|\left( \sum_{i \in [\ell]}\frac{|B_i|}{|X_i|} + \frac{|B_{\ell+1}|}{|X_{\ell+1}|}\right)\\
                                    &=    |X_{\ell+1}|\sum_{i \in [\ell+1]}\frac{|B_i|}{|X_i|}.
 \end{align*}
\end{proof}

Now, in order to construct the desired set of paths, we proceed similar to the case that $p$ is odd.
The paths are constructed one-by-one and in each iteration, the set of paths is extended until Property \ref{item:path-set-2} is satisfied while always maintaining Properties \ref{item:path-set-1} and \ref{item:path-set-3}.
So fix a subgraph $\widehat{F} \subseteq F$ which satisfies Properties \ref{item:path-set-1} and \ref{item:path-set-3}, but violates Property \ref{item:path-set-2} (initially, the edge set is empty).
Let $P_e$, $e \in E(\widehat{F})$, be the corresponding set of paths.

For $i \in [k]$ let $P_1^i,\dots,P_{d_i}^i$ denote the result of intersecting the paths $P_e$, $e \in E(\widehat{F})$, with the set $L_i^{\leq r}$.
Observe that $d_i = \deg_{\widehat{F}}(D_i)$.

Let $V_L \coloneqq \{v \in V(G) \mid \chi(v,v) = c_L\}$ and $V_R \coloneqq \{v \in V(G) \mid \chi(v,v) = c_R\}$.
Moreover, define
\begin{align*}
 B_L \coloneqq\;\;\; &\{v \in V_L \mid v \in L_i^{=r} \setminus \Exp_i(P_1^i,\dots,P_{d_i}^i)\}\\
              \cup\; &\{v \in V_L \mid v \in L_i^{=r} \wedge \deg_{\widehat{F}}(D_i) = d\}
\end{align*}
and
\begin{align*}
 B_R \coloneqq\;\;\; &\{v \in V_R \mid v \in L_i^{=r} \setminus \Exp_i(P_1^i,\dots,P_{d_i}^i)\}\\
              \cup\; &\{v \in V_R \mid v \in L_i^{=r} \wedge \deg_{\widehat{F}}(D_i) = d\}
\end{align*}
Also define
\[X_M \coloneqq \{(u,v,w) \mid \chi(u,v) = \bar c_{2r+2}, \chi(v,w) = \bar c_{2r+4}, u \in L_i^{=r}, w \in L_j^{=r}, D_iD_j \in E(F)\}\]
and let
\begin{align*}
 B_M \coloneqq\;\;\; &\{(u,v,w) \in X_M \mid v \in V_M((P_e)_{e \in E(\widehat{F})})\}\\
              \cup\; &\{(u,v,w) \in X_M \mid u \in L_i^{=r}, w \in L_j^{=r}, D_iD_j \in E(\widehat{F})\}
\end{align*}
Consider graph $\widetilde{B} \coloneqq (V_L \uplus X_M \uplus V_R, \widetilde{E})$ where
\[\widetilde{E} \coloneqq \{u(u,v,w),w(u,v,w) \mid (u,v,w) \in X_M\}.\]
Also define $\widetilde{\lambda}\colon V(\widetilde{B}) \rightarrow \{1,2,3\}$ via
\[\widetilde{\lambda}(v) \coloneqq \begin{cases}
                                    1 &\text{if } v \in V_L\\
                                    2 &\text{if } v \in X_M\\
                                    3 &\text{if } v \in V_R
                                   \end{cases}\]
For $v \in V_M$ define $X_M(v) \coloneqq \{(u,w) \mid (u,v,w) \in X_M\}$.

\begin{lemma}
 \label{la:x-m-blocks}
 The coloring $\widetilde{\lambda}$ is $1$-stable on the graph $\widetilde{B}$.
 Also, $|X_M(v)| = |X_M(v')|$ for all $v,v' \in V_M$.
\end{lemma}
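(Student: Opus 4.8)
The plan is to read off both assertions from the rigid tripartite shape of $\widetilde B$ together with the coherence of the $2$-stable colouring $\chi$ on $G$. By construction $\widetilde B$ is tripartite with parts $V_L$, $X_M$, $V_R$ carrying the colours $1,2,3$ under $\widetilde\lambda$, and every edge of $\widetilde B$ joins $V_L$ with $X_M$ or $X_M$ with $V_R$. First I would note that every vertex $(u,v,w)\in X_M$ has $\widetilde B$-degree exactly $2$, its two neighbours being the $V_L$-copy of $u$ and the $V_R$-copy of $w$; hence, after one round of Colour Refinement started from $\widetilde\lambda$, every colour-$2$ vertex receives the constant neighbourhood multiset $\{\!\!\{1,3\}\!\!\}$. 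Since $\widetilde B$ carries no input colouring, $\widetilde\lambda$ trivially refines $\WLit{1}{0}{\widetilde B}$, so $1$-stability of $\widetilde\lambda$ reduces to the claim that $\deg_{\widetilde B}$ is constant on $V_L$ and constant on $V_R$ (the neighbourhood of a colour-$1$ or colour-$3$ vertex consists solely of colour-$2$ vertices, so its refinement multiset is just that many copies of $2$).

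The second step is to make membership in $X_M$ visible to $\chi$ at the level of pairs. By Observation~\ref{obs:l-sets-invariant} the set $L^{=r}=\bigcup_{i}L_i^{=r}$ is $\chi$-invariant, and since $c_L=\bar c_{2r+1}$ and $c_R=\bar c_{2r+5}$ occur on vertices of the original $\bar c$-path that lie in $L^{=r}$, we get $V_{c_L}\cup V_{c_R}\subseteq L^{=r}$. Moreover $\bar c_{2r+2}$ and $\bar c_{2r+4}$ are edge colours whose endpoint vertex-colours are $c_L,c_M$ respectively $c_M,c_R$, so $2$-stability of $\chi$ forces $\chi(u,v)=\bar c_{2r+2}\Rightarrow u\in V_{c_L}\subseteq L^{=r}$ and $v\in V_M$, and symmetrically $\chi(v,w)=\bar c_{2r+4}\Rightarrow v\in V_M$ and $w\in V_{c_R}\subseteq L^{=r}$. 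The only remaining requirement in the definition of $X_M$ is that $D_iD_j\in E(F)$, where $i,j$ are the unique indices with $u\in L_i^{=r}$, $w\in L_j^{=r}$. Here I would unwind the proof of Observation~\ref{obs:l-sets-invariant}: the sets $L_i^{\leq r}\cup D_i$ are precisely the connected components of $G[C]$ for the $\chi$-invariant colour set $C$ constructed there, so by Lemma~\ref{la:factor-graph-2-wl} the pair-colour $\chi(u,w)$ determines $(\chi/C)(L_i^{\leq r}\cup D_i,\,L_j^{\leq r}\cup D_j)$, which in turn determines $(\chi/C_\sim)(D_i,D_j)$ and hence, by Corollary~\ref{cor:d-sets-invariant} and the definition of $F$, whether $D_iD_j\in E(F)$. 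Thus there is a colour set $\hat C\subseteq\{\chi(x,y)\mid x\neq y\}$ such that $(u,v,w)\in X_M$ if and only if $\chi(u,v)=\bar c_{2r+2}$, $\chi(v,w)=\bar c_{2r+4}$ and $\chi(u,w)\in\hat C$; in particular, membership of a triple in $X_M$ depends only on the three pair-colours among its entries.

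The last step is a counting argument with the intersection numbers of the coherent configuration given by $\chi$. For $u\in V_L$ we have
\[\deg_{\widetilde B}(u)=\sum_{c\in\hat C}\bigl|\{(v,w)\mid\chi(u,v)=\bar c_{2r+2},\ \chi(v,w)=\bar c_{2r+4},\ \chi(u,w)=c\}\bigr|,\]
and writing a summand as $\sum_{v:\chi(u,v)=\bar c_{2r+2}}\bigl|\{w\mid\chi(u,w)=c,\ \chi(v,w)=\bar c_{2r+4}\}\bigr|$ and using that both the number of $\bar c_{2r+2}$-coloured out-arcs at $u$ and the inner count (a structure constant of the coherent configuration) depend only on $\chi(u,u)=c_L$ shows that $\deg_{\widetilde B}(u)$ is the same for all $u\in V_L$; the argument for $V_R$ is identical with $c_L$ replaced by $c_R$, which gives $1$-stability of $\widetilde\lambda$. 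For the remaining claim, the same decomposition gives $|X_M(v)|=\sum_{c\in\hat C}\bigl|\{(u,w)\mid\chi(u,v)=\bar c_{2r+2},\ \chi(v,w)=\bar c_{2r+4},\ \chi(u,w)=c\}\bigr|$, which by the same coherence argument depends only on $\chi(v,v)=c_M$ and hence is the same for all $v\in V_M$. I expect the main obstacle to be the middle step: certifying that the graph-theoretic condition $D_iD_j\in E(F)$ is faithfully encoded by the single pair-colour $\chi(u,w)$ requires carefully matching up the factor structures associated with $C^{\leq r}$, with $C_\sim$, and with their union, and it is exactly here that one needs $2$-stability of $\chi$ rather than merely $\chi$-invariance of the individual sets involved.
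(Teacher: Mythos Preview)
Your proposal is correct and follows essentially the same route as the paper's proof: both observe that vertices of $X_M$ have exactly one neighbour in each of $V_L$ and $V_R$, both reduce $1$-stability to constancy of $\deg_{\widetilde B}$ on $V_L$ and on $V_R$, both establish (via Corollary~\ref{cor:d-sets-invariant}, Observation~\ref{obs:l-sets-invariant} and Lemma~\ref{la:factor-graph-2-wl}) that the condition $D_iD_j\in E(F)$ is encoded by a colour set on the pair $(u,w)$, and both finish by the standard counting with the structure constants of the coherent configuration induced by~$\chi$. Your write-up is somewhat more explicit about the intermediate reductions (in particular the passage from $(\chi/C)$-data to $(\chi/C_\sim)$-data), but the argument is the same.
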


\begin{proof}
 Clearly, for each $(u,v,w) \in X_M$, it holds that $|N_{\widetilde{B}}((u,v,w)) \cap \widetilde{\lambda}^-1(c)| = 1$ for both $c \in \{1,3\}$.
 So consider $u,u' \in V_L$ such that $u \in F_i^{=r}$ and $u \in F_{i'}^{=r}$.
 Since $\chi(u,u) = \chi(u',u')$ we conclude that
 \[|\{v \in V_M \mid \chi(u,v) = \bar c_{2r+2}\}| = |\{v' \in V_M \mid \chi(u',v') = \bar c_{2r+2}\}|.\]
 Also, given vertices $v,v'$ such that $\chi(u,v) = \bar c_{2r+2}$ and $\chi(u',v') = \bar c_{2r+2}$ it holds that
 \[|\{w \in V_R \mid \chi(v,w) = \bar c_{2r+4}, \chi(u,w) \in C\}| = |\{w \in V_R \mid \chi(v',w') = \bar c_{2r+4}, \chi(u',w') \in C\}|\]
 for every set of colors $C \subseteq \{\chi(v'',w'') \mid v'',w'' \in V(G), v'' \neq w''\}$.
 By the definition of the graph $F$, Corollary \ref{cor:d-sets-invariant}, Observation \ref{obs:l-sets-invariant} and Lemma \ref{la:factor-graph-2-wl} there is a set of colors $C \subseteq \{\chi(v'',w'') \mid v'',w'' \in V(G), v'' \neq w''\}$
 such that $(u,w) \in C$ if and only if there is an edge $D_iD_j \in E(F)$ such that $u \in L_i^{=r}$ and $w \in L_j^{=r}$.
 By the definition of the set $X_M$ this implies that $\deg_{\widetilde{B}}(u) = \deg_{\widetilde{B}}(u')$.
 By symmetry, it also holds that $\deg_{\widetilde{B}}(w) = \deg_{\widetilde{B}}(w')$ for all $w,w' \in V_R$.
 Finally, the same arguments also imply that $|X_M(v)| = |X_M(v')|$ for all $v,v' \in V_M$.
\end{proof}

In order to extend the graph $\widehat{F}$ it suffices to find a path $v_L,v_M,v_R$ in the graph $\widetilde{B}$ such that $v_L \in V_L\setminus B_L$, $v_M \in X_M \setminus B_M$, and $v_R \in V_R \setminus B_R$.
By Lemma \ref{la:path-in-color-path} it suffices to argue that $\frac{|B_L|}{|V_L|} + \frac{|B_M|}{|X_M|} + \frac{|B_R|}{|V_R|} < 1$.
Towards this end, we provide upper bounds on the three addends.

\begin{lemma}
 \label{la:even-path-left-right-bound}
 $|B_L| < \frac{3}{8}|V_L|$ and $|B_R| < \frac{3}{8}|V_R|$.
\end{lemma}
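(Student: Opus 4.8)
The plan is to bound $|B_L|$ (and, by an identical argument, $|B_R|$) by a union bound over the two sets making up $B_L$, exploiting that the level sets $L_1^{=r},\dots,L_k^{=r}$ all carry the same colour statistics. First I would recall from the proof of Lemma~\ref{la:odd-path-extension-witness} that $U \coloneqq \{i \in [k] \mid \deg_{\widehat{F}}(D_i) = d\}$ satisfies $|U| < k/4$: since $\widehat{F}$ violates Property~\ref{item:path-set-2} we have $2|E(\widehat{F})| < \adeg h^3 k$, whereas $2|E(\widehat{F})| = \sum_{i \in [k]}\deg_{\widehat{F}}(D_i) \geq |U| \cdot d = 4\adeg h^3|U|$.

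Next I would record two easy consequences of Observation~\ref{obs:l-sets-invariant}. First, $V_L \subseteq L^{=r}$: the colour $c_L = \bar c_{2r+1}$ is the vertex colour $\chi(u_r,u_r)$ of the vertex $u_r$ at distance $r$ from $D_i$ on a $\bar c$-path, so $u_r \in L_i^{=r} \subseteq L^{=r}$, and since $L^{=r}$ is $\chi$-invariant the whole colour class $V_{c_L}$ lies in $L^{=r}$; the same reasoning gives $V_R \subseteq L^{=r}$ since $c_R = \bar c_{2r+5}$ is likewise the vertex colour of a vertex of some $L_j^{=r}$ appearing on a $\bar c$-path. Second, since the sets $L_i^{\leq r}$ are $\chi$-invariant with the same induced colourings (Observation~\ref{obs:l-sets-invariant} together with Lemma~\ref{la:factor-graph-2-wl}), it follows that $|V_L \cap L_i^{=r}| = |V_L|/k$ for every $i \in [k]$, and as the $L_i^{=r}$ are pairwise disjoint, $V_L$ is the disjoint union of these $k$ sets of common size $|V_L|/k$. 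Using that $\Exp_i(P_1^i,\dots,P_{d_i}^i) \subseteq L_i^{=r}$, the definition of $B_L$ unfolds into the disjoint union
\[ B_L = \Big(\bigcup_{i \in U}(V_L \cap L_i^{=r})\Big) \cup \Big(\bigcup_{i \in [k] \setminus U}\big((V_L \cap L_i^{=r}) \setminus \Exp_i(P_1^i,\dots,P_{d_i}^i)\big)\Big). \]

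For the first part the contribution is exactly $|U| \cdot |V_L|/k$. For the second part, every $i \notin U$ satisfies $d_i = \deg_{\widehat{F}}(D_i) < d = 4\adeg h^3 < 12\adeg^2 h^5$, so Lemma~\ref{la:expansion-set} applies with $c \coloneqq c_L$; since $t \geq t(h) = 144\adeg^2 h^5$ gives $24\adeg^2 h^5 / t \leq 1/6$, we obtain $|\Exp_i(P_1^i,\dots,P_{d_i}^i) \cap V_{c_L}| \geq \frac{5}{6}|L_i^{=r} \cap V_L| = \frac{5}{6}\cdot|V_L|/k$, hence each of the $k - |U|$ terms contributes at most $\frac{1}{6}\cdot|V_L|/k$. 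Adding the two parts, $|B_L| \leq \big(\frac{|U|}{k} + \big(1 - \frac{|U|}{k}\big)\cdot\frac16\big)|V_L|$, which is strictly increasing in $|U|/k$; since $|U|/k < \frac14$ this is strictly less than $\big(\frac14 + \frac34\cdot\frac16\big)|V_L| = \frac38|V_L|$. The bound $|B_R| < \frac38|V_R|$ follows verbatim with $c_R$ in place of $c_L$. I do not expect a genuine obstacle here: the only points requiring attention are checking $d < 12\adeg^2 h^5$ so that Lemma~\ref{la:expansion-set} is applicable with $\ell = d_i$, and using the inclusions $V_L, V_R \subseteq L^{=r}$ so that ``$|V_L \cap L_i^{=r}| = |V_L|/k$'' genuinely follows from Observation~\ref{obs:l-sets-invariant}; everything else is the arithmetic just displayed, whose constants are tuned so that the two estimates combine exactly to $\frac38$.
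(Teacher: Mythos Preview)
Your proof is correct and follows essentially the same approach as the paper: bound $|U|$ via Property~\ref{item:path-set-2} being violated, use that the sets $V_L \cap L_i^{=r}$ have common size $|V_L|/k$, apply Lemma~\ref{la:expansion-set} for $i \notin U$, and combine to get the $\tfrac38$ bound. You are somewhat more explicit than the paper in verifying the side conditions (the inclusion $V_L \subseteq L^{=r}$, the bound $d_i < 12\adeg^2 h^5$, and the disjointness of the decomposition), but the underlying argument is identical.
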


\begin{proof}
 By symmetry, it suffices to prove the first inequality.
 Let $U \coloneqq \{i \in [k] \mid \deg_{\widehat{F}}(D_i) = d\}$.
 Then $|E(\widehat{F})| \geq \frac{d}{2}|U| = 2\adeg h^3|U|$.
 Since $|E(\widehat{F})| < \frac{1}{2}\adeg h^3k$ it follows that $|U| < \frac{k}{4}$.
 
 Also, $|V_L \cap F_i^{=r}| = |V_L \cap F_j^{=r}|$ for all $i,j \in [k]$ by Observation \ref{obs:l-sets-invariant} and Lemma \ref{la:factor-graph-2-wl}.
 Now let $i \in [k] \setminus U$.
 Then
 \[|\Exp_i(P_1^i,\dots,P_{d_i}^i) \cap V_L| \geq \frac{5}{6}\cdot|F_i^{=r} \cap V_L| = \frac{5}{6}\cdot\frac{|V_L|}{k}\]
 by Lemma \ref{la:expansion-set}.
 So overall,
 \[|B_L| \leq |U| \cdot \frac{|V_L|}{k} + (k - |U|) \cdot \frac{1}{6}\cdot\frac{|V_L|}{k} < \frac{1}{4}\cdot\frac{|V_L|}{k} + \frac{3}{4}\cdot\frac{1}{6}\cdot\frac{|V_L|}{k} = \frac{3}{8}|V_L|.\]
\end{proof}

Hence, it remains to bound $\frac{|B_M|}{|X_M|}$.
For $D_iD_j \in E(F)$ we define $W(D_iD_j) \coloneqq \{(u,v,w) \in X_M \mid u \in F_i^{=r},w \in F_j^{=r}\}$.

\begin{lemma}
 \label{la:even-path-witness-partition}
 The sets $W(D_iD_j)$, $D_iD_j \in E(F)$, form an equipartition of the set $X_M$.
\end{lemma}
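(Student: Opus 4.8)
The plan is to follow the proof of Lemma~\ref{la:odd-path-witness-edge-partition} almost verbatim, checking first that the sets $W(D_iD_j)$, $D_iD_j\in E(F)$, partition $X_M$, and then that they all have the same size.

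\emph{Partition.} Given $(u,v,w)\in X_M$, by definition there are $i,j\in[k]$ with $D_iD_j\in E(F)$, $u\in L_i^{=r}$ and $w\in L_j^{=r}$. Since $L_1^{\leq r},\dots,L_k^{\leq r}$ are pairwise disjoint (Observation~\ref{obs:l-sets-invariant}), so are $L_1^{=r},\dots,L_k^{=r}$, hence $i$ and $j$ are uniquely determined by $u$ and $w$; so $(u,v,w)$ lies in exactly one $W(D_iD_j)$. As $W(D_iD_j)\subseteq X_M$ by definition, the $W(D_iD_j)$ do partition $X_M$, and once we know the parts have equal size it follows that all of them are nonempty, since $X_M$ is nonempty (it contains the middle triple $(u_r,u_{r+1},u_{r+2})$ of the reference $\bar c$-path).

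\emph{Equal size.} I would count $|W(D_iD_j)|$ inside the contracted graph $G/C^{\leq r}$, where $C^{\leq r}$ is the colour set from Observation~\ref{obs:l-sets-invariant} for which $L_1^{\leq r},\dots,L_k^{\leq r}$ are the connected components of $G[C^{\leq r}]$, so that $\chi/C^{\leq r}$ is $2$-stable on $G/C^{\leq r}$ by Lemma~\ref{la:factor-graph-2-wl}. Since $\chi$ is $2$-stable on $G$, by the properties of the $2$-dimensional Weisfeiler-Leman algorithm the number $N(\chi(u,w))$ of vertices $v$ with $\chi(u,v)=\bar c_{2r+2}$ and $\chi(v,w)=\bar c_{2r+4}$ depends only on the colour $\chi(u,w)$. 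Therefore
\[|W(D_iD_j)| \;=\; \sum_{u\in L_i^{=r}}\ \sum_{w\in L_j^{=r}} N(\chi(u,w)) \;=\; \sum_{c} N(c)\cdot\big|\{(u,w)\in L_i^{=r}\times L_j^{=r}\mid \chi(u,w)=c\}\big|.\]
Because $L_i^{=r}$ is a union of $\chi$-colour classes contained in $L_i^{\leq r}$ (Observation~\ref{obs:l-sets-invariant}) and, by $2$-stability, an edge colour determines the colours of its two endpoints, each multiplicity $|\{(u,w)\in L_i^{=r}\times L_j^{=r}\mid \chi(u,w)=c\}|$ can be read off from the multiset $(\chi/C^{\leq r})(L_i^{\leq r},L_j^{\leq r})$. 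By Lemma~\ref{la:factor-graph-2-wl} any two such multisets are equal or disjoint; and since each region $L_i^{\leq r}$ is adjacent to exactly one closure set $D_i$ (by Lemma~\ref{la:no-edges-between-d-sets} and the disjointness of the $L^{\leq r}$-sets), the edge relation of $F$ --- although defined through the contraction $C_\sim$ of the closure sets in Corollary~\ref{cor:d-sets-invariant} --- is detected by $\chi/C^{\leq r}$ on pairs of supervertices. Hence the pairs $L_i^{\leq r},L_j^{\leq r}$ with $D_iD_j\in E(F)$ are exactly those whose multiset $(\chi/C^{\leq r})(L_i^{\leq r},L_j^{\leq r})$ equals the value realised by the reference $\bar c$-path, so this multiset, and therefore $|W(D_iD_j)|$, is the same for all $D_iD_j\in E(F)$.

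The step I expect to be the main obstacle is the last one: verifying carefully that passing to the $C^{\leq r}$-contraction loses no information relevant to $W(D_iD_j)$, and that membership of $D_iD_j$ in $E(F)$ is faithfully captured by $(\chi/C^{\leq r})(L_i^{\leq r},L_j^{\leq r})$. This is an application of Corollary~\ref{cor:d-sets-invariant}, Observation~\ref{obs:l-sets-invariant} and Lemma~\ref{la:factor-graph-2-wl}, mirroring the corresponding (terse) argument in the proof of Lemma~\ref{la:odd-path-witness-edge-partition}; the rest is routine manipulation of $2$-stable colourings.
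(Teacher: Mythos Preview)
Your argument is correct, and it mirrors the paper's odd-case proof (Lemma~\ref{la:odd-path-witness-edge-partition}) more closely than the paper's own even-case proof. The paper instead passes to the coarser contraction $C \coloneqq C_\sim \cup C^{\leq r} \cup \{\chi(v',w') \mid v'w' \in E(G),\, v' \in D_i,\, w' \in L_i^{\leq r}\}$, whose components are $D_i \cup L_i^{\leq r}$; this makes the correspondence with $E(F)$ immediate (there is a single colour set $C_F$ in the image of $\chi/C$ characterising the $F$-edges), and then counts $|W(D_iD_j)|$ as a product $n_L \cdot n_M \cdot n_R$, grouping by the middle vertex~$v$ first rather than by the pair~$(u,w)$ as you do. Your route buys conceptual uniformity with the odd case; the paper's route buys a cleaner link to the definition of~$F$ and avoids the step you flagged as the main obstacle.

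One small point: at that step you assert that the $F$-edge relation is \emph{exactly} captured by $(\chi/C^{\leq r})(L_i^{\leq r},L_j^{\leq r})$, but you only need (and only justify) one direction---that $D_iD_j \in E(F)$ implies the multiset equals the reference value. This direction follows from Lemma~\ref{la:f-edge-to-c-path} (giving a $\bar c$-path, hence a shared colour in the multiset) together with the equal-or-disjoint property of Lemma~\ref{la:factor-graph-2-wl}, and that is enough to conclude that $|W(D_iD_j)|$ is constant over $E(F)$. The converse is neither needed nor established by your citation of Lemma~\ref{la:no-edges-between-d-sets}; you can simply drop the word ``exactly''.
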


\begin{proof}
 By definition, for each element $(u,v,w) \in X_M$, there is a unique $D_iD_j \in E(F)$ such that $(u,v,w) \in W(D_iD_j)$.
 Hence, it only remains to prove that all sets $W(D_iD_j)$ have the same size.
 
 Let $C^{\leq r} \subseteq \{\chi(v,w) \mid v,w \in V(G), v \neq w\}$ be the set from Observation \ref{obs:l-sets-invariant} such that $L_1^{\leq r},\dots,L_k^{\leq r}$ are precisely the connected components of $G[C^{\leq r}]$.
 Moreover, let $C_\sim \subseteq \{\chi(v,w) \mid v,w \in V(G), v \neq w\}$ be the set from Corollary \ref{cor:d-sets-invariant} such that $D_1,\dots,D_k$ are precisely the connected components of $G[C_\sim]$.
 Finally, define
 \[C \coloneqq C_\sim \cup C^{\leq r} \cup \{\chi(v',w') \mid v'w' \in E(G), v' \in D_i, w' \in F_i^{\leq r}\}.\]
 Then $D_1 \cup L_1^{\leq r},\dots,D_k \cup L_k^{\leq r}$ are precisely the connected components of $G[C]$.
 
 Consider the graph $G/C$.
 By Lemma \ref{la:factor-graph-2-wl} the coloring $\chi/C$ is $2$-stable on the graph $G/C$.
 Moreover, by the definition of the graph $F$ and Lemma \ref{la:factor-graph-2-wl}, there is a color set $C_F$ such that $(\chi/C)(D_i \cup L_i^{\leq r},D_j \cup L_j^{\leq r}) = C_F$ if and only if $D_iD_j \in E(F)$.
 
 By Lemma \ref{la:factor-graph-2-wl} there is a unique color set $C_L$ in the image of $\chi/C$ such that $\bar c_{2r+2} \in C_L$.
 Similarly, there is a unique color set $C_R$ in the image of $\chi/C$ such that $\bar c_{2r+4} \in C_R$.
 Let $n_L$ and $n_R$ be the number of appearances of $\bar c_{2r+2}$ and $\bar c_{2r+4}$ in $C_L$ and $C_R$, respectively.
 
 Also, by the properties of the $2$-dimensional Weisfeiler-Leman algorithm, there is a number $n_M$ such that, for each $D_iD_j \in E(F)$, it holds that
 \[n_M = \big|\big\{v \in V_M \;\big|\; (\chi/C)(D_i \cup L_i^{\leq r},\{v\}) = C_L, (\chi/C)(\{v\},D_j \cup L_j^{\leq r}) = C_R\big\}\big|.\]
 By definition of the numbers $n_L$, $n_M$ and $n_R$ we get that
 \[W(D_iD_j) = n_L \cdot n_M \cdot n_R\]
 for all edges $D_iD_j \in E(F)$.
\end{proof}

\begin{lemma}
 \label{la:even-path-middle-bound}
 $|B_M| < \frac{1}{4}|X_M|$.
\end{lemma}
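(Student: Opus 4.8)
The plan is to split $B_M$ into its two defining parts, $B_M^1 \coloneqq \{(u,v,w) \in X_M \mid v \in V_M((P_e)_{e \in E(\widehat F)})\}$ (triples whose middle vertex is already used by an existing path) and $B_M^2 \coloneqq \{(u,v,w) \in X_M \mid u \in L_i^{=r},\, w \in L_j^{=r},\, D_iD_j \in E(\widehat F)\}$ (triples whose endpoint pair already spans an edge of $\widehat F$), and to bound each as a fraction of $|X_M|$. Since $B_M = B_M^1 \cup B_M^2$, it then suffices to show $|B_M^1| < \tfrac16|X_M|$ and $|B_M^2| < \tfrac1{12}|X_M|$. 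Throughout I use the three size facts available at this point: $|E(\widehat F)| < \tfrac12\adeg h^3k$ (because $\widehat F$ violates Property~\ref{item:path-set-2}), $|V_M| \ge 3\adeg h^3 k$ (Lemma~\ref{la:size-v-m}), and $|E(F)| = \tfrac12\sum_{i\in[k]}\deg_F(D_i) \ge 6\adeg h^3 k$ (Lemma~\ref{la:degree-bound-f}).

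The bound on $B_M^2$ is immediate from the equipartition of $X_M$ by the sets $W(D_iD_j)$, $D_iD_j\in E(F)$, proved in Lemma~\ref{la:even-path-witness-partition}: since $B_M^2 = \bigcup_{D_iD_j\in E(\widehat F)}W(D_iD_j)$, we get $|B_M^2| = \frac{|E(\widehat F)|}{|E(F)|}|X_M| < \frac{1}{12}|X_M|$.

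For $B_M^1$ I would first show that $V_M$ is disjoint from $D$ and from $L^{\le r}$. The colour $c_M = \bar c_{2r+3}$ is the vertex colour of an internal vertex $u_{r+1}$ of a shortest $\bar c$-path between two closure sets, and since $p$ is the minimum distance between distinct closure sets, $u_{r+1}$ lies at distance strictly more than $r$ from \emph{every} closure set $D_{i''}$; as the level sets $D=L^{=0},L^{=1},\dots$ are pairwise disjoint and $\chi$-invariant (Observation~\ref{obs:l-sets-invariant}), the colour $c_M$ therefore occurs at no level $\le r$, so $V_M\cap(D\cup L^{\le r})=\emptyset$. Consequently each path $P_e$ --- which by construction consists of a subpath inside some $L_i^{\le r}$, its middle vertex, which lies in $V_M$, and a subpath inside some $L_j^{\le r}$, with its two endpoints in $S$ --- meets $V_M$ in exactly one vertex, necessarily an internal one; since the $P_e$ are internally vertex-disjoint this gives $|V_M((P_e)_{e\in E(\widehat F)})| = |E(\widehat F)|$. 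Finally, by Lemma~\ref{la:x-m-blocks} the number $|X_M(v)|$ is a constant $q$ independent of $v\in V_M$, so $|X_M| = q|V_M|$ and $|B_M^1| = q\cdot|V_M((P_e)_{e\in E(\widehat F)})| = q|E(\widehat F)| = \frac{|E(\widehat F)|}{|V_M|}|X_M| < \frac16|X_M|$.

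Combining, $|B_M| \le |B_M^1| + |B_M^2| < \bigl(\tfrac16+\tfrac1{12}\bigr)|X_M| = \tfrac14|X_M|$. The one place where real care is needed is the middle-vertex count for $B_M^1$: establishing $V_M\cap(D\cup L^{\le r})=\emptyset$ and hence that each $P_e$ uses precisely one $V_M$-vertex. Everything else reduces to the two equipartition/uniformity lemmas (Lemmas~\ref{la:even-path-witness-partition} and~\ref{la:x-m-blocks}) and the arithmetic with $|E(\widehat F)|$, $|E(F)|$, and $|V_M|$.
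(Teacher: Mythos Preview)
Your proposal is correct and follows essentially the same approach as the paper's own proof: split $B_M$ into the two parts, bound $B_M^2$ via the equipartition of Lemma~\ref{la:even-path-witness-partition} together with $|E(\widehat F)|/|E(F)|<1/12$, and bound $B_M^1$ via the uniformity of $|X_M(v)|$ from Lemma~\ref{la:x-m-blocks} together with $|E(\widehat F)|/|V_M|<1/6$ from Lemma~\ref{la:size-v-m}. The only difference is that you spell out why $|V_M((P_e)_{e\in E(\widehat F)})|=|E(\widehat F)|$ (each $P_e$ meets $V_M$ in exactly its middle vertex), which the paper simply asserts; your justification is fine and indeed implicit in the iterative construction of the paths.
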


\begin{proof}
 Recall that $|E(\widehat{F})| < \frac{1}{2}\adeg h^3k$.
 On the other hand, $|E(F)| \geq 6\adeg h^3k$ by Lemma \ref{la:degree-bound-f}.
 This means that
\[|\{(u,v,w) \in X_M \mid u \in L_i^{=r}, w \in L_j^{=r}, D_iD_j \in E(\widehat{F})\}| < \frac{1}{12}|X_M|\]
 using Lemma \ref{la:even-path-witness-partition}.
 Moreover,
 \[|\{(u,v,w) \in X_M \mid v \in V_M((P_e)_{e \in E(\widehat{F})})\}| = \frac{|V_M((P_e)_{e \in E(\widehat{F})})|}{|V_M|}|X_M| = \frac{|E(\widehat{F})|}{|V_M|}|X_M|\]
 by Lemma \ref{la:x-m-blocks}.
 Also $\frac{|E(\widehat{F})|}{|V_M|} < \frac{1}{6}$ by Lemma \ref{la:size-v-m}.
 Together, this means that
 \[|B_M| < \frac{1}{12}|X_M| + \frac{1}{6}|X_M| = \frac{1}{4}|X_M|.\]
\end{proof}

\begin{proof}[Proof of Lemma \ref{la:find-many-disjoint-paths} for $p$ even]
 Let $\widehat{F}$ be a maximal subgraph of $F$ that satisfies Property \ref{item:path-set-1} and \ref{item:path-set-3}.
 Suppose towards a contraction that Property \ref{item:path-set-2} is violated, i.e., $2\cdot|E(\widehat{F})| < \adeg h^3k$.
 Let $P_e$, $e \in E(\widehat{F})$, be the corresponding set of paths guaranteed by Property \ref{item:path-set-3}.
 For $i \in [k]$ let $P_1^i,\dots,P_{d_i}^i$ denote the paths which are obtained from intersecting the paths $P_e$, $e \in E(\widehat{F})$, with the set $L_i^{\leq r}$.
 Observe that $d_i = \deg_{\widehat{F}}(D_i)$ and the path $P_j^i$ has length $r-1$ for all $i \in [k]$ and $j \in [d_i]$.

 By Lemma \ref{la:path-in-color-path}, \ref{la:even-path-left-right-bound} and Lemma \ref{la:even-path-middle-bound}, there is a path $v_L,v_M,v_R$ in the graph $\widetilde{B}$ such that $v_L \in V_L\setminus B_L$, $v_M \in X_M \setminus B_M$, and $v_R \in V_R \setminus B_R$.
 Suppose that $v_M = (u,v,w)$.
 Observe that $u = v_L$ and $w = v_R$ by the definition of the graph $\widetilde{B}$.
 Also suppose $u \in F_i^{=r}$ and $w \in F_j^{=r}$.
 Let $\widehat{F} + D_iD_j$ denote the graph obtained from $\widehat{F}$ by adding the edge $D_iD_j$.
 
 By the definition of the sets $B_L$ and $B_R$ we conclude that $\deg_{\widehat{F}}(D_i) < d$ and $\deg_{\widehat{F}}(D_j) < d$.
 Hence, $\widehat{F} + D_iD_j$ satisfies Property \ref{item:path-set-1}.
 Also, we conclude that $u \in \Exp_i(P_1^i,\dots,P_{d_i}^i)$ and $w \in \Exp_j(P_1^j,\dots,P_{d_j}^j)$.
 Thus, by definition of an extension set, there are vertex-disjoint paths $Q_1^i,\dots,Q_{d_i+1}^i$ from $L_i^{=1}$ to $L_i^{=r}$ of length $r-1$ such that
 \[L_i^{=r}(Q_1^i,\dots,Q_{d_i+1}^i) = L_i^{=r}(P_1^i,\dots,P_{d_i}^i) \cup \{u\}\]
 and vertex-disjoint paths $Q_1^j,\dots,Q_{d_j+1}^j$ from $L_j^{=1}$ to $L_j^{=r}$ of length $r-1$ such that
 \[L_j^{=r}(Q_1^j,\dots,Q_{d_j+1}^j) = L_j^{=r}(P_1^j,\dots,P_{d_j}^j) \cup \{w\}.\]
 Also, by the definition of the set $B_M$, the vertex $v$ is not occupied by any of the paths $P_e$, $e \in E(\widehat{F})$.
 This gives a set of paths $Q_e$, $e \in E(\widehat{F} + D_iD_j)$, witnessing Property \ref{item:path-set-3} for the graph $\widehat{F} + D_iD_j$.
 
 Also, by the definition of the set $B_M$, we get that $D_iD_j \notin E(\widehat{F})$.
 But this contradicts the maximality of the graph $\widehat{F}$.
\end{proof}

This completes the proof of Lemma \ref{la:find-many-disjoint-paths} and thus, we have also shown Theorem \ref{thm:initial-color-via-wl}.

\section{Group-Theoretic Techniques for Isomorphism Testing}
\label{sec:group-machinery}

Having established the necessary combinatorial tools, we can now turn to assembling the main algorithm.
Towards this end, we require several group-theoretic tools.
All the tools are based on \cite{Neuen22,Wiebking20} building extensions of Babai's algorithm to test isomorphism of hypergraphs and further abstract objects.
Before we can introduce those tools, we first cover some basic terminology on group theory.

Let us also remark that, from this point onwards, the isomorphism algorithm for graphs excluding $K_h$ as a topological subgraph is identical to the algorithm from \cite{GroheNW23} replacing the tools for computing the initial sets $X_1$ and $X_2$.

\subsection{Basics}

For a general background on group theory we refer to \cite{Rotman99}, whereas background on permutation groups can be found in \cite{DixonM96}.

\paragraph{Permutation groups.}

A \emph{permutation group} acting on a set $\Omega$ is a subgroup $\Gamma \leq \Sym(\Omega)$ of the symmetric group.
The size of the permutation domain $\Omega$ is called the \emph{degree} of $\Gamma$.
If $\Omega = [n]$, then we also write $S_n$ instead of $\Sym(\Omega)$.
For $\gamma \in \Gamma$ and $\alpha \in \Omega$ we denote by $\alpha^{\gamma}$ the image of $\alpha$ under the permutation $\gamma$.
The set $\alpha^{\Gamma} = \{\alpha^{\gamma} \mid \gamma \in \Gamma\}$ is the \emph{orbit} of $\alpha$.

For $\alpha \in \Omega$ the group $\Gamma_\alpha = \{\gamma \in \Gamma \mid \alpha^{\gamma} = \alpha\} \leq \Gamma$ is the \emph{stabilizer} of $\alpha$ in $\Gamma$.
The \emph{pointwise stabilizer} of a set $A \subseteq \Omega$ is the subgroup $\Gamma_{(A)} = \{\gamma \in \Gamma \mid\forall \alpha \in A\colon \alpha^{\gamma}= \alpha \}$.
For $A \subseteq \Omega$ and $\gamma \in \Gamma$ let $A^{\gamma} = \{\alpha^{\gamma} \mid \alpha \in A\}$.
The set $A$ is \emph{$\Gamma$-invariant} if $A^{\gamma} = A$ for all $\gamma \in \Gamma$.

For $A \subseteq \Omega$ and a bijection $\theta\colon \Omega \rightarrow \Omega'$ we denote by $\theta[A]$ the restriction of $\theta$ to the domain $A$.
For a $\Gamma$-invariant set $A \subseteq \Omega$, we denote by $\Gamma[A] \coloneqq \{\gamma[A] \mid \gamma \in \Gamma\}$ the induced action of $\Gamma$ on $A$, i.e., the group obtained from $\Gamma$ by restricting all permutations to $A$.
More generally, for every set $\Lambda$ of bijections with domain $\Omega$, we denote by $\Lambda[A] \coloneqq \{\theta[A] \mid \theta \in \Lambda\}$.

Let $\Gamma \leq \Sym(\Omega)$ and $\Gamma' \leq \Sym(\Omega')$.
A \emph{homomorphism from $\Gamma$ to $\Gamma'$} is a mapping $\varphi\colon \Gamma \rightarrow \Gamma'$ such that $\varphi(\gamma)\varphi(\delta) = \varphi(\gamma\delta)$ for all $\gamma,\delta \in \Gamma$.
For $\gamma \in \Gamma$ we denote by $\gamma^{\varphi}$ the $\varphi$-image of $\gamma$.
Similarly, for $\Delta \leq \Gamma$ we denote by $\Delta^{\varphi}$ the $\varphi$-image of $\Delta$ (note that $\Delta^{\varphi}$ is a subgroup of $\Gamma'$).

\paragraph{Algorithms for permutation groups.}

Next, let us review some basic facts about algorithms for permutation groups.
More details can be found in \cite{Seress03}.

In order to perform computational tasks for permutation groups efficiently the groups are represented by generating sets of small size (i.e., polynomial in the size of the permutation domain).
Indeed, most algorithms are based on so-called strong generating sets,
which can be chosen of size quadratic in the size of the permutation domain of the group and can be computed in polynomial time given an arbitrary generating set (see, e.g., \cite{Seress03}).

\begin{theorem}[cf.\ \cite{Seress03}] 
 \label{thm:permutation-group-library}
 Let $\Gamma \leq \Sym(\Omega)$ and let $S$ be a generating set for $\Gamma$.
 Then the following tasks can be performed in time polynomial in $|\Omega|$ and $|S|$:
 \begin{enumerate}
  \item compute the order of $\Gamma$,
  \item given $\gamma \in \Sym(\Omega)$, test whether $\gamma \in \Gamma$,
  \item compute the orbits of $\Gamma$, and
  \item given $A \subseteq \Omega$, compute a generating set for $\Gamma_{(A)}$.
 \end{enumerate}
\end{theorem}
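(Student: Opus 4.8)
The plan is to prove all four items through a single structure: the \emph{Schreier--Sims algorithm}, which computes a \emph{base} and \emph{strong generating set} for $\Gamma$ in time polynomial in $|\Omega|$ and $|S|$. Fix an ordering $\Omega = \{\alpha_1,\dots,\alpha_n\}$ (for item~4, order $\Omega$ so that $A = \{\alpha_1,\dots,\alpha_{|A|}\}$ forms an initial segment), and consider the point-stabilizer chain $\Gamma = \Gamma^{(0)} \geq \Gamma^{(1)} \geq \dots \geq \Gamma^{(n)} = \{\id\}$ where $\Gamma^{(i)} \coloneqq \Gamma_{(\{\alpha_1,\dots,\alpha_i\})}$. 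The algorithm maintains, for each level $i$, a generating set $T_i$ for $\Gamma^{(i)}$ together with the \emph{fundamental orbit} $\Delta_i \coloneqq \alpha_{i+1}^{\Gamma^{(i)}}$ and a transversal $R_i$ of coset representatives of $\Gamma^{(i+1)}$ in $\Gamma^{(i)}$, namely one element of $\Gamma^{(i)}$ mapping $\alpha_{i+1}$ to each point of $\Delta_i$.

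First I would handle item~3 as a stand-alone subroutine: the orbit $\beta^\Gamma$ of a point $\beta$ is computed by a breadth-first search on the Schreier graph with vertex set $\Omega$ and edges $\{(\delta,\delta^s) \mid s \in S\}$; this runs in time $O(|\Omega|\,|S|)$ and simultaneously yields, for each $\delta \in \beta^\Gamma$, a word in $S$ realising a group element mapping $\beta$ to $\delta$ (this is exactly how each transversal $R_i$ is built). Next comes the core: the orbits $\Delta_i$ and transversals $R_i$ are computed level by level using \emph{Schreier's lemma} --- if $\Gamma^{(i)} = \langle T_i\rangle$ and $R_i$ is a transversal as above, then $\Gamma^{(i+1)}$ is generated by the Schreier generators $\{\,r\,s\,\overline{rs}^{\,-1} \mid r \in R_i,\ s \in T_i\,\}$, where $\overline{g}$ denotes the representative in $R_i$ of the coset $\Gamma^{(i+1)}g$. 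Taking this set naively as $T_{i+1}$ would blow the generating sets up by a factor $|R_i| \leq |\Omega|$ per level, hence exponentially over $n$ levels; to prevent this, each newly produced Schreier generator is immediately \emph{sifted} down the partially built chain and the chain is augmented only when sifting fails to reach the identity (Sims's filter, or Jerrum's filter), which caps $|T_i|$ at $O(|\Omega|^2)$ throughout. Verifying that this filtering preserves the invariant $\langle T_i\rangle = \Gamma^{(i)}$ and runs in polynomial time is the technical heart of the argument; I would cite \cite{Seress03} for the detailed analysis rather than reproduce it.

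Given the base and strong generating set, the remaining items are immediate. For item~1, the orbit--stabilizer relation gives $|\Gamma| = \prod_{i=0}^{n-1} |\Delta_i| = \prod_{i=0}^{n-1} |R_i|$, a product of at most $n$ integers each bounded by $|\Omega|$. For item~2, to test $\gamma \in \Gamma$ one \emph{sifts} $\gamma$ through the chain: at level $i$, if $\alpha_{i+1}^{\gamma} \notin \Delta_i$ report failure, otherwise replace $\gamma$ by $\gamma r^{-1}$ where $r \in R_i$ is the transversal element with $\alpha_{i+1}^{r} = \alpha_{i+1}^{\gamma}$, thereby descending one level; then $\gamma \in \Gamma$ iff after level $n-1$ the residue is $\id$. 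For item~4, having arranged $A$ as an initial segment of the base, the group $\Gamma_{(A)}$ is exactly $\Gamma^{(|A|)}$ in the chain, so $T_{|A|}$ is the desired generating set, of polynomial size by the filtering bound. The main obstacle throughout is precisely the size-control of generating sets under repeated application of Schreier's lemma --- everything else is bookkeeping --- and this is resolved by the standard sifting filter, giving an overall running time polynomial in $|\Omega|$ and $|S|$ as claimed.
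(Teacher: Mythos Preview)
The paper does not prove this theorem at all --- it is stated as a known result with a citation to \cite{Seress03}, which is the standard reference for polynomial-time permutation-group algorithms via the Schreier--Sims method. Your outline is correct and is precisely the approach developed in that reference, so there is nothing to compare; if anything, your write-up supplies more detail than the paper intends at this point.
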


\paragraph{Groups with restricted composition factors.}

In this work, we shall be interested in a particular subclass of permutation groups, namely groups with restricted composition factors.
Let $\Gamma$ be a group.
A \emph{subnormal series} is a sequence of subgroups $\Gamma = \Gamma_0 \trianglerighteq \Gamma_1 \trianglerighteq \dots \trianglerighteq \Gamma_k = \{\id\}$.
The length of the series is $k$ and the groups $\Gamma_{i-1} / \Gamma_{i}$ are the factor groups of the series, $i \in [k]$.
A \emph{composition series} is a strictly decreasing subnormal series of maximal length. 
For every finite group $\Gamma$ all composition series have the same family (considered as a multiset) of factor groups (cf.\ \cite{Rotman99}).
A \emph{composition factor} of a finite group $\Gamma$ is a factor group of a composition series of $\Gamma$.

\begin{definition}
 For $d \geq 2$ let $\mgamma_d$ denote the class of all finite groups $\Gamma$ for which every composition factor of $\Gamma$ is isomorphic to a subgroup of $S_d$.
\end{definition}

Let us point out the fact that there are two similar classes of groups usually referred by $\Gamma_d$ in the literature.
The first is the class denoted by $\mgamma_d$ here originally introduced by Luks \cite{Luks82}, while the second one, for example used in \cite{BabaiCP82}, in particular allows composition factors that are simple groups of Lie type of dimension at most $d$.

\begin{lemma}[Luks \cite{Luks82}]
 \label{la:gamma-d-closure}
 Let $\Gamma \in \mgamma_d$. Then
 \begin{enumerate}
  \item $\Delta \in \mgamma_d$ for every subgroup $\Delta \leq \Gamma$, and
  \item $\Gamma^{\varphi} \in \mgamma_d$ for every homomorphism $\varphi\colon \Gamma \rightarrow \Delta$.
 \end{enumerate}
\end{lemma}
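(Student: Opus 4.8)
The plan is to treat the two assertions separately, since the quotient case is essentially immediate and only the subgroup case carries content. For the second assertion I would observe that $\Gamma^\varphi \cong \Gamma/\ker\varphi$ and recall that the composition factors of a quotient $\Gamma/K$ form a sub-multiset of those of $\Gamma$: refine the subnormal series $\Gamma \trianglerighteq K \trianglerighteq \{\id\}$ to a composition series of $\Gamma$, and apply the projection $\Gamma \to \Gamma/K$ to its initial segment to obtain a composition series of $\Gamma/K$ whose factors are factors of $\Gamma$. Hence every composition factor of $\Gamma^\varphi$ is a composition factor of $\Gamma$, and is therefore isomorphic to a subgroup of $S_d$, so $\Gamma^\varphi \in \mgamma_d$.

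For the first assertion the first step is the classical ``intersect the series'' argument. Fixing a composition series $\Gamma = \Gamma_0 \trianglerighteq \Gamma_1 \trianglerighteq \dots \trianglerighteq \Gamma_n = \{\id\}$ and setting $\Delta_i \coloneqq \Delta \cap \Gamma_i$, one obtains a subnormal series $\Delta = \Delta_0 \trianglerighteq \Delta_1 \trianglerighteq \dots \trianglerighteq \Delta_n = \{\id\}$ in which, by the second isomorphism theorem, $\Delta_{i-1}/\Delta_i \cong \Delta_{i-1}\Gamma_i/\Gamma_i \leq \Gamma_{i-1}/\Gamma_i$. Thus each factor $\Delta_{i-1}/\Delta_i$ is isomorphic to a subgroup of the composition factor $\Gamma_{i-1}/\Gamma_i$ of $\Gamma$, which by hypothesis embeds into $S_d$; so each $\Delta_{i-1}/\Delta_i$ is isomorphic to a subgroup of $S_d$. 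Refining $(\Delta_i)_i$ to a composition series of $\Delta$, every composition factor of $\Delta$ is a composition factor of one of the groups $\Delta_{i-1}/\Delta_i$, and therefore the assertion reduces to the following claim: \emph{every composition factor of a subgroup $G \leq S_d$ is isomorphic to a subgroup of $S_d$.}

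I would prove this claim by strong induction on $d$, distinguishing cases according to the action of $G$ on $\Omega \coloneqq [d]$. If $G$ leaves some proper nonempty subset $\Omega' \subsetneq \Omega$ invariant, then from $\{\id\} \to G_{(\Omega')} \to G \to G[\Omega'] \to \{\id\}$ the composition factors of $G$ split among those of $G[\Omega'] \leq \Sym(\Omega')$ and those of $G_{(\Omega')} \leq \Sym(\Omega \setminus \Omega')$, both acting faithfully on domains of size strictly less than $d$, and the induction hypothesis applies. If $G$ is transitive but imprimitive, fix a nontrivial block system with $k$ blocks ($2 \leq k < d$); the composition factors of $G$ are those of the block action $\bar G \leq S_k$ together with those of its kernel $K$, and $K$ fixes each block setwise, hence is intransitive on $\Omega$ and reduces to the previous case on smaller domains. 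Finally, if $G$ is primitive, let $N$ be a minimal normal subgroup, so $N \cong T^r$ for a simple group $T$; as $N$ is transitive (the orbits of a normal subgroup form a block system, so primitivity forces $N$ transitive), one has $G = NG_\alpha$ for any $\alpha \in \Omega$, hence $G/N \cong G_\alpha/N_\alpha$ is a quotient of the point stabilizer $G_\alpha \leq \Sym(\Omega \setminus \{\alpha\})$, whose composition factors embed into $S_{d-1}$ by induction. It remains to see that $T$ itself embeds into $S_d$: if $T$ is abelian then $N$ is elementary abelian and regular, so $d = |N|$ is a power of $p \coloneqq |T|$ and $T \cong C_p \hookrightarrow S_d$; and if $T$ is non-abelian, the direct factor $T_1 \coloneqq T \times 1 \times \dots \times 1 \trianglelefteq N$ acts faithfully (being simple, and $N$ being faithful) and transitively on each of its orbits, at least one of which has size at most $d$, so $T \cong T_1 \hookrightarrow S_d$.

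The hard part is precisely the claim about subgroups of $S_d$: the naive guess that a composition factor of a subgroup is a composition factor of the ambient group is false (e.g.\ $A_4 \leq A_5$), so the permutation-group structure must genuinely be exploited. The one nonobvious idea is the reduction in the primitive case, where $G/N$ is realised as a quotient of the point stabilizer $G_\alpha$ acting on one fewer point, which is what makes the induction close; everything else --- the series manipulations, the second isomorphism theorem, transitivity of minimal normal subgroups of primitive groups --- is routine, and in particular no appeal to the classification of finite simple groups is needed.
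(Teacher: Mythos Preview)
The paper does not prove this lemma at all; it is stated with attribution to Luks \cite{Luks82} and used as a black box. So there is no ``paper's own proof'' to compare against.

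Your argument is correct. The reduction to the claim that every composition factor of a subgroup of $S_d$ embeds into $S_d$ is exactly the right move (and your $A_4 \leq A_5$ example shows why the naive shortcut fails), and the transitive/imprimitive/primitive case analysis is the standard way to establish that claim. One minor simplification in the primitive non-abelian case: you do not need the orbit argument for $T_1$. Since $T_1 \leq N \leq G \leq \Sym(\Omega)$, the group $T_1$ is literally a subgroup of $S_d$, so $T \cong T_1 \hookrightarrow S_d$ immediately. (Your orbit argument is not wrong --- $T_1$ simple and faithful on $\Omega$ forces some orbit kernel to be trivial --- it is just unnecessary.) Everything else, including the handling of $G/N$ via the point stabilizer $G_\alpha \leq S_{d-1}$, is clean.
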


\subsection{Hypergraph Isomorphism}

Two hypergraphs $\CH_1 = (V_1,\CE_1)$ and $\CH_2 = (V_2,\CE_2)$ are isomorphic if there is a bijection $\varphi\colon V_1 \rightarrow V_2$ such that $E \in \CE_1$ if and only if $E^{\varphi} \in \CE_2$ for all $E \in 2^{V_1}$
(where $E^\varphi\coloneqq\{\varphi(v)\mid v\in E\}$ and $2^{V_1}$ denotes the power set of $V_1$).
We write $\varphi\colon \CH_1 \cong \CH_2$ to denote that $\varphi$ is an isomorphism from $\CH_1$ to $\CH_2$.
Consistent with previous notation, we denote by $\Iso(\CH_1,\CH_2)$ the set of isomorphisms from $\CH_1$ to $\CH_2$.
More generally, for $\Gamma \leq \Sym(V_1)$ and a bijection $\theta\colon V_1 \rightarrow V_2$, we define
\[\Iso_{\Gamma\theta}(\CH_1,\CH_2) \coloneqq \{\varphi \in \Gamma\theta \mid \varphi\colon \CH_1 \cong \CH_2\}.\]
The set $\Iso_{\Gamma\theta}(\CH_1,\CH_2)$ is either empty, or it is a coset of $\Aut_\Gamma(\CH_1) \coloneqq \Iso_\Gamma(\CH_1,\CH_1)$, i.e., we have $\Iso_{\Gamma\theta}(\CH_1,\CH_2) = \Aut_\Gamma(\CH_1)\varphi$ where $\varphi \in \Iso_{\Gamma\theta}(\CH_1,\CH_2)$ is an arbitrary isomorphism.
As a result, the set $\Iso_{\Gamma\theta}(\CH_1,\CH_2)$ can be represented efficiently by a generating set for $\Aut_\Gamma(\CH_1)$ and a single isomorphism $\varphi \in \Iso_{\Gamma\theta}(\CH_1,\CH_2)$.
In the remainder of this work, all sets of isomorphisms are represented in this way.

\begin{theorem}[{\cite[Theorem 1.1]{Neuen22}}]
 \label{thm:hypergraph-isomorphism-gamma-d}
 Let $\CH_1 = (V_1,\CE_1)$ and $\CH_2 = (V_2,\CE_2)$ be two hypergraphs and let $\Gamma \leq \Sym(V_1)$ be a $\mgamma_d$-group and $\theta\colon V_1 \rightarrow V_2$ a bijection.
 Then $\Iso_{\Gamma\theta}(\CH_1,\CH_2)$ can be computed in time $(n+m)^{\CO((\log d)^{c})}$ for some absolute constant $c$ where $n \coloneqq |V_1|$ and $m \coloneqq |\CE_1|$.
\end{theorem}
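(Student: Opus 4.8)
I would prove this by combining the group-theoretic divide-and-conquer of Luks \cite{Luks82} with Babai's quasipolynomial machinery \cite{Babai16}, applied not to a single string but to the whole hyperedge set at once, so that the running time stays polynomial in $m$ rather than exponential in $|V_1|$.

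The backbone is a recursion along a descending chain of subgroups of $\Gamma$ together with a refining partition of $V_1$. First I would reduce to the transitive case by treating the orbits $\Omega_1,\dots,\Omega_r$ of $\Gamma$ one at a time: restrict the hyperedges to their traces on $\Omega_j$, recursively solve the problem for $\Gamma[\Omega_j]$ (still a $\mgamma_d$-group by Lemma~\ref{la:gamma-d-closure}) and the induced bijection, and intersect the resulting cosets of isomorphisms. In the transitive case I would fix a minimal block system $\FB$; the induced action $\Gamma^{\FB}$ is primitive and again lies in $\mgamma_d$. If $\Gamma^{\FB}$ has only quasipolynomial order (the ``small'' case), one branches over the block action and recurses on the imprimitive action inside the blocks in the style of Luks's framework, the branching factor being $|\Gamma^{\FB}| \le n^{\polylog(d)}$. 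Otherwise $\Gamma^{\FB}$ is a large primitive $\mgamma_d$-group, essentially a $d$-bounded Johnson action, and here one runs Babai's local-certificate and split-or-Johnson subroutines on the colored structure carried by $\FB$: canonically, these return either a proper color refinement of $\FB$ (reducing to a coarser or intransitive situation) or a large canonically embedded Johnson scheme, on which the isomorphism problem is solved directly by aligning the two Johnson schemes.

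The main obstacle is the bookkeeping of the hyperedges. Encoding $\CH_i$ naively as a string over the index set $2^{V_i}$ yields an object of exponential length, so the hyperedges must be handled collectively: each recursive call works on the family of restrictions $\{E \cap B \mid E \in \CE_1\}$ to the current block $B$, and the partial answers are reassembled via a consistency check across the blocks. One has to establish the invariant that, throughout the recursion, the total number of distinct hyperedge fragments stays bounded by $\poly(n+m)$, so that each node of the recursion tree does only $\poly(n+m)$ work besides its recursive calls, while the recursion tree still has only $(n+m)^{\polylog(d)}$ nodes as in Babai's analysis. Multiplying these two bounds yields the running time $(n+m)^{\CO((\log d)^c)}$.

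The delicate point is making the two mechanisms compatible: Babai's certificate aggregation branches over certificates indexed by small subsets of the Johnson domain, and one must verify that attaching to each branch the appropriately restricted hyperedge family neither enlarges the instance nor destroys canonicity, i.e.\ that the succinct hyperedge representation is preserved both by the Luks-style recursion into blocks and by certificate aggregation. Granting this invariant, correctness reduces to the standard coset-intersection bookkeeping of Luks's framework together with the canonicity guarantees of split-or-Johnson; the full argument is carried out in \cite{Neuen20}.
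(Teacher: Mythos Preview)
The paper does not prove this theorem at all: it is quoted verbatim as \cite[Corollary 16]{Neuen20} and used as a black box, so there is no ``paper's own proof'' to compare against. Your sketch is therefore not competing with anything in the present paper; it is an outline of what the cited reference does.

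As such an outline, your high-level picture is in the right spirit --- Luks-style orbit/block recursion on a $\mgamma_d$-group combined with Babai's local-certificates/split-or-Johnson machinery, plus a succinct treatment of the hyperedge family so that the recursion cost stays polynomial in $m$. That last point is indeed the crux of \cite{Neuen20}, and you correctly identify it as the main obstacle. What your sketch leaves genuinely unspecified is \emph{how} the invariant ``the total number of distinct hyperedge fragments stays $\poly(n+m)$'' is maintained: simply intersecting each hyperedge with each block can blow up the count multiplicatively across levels, and the actual argument in \cite{Neuen20} needs a more careful encoding (roughly, representing the hyperedge family by an auxiliary bipartite incidence structure on which the recursion acts, and showing that the relevant size measure is subadditive rather than multiplicative along the recursion). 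Without that, the claimed $\poly(n+m)$ work per node is an assertion, not an argument. So your sketch is a fair roadmap but not a proof; for the purposes of this paper, citing \cite{Neuen20} is the correct move.
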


\subsection{Coset-Labeled Hypergraphs}

Actually, for the applications in this paper, the Hypergraph Isomorphism Problem itself turns out to be insufficient.
Instead, we require a generalization of the problem that is, for example, motivated by graph decomposition approaches to graph isomorphism testing (see, e.g., \cite{GroheNSW20, Wiebking20}).
Let $G_1$ and $G_2$ be two graphs and suppose that an algorithm has already computed sets $D_1 \subseteq V(G_1)$ and $D_2 \subseteq V(G_2)$ in an isomorphism-invariant way, i.e., each isomorphism from $G_1$ to $G_2$ also maps $D_1$ to $D_2$.
Moreover, assume that $G_1 - D_1$ is not connected and let $Z_{i,1},\dots,Z_{i,k}$ be the connected components of $G_i - D_i$ (without loss of generality $G_1 - D_1$ and $G_2 - D_2$ have the same number of connected components, otherwise the graphs are non-isomorphic).
Also, let $S_{i,j} \coloneqq  N_{G_i}(Z_{i,j})$ for all $j \in [k]$ and $i \in \{1,2\}$.
A natural strategy for an algorithm is to recursively compute representations for $\Iso(G_1[Z_{1,j_1} \cup S_{1,j_1}], G_2[Z_{2,j_2} \cup S_{2,j_2}])$ for all $j_1,j_2 \in [k]$.
Then, in a second step, the algorithm needs to compute all isomorphisms $\varphi\colon G_1[D_1] \cong G_2[D_2]$ such that there is a bijection $\sigma\colon [k] \rightarrow [k]$ satisfying
\begin{enumerate}[label=(\roman*)]
 \item\label{item:decomposition-strategy-1} $(S_{1,j})^{\varphi} = S_{2,\sigma(j)}$, and
 \item\label{item:decomposition-strategy-2} the restriction $\varphi[S_{1,j}]$ extends to an isomorphism from $G_1[Z_{1,j} \cup S_{1,j}]$ to  $G_2[Z_{2,\sigma(j)} \cup S_{2,\sigma(j)}]$ (in the natural way)
\end{enumerate}
for all $j \in [k]$.

Let us first discuss a simplified case where $S_{1,j_1} \neq S_{1,j_2}$ for all distinct $j_1,j_2 \in [k]$.
Without loss of generality assume the same holds for the second graph, i.e., $S_{2,j_1} \neq S_{2,j_2}$ for all distinct $j_1,j_2 \in [k]$ (otherwise the graph are non-isomorphic).
In this situation the first property naturally translates to an instance of the Hypergraph Isomorphism Problem (in particular, the bijection $\sigma$ is unique for any given bijection $\varphi$).
However, for the second property, we also need to be able to put restrictions on how two hyperedges can be mapped to each other.
Towards this end, we consider hypergraphs with coset-labeled hyperedges where each hyperedge is additionally labeled by a coset.

A \emph{labeling} of a set $V$ is a bijection $\rho\colon V\to\{1,\ldots,|V|\}$.
A \emph{labeling coset} of a set $V$ is a set $\Lambda$ consisting of labelings such that $\Lambda = \Delta\rho \coloneqq \{\delta\rho \mid \delta \in \Delta\}$ for some group $\Delta \leq \Sym(V)$ and some labeling $\rho \colon V \to \{1,\ldots,|V|\}$.
Observe that each labeling coset $\Delta\rho$ can also be written as $\rho\Theta\coloneqq\{\rho\theta\mid\theta\in\Theta\}$ where $\Theta\coloneqq\rho^{-1}\Delta\rho \leq S_{|V|}$.
  
\begin{definition}[Coset-Labeled Hypergraph]
 A \emph{coset-labeled hypergraph} is a tuple $\CH = (V,\CE,\mathfrak{p})$
 where $V$ is a finite set of vertices, $\mathcal{E} \subseteq 2^{V}$ is a set of hyperedges,
 and $\Fp$ is a function that associates with each $E \in \mathcal{E}$ a pair $\Fp(E) =(\rho\Theta,c)$
 consisting of a labeling coset of $E$ and a color $c$ (which is usually a natural number).
 
 Two coset-labeled hypergraphs $\CH_1 = (V_1,\mathcal{E}_1,\Fp_1)$ and $\CH_2 = (V_2,\mathcal{E}_2,\Fp_2)$ are \emph{isomorphic} if there is a bijection $\varphi\colon V_1 \rightarrow V_2$ such that
 \begin{enumerate}
  \item $E \in \CE_1$ if and only if $E^{\varphi} \in \CE_2$ for all $E \in 2^{V_1}$, and
  \item for all $E \in\CE_1$ with $\Fp_1(E) = (\rho_1\Theta_1,c_1)$
    and  $\Fp_2(E^{\varphi}) = (\rho_2\Theta_2,c_2)$ we have $c_1 = c_2$ and
   \begin{equation}\label{coset:1}
    \varphi[E]^{-1}\rho_1\Theta_1=\rho_2\Theta_2.
   \end{equation}
 \end{enumerate}
 In this case, $\varphi$ is an \emph{isomorphism} from $\CH_1$ to $\CH_2$, denoted by $\varphi\colon \CH_1 \cong \CH_2$.
 Observe that (\ref{coset:1}) is equivalent to $c_1 = c_2$, $\Theta_1=\Theta_2$ and $\varphi[E] \in \rho_1\Theta_1\rho_2^{-1}$.
 For $\Gamma \leq \Sym(V_1)$ and a bijection $\theta\colon V_1 \rightarrow V_2$ let
 \begin{equation*}
  \Iso_{\Gamma\theta}(\CH_1,\CH_2) \coloneqq \{\varphi \in \Gamma\theta \mid \varphi\colon\CH_1 \cong \CH_2\}.
 \end{equation*}
\end{definition}

Note that, for two coset-labeled hypergraphs $\CH_1$ and $\CH_2$, the set of isomorphisms $\Iso(\CH_1,\CH_2)$ forms a coset of $\Aut(\CH_1)$ and therefore, it again admits a compact representation.
Indeed, this is a crucial feature of the above definition that again allows the application of group-theoretic techniques.

The next theorem is an immediate consequence of \cite[Theorem 6.6.7]{Neuen19} and Theorem \ref{thm:hypergraph-isomorphism-gamma-d}.

\begin{theorem}
 \label{thm:coset-labeled-hypergraphs-gamma-d}
 Let $\CH_1 = (V_1,\CE_1,\Fp_1)$ and $\CH_2 = (V_2,\CE_2,\Fp_2)$ be two coset-labeled hypergraphs
 such that for all $E\in \CE_1\cup\CE_2$ it holds $|E|\leq d$.
 Also let $\Gamma \leq \Sym(V_1)$ be a $\mgamma_d$-group and $\theta\colon V_1 \rightarrow V_2$ a bijection.
 
 Then $\Iso_{\Gamma\theta}(\CH_1,\CH_2)$ can be computed in time $(n+m)^{\CO((\log d)^{c})}$ for some absolute constant $c$ where $n \coloneqq |V_1|$ and $m \coloneqq |\mathcal{E}_1|$.
\end{theorem}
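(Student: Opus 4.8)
The plan is to derive the statement from Theorem~\ref{thm:hypergraph-isomorphism-gamma-d} by reducing the isomorphism problem for coset-labeled hypergraphs with $\mgamma_d$-groups to the isomorphism problem for plain (colored) hypergraphs with $\mgamma_d$-groups; this reduction is exactly what \cite[Theorem~6.6.7]{Neuen19} provides, and composing the two running-time bounds gives $(n+m)^{\CO((\log d)^c)}$.

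In more detail, I would proceed as follows. First, observe that for a hyperedge $E$ with $\Fp_i(E) = (\rho\Theta,c)$ the coset $\rho\Theta$ determines a well-defined subgroup $\Theta$ of the standard symmetric group $S_{|E|}$ on $|E| \le d$ points, so recoloring every hyperedge by the pair $(\Theta,c)$ is isomorphism-invariant and only refines the hyperedge colors; hence it leaves $\Iso_{\Gamma\theta}(\CH_1,\CH_2)$ unchanged. Forgetting the labeling cosets, let $\CH_1',\CH_2'$ be the resulting colored hypergraphs and compute $\Lambda \coloneqq \Iso_{\Gamma\theta}(\CH_1',\CH_2')$ using Theorem~\ref{thm:hypergraph-isomorphism-gamma-d} (hyperedge colors are a routine addition to that algorithm, or can themselves be absorbed into the reduction of \cite[Theorem~6.6.7]{Neuen19}); if $\Lambda = \emptyset$, output $\emptyset$. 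Otherwise $\Lambda = \Delta\psi$ with $\Delta \le \Gamma$, so $\Delta \in \mgamma_d$ by Lemma~\ref{la:gamma-d-closure}. It then remains to intersect $\Lambda$ with the labeling constraints~\eqref{coset:1}: for $\varphi \in \Lambda$ and $E \in \CE_1$ with $\Fp_1(E) = (\rho_1\Theta_1,c_1)$ and $\Fp_2(E^\varphi) = (\rho_2\Theta_2,c_2)$, the recoloring already forces $c_1 = c_2$ and $\Theta_1 = \Theta_2$, so the only remaining condition is $\varphi[E] \in \rho_1\Theta_1\rho_2^{-1}$, a constraint on the restriction of $\varphi$ to a set of size at most $d$. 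The family of all these local conditions cuts $\Lambda$ down to exactly $\Iso_{\Gamma\theta}(\CH_1,\CH_2)$, and \cite[Theorem~6.6.7]{Neuen19} shows how to compute this intersection within the stated time bound, the point being that the auxiliary groups occurring in the Luks-style recursion over the hyperedges remain $\mgamma_d$-groups, so the subroutines underlying Theorem~\ref{thm:hypergraph-isomorphism-gamma-d} stay applicable.

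The main obstacle --- and the reason this is only an \emph{immediate} consequence of \cite[Theorem~6.6.7]{Neuen19} rather than a one-line corollary of Theorem~\ref{thm:hypergraph-isomorphism-gamma-d} alone --- is the circular-looking dependence of the constraint attached to $E$ on its image $E^\varphi$, which in turn depends on the unknown $\varphi$: one cannot fix the hyperedge matching in advance since $\Aut(\CH_1')$ may permute hyperedges nontrivially. Handling this requires interleaving the determination of the hyperedge bijection with the enforcement of the labeling cosets, i.e.\ processing hyperedge orbits one at a time and branching, per block, over the boundedly many relevant cosets; this is precisely the bookkeeping encapsulated by \cite[Theorem~6.6.7]{Neuen19}, and with it in hand the remaining verification is routine.
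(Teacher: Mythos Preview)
Your proposal is correct and takes essentially the same approach as the paper: the paper states the theorem as an immediate consequence of \cite[Theorem~6.6.7]{Neuen19} and Theorem~\ref{thm:hypergraph-isomorphism-gamma-d}, which is precisely the combination you invoke. Your elaboration of how the reduction handles the hyperedge-matching/labeling-coset interleaving is a faithful (and helpful) unpacking of what the cited result provides.
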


Observe that Theorem \ref{thm:coset-labeled-hypergraphs-gamma-d} also covers the standard isomorphism problem for hypergraphs (assuming all hyperedges are small) by setting $\Fp(E) \coloneqq (\Sym(E)\rho_E,0)$ where $\rho_E$ is an arbitrary labeling of $E$.
On the other hand, choosing suitable cosets encoding sets of isomorphisms between connected component of $G_1 - D_1$ and $G_2 - D_2$, we can also formulate the restrictions encountered in Property \ref{item:decomposition-strategy-2}.

\subsection{Multiple-Labeling-Cosets}

The last theorem covers the problem discussed in the beginning of the previous subsection assuming that all separators of the first graph are distinct, i.e., $S_{1,j_1} \neq S_{1,j_2}$ for all distinct $j_1,j_2 \in [k]$.
In this subsection, we consider the case in which $S_{1,j_1} = S_{1,j_2}$ for all $j_1,j_2 \in [k]$.
In order to handle the case of identical separators, we build on a framework considered in \cite{SchweitzerW19,Wiebking20}.
(The mixed case in which some, but not all, separators coincide can be handled by combining both techniques.)

\begin{definition}[Multiple-Labeling-Coset]
 A \emph{multiple-labeling-coset} is a tuple $\CX = (V,L,\Fp)$ where $V$ is a finite set, $L = \{\rho_1\Theta_1,\dots,\rho_t\Theta_t\}$ is a set of labeling cosets $\rho_i\Theta_i$, $i \in [t]$, of the set $V$, and $\Fp\colon L \to C$ is a coloring that assigns each labeling coset $\rho\Theta \in L$ a color $\Fp(\rho\Theta) = c$ (as before, $c$ is usually a natural number).
 
 Two multiple-labeling-cosets $\CX_1 = (V_1,L_1,\Fp_1)$ and $\CX_2 = (V_2,L_2,\Fp_2)$ are \emph{isomorphic} if there is a bijection $\varphi\colon V_1 \to V_2$ such that
 \begin{equation}\label{multcoset}
   \big(\;\rho\Theta \in L_1 \;\;\wedge\;\; \Fp_1(\rho\Theta) = c\;\big) \;\;\;\;\Leftrightarrow\;\;\;\; \big(\;\varphi^{-1}\rho\Theta \in L_2 \;\;\wedge\;\; \Fp_2(\varphi^{-1}\rho\Theta) = c\;\big)
 \end{equation}
 for all labeling cosets $\rho\Theta$ of $V$ and colors $c \in C$.
 In this case, $\varphi$ is an \emph{isomorphism} from $\CX_1$ to $\CX_2$, denoted by $\varphi\colon\CX_1\cong\CX_2$.
 Observe that (\ref{multcoset}) is equivalent to $|L_1| = |L_2|$ and for each $\rho_1\Theta_1\in L_1$ there is a $\rho_2\Theta_2\in L_2$
 such that $\Fp_1(\rho_1\Theta_1)=\Fp_2(\rho_2\Theta_2)$ and $\Theta_1=\Theta_2$ and $\varphi\in\rho_1\Theta_1\rho_2^{-1}$.
 Let
 \begin{equation*}
  \Iso(\CX_1,\CX_2)\coloneqq\{\varphi\colon V_1\to V_2\mid\varphi\colon\CX_1\cong\CX_2\}
 \end{equation*}
\end{definition}

Again, the set of isomorphisms $\Iso(\CX_1,\CX_2)$ forms a coset of $\Aut(\CX_1) \coloneqq \Iso(\CX_1,\CX_1)$ and therefore, it again admits a compact representation.

\begin{theorem}[{\cite[Corollary 8]{Wiebking20}}]
 \label{thm:isomorphism-multiple-cosets}
 Let $\CX_1=(V_1,L_1,\Fp_1)$ and $\CX_2=(V_2,L_2,\Fp_2)$ be two multiple-labeling cosets.
 Then $\Iso(\CX_1,\CX_2)$ can be computed  in time $(n+m)^{\CO((\log n)^{c})}$ for some absolute constant $c$ where $n \coloneqq |V_1|$ and $m \coloneqq |L_1|$.
\end{theorem}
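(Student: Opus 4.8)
The plan is to reduce the isomorphism problem to the computation of canonical labelings, which is exactly what the preceding theorem (\cite[Corollary 8]{Wiebking20}) provides. First I would dispose of the trivial obstructions: an isomorphism is a bijection $V_1 \to V_2$, so if $|V_1| \neq |V_2|$ we may immediately output $\Iso(\CX_1,\CX_2) = \emptyset$; hence assume $n \coloneqq |V_1| = |V_2|$. Then I would run the canonical labeling algorithm on $\CX_1$ and on $\CX_2$, obtaining for each $i \in \{1,2\}$ the coset $\Lambda_i$ of all canonical labelings of $\CX_i$, represented in the usual way as $\Lambda_i = \Aut(\CX_i)\rho_i$ for a fixed representative labeling $\rho_i$ of $V_i$ together with a generating set for $\Aut(\CX_i) \leq \Sym(V_i)$. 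By definition of a canonical labeling, relabeling $\CX_i$ according to $\rho_i$ yields a multiple-labeling-coset on the common domain $\{1,\dots,n\}$ — the canonical form of $\CX_i$ — that depends only on the isomorphism type of $\CX_i$, and $\Lambda_i$ is precisely the set of labelings producing this canonical form.

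Next I would check whether the two canonical forms obtained from $\CX_1$ and $\CX_2$ coincide. This is a comparison of two multiple-labeling-cosets on the fixed domain $\{1,\dots,n\}$: one verifies that the two colorings agree and that the two sets of labeling cosets are equal, the latter reducing to pairwise equality tests of labeling cosets $\rho\Theta$ and $\rho'\Theta'$, which (equivalently, $\Theta = \Theta'$ together with $\rho'\rho^{-1} \in \Theta$) are handled in polynomial time by the standard permutation-group library of Theorem \ref{thm:permutation-group-library}. If the canonical forms differ, then $\CX_1 \not\cong \CX_2$ and I output $\Iso(\CX_1,\CX_2) = \emptyset$. Otherwise $\rho_1$ and $\rho_2$ produce the same canonical form, so $\varphi_0 \coloneqq \rho_1\rho_2^{-1}$ is an isomorphism from $\CX_1$ to $\CX_2$, and I would output $\Iso(\CX_1,\CX_2) = \Aut(\CX_1)\varphi_0$, given by a generating set for $\Aut(\CX_1)$ together with $\varphi_0$. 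Correctness of this step is the standard canonical-labeling identity: since $\rho_1^{-1}\Aut(\CX_1)\rho_1$ and $\rho_2^{-1}\Aut(\CX_2)\rho_2$ both equal the automorphism group of the canonical form, one gets $\Lambda_1\Lambda_2^{-1} = \Aut(\CX_1)\rho_1\rho_2^{-1}\Aut(\CX_2) = \Aut(\CX_1)\rho_1\rho_2^{-1}$, and $\Lambda_1\Lambda_2^{-1}$ is exactly $\Iso(\CX_1,\CX_2)$.

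For the running time, every step except the two calls to the canonical labeling algorithm runs in time polynomial in $n$ and $m \coloneqq |L_1|$, so the bound $(n+m)^{\CO((\log n)^{c})}$ is inherited directly from \cite[Corollary 8]{Wiebking20}. I do not expect a genuine obstacle here: the only point requiring care is the bookkeeping with labeling cosets — comparing canonical forms and assembling the output coset in the right composition order — which is entirely routine once the cited canonical-labeling theorem is in hand, and contributes no combinatorial or group-theoretic difficulty of its own.
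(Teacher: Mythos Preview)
Your proposal is correct and follows essentially the same approach as the paper: compute canonical labelings $\Lambda_1,\Lambda_2$ via the cited theorem, compare the resulting canonical forms, and return $\Lambda_1\lambda_2^{-1}$ (equivalently $\Aut(\CX_1)\rho_1\rho_2^{-1}$) if they agree. Your write-up is in fact more detailed than the paper's terse proof, spelling out the size check, the coset-comparison mechanics, and the correctness identity $\Lambda_1\Lambda_2^{-1} = \Iso(\CX_1,\CX_2)$.
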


\subsection{Isomorphism Tests for $t$-CR-bounded Graphs}

Finally, we require isomorphism tests for $t$-CR-bounded graphs as defined in Section \ref{sec:t-cr}.
More precisely, building on the tools already established above, it suffices to find a ``bounding group'' that is contained in the class $\mgamma_t$ and is guaranteed to contain all isomorphisms between two graphs.
Also, for technical reasons, we argue how to find such a group building on the $t$-closure of certain sets $X_i$ for which we assume such a ``bounding group'' is already given.

\begin{theorem}[{\cite[Lemma 5.2]{Neuen22}}]
 \label{thm:compute-isomorphisms-t-closure}
 Let $G_1,G_2$ be two graphs and let $X_1 \subseteq V(G_1)$ and $X_2 \subseteq V(G_2)$.
 Also, let $\Gamma \leq \Sym(X_1)$ be a $\mgamma_t$-group and $\theta\colon X_1 \rightarrow X_2$ a bijection.
 Moreover, let $D_i \coloneqq \cl_t^{G_i}(X_i)$ for $i \in \{1,2\}$
 and define $\Gamma'\theta'\coloneqq\{\varphi \in \Iso((G_1,X_1),(G_2,X_2)) \mid \varphi[X_1] \in \Gamma\theta\}[D_1]$.
 
 Then $\Gamma' \in \mgamma_t$.
 Moreover, there is an algorithm computing a $\mgamma_t$-group $\Delta \leq \Sym(D_1)$ and a bijection $\delta\colon D_1 \rightarrow D_2$ such that
 \[\Gamma'\theta'\subseteq \Delta\delta\]
 in time $n^{\CO((\log t)^{c})}$ for some absolute constant $c$ where $n\coloneqq |V(G_1)|$.
\end{theorem}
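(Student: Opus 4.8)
The plan is to run the $t$-CR procedure of Definition~\ref{def:t-cr-bounded} on $(G_1,X_1)$ and $(G_2,X_2)$ in lockstep, maintaining along the way a \emph{bounding coset}: a subgroup $\Delta\le\Sym(U_1)$ together with a bijection $\delta\colon U_1\to U_2$, where $U_i\subseteq V(G_i)$ is the set of currently uniquely colored vertices, such that $\Delta\in\mgamma_t$ and $\varphi[U_1]\in\Delta\delta$ for every $\varphi\in\Iso((G_1,X_1),(G_2,X_2))$ with $\varphi[X_1]\in\Gamma\theta$. We start from $U_i\coloneqq X_i$ and $\Delta\delta\coloneqq\Gamma\theta$, which is a valid bounding coset by assumption. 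If at any point the two runs become incompatible (a color class is split on one side but not the other, or two aligned classes have different sizes), then no such $\varphi$ exists and we may report non-isomorphism (or output $\{\id\}$ together with an arbitrary bijection $D_1\to D_2$, should one exist); so assume this does not occur.

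Two kinds of steps arise. At a \emph{splitting step} a color class $C_1$ of $G_1$ with $|C_1|\le t$ becomes individualized, together with the aligned class $C_2$ of $G_2$; we replace $\Delta\delta$ by $(\Delta\times\Sym(C_1))(\delta\cup\delta_C)$ acting on $U_1\cup C_1$, where $\delta_C\colon C_1\to C_2$ is any bijection. This is again a bounding coset because $\varphi[C_1]$ is some bijection from $C_1$ to $C_2$ and hence lies in $\Sym(C_1)\delta_C$; and it stays in $\mgamma_t$ because $\Sym(C_1)\cong S_{|C_1|}$ with $|C_1|\le t$ lies in $\mgamma_t$ and the composition factors of a direct product are exactly those of its two factors. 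This step adds only two generators and is cheap. At a \emph{refinement step} we pass from $U_i$ to the (possibly larger) set $U_i'$ of vertices that become uniquely colored after one Color Refinement round. Since Color Refinement is isomorphism-invariant, the color of every vertex in $U_1'\setminus U_1$ is a function of the individualization of $U_1$; hence, once $\varphi[U_1]$ is fixed, $\varphi[U_1']$ is determined, so the restriction map from the possible extensions on $U_1'$ down to $U_1$ is injective. Computing the new bounding coset thus reduces to (a) isolating the sub-coset of $\Delta\delta$ consisting of the bijections that carry the refined coloring of $G_1$ to that of $G_2$, and (b) reading off the forced bijections on $U_1'$. Step~(a) is an instance of isomorphism of colored objects under a $\mgamma_t$-group and is handled by the group-theoretic machinery of Section~\ref{sec:group-machinery} (in particular Theorem~\ref{thm:hypergraph-isomorphism-gamma-d}) in time $n^{\CO((\log t)^{c})}$, producing a subgroup of $\Delta$, which remains in $\mgamma_t$ by Lemma~\ref{la:gamma-d-closure}; the extensions in~(b) form an isomorphic group, again in $\mgamma_t$.

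When all steps are processed we have $U_1=\cl_t^{G_1}(X_1)=D_1$, $U_2=D_2$, and a bounding coset $\Delta\delta$ on $D_1$ with $\Delta\in\mgamma_t$ and $\Gamma'\theta'\subseteq\Delta\delta$ by construction. Picking any $\varphi_0\in\Gamma'\theta'$ (if this set is empty we have already reported non-isomorphism) gives $\Gamma'\varphi_0=\Gamma'\theta'\subseteq\Delta\delta=\Delta\varphi_0$, so $\Gamma'\le\Delta$ and therefore $\Gamma'\in\mgamma_t$ by Lemma~\ref{la:gamma-d-closure}, which also settles the first assertion. For the running time, the number of steps is polynomial in $n$, each splitting step is polynomial in $n$ (its group operations are immediate, cf.\ Theorem~\ref{thm:permutation-group-library}), and each refinement step costs $n^{\CO((\log t)^{c})}$; since $t\le n$ may be assumed (for $t\ge n$ one may simply take $\Delta=\Sym(D_1)$), this stays within the claimed bound.

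The main obstacle is the refinement step. One must verify both that propagating the bounding coset through a Color Refinement round preserves membership in $\mgamma_t$ — which ultimately rests on the functional dependence of the new singletons on the old ones, together with the bookkeeping that the resulting tower of restriction maps $\Gamma'=\Gamma'_N\to\Gamma'_{N-1}\to\dots\to\Gamma'_0\le\Gamma$ is assembled only from injective maps (refinement steps) and maps whose kernel embeds into some $\Sym(C)$ with $|C|\le t$ (splitting steps), so that every composition factor embeds into $S_t$ — and that step~(a) above genuinely falls within the scope of the available $\mgamma_t$-group subroutines with the stated time bound. The splitting steps, the base case $\Gamma'_0\le\Gamma\in\mgamma_t$, and the final coset bookkeeping are routine by comparison.
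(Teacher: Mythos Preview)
This theorem is imported from \cite{Neuen20}; the present paper does not give its own proof. Your sketch correctly reconstructs the argument one expects to find there: the tower of restriction maps $\Gamma'=\Gamma'[U^{(N)}]\to\cdots\to\Gamma'[X_1]\le\Gamma$ has injective steps at refinement rounds and kernels embedding in some $S_{\le t}$ at splitting rounds, whence $\Gamma'\in\mgamma_t$; algorithmically, the bounding coset is grown by direct products $\Delta\times\Sym(C)$ at splittings and lifted along the forced extensions at refinements. This is the right shape.

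The one place that deserves more care is your step~(a) at a refinement round. You correctly observe that not every $\sigma\in\Delta\delta$ extends to the new singletons and that one must first pass to the sub-coset of compatible bijections, but labelling this ``an instance of Theorem~\ref{thm:hypergraph-isomorphism-gamma-d}'' is loose. A new singleton $v$ is identified by its CR colour, which is in effect a labelled string over $U_1$ (iterated multisets of neighbour colours); the compatibility condition on $\sigma$ is that it carries the multiset of (colour, multiplicity) pairs realised in $G_1$ to that realised in $G_2$. This is a string-type isomorphism problem under the $\mgamma_t$-group $\Delta$ rather than a literal hypergraph instance, and while it \emph{is} handled by the same Luks-style machinery in time $n^{\CO((\log t)^c)}$, it needs a short reduction rather than a bare citation. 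You should also say explicitly why the compatible bijections form a sub\emph{coset}: on the automorphism side, the compatible permutations are exactly the set-stabiliser of the realised CR colours under the $\Delta$-action induced by permuting the $U_1$-labels, hence a subgroup. With those two points fleshed out your argument is complete and agrees with the intended proof.
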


\section{Isomorphism Test for Graphs Excluding a Topological Subgraph}
\label{sec:algorithm-isomorphism}

Building on the tools presented in the previous sections, we can now provide an isomorphism test for graphs excluding $K_h$ as a topological subgraph running in time $n^{\polylog(h)}$.
For the algorithm we shall follow the high-level description provided in Section \ref{sec:t-cr}.

\begin{theorem}
 \label{thm:main}
 There is an algorithm that, given a number $h \in \NN$ and two connected vertex-colored graphs $G_1$ and $G_2$ with $n$ vertices, either correctly concludes that $G_1$ has a topological subgraph isomorphic to $K_h$,
 or decides whether $G_1$ is isomorphic to $G_2$ in time $n^{\CO((\log h)^{c})}$ for some absolute constant $c$.
\end{theorem}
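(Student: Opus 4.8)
The plan is to assemble the algorithm along the decomposition strategy sketched in Section~\ref{sec:t-cr}, feeding in Corollary~\ref{cor:initial-color} for the initial sets and the tools of Section~\ref{sec:group-machinery} for the recombination. Fix $t\coloneqq t(h)=144\adeg^2h^5=\Theta(h^5)$. First I would run the algorithm of Corollary~\ref{cor:initial-color} on $G_1$; if it reports that $G_1$ has a topological subgraph isomorphic to $K_h$, output this (it is correct). Otherwise it returns an isomorphism-invariant pair-coloring $\chi_1\colon(V(G_1))^2\to C$ and a non-empty set $X_1=\{v\mid\chi_1(v,v)=c\}$ with $X_1\subseteq\cl_t^{(G_1,\chi_1)}(v)$ for every $v\in X_1$; do the same for $G_2$, and if the run on $G_2$ instead reports a topological $K_h$, conclude $G_1\not\cong G_2$, since by isomorphism-invariance of the procedure an isomorphism $G_1\cong G_2$ would force the same report on $G_1$. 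Put $D_i\coloneqq\cl_t^{(G_i,\chi_i)}(X_i)$. Two facts are then available: by routine monotonicity and idempotence of the $t$-closure together with Property~\ref{item:main-tool-2-formal}, $D_i=\cl_t^{(G_i,\chi_i)}(v)$ for \emph{every single} $v\in X_i$; and, since $t\geq 3h^3$, Property~\ref{item:main-tool-1-formal} (established via Theorem~\ref{thm:small-separator-for-t-cr-bounded-closure} in \cite{GroheNW20}) gives $|N_{G_i}(Z)|<h$ for every connected component $Z$ of $G_i-D_i$. Finally, since every isomorphism $G_1\to G_2$ respects the $3$-dimensional Weisfeiler-Leman coloring and hence maps $X_1$ onto $X_2$ and $D_1$ onto $D_2$, it suffices---fixing one representative $v_1\in X_1$---to compute $\{\varphi\in\Iso(G_1,G_2)\mid\varphi(v_1)=v_2\}$ for each of the at most $n$ choices of $v_2\in X_2$ and return the union.

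For a fixed pair $(v_1,v_2)$ I would proceed in two stages. In the first stage, applying Theorem~\ref{thm:compute-isomorphisms-t-closure} to the colored graphs $(G_1,\chi_1),(G_2,\chi_2)$ with the singletons $\{v_1\},\{v_2\}$, the trivial $\mgamma_t$-group $\Gamma=\{\id\}$, and the bijection $\theta\colon v_1\mapsto v_2$---and using $\cl_t^{(G_i,\chi_i)}(v_i)=D_i$---yields, in time $n^{\CO((\log t)^{c})}=n^{\CO((\log h)^{c})}$, a $\mgamma_t$-group $\Delta\leq\Sym(D_1)$ and a bijection $\delta\colon D_1\to D_2$ with $\{\varphi\in\Iso(G_1,G_2)\mid\varphi(v_1)=v_2\}[D_1]\subseteq\Delta\delta$. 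In the second stage, writing $Z_{i,1},\dots,Z_{i,k}$ for the components of $G_i-D_i$ and $S_{i,j}\coloneqq N_{G_i}(Z_{i,j})\subseteq D_i$ (so $|S_{i,j}|<h$), I would recursively compute the isomorphisms of the vertex-colored graphs $G_1[Z_{1,j_1}\cup S_{1,j_1}]$ and $G_2[Z_{2,j_2}\cup S_{2,j_2}]$ with the border $S_{i,j}$ individualized, for all $j_1,j_2\in[k]$, and record their restrictions to the borders as cosets. Encoding on $D_i$ the graph $G_i[D_i]$ and the coloring $\chi_i$ (via hyperedges of size at most two) together with the hyperedges $\{S_{i,j}:j\in[k]\}$, all of size $<h\leq t$, labeled by this recursively-computed component data, produces coset-labeled hypergraphs $\CH_1,\CH_2$ for which $\{\varphi\in\Iso(G_1,G_2)\mid\varphi(v_1)=v_2\}[D_1]=\Iso_{\Delta\delta}(\CH_1,\CH_2)$; this set is computed by Theorem~\ref{thm:coset-labeled-hypergraphs-gamma-d} in time $n^{\CO((\log t)^{c})}=n^{\CO((\log h)^{c})}$ since $\Delta\in\mgamma_t$, and each of its elements lifts to a coset of full isomorphisms by gluing in the component isomorphisms. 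The subcase of coinciding borders $S_{i,j}$ is treated with multiple-labeling-cosets and Theorem~\ref{thm:isomorphism-multiple-cosets}, and the mixed subcase by combining the two, exactly as in \cite{GroheNW20,GroheNSW20,Wiebking20}; a recursive invocation of Corollary~\ref{cor:initial-color} that itself reports a topological $K_h$ is propagated, yielding on the $G_1$-side the output about $G_1$ and on the $G_2$-side non-isomorphism by the same invariance argument.

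It remains to bound the running time and argue termination. Each recursive call runs on an induced subgraph of $G_i$---which again excludes $K_h$ topologically if $G_i$ does---with an individualized border of fewer than $h$ vertices, and the non-recursive work at each node is $n^{\CO((\log h)^{c})}$ by the previous paragraph. Organizing the recursion so that it decreases a suitable isomorphism-invariant potential (precisely as in \cite{GroheNW20}) keeps the depth polylogarithmic, so the total is $n^{\CO((\log h)^{c})}$. I expect the genuinely delicate points to be verifying that this recursion terminates with the correct potential and the soundness of the group-theoretic gluing of the component solutions into isomorphisms of the whole graph; both are, however, verbatim the argument of \cite{GroheNW20}, the only new ingredient being that Corollary~\ref{cor:initial-color}---i.e.\ Theorem~\ref{thm:initial-color-via-wl}, proved in Section~\ref{sec:initial-color}---now furnishes the isomorphism-invariant initial sets for graphs excluding a \emph{topological} subgraph, where the minor-based construction of \cite{GroheNW20} no longer applies.
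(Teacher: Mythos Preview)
Your proposal is on the right track and matches the paper's approach in structure: use Corollary~\ref{cor:initial-color} to obtain the initial set, take the $t$-closure to form $D_i$, bound the isomorphisms on $D_i$ via Theorem~\ref{thm:compute-isomorphisms-t-closure} after fixing $v_1\in X_1$ and branching over $v_2\in X_2$, recurse on the components $Z_{i,j}$ with individualized borders $S_{i,j}$, and glue the pieces using coset-labeled hypergraphs (Theorem~\ref{thm:coset-labeled-hypergraphs-gamma-d}) together with multiple-labeling-cosets (Theorem~\ref{thm:isomorphism-multiple-cosets}) for coinciding borders. This is exactly what the paper does.

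There is, however, one genuine inaccuracy in your termination argument. You write that ``organizing the recursion so that it decreases a suitable isomorphism-invariant potential \dots\ keeps the depth polylogarithmic''. Neither this paper nor \cite{GroheNW20} claims polylogarithmic depth; the paper simply argues that the total number of recursive calls is at most quadratic in $n$, and since each call does $n^{\CO((\log h)^c)}$ work, the bound follows. To get even that quadratic bound one needs an ingredient you omit: if $G_i-D_i$ has a \emph{single} component and $|D_i|<h$, the recursive subproblem $H_{i,1}=G_i[Z_{i,1}\cup S_{i,1}]$ can have the same number of vertices as $G_i$ (take $|D_i|=1$ and $S_{i,1}=D_i$), so the recursion would not make progress. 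The paper handles this by iterating the closure---recomputing $X_i$ on $G_i-D_i$ and enlarging $D_i$---until either $|D_i|\geq h$ or the number of components is $\neq 1$, which then guarantees $|V(H_{i,j})|<|V(G_i)|$. Your deferral ``verbatim the argument of \cite{GroheNW20}'' is fine in spirit, but you should not describe that argument as a polylogarithmic-depth potential decrease; it is a straightforward size-decrease yielding polynomially many calls, and it requires the special case above to hold.

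A minor point: you set $D_i\coloneqq\cl_t^{(G_i,\chi_i)}(X_i)$ with the pair-coloring, whereas the paper's proof takes $D_i\coloneqq\cl_t^{G_i}(X_i)$ in the plain vertex-colored graph and separately uses $D_i'\coloneqq\cl_t^{(G_i,\chi_i')}(v_i)\supseteq D_i$ only for the call to Theorem~\ref{thm:compute-isomorphisms-t-closure}. Both variants work (Theorem~\ref{thm:small-separator-for-t-cr-bounded-closure} goes through with the richer coloring as well, as the discussion in Section~\ref{sec:t-cr} already indicates), but be aware that your $D_i$ is the paper's $D_i'$, not its $D_i$.
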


\begin{proof}
 We present a recursive algorithm that, given two vertex-colored graphs $(G_1,\chi_1)$ and $(G_2,\chi_2)$ and a color $c_0$ such that for $S_i \coloneqq \chi_i^{-1}(c_0)$ it holds that $|S_i|<h$ for $i=1,2$,
 either correctly concludes that $G_1$ has a topological subgraph isomorphic to $K_h$ or computes a representation for $\Iso((G_1,\chi_1),(G_2,\chi_2))[S_1]$.
 The color $c_0$ does not have to be in the range of the $\chi_i$.
 Thus initializing it with a color $c_0$ not in the range, we have $|S_i| = 0 < h$, in which case the algorithm simply decides whether $\Iso((G_1,\chi_1),(G_2,\chi_2)) \neq \emptyset$, that is, decides whether $(G_1,\chi_1)$ and $(G_2,\chi_2)$ are isomorphic.
 (For $S_1 = S_2 = \emptyset$, we define $\Iso((G_1,\chi_1),(G_2,\chi_2))[S_1]$ to contain the empty mapping if $(G_1,\chi_1)$ and $(G_2,\chi_2)$ are isomorphic, in the other case $\Iso((G_1,\chi_1),(G_2,\chi_2))[S_1]$ is empty.)
  
 So let $(G_1,\chi_1)$ and $(G_2,\chi_2)$ be the vertex-colored input graphs, and let $c_0$ be a color such that $|S_i| < h$.
 Let $t \coloneqq t(h) = 144\adeg^2h^5$ as defined in Equation \eqref{eq:def-t}.
 The algorithm first applies Corollary \ref{cor:initial-color} to the graphs $G_i$ and the parameter $t$.
 This results in a pair-colorings $\chi_i'$ and colors $c_i \in \{\chi_i'(v,v) \mid v \in V(G_i)\}$, or the algorithm correctly concludes that one of the graphs has a topological subgraph isomorphic to $K_h$.
 
 If a topological subgraph is detected in both input graphs, then, in particular, $G_1$ has a topological subgraph isomorphic to $K_h$.
 If $c_1 \neq c_2$ or a topological subgraph is detected in only one of the graphs, then the input graphs are non-isomorphic.
 So suppose that $c \coloneqq c_1 = c_2$.
 Let $X_i \coloneqq \{v \in V(G) \mid \chi_i'(v,v) = c\}$.
 Then $X_1^{\varphi} = X_2$ for every $\varphi \in \Iso(G_1,G_2)$.
 
 Also let $D_i \coloneqq \cl_t^{G_i}(X_i)$.
 Observe that $D_1^{\varphi} = D_2$ for every $\varphi \in \Iso(G_1,G_2)$.
 Also observe that $S_i \subseteq D_i$ since $|S_i| = |\chi_i^{-1}(c_0)| < h \leq t$.
 Now let $Z_{i,1},\ldots,Z_{i,k}$ be the vertex sets of the connected components of $G_i - D_i$ and define $\CZ_i \coloneqq \{Z_{i,1},\ldots,Z_{i,k}\}$
 (if the number of connected components differs in the two graphs then they are non-isomorphic).
 
 If $k = 1$ and $|D_i| < h$, the algorithm proceeds as follows.
 First, the coloring $\chi_i$, $i \in \{1,2\}$, is updated to take membership in the set $D_i$ into account, i.e., $\chi_i(v)$ is replaced by $\chi_i(v) \coloneqq (\chi_i(v),1)$ if $v \in D_i$ and $\chi_i(v) \coloneqq (\chi_i(v),0)$ if $v \in V(G_i) \setminus D_i$.
 Afterwards, the algorithm computes a set $X_i^1$ according to the above procedure with respect to the input graph $G_i^1\coloneqq G_i - D_i$.
 Let $D_i^1\coloneqq \cl_t^{(G_i,\chi_i)}(X_i^1)$ be the closure of $X_i^1$ in the graph $G_i$ (rather than $G_i^1$).
 Then $D_i^1\supseteq D_i$ since $|D_i| < h \leq t$.
 Moreover,$D_i^1 \supsetneq D_i$ since $X_i^1 \subseteq D_i^1$ and $\emptyset\neq X_i^1\subseteq V(G_i^1)$.
 This procedure is repeated until $|D_i^{j^*}| \geq h$, or $k \geq 2$, or $k = 0$ for some $j^* \geq 1$.
 
 So without loss of generality suppose that $|D_i| \geq h$, $k \geq 2$ or $ k = 0$.
 If $k = 0$ and $|D_i| < h$, then $V(G_i) = D_i$ and therefore $|V(G_i)| < h$, and the statement of the theorem can directly be obtained from Babai's quasipolynomial time isomorphism test \cite{Babai16} since both graphs have size at most $h-1$.
 
 In the following, suppose that $k \geq 2$ or $|D_i| \geq h$.
 Let $S_{i,j} \coloneqq N_{G_i}(Z_{i,j})$ for all $j \in [k]$ and $i \in \{1,2\}$.
 We have $|S_{i,j}| < h$ by Theorem \ref{thm:small-separator-for-t-cr-bounded-closure}.
 Finally, define $H_{i,j} \coloneqq G[Z_{i,j} \cup S_{i,j}]$ and $\chi_{i,j}^{H} \colon V(H_{i,j}) \rightarrow C \times \{0,1\}$ to be the vertex-coloring defined by
 \[\chi_{i,j}^{H}(v) \coloneqq \begin{cases}
                                (\chi_i(v),1) &\text{if } v \in Z_{i,j},\\
                                (\chi_i(v),0) &\text{otherwise}
                               \end{cases}\]
 for all $j \in [k]$ and both $i \in \{1,2\}$.
 Observe that $|V(H_{i,j})| < |V(G_i)|$ for all $j \in [k]$ and $i \in \{1,2\}$.
 Hence, we can compute isomorphisms between the graphs $H_{i,j}$ using recursion.
 For each pair $j_1,j_2 \in [k]$ and $i_1,i_2 \in \{1,2\}$ we compute the set of isomorphisms
 \[\Phi_{j_1,j_2}^{i_1,i_2} \coloneqq \Iso((H_{i_1,j_1},\chi_{i_1,j_1}),(H_{i_2,j_2},\chi_{i_2,j_2}))[S_{i_1,j_1}]\]
 recursively (see also Figure \ref{fig:graph-decomposition}).
 
 \begin{figure}
  \centering
  \scalebox{0.9}{
  \begin{tikzpicture}[scale=0.6,use Hobby shortcut]
   
   \draw[thick,fill=yellow, fill opacity=0.5] (0,1.8) ellipse (5cm and 4cm);
   
   \begin{scope}
   \clip (0,1.8) ellipse (5cm and 4cm);
   \draw[thick, fill=blue, fill opacity=0.3] (0,5.6) ellipse (2.5cm and 2cm);
   \end{scope}
   \node at (0,4.6) {$S_i$};
   
   \draw[thick, fill=red, fill opacity=0.3] (1.7,3) ellipse (1.8cm and 1.2cm);
   \node at (1.7,3) {$X_i$};
   
   \node at (-1.5,2) {$D_i$};

   \begin{scope}[xshift=-3.5cm,yshift=0.5cm]
   \begin{scope}[rotate=-90]
   \gurke{}
   \node at (0,-4) {$Z_{i,1}$};
   \end{scope}
   \begin{scope}[rotate=-50]
   \gurke{}
   \node at (0,-4) {$Z_{i,2}$};
   \randA{}
   \end{scope}
   \begin{scope}[rotate=-10]
   \gurke{}
   \node at (0,-4) {$Z_{i,3}$};
   \end{scope}
   \node at (0,0) {$S_{i,1}$};
   \end{scope}
   
   \begin{scope}[rotate=0,yshift=-1cm]
   \gurke{}
   \node at (0,-4) {$Z_{i,4}$};
   \node at (0,0) {$S_{i,4}$};
   \randB{}
   \end{scope}

   \begin{scope}[xshift=3.5cm,yshift=0.5cm]
   \begin{scope}[rotate=90]
   \gurke{}
   \node at (0,-4) {$Z_{i,7}$};
   \end{scope}
   \begin{scope}[rotate=50]
   \gurke{}
   \node at (0,-4) {$Z_{i,6}$};
   \randA{}
   \end{scope}
   \begin{scope}[rotate=10]
   \gurke{}
   \node at (0,-4) {$Z_{i,5}$};
   \end{scope}
   \node at (0,0) {$S_{i,5}$};
   \end{scope}
   
   \node at (2.5,-6) {$H_{i,P_2}$};
   \node at (-7,2.7) {$H_{i,P_1}$};
   \node at (7,2.7) {$H_{i,P_3}$};
   
  \end{tikzpicture}
  }
  \caption{Visualization of the graph decomposition\protect\footnotemark.}
  \label{fig:graph-decomposition}
 \end{figure}
 \footnotetext{The visualization was designed by Daniel Wiebking and originally appeared in \cite{GroheNW23}.}
 
 \medskip
 For both $i \in \{1,2\}$ we define an equivalence relation $\sim_i$ on $[k]$ via $j_1 \sim_i j_2$ if and only if $S_{i,j_1} = S_{i,j_2}$ for $j_1,j_2 \in [k]$.
 Let $\CP_i\coloneqq\{P_{i,1},\dots,P_{i,p}\}$ be the corresponding partition into equivalence classes.
 For each $P_i \in \CP_i$ let $S_{i,P_i} \coloneqq S_{i,j}$ for some $j \in P_i$.
 Observe that, by definition, $S_{i,P_i}$ does not depend on the choice of $j \in P_i$.
 Also, define $H_{i,P_i} \coloneqq G_i[(\bigcup_{j \in P_i} Z_{i,j}) \cup S_{i,P_i}]$ and let $\chi_{i,P_i}^{H}$ be the coloring defined by
 \[\chi_{i,P_i}^{H}(v) \coloneqq \begin{cases}
                                  (\chi_i(v),0) &\text{if } v \in S_{i,P_i},\\
                                  (\chi_i(v),1) &\text{otherwise}
                                 \end{cases}\]
 for all $v \in V(H_{P_i})$.
 For each $i_1,i_2 \in \{1,2\}$ and $P_1 \in \CP_{i_1}$ and $P_2 \in \CP_{i_2}$ the algorithm computes
 \[\Phi_{P_1,P_2}^{i_1,i_2} \coloneqq \Iso((H_{i_1,P_1},\chi_{i_1,P_1}^{H}),(H_{i_2,P_2},\chi_{i_2,P_2}^{H}))[S_{i_1,P_1}]\]
 as follows.
 Without loss of generality assume that $P_1 \in \CP_1$ and $P_2 \in \CP_2$.
 We formulate the isomorphism problem between $(H_{1,P_1},\chi_{1,P_1}^H)$ and $(H_{2,P_2},\chi_{2,P_2}^H)$ as an instance of multiple-labeling-coset isomorphism.
 We define another equivalence relation $\simeq$ on $P_1 \uplus P_2$  via
 \[j_1\simeq j_2 \;\;\;\Leftrightarrow\;\;\; \Phi_{j_1,j_2}^{i_1,i_2} \neq \emptyset\]
 where $j_1 \in P_{i_1}$ and $j_2 \in P_{i_2}$.

 Again, we partition $P_1 \uplus P_2 = Q_1\cup\ldots\cup Q_q$ into the equivalence classes of the relation $\simeq$.
 For each equivalence class $Q_j$ we fix one representative $j^* \in Q_j$ and pick $i^{*} \in \{1,2\}$ such that $j^{*} \in P_{i^{*}}$.
 Let $\lambda_{j^*} \colon S_{i^{*},P_{i^{*}}}\to[|S_{P_{i^{*}}}|]$ be an arbitrary bijection.
 
 Let $i \in \{1,2\},j_i\in P_i\cap Q_j$ and define $\rho_{j_i}\Gamma_{j_i} \coloneqq\Phi_{j_i,j^*}^{i,i^{*}}\lambda_{j^*}$.
 Let $\CX_{i,P_i} \coloneqq (S_{i,P_i},L_{i,P_i},\Fp_{i,P_i})$ where
 \[L_{i,P_i} \coloneqq \{\rho_{j_i}\Gamma_{j_i}\mid j_i \in P_i\}\] and
 \[\Fp_{i,P_i}(\rho_{j_i}\Gamma_{j_i})\coloneqq\{\!\!\{j\mid j_i'\in P_i\cap Q_j\text{ and }\rho_{j_i}\Gamma_{j_i} = \rho_{j_i'}\Gamma_{j_i'}\}\!\!\}\]
 (for each $j_i'$ such that $\rho_{j_i}\Gamma_{j_i} = \rho_{j_i'}\Gamma_{j_i'}$
 the element $j$ is added to the multiset where $j_i' \in P_i \cap Q_j$).
 \begin{claim}
  $\Phi_{P_1,P_2}^{1,2} = \Iso(\CX_{1,P_1},\CX_{2,P_2})$.
 \end{claim}
 \begin{claimproof}
  Let $\varphi \in \Iso((H_{1,P_1},\chi_{1,P_1}^{H}),(H_{2,P_2},\chi_{2,P_2}^{H}))$ and let $\sigma \colon P_1 \rightarrow P_2$ be the unique bijection such that $Z_{1,j}^{\varphi} = Z_{2,\sigma(j)}$ for all $j \in P_1$.
  Let $j_1 \in P_1$ and consider the labeling coset $\rho_{j_1}\Gamma_{j_1} \in L_{1,P_1}$.
  Let $j_2 \coloneqq \sigma(j_1)$.
  Then $j_1 \simeq j_2$ since $\varphi[Z_{1,j_1}] \in \Iso((H_{1,j_1},\chi_{1,j_1}),(H_{2,j_2},\chi_{2,j_2}))$.
  Let $j^{*}=j_1^*=j_2^*$ be the representative from the equivalence class containing $j_1$ and $j_2$ and pick $i^{*} \in \{1,2\}$ such that $j^{*} \in P_{i^{*}}$.
  Then
  \[\varphi[S_{1,j_1}] \Phi_{j_2,j^{*}}^{2,i^{*}} = \Phi_{j_1,j^{*}}^{1,i^{*}}.\]
  Since $\lambda_{j_1^*}=\lambda_{j_2^*}$, this implies that $(\varphi[S_{1,j_1}])^{-1}\rho_{j_1}\Gamma_{j_1} = \rho_{j_2}\Gamma_{j_2}$ and, since the above statement holds for all $j_1 \in P_1$,
  it also means that $\Fp_{1,P_1}(\rho_{j_1}\Gamma_{j_1}) = \Fp_{2,P_2}(\rho_{j_2}\Gamma_{j_2})$ (i.e., equality between labeling cosets is preserved by the mapping $\sigma$).
  
  For the backward direction let $\varphi \in \Iso(\CX_{1,P_1},\CX_{2,P_2})$.
  This means, there is a bijection $\sigma\colon P_1 \rightarrow P_2$ such that
  \begin{enumerate}[label = (\alph*)]
   \item $j_1 \simeq \sigma(j_1)$, and
   \item $\varphi^{-1}\rho_{j_1}\Gamma_{j_1} = \rho_{\sigma(j_1)}\Gamma_{\sigma(j_1)}$
  \end{enumerate}
  for all $j_1 \in P_1$.
  This means that, for every $j_1 \in P_1$, it holds that $\varphi \in \Phi_{j_1,\sigma(j_1)}^{1,2}$.
  But this implies that $\varphi \in \Phi_{P_1,P_2}^{1,2}$.
 \end{claimproof}
 
 Hence, $\Phi_{P_1,P_2}^{1,2}$ can be computed using Theorem \ref{thm:isomorphism-multiple-cosets}. 
 Next, the algorithm turns to computing the set $\Iso((G_1,\chi_1,v_1),(G_2,\chi_2,v_2))[D_1]$ from the sets $\Phi_{P_1,P_2}^{i_1,i_2}$, $i_1,i_2 \in \{1,2\}$ and $P_1 \in \CP_{i_1}$, $P_2 \in \CP_{i_2}$.
 
 \medskip
 
 Let $v_1 \in X_1$ be an arbitrary vertex.
 For all $v_2 \in X_2$ the algorithm computes a representation of all isomorphisms $\varphi \in \Iso((G_1,\chi_1),(G_2,\chi_2))[D_1]$ such that $\varphi(v_1) = v_2$ as described below.
 The output of the algorithm is the union of all these isomorphisms iterating over all $v_2 \in X_2$.
 Additionally, all mappings are restricted to $S_1$ (recall that $S_1 \subseteq D_1$).
 
 Let $D_i' \coloneqq \cl_t^{(G_i,\chi_i')}(v_i)$ for $v_i\in X_i$ (and recall that $D_i = \cl_t^{G_i}(X_i)$).
 The algorithm first computes a $\mgamma_t$-group $\Gamma \leq \Sym(D_1')$ and a bijection $\gamma\colon D_1' \rightarrow D_2'$ such that
 \[\Iso((G_1',\chi_1',v_1),(G_2',\chi_2',v_2))[D_1'] \subseteq \Gamma\gamma\]
 using Theorem \ref{thm:compute-isomorphisms-t-closure}.
 Observe that $D_i \subseteq D_i'$.
 Let
 \[\Delta\delta \coloneqq \{\gamma' \in \Gamma\gamma \mid D_1^{\gamma'} = D_2\}[D_1].\]
 A representation for $\Delta\delta$ can be computed using Theorem \ref{thm:hypergraph-isomorphism-gamma-d}.
 
 To compute the set of isomorphisms, we now formulate the isomorphism problem between $(G_1,\chi_1,v_1)$ and $(G_2,\chi_2,v_2)$ as an instance of coset-labeled hypergraph isomorphism.
 Let $\CH_i \coloneqq (D_i,\CE_i,\Fp_i)$ where
 \[\CE_i \coloneqq E(G_i[D_i]) \cup \{S_{i,P_i} \mid P_i \in \CP_i\} \cup \{\{v\} \mid v \in D_i\}.\]
 The function $\Fp_i$ is defined separately for all three parts of the set $\CE_i$ (if an element occurs in more than one set of the union, the colors defined with respect to the single sets are combined by concatenating them in a tuple).

 For an edge $vw \in E(G_i[D_i])$ we define $\Fp_i(vw) \coloneqq (\rho_{v,w}\Sym([2]),0)$ where $\rho_{v,w}\colon \{v,w\} \rightarrow \{1,2\}$ with $\rho_{v,w}(v) = 1$ and $\rho_{v,w}(w) = 2$.
 
 In order to define $\Fp_i$ for sets $S_{i,P_i}$, $P_i \in \CP_i$, we first define an equivalence relation $\approx$ on the disjoint union $\CP_1 \uplus \CP_2$ where $P \approx Q$ if $\Iso((H_{i_1,P},\chi_{i_1,P}^{H}),(H_{i_2,Q},\chi_{i_2,Q}^{H})) \neq \emptyset$ for $P \in \CP_{i_1}$ and $Q \in \CP_{i_2}$.
 Let $\CQ_1,\dots,\CQ_r$ be the equivalence classes.
 For each equivalence class $\CQ_j$ we fix one representative $Q_j^* \in \CQ_j$ and
 pick $i^* \in \{1,2\}$ such that $Q_j^* \in \CP_{i^{*}}$.
 Let $\rho_{Q_j^*} \colon S_{Q_j^*} \rightarrow [| S_{Q_j^*}|]$ be an arbitrary bijection.
 Let $i \in \{1,2\}$, $P_i \in \CP_i \cap \CQ_j$ and define
 \[\rho_{i,P_i}\Gamma_{i,P_i}\coloneqq \Iso((H_{i,P_i},\chi_{i,P_i}^{H}),(H_{i^{*},Q_j^*},\chi_{i^{*},Q_j^*}^{H}))[S_{i,P_i}]\rho_{Q_j^*}.\]
 Now, for $P_i \in \CP_i \cap \CQ_j$, we define
 \[\Fp_i(S_{i,P_i}) \coloneqq (\rho_{i,P_i}\Gamma_{i,P_i},j).\]
 (Intuitively speaking, each separator $S_{i,P_i}$ is associated with a color $j$ and a labeling coset $\rho_{i,P_i}\Gamma_{i,P_i}$.
 The color $j$ encodes the isomorphism type of the graph $H_{i,P_i}$ whereas the labeling coset determines which mappings between separators extend to isomorphisms between the corresponding graphs below the separators.)
 
 Finally, for $v \in D_i$, we define $\Fp_i(v) \coloneqq (v \mapsto 1,\chi_i(v) + r)$ (recall that $r$ denotes the number of equivalence classes $\CQ_1,\dots,\CQ_r$).
 Then
 \[\Iso((G_1,\chi_1,v_1),(G_2,\chi_2,v_2))[D_1] = \Iso_{\Delta\delta}(\CH_1,\CH_2)\]
 which can be computed in the desired time by Theorem \ref{thm:coset-labeled-hypergraphs-gamma-d}.
 
 This completes the description of the algorithm.
 The correctness is immediate from the description of the algorithm.
 So it only remains to analyze the running time.
 
 First observe that the number of recursive calls the algorithm performs is at most quadratic in the number of vertices of the input graphs.
 Also, $|\CP_i|\leq n$ and $|S_{i,j}|<h$ for both $i \in \{1,2\}$ and all $j\in[k]$.
 Hence, the computation of all sets $\Phi_{P_1,P_2}^{i_1,i_2}$, $P_1 \in \CP_{i_1}$ and $P_2 \in \CP_{i_2}$, requires time $n^{\CO((\log h)^{c})}$ by Theorem \ref{thm:isomorphism-multiple-cosets}.
 Next, the algorithm iterates over all vertices $v_2 \in X_2$ and computes isomorphisms between coset-labeled hypergraphs using Theorem \ref{thm:coset-labeled-hypergraphs-gamma-d}.
 In total, the algorithm from Theorem \ref{thm:coset-labeled-hypergraphs-gamma-d} is applied $|X_2| \leq n$ times and a single execution requires time $n^{\CO((\log h)^{c})}$.
 Overall, this gives the desired bound on the running time.
\end{proof}

We remark that, by standard reduction techniques, there is also an algorithm computing a representation for the set $\Iso(G_1,G_2)$ in time $n^{\CO((\log h)^{c})}$ assuming $G_1$ excludes $K_h$ as a topological subgraph.

Moreover, the proof of the last theorem also reveals some insight into the structure of the automorphism group of a graph that excludes $K_h$ as a topological subgraph.

Let $G$ be a graph.
A \emph{tree decomposition} for $G$ is a pair $(T,\beta)$ where $T$ is a rooted tree and $\beta\colon V(T) \rightarrow 2^{V(G)}$ such that
\begin{enumerate}
 \item[(T.1)] for every $e \in E(G)$ there is some $t \in V(T)$ such that $e \subseteq \beta(t)$, and
 \item[(T.2)] for every $v \in V(G)$ the graph $T[\{t \in V(T) \mid v \in \beta(t)\}]$ is non-empty and connected.
\end{enumerate}
The \emph{adhesion-width} of $(T,\beta)$ is $\max_{t_1t_2 \in E(T)} |\beta(t_1) \cap \beta(t_2)|$.

Let $v \in V(G)$.
Also, recall that $(\Aut(G))_v = \{\varphi \in \Aut(G) \mid v^{\varphi} = v\}$ denotes the subgroup of the automorphism group of $G$ that stabilizes the vertex $v$.

\begin{theorem}
 \label{thm:structure-aut}
 Let $G$ be a graph that excludes $K_h$ as a topological subgraph.
 Then there is an iso\-mor\-phism-invariant tree decomposition $(T,\beta)$ of $G$ such that
 \begin{enumerate}
  \item the adhesion-width of $(T,\beta)$ is at most $h-1$, and
  \item for every $t \in V(T)$ there is some $v \in \beta(t)$ such that $(\Aut(G))_v[\beta(t)] \in \mgamma_d$ for $d \coloneqq 144\adeg^2h^5$.
 \end{enumerate}
\end{theorem}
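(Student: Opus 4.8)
The plan is to extract the tree decomposition $(T,\beta)$ from the recursion tree of the algorithm of Theorem~\ref{thm:main}, and to exploit that at every node the associated bag is a $t$-closure set for $t\coloneqq t(h)=144\adeg^2h^5=d$ (see Equation~\eqref{eq:def-t}). Concretely, I would make the recursion explicit (it is exactly the decomposition performed inside the proof of Theorem~\ref{thm:main}, with the isomorphism bookkeeping deleted): the root processes $G$ with empty distinguished set, and a node $s$ processes an induced subgraph $H_s\subseteq G$ together with a distinguished vertex set $S_s\subseteq V(H_s)$ with $|S_s|<h$; using Corollary~\ref{cor:initial-color} applied to $H_s$ with the coloring marking $S_s$, followed by the closure-growing loop of the algorithm, one obtains isomorphism-invariant data $(\chi_s,X_s)$ and sets $\beta(s)\coloneqq D_s\coloneqq\cl_t^{(H_s,\chi_s)}(X_s)\supseteq S_s$; the children of $s$ process $H_s[Z\cup N_{H_s}(Z)]$ with distinguished set $N_{H_s}(Z)$, one for each connected component $Z$ of $H_s-D_s$, and $s$ is a leaf when $H_s=D_s$. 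One then has $V(H_s)=\bigcup_{s'\text{ in subtree}(s)}\beta(s')$.

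First I would check that $(T,\beta)$ is a tree decomposition: each edge of $H_s$ lies inside $D_s=\beta(s)$ or inside some $Z\cup N_{H_s}(Z)$ and is handled recursively; and the nodes whose bag contains a fixed $v$ form a subtree, since $v$ is ``introduced'' at a unique topmost node and afterwards only propagates downwards through separators $N_{H_s}(Z)\subseteq D_s$. For a tree edge $ss'$ with $s'$ the child for the component $Z$, we get $\beta(s)\cap\beta(s')=D_s\cap(Z\cup N_{H_s}(Z))=N_{H_s}(Z)$ because $D_s\cap Z=\emptyset$ and $N_{H_s}(Z)\subseteq D_s$; since $H_s\subseteq G$ also excludes $K_h$ as a topological subgraph, Theorem~\ref{thm:small-separator-for-t-cr-bounded-closure} gives $|N_{H_s}(Z)|<h$, so the adhesion-width is at most $h-1$, which is Property~1. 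Isomorphism-invariance of $(T,\beta)$ follows by induction along the recursion: the output of Corollary~\ref{cor:initial-color}, the $t$-closure operator, and the decomposition into connected components are all isomorphism-invariant, and the colorings handed to recursive calls are preserved by isomorphisms; hence $\Aut(G)$ acts on the rooted tree $T$ with $\beta(s^\psi)=\beta(s)^\psi$ for all $\psi\in\Aut(G)$.

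For Property~2, fix $t\in V(T)$. The key point is that $\beta(t)=\cl_t^{(H_t,\chi_t)}(v)$ for every $v\in X_t$: by Theorem~\ref{thm:initial-color-via-wl} we have $X_t\subseteq\cl_t^{(H_t,\chi_t)}(v)$, and applying the (monotone, idempotent) $t$-closure both ways yields $\cl_t^{(H_t,\chi_t)}(v)=\cl_t^{(H_t,\chi_t)}(X_t)=\beta(t)$. Pick $v\in X_t$, arranged (via the closure-growing loop, see below) to be disjoint from the separator $S_t=\beta(t)\cap\beta(\mathrm{parent}(t))$, so that $t$ is the topmost node whose bag contains $v$. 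Then every $\psi\in(\Aut(G))_v$ fixes $t$ — it permutes the subtree of bags containing $v$ by a rooted-tree automorphism, which fixes that subtree's topmost node — hence $\psi$ stabilizes $H_t$, the coloring $\chi_t$ (computed from $H_t$, therefore $\Aut(H_t)$-invariant), and $\beta(t)=D_t$. Thus $(\Aut(G))_v[\beta(t)]$ is a well-defined subgroup of $(\Aut(H_t))_v[\beta(t)]$. Now Theorem~\ref{thm:compute-isomorphisms-t-closure}, applied to $H_t$ with the trivial group on $\{v\}$ and $\cl_t^{(H_t,\chi_t)}(v)=\beta(t)$, shows that $(\Aut(H_t))_v[\beta(t)]$ is contained in a $\mgamma_t$-group, hence lies in $\mgamma_t$ by the subgroup-closure of Lemma~\ref{la:gamma-d-closure}; the same lemma gives $(\Aut(G))_v[\beta(t)]\in\mgamma_t=\mgamma_d$. (In the degenerate cases with $|\beta(t)|<h$ this is anyway immediate, as the restriction is a subgroup of $\Sym(\beta(t))$ with $|\beta(t)|<h\leq d$.)

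The step I expect to be the main obstacle is precisely ``$X_t$ can be taken disjoint from $S_t$'': a priori the color $c_0$ returned by Corollary~\ref{cor:initial-color} could be realized only inside the small separator $S_t$, in which case $v$ would not be introduced at $t$ and $(\Aut(G))_v[\beta(t)]$ would fail to be a group. The fix is to perform, whenever this happens, one iteration of the closure-growing loop of the algorithm of Theorem~\ref{thm:main}: since $|S_t|<h\leq t$, the separator $S_t$ is absorbed into the current closure set $D'$, and the next initial set is computed in $H_t-D'$ and hence avoids $S_t\subseteq D'$. Verifying that this modification preserves isomorphism-invariance and still satisfies $\beta(t)=\cl_t^{(H_t,\chi_t)}(v)$ for the final initial set is the only genuinely technical part; everything else is a direct reading of the proof of Theorem~\ref{thm:main} together with Theorems~\ref{thm:small-separator-for-t-cr-bounded-closure} and~\ref{thm:compute-isomorphisms-t-closure}.
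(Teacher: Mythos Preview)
Your proposal is correct and follows essentially the same approach as the paper: both extract $(T,\beta)$ from the recursion tree of the algorithm of Theorem~\ref{thm:main}, obtain the adhesion bound from Theorem~\ref{thm:small-separator-for-t-cr-bounded-closure}, and derive the $\mgamma_d$ bound on $(\Aut(G))_v[\beta(t)]$ for $v\in X_t\setminus S_t$ via the $t$-closure structure (Theorem~\ref{thm:compute-isomorphisms-t-closure}). You are in fact somewhat more explicit than the paper about why $(\Aut(G))_v$ must stabilize $\beta(t)$ and about the ``$X_t$ disjoint from $S_t$'' issue, which the paper dispatches in a one-line parenthetical invoking the closure-growing loop; your discussion of that point matches the paper's intent.
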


The theorem readily follows from the same arguments used to prove Theorem \ref{thm:main}.
Indeed, consider the recursion tree $T$ of the algorithm from Theorem \ref{thm:main} on input $(G,G)$ where each node $t \in V(T)$ is associated with the corresponding set $\beta(t) \coloneqq D_1$.
For $t \in V(T)$ let $v \in X_1 \setminus S_1 \subseteq D_1$ (recall that $S_1 = \beta(s) \cap \beta(t)$ where $s$ is the unique parent node of $t$).
Then $D_1^{\gamma} = D_1$ for all $\gamma \in (\Aut(G))_v$ and $(\Aut(G))_v[\beta(t)] \in \mgamma_d$.
Finally, observe that $X_1 \setminus S_1 \neq \emptyset$ (in a situation where $X_1 \subseteq S_1$, it also holds that $D_1 \subseteq S_1$ and the algorithm from Theorem \ref{thm:main} would recompute a set $X_i^{1}$).

\section{Conclusion}

We presented an isomorphism test for all graphs excluding $K_h$ as a topological subgraph running in time $n^{\polylog(h)}$.
On the technical side, the main contribution towards this algorithm is a combinatorial statement which provides a suitable isomorphism-invariant initial set to apply the $t$-CR algorithm.
As a consequence, we also obtain restrictions on the structure of the automorphism groups of graphs excluding $K_h$ as a topological subgraph.

Overall, the presented result unifies and extends existing isomorphism tests with polylogarithmic parameter dependence in the exponent of the running time, and essentially completes the picture for such algorithms on sparse graph classes.
It is an interesting open question whether the techniques can be extended to graph parameters that include dense graphs.
As a specific example, can isomorphism of graphs of rank-width $k$ be tested in time $n^{\polylog(k)}$?

Of course, it also remains an important question for which graph parameters the isomorphism problem is fixed-parameter tractable.
However, here it is already open whether isomorphism testing parameterized by the maximum degree is fixed-parameter tractable.

\bibliographystyle{plainurl}
\small
\bibliography{literature}

\end{document}